\newtheorem{theorem}{Theorem}
\newtheorem{lemma}{Lemma}
\newtheorem{proposition}{Proposition}
\newtheorem{corollary}{Corollary}
\newtheorem{definition}{Definition}
\theoremstyle{definition}\newtheorem{remark}{Remark}
\theoremstyle{definition} \newtheorem{example}{Example}
\newtheorem{claim}{Claim}
\newcommand{\comment}[1]{}
\newcommand{\RRB}[1]{\textcolor{black}{#1}}
\def\cS{\mbox{$\cal{S}$}}
\newcommand{\C}{\ensuremath{\mathcal{C}}\xspace}
\newcommand{\X}{\ensuremath{\mathcal{X}}\xspace}
\newcommand{\Y}{\ensuremath{\mathcal{Y}}\xspace}
\newcommand{\Z}{\ensuremath{\mathcal{Z}}\xspace}
\newcommand{\U}{\ensuremath{\mathcal{U}}\xspace}
\newcommand{\V}{\ensuremath{\mathcal{V}}\xspace}
\newcommand{\E}{\ensuremath{\mathcal{E}}\xspace}
\newcommand{\G}{\ensuremath{\mathcal{G}}\xspace}
\title{Interactive Secure Function Computation}
\author{
	Deepesh Data$^\dag$, Gowtham~R.~Kurri$^\dag$, Jithin~Ravi, Vinod~M.~Prabhakaran
	\thanks{\dag \ Equal contribution.
	
		D.~Data is with University of California, Los Angeles, USA.
		G.~R.~Kurri and V.~M.~Prabhakaran are with School of Technology \& Computer Science at Tata Institute of Fundamental Research, Mumbai, India.
		J.~Ravi is with the Signal Theory and Communications Departmentat, Universidad Carlos III de Madrid, Legan\'es, Spain. Email: deepeshdata@ucla.edu, k.raghunath@tifr.res.in, rjithin@tsc.uc3m.es, vinodmp@tifr.res.in.
		
	}
}
\begin{document}
\maketitle
 \begin{abstract}
We consider interactive computation of randomized functions between two users with the following privacy requirement: the interaction should not reveal to either user any extra information about the other user's input and output other than what can be inferred from the user's own input and output. We also consider the case where privacy is required against only one of the users. For both cases, we give single-letter expressions for feasibility and optimal rates of communication. Then we discuss the role of common randomness and interaction in both privacy settings. \RRB{We also study perfectly secure non-interactive computation
when only one of the users computes a randomized function based on a single transmission from the other user. 
We characterize randomized functions which can be perfectly securely computed in this model and obtain tight bounds on the optimal message lengths in all the privacy settings.}
\end{abstract}
\iftoggle{paper}
{\textit{A full version of this paper is accessible at: \url{http://www.tifr.res.in/~k.raghunath/ISIT2018/twoparty.pdf}}}{\ignorespaces}

\section{Introduction}
\label{sec_intro}

In his seminal work on cryptography, Shannon considered secure communication between two users in the presence of an eavesdropper~\cite{Shannon49}. 
How mutually trusting users can collaborate while revealing as little information about their data as possible to untrusted eavesdroppers has been studied extensively in the pages of this journal. Collaboration in the form of secure communication over a wiretap channel~\cite{Wyner1975C,CsiszarK78},
 secret-key agreement via public discussion~\cite{AhlswedeC93,Maurer93,CsiszarN00}, computation and channel simulation in the presence of an eavesdropper~\cite{TyagiNG11,GohariYA12}, secure source coding~\cite{PrabhakaranR07,GunduzEP08,VillardP13}, and secure network coding~\cite{CaiY11,ElRouayhebSoljaninSprintson2012} %
 have received considerable attention. For a more exhaustive set of references, see~\cite{LiangPS11,Bloch11}. 
The object of study in this work is how mutually {\em distrusting} users may collaborate while preserving privacy. 
Specifically, we consider a function computation problem between two users
(Figure~\ref{fig_two_user}). They observe memoryless sources (inputs) $X$ and
$Y$, respectively, and communicate interactively over a noiseless communication
link to compute {\em randomized functions} (outputs) $Z_1$ and $Z_2$,
respectively.  Common and private randomness which is independent of the inputs
$X$ and $Y$ is available to both of them. They want to compute the functions in
such a way that neither of them learn any extra information about the other
user's input and output other than what their own input and output reveal. We
assume that the users are {\em honest-but-curious}, i.e., they faithfully
follow the given protocol, but will try to infer extra information at the end
of the protocol. This secure computation problem is specified by a pair
$(q_{XY},q_{Z_1Z_2|XY})$, where $q_{XY}$ is the input distribution and
$q_{Z_1Z_2|XY}$ specifies the randomized function. Our goal is to determine
whether a function is securely computable and, when it is computable, determine
the optimal rates of interactive communication required.
 \begin{figure}[htbp]
  \centering
   \includegraphics[scale=1.09]{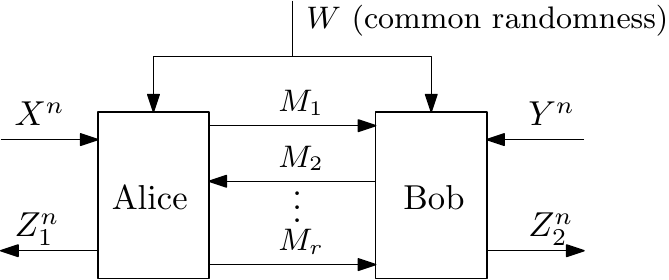}
   \caption{Secure interactive randomized function computation. The case where Alice starts the communication is shown. Privacy against Alice requires that $M_1, \cdots, M_r$ should not reveal anything about $(Y^n,Z_2^n)$ other than what can be inferred from $(X^n,Z_1^n)$. Similar conditions should hold for privacy against Bob.}
   \label{fig_two_user}
  \end{figure}
  
There is a vast literature on two-user interactive computation (with no privacy) in computer science (see, e.g., \cite{KushilevitzN97,
BravermanGPW13} and references therein) and in information theory~\cite{Kaspi85,OrlitskyR01,MaI11,YassaeeGA15,TyagiVVW17}.
Kaspi~\cite{Kaspi85} obtained the rate-distortion trade-off for two-user
interactive source coding. The rate region for optimal two-round interactive lossless computation was obtained by Orlitsky and Roche~\cite{OrlitskyR01}.  Ma and
Ishwar~\cite{MaI11} derived the optimal rate region for an arbitrary number of rounds. Using an example, they also showed that three rounds of interactions (which is clearly not needed for computation to be feasible) can strictly out-perform two rounds. 
Optimal interactive schemes for computing randomized functions
were studied by Yassaee et al.~\cite{YassaeeGA15} and Tyagi et
al.~\cite{TyagiVVW17}. 


Secure computation (with privacy against the users themselves) has been studied
in cryptography under computational as well as information theoretic secrecy
(see~\cite{CramerDBNB15} and references therein). It is known that not all
functions are information theoretically securely computable by two users
interacting over a noiseless link, whereas, under computational secrecy, all
functions are securely computable \cite{Yao82,Yao86}\footnote{Note that if the
two users have access to correlated random variables or a noisy channel, a
larger class of functions may be securely computed~\cite{CREK88}. A
characterization of such stochastic resources which allow any function to be
securely computed is given in~\cite{MajiPR12}. In this paper, we do not
consider such resources.}. A combinatorial characterization of two-user
securely computable deterministic functions was given by
Beaver~\cite{Beaver89} and Kushilevitz~\cite{Kushelvitz92}. An alternative characterization using the common
randomness generated by interactive deterministic protocols was provided
by Narayan et al. ~\cite{NarayanTW15}. A combinatorial characterization of two-user securely
computable randomized functions is still not known; some partial results were
obtained by Maji et al.~\cite{MajiPR12}, and Data and Prabhakaran~\cite{DataP17}. A characterization of two-user securely
computable output distributions with no inputs and no common randomness was
given by Wang and Ishwar~\cite{WangI11}. In contrast to these, function computation with
privacy against an {\em eavesdropper} who has access to the interactive
communication was studied by Tyagi et al.~\cite{TyagiNG11} and Gohari et al.~\cite{GohariYA12}.

Secure computation among $n$ users ($n>2$) with pairwise, private, error-free
communication links was studied by Ben-Or, Goldwasser, and Wigderson~\cite{BGW88} and Chaum, Cr{\'e}peau, and Damg\aa rd~\cite{CCD88}. It was shown that any
function can be securely computed even if any $t<\frac{n}{2}$
honest-but-curious users collude\footnote{For malicious cases where colluding
users may arbitrarily deviate from the protocol, secure computation of all
functions is feasible under a stricter threshold of $t<\frac{n}{3}$ colluding
users~\cite{BGW88,CCD88}.}. The problem of finding optimal protocols from a
communication point-of-view remains mostly open~\cite{DPP16}.

\RRB{Communication rate for secret-key agreement through public discussion was studied by Tyagi~\cite{Tyagi13} for the two-terminal scenario, and Mukherjee et al.~\cite{MukherjeeKS16} and Chan et al.~\cite{ChanMKZ2017} for the multi-terminal scenario. Communication versus rounds trade-off for secret key and common randomness generation was studied by Bafna et al.~\cite{SudanGGB2019}.}

We consider two-user interactive secure function computation
in two privacy settings: (i) when privacy is required against both users, and
(ii) when privacy is required against only one of the users. The users are
assumed to be honest-but-curious, i.e., they follow the protocol faithfully,
but attempt to obtain information about the other user's input and output from
everything they have access to at the end of the protocol. The function maybe
randomized. Our main contributions are as follows: 
\begin{itemize}

\item For each of the privacy settings mentioned above, we show that the set of
\emph{asymptotically} securely computable functions (see
Definition~\ref{defn:asymp_secur_comp}) is the same as the set of
\emph{one-shot, perfectly} securely computable functions.
We also show that such a property is not true for all secure function computation problems, in general. Function computation with privacy (only) against an eavesdropper~\cite{TyagiNG11} turns out to be a counterexample (see Remark~\ref{remark:eavesdropper}).
Furthermore, we give
single-letter expressions for the asymptotic rate regions
(Theorems~\ref{Feasibility_AB}, \ref{Thm_Rate_Region_AB},
\ref{Thm_Rate_Region_A} and \ref{Thm_Rate_Region_B}).  

\item  We show that for a class of functions (including \RRB{all} deterministic
functions), checking secure computability (with privacy requirement against
both users) is equivalent to checking whether \emph{cut-set} lower bounds for
computation with no privacy requirements can be met (Theorem~\ref{cutset}).

\item When privacy is required against both users, we give a necessary and
sufficient condition for  common randomness to be helpful in improving the
communication rate (see Proposition~\ref{prop:CR}).

\item When no privacy is required, any function can be computed in two rounds
by exchanging the inputs. However, Ma and Ishwar~\cite{MaI11} showed that there
are functions for which more rounds of interaction can strictly improve the
communication rate.  For a function securely computable with privacy against
both users, (depending on the function) a certain minimum number of rounds of
interaction may be required to securely compute it. In contrast with function
computation without any privacy required, we show that further interaction does
not improve the communication rate (see Proposition~\ref{prop:Int}). However,
when privacy is required against only one user, we show that both interaction and common randomness may help to improve the communication rate (see Propositions~\ref{prop3} and \ref{newprop}). We also give a condition under which neither interaction nor common randomness helps to enlarge the rate region (see Proposition~\ref{prop5}).

\item \RRB{For \emph{perfectly secure} computation, we derive matching upper and lower bounds on the optimal expected length of transmission in several \emph{non-interactive} privacy settings where only Bob produces an output using a \emph{single transmission} from Alice. We study these in a one-shot setting (i.e., with only one copy of inputs), as well as a setting where sequence of i.i.d. samples of inputs are given as inputs. The different scenarios include $(i)$~when privacy is required against both users; our results are stated in Theorem~\ref{thm:ps_rate-1} (one-shot) and Corollary~\ref{cor:ps_rate-1-n} (sequence of inputs), $(ii)$ when privacy is required only against Alice; our results are stated in Theorem~\ref{thm:ps_rate-2} (one-shot) and Corollary~\ref{cor:ps_rate-2-n} (sequence of inputs), $(iii)$ when privacy is required only against Bob; our results are stated in Theorem~\ref{thm:ps_rate-3} (one-shot) and Corollary~\ref{cor:ps_rate-3-n} (sequence of inputs). Since not all randomized functions are securely computable in the first and the third scenarios, we also give a characterization of securely computable randomized functions in these settings; see Lemma~\ref{lem:characterization} in Appendix~\ref{appendixD}\footnote{\RRB{When there is no input distribution, Kilian~\cite{Kilian00} already gave a characterization of securely computable randomized functions when privacy is required against both the users. We extend his proof to the case when we have a probability distribution on the inputs.}} and Theorem~\ref{thm:one-round-characterization}.} 
\end{itemize}

The paper is organized as follows.  We present our problem formulation in
Section~\ref{sec_problm_formlt} and provide results on feasibility and rate
region characterization in Section~\ref{section:results}. The role of
interaction and common randomness is discussed in
Section~\ref{sec_role_CR_Intr}. We illustrate the communication cost of
security using an example in Section~\ref{section:cost}. Bounds on the optimal
expected length of transmission for perfectly secure non-interactive
computation are provided in Section~\ref{Sec_nointeraction}. Open questions and future directions are discussed in Section~\ref{section:discussion}.

\section{Problem Formulation and Definitions}
\label{sec_problm_formlt}
 
A secure randomized function computation problem is specified by a pair $\left(q_{XY},q_{Z_1Z_2|XY}\right)$, where $X,Y,Z_1$ and $Z_2$ take values in finite sets $\mathcal{X},\mathcal{Y},\mathcal{Z}_1$ and $\mathcal{Z}_2$, respectively. Inputs to Alice and Bob are $X^n$ and $Y^n$, respectively, where $(X_i,Y_i)$, $i=1,\dots , n$, are independent and identically distributed (i.i.d.) with distribution $q_{XY}$. Both users have access to a common random variable $W$, which is independent of $(X^n,Y^n)$ and uniformly distributed over its alphabet $\mathcal{W}=[1:2^{nR_0}]$. The users interactively communicate in $r$ rounds over a noiseless bidirectional link. Their goal is to \emph{securely} compute the randomized function $q_{Z_1Z_2|XY}$, i.e., to output $Z_1^n$ and $Z_2^n$,  respectively, such that they are (approximately) distributed according to $q^{(n)}_{Z_1Z_2|XY}(z_1^n,z_2^n|x^n,y^n):=\Pi_{i=1}^n q_{Z_1Z_2|XY}(z_{1i},z_{2i}|x_i,y_i)$ while {\color{black}(approximately)} preserving privacy in the sense that {\color{black}the average amount of additional information that a user learns about the input and output of the other user goes to zero as $n$ tends to infinity.} 
The users are assumed to be honest-but-curious. We consider this problem in two different cases: (i) when privacy is required against both users, and (ii) when privacy is required against only one of the users. In both cases we wish to determine whether secure computation is feasible in any arbitrary $r\in\mathbb{N}$ number of rounds and when feasible, characterize the set of achievable rates. Next we formally state the problem assuming that Alice starts the communication. 
\begin{definition}
A \emph{protocol} $\Pi_n$ with $r$ interactive rounds of communication consists of 
\begin{itemize}
\item a set of $r$ randomized encoders with p.m.f.'s $p^{\emph{E}_1}(m_i|x^n,w,m_{[1:i-1]})$ for odd numbers $i\in[1:r]$ 
and $p^{\emph{E}_2}(m_i|y^n,w,m_{[1:i-1]})$ for even numbers $i\in[1:r]$, where $M_i$ is the message transmitted in the $i^{\emph{th}}$ round,
\item two randomized decoders $p^{\emph{D}_1}(z_1^n|x^n,w,m_{[1:r]})$ and $p^{\emph{D}_2}(z_2^n|y^n,w,m_{[1:r]})$.
\end{itemize}
\end{definition}
Let $p^{\text{(induced)}}_{W,X^n,Y^n,M_{[1:r]},Z_1^n,Z_2^n}$ denote the induced distribution of the protocol $\Pi_n$.
\begin{align}
&p^{\text{(induced)}}(w,x^n,y^n,m_{[1:r]},z_1^n,z_2^n)=\frac{1}{2^{nR_0}}\prod_{i=1}^n q(x_i,y_i) \times\nonumber\\
&\left[\prod_{i:\text{odd}}p^{\text{E}_1}(m_i|x^n,w,m_{[1:i-1]})\prod_{j:\text{even}}p^{\text{E}_2}(m_j|y^n,w,m_{[1:j-1]})\right]\nonumber\\
&\times p^{\text{D}_1}(z_1^n|x^n,w,m_{[1:r]})p^{\text{D}_2}(z_2^n|y^n,w,m_{[1:r]})\label{eqn:asymptoticdist}.
\end{align}
\begin{definition}\label{defn:asymp_secur_comp}
$(q_{XY},q_{Z_1Z_2|XY})$ is \emph{asymptotically securely computable} in $r$ rounds with privacy against both users if there exists a sequence of protocols $\Pi_n$ 
such that, for every $\epsilon>0$, there exists a large enough $n$ such that 
\begin{align}
\left\lVert p^{\emph{(induced)}}_{X^n,Y^n,Z_1^n,Z_2^n}-q^{(n)}_{X,Y,Z_1,Z_2}\right\rVert_1&\leq \epsilon,\label{eqn:asymptotic_1}\\
I(M_{[1:r]},W;Y^n,Z_2^n|X^n,Z_1^n)&\leq n\epsilon,\label{eqn:asymptotic_2}\\
I(M_{[1:r]},W;X^n,Z_1^n|Y^n,Z_2^n)&\leq n\epsilon, \label{eqn:asymptotic_3}
\end{align}
where \begin{align*}
&q^{(n)}_{X,Y,Z_1,Z_2}(x^n,y^n,z_1^n,z_2^n):=\nonumber\\
&\hspace{12pt}\Pi_{i=1}^n \left[q_{XY}(x_i,y_i)q_{Z_1Z_2|XY}(z_{1i},z_{2i}|x_i,y_i)\right].
\end{align*}
\end{definition}
Note that \eqref{eqn:asymptotic_2} {\color{black}is the} privacy condition against Alice. It requires that the rate of additional information that Alice learns about Bob's input and output other than what can be inferred from her own input and output is negligible. Similarly, \eqref{eqn:asymptotic_3} is the privacy condition against Bob.
\begin{definition}
$(q_{XY},q_{Z_1Z_2|XY})$ is \emph{perfectly securely computable} in $r$ rounds with privacy against both users if there exists a protocol $\Pi_n$ with $n=1$ such that \eqref{eqn:asymptotic_1}-\eqref{eqn:asymptotic_3} are satisfied with $\epsilon=0$.
\end{definition}
\begin{definition}
An $(n,R_0,R_{12},R_{21})$ \emph{protocol} is a protocol $\Pi_n$ such that the alphabet of $W$ is $\mathcal{W}=[1:2^{nR_0}]$ and
\begin{align*}
R_{12}&=\frac{1}{n}\sum_{i:\emph{odd}}\log|\mathcal{M}_i|,\\
R_{21}&=\frac{1}{n}\sum_{i:\emph{even}}\log|\mathcal{M}_i|,
\end{align*}
where $\mathcal{M}_i$ is the alphabet of $M_i$, $i\in[1:r]$.
\end{definition}
For a given function $(q_{XY},q_{Z_1Z_2|XY})$, a rate triple $(R_0,R_{12},R_{21})$ 
is said to be \emph{achievable} in $r$ rounds with privacy against both users if there exists a sequence 
of $(n,R_0,R_{12},R_{21})$ protocols such that, for every $\epsilon>0$, there exists a large enough $n$ 
satisfying \eqref{eqn:asymptotic_1}-\eqref{eqn:asymptotic_3}. 
The \emph{rate region} $\mathcal{R}^{AB-\text{pvt}}_A(r)$ with privacy against both users is the closure of all the achievable rate triples $(R_0,R_{12},R_{21})$. The subscript $A$ in $\mathcal{R}^{AB-\text{pvt}}_A(r)$ denotes that Alice starts the communication and the superscript {AB-\text{pvt}} denotes that privacy is against Alice and Bob. 
$\mathcal{R}^{AB-\text{pvt}}_B(r)$ can be defined in a similar fashion for the scenario when Bob starts the communication. We are interested in the region $\mathcal{R}^{AB-\text{pvt}}(r):=\mathcal{R}^{AB-\text{pvt}}_A(r)\bigcup \mathcal{R}^{AB-\text{pvt}}_B(r)$ {\color{black}and the minimum sum-rate $R^{AB-\text{pvt}}_{\text{sum}}(r,R_0):=\min\{R_{12}+R_{21}: (R_0,R_{12},R_{21})\in \mathcal{R}^{AB-\text{pvt}}(r)\}$}. Let $\mathcal{R}^{AB-\text{pvt}}:=\bigcup_{r=1}^\infty \mathcal{R}^{AB-\text{pvt}}(r)$. Notice that the above definitions are for the case when privacy is required against both users. $\mathcal{R}_A^{A-\text{pvt}}(r)$, $\mathcal{R}_A^{B-\text{pvt}}(r)$ and so on can also be defined in a similar fashion for the cases when privacy is required only against Alice and privacy is required only against Bob, respectively. For example, for the case when privacy is required only against Alice, the definitions will require \eqref{eqn:asymptotic_1}-\eqref{eqn:asymptotic_2} only and not \eqref{eqn:asymptotic_3}.
\section{Feasibility and Rate Region}\label{section:results}

We present single-letter characterizations of securely computable randomized functions and the rate regions. Detailed proofs \iftoggle{paper}{can be found in an extended version of this paper.}{can be found in Appendix~\ref{appendix:proofs_omitted}.} {\color{black}The following theorem is about feasibility. Part~$(i)$ states that asymptotic secure computability (in $r$ rounds) of a function  implies one-shot (i.e., $n=1$) perfectly secure computability (in the same number of rounds). Part~$(ii)$ shows that asymptotic secure computability depends on the input distribution $q_{XY}$ only through its support, $\text{supp}(q_{XY}):=\{(x,y): q_{XY}(x,y)>0\}$. In fact, asymptotic secure computability of a  function $(q_{XY},q_{Z_1Z_2|XY})$ is preserved even with another input distribution $\tilde{q}_{XY}$ whose support is a subset of $\text{supp}(q_{XY})$.}

\begin{theorem}
  \label{Feasibility_AB}
   (i) $(q_{XY},q_{Z_1Z_2|XY})$ is asymptotically securely computable with privacy against both users using an $r$-round protocol in which
   Alice starts the communication if and only if there exists a conditional p.m.f. $p(u_{[1:r]}|x,y,z_1,z_2)$ satisfying
   \begin{align}
     & U_i-(U_{[1:i-1]},X)-Y, \emph{if}\  i\  \emph{is} \ \emph{odd},\label{Eq_AB_Markov1} \\
     & U_i-(U_{[1:i-1]},Y)-X, \emph{if} \ i \ \emph{is} \ \emph{even},\label{Eq_AB_Markov2}\\
     & Z_1-(U_{[1:r]},X)-(Y,Z_2),\label{Eq_AB_Markov_Decod1} \\
   & Z_2-(U_{[1:r]},Y)-(X,Z_1), \label{Eq_AB_Markov_Decod2}\\
     & U_{[1:r]}-(X,Z_1)-(Y,Z_2),\label{Eq_AB_Markov_Secr1}\\
       &U_{[1:r]}-(Y,Z_2)-(X,Z_1),\label{Eq_AB_Markov_Secr2}
     \end{align}  
     $|\mathcal{U}_1|\leq |\mathcal{X}||\mathcal{Y}||\mathcal{Z}_1||\mathcal{Z}_2|+1$ and
      $|\mathcal{U}_i|\leq
 |\mathcal{X}||\mathcal{Y}||\mathcal{Z}_1||\mathcal{Z}_2|
 \prod_{j=1}^{i-1}|\mathcal{U}_j|+1
$, $\forall i>1$.

{\color{black}\noindent (ii)  If a function $(q_{XY},q_{Z_1Z_2|XY})$ is asymptotically securely computable  with privacy against both users using an $r$-round protocol, then  $(\tilde{q}_{XY},q_{Z_1Z_2|XY})$, where $\emph{supp}(\tilde{q}_{XY})\subseteq \emph{supp}(q_{XY})$, is also asymptotically securely computable with privacy against both users using an $r$-round protocol.}
\end{theorem}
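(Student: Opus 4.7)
The plan is to prove parts (i) and (ii) in that order; part (ii) will drop out directly from part (i).

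Achievability in part (i): Given a kernel $p(u_{[1:r]}|x,y,z_1,z_2)$ satisfying \eqref{Eq_AB_Markov1}--\eqref{Eq_AB_Markov_Secr2}, I would build a protocol using an interactive output statistics of random binning (OSRB) construction in the spirit of Yassaee--Gohari--Aref. Generate an i.i.d.\ super-source $(X^n,Y^n,Z_1^n,Z_2^n,U_1^n,\dots,U_r^n)$ with joint distribution $q_{XY}\,q_{Z_1Z_2|XY}\,p(u_{[1:r]}|x,y,z_1,z_2)$ and impose two independent random binnings on each $U_i^n$: a bin of appropriate rate serves as the round-$i$ transmission $M_i$ and a bin of small rate is absorbed into the common randomness $W$. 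The round Markov conditions \eqref{Eq_AB_Markov1}--\eqref{Eq_AB_Markov2} guarantee that in odd (resp.\ even) rounds Alice (resp.\ Bob) can sample the correct $U_i^n$ from her own side of information and the accumulated messages alone. The decoding Markov chains \eqref{Eq_AB_Markov_Decod1}--\eqref{Eq_AB_Markov_Decod2} let the two decoders produce $(Z_1^n,Z_2^n)$ with the target joint law. The two double-Markov chains \eqref{Eq_AB_Markov_Secr1}--\eqref{Eq_AB_Markov_Secr2} imply that $(M_{[1:r]},W)$ can be simulated, to within vanishing total variation, from $(X^n,Z_1^n)$ alone, and symmetrically from $(Y^n,Z_2^n)$ alone; this yields the privacy conditions \eqref{eqn:asymptotic_2}--\eqref{eqn:asymptotic_3} in the limit.

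Converse in part (i): Given a sequence of $(n,R_0,R_{12},R_{21})$ protocols satisfying \eqref{eqn:asymptotic_1}--\eqref{eqn:asymptotic_3} with $\epsilon\to 0$, I would single-letterize with auxiliaries
\begin{equation*}
U_{i,t} := (M_i,W,X^{t-1},Y_{t+1}^n),\qquad t\in[1:n],
\end{equation*}
a uniform time-sharing index $T$ independent of all else, and then set $U_i := (U_{i,T},T)$, $X := X_T$, $Y := Y_T$, $Z_1 := Z_{1,T}$, $Z_2 := Z_{2,T}$. The round Markov chains \eqref{Eq_AB_Markov1}--\eqref{Eq_AB_Markov2} follow from the encoder structure, since for odd $i$ the message $M_i$ depends on $(X^n,W,M_{[1:i-1]})$ only; the decoding chains \eqref{Eq_AB_Markov_Decod1}--\eqref{Eq_AB_Markov_Decod2} follow from \eqref{eqn:asymptotic_1} together with the decoder structure. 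The privacy constraints \eqref{eqn:asymptotic_2}--\eqref{eqn:asymptotic_3}, after a chain-rule decomposition over $t$ and Csisz\'ar-sum manipulation, give the per-letter double Markov chains \eqref{Eq_AB_Markov_Secr1}--\eqref{Eq_AB_Markov_Secr2} up to a slack that vanishes as $n\to\infty$; a compactness/continuity argument on the space of joint distributions then upgrades these approximate Markov chains into exact ones satisfied by a limiting kernel. The cardinality bounds are obtained inductively on the round index $i$ by a standard Carath\'eodory/support-lemma argument.

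Part (ii): Apply part (i) to obtain a kernel $p(u_{[1:r]}|x,y,z_1,z_2)$ satisfying \eqref{Eq_AB_Markov1}--\eqref{Eq_AB_Markov_Secr2} under the joint $q_{XY}\,q_{Z_1Z_2|XY}\,p$. I would then show that the \emph{same} kernel, combined with the restricted input distribution $\tilde q_{XY}$, still verifies all six conditions, after which a second application of part (i) concludes. The key observation is that every conditional of the form $p(u_i|u_{[1:i-1]},x,y,z_1,z_2)$, together with its marginalizations over subsets of the outputs, is determined solely by the kernel $p(u_{[1:r]}|x,y,z_1,z_2)\,q_{Z_1Z_2|XY}(z_1,z_2|x,y)$ and therefore coincides under $q_{XY}$ and $\tilde q_{XY}$ on the common support. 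Each of \eqref{Eq_AB_Markov1}--\eqref{Eq_AB_Markov_Secr2} is of the form ``a certain such conditional does not depend on a specified subset of its arguments, for all arguments of positive probability.'' A statement of this form that holds over $\mathrm{supp}(q_{XY})$ holds a fortiori over the subset $\mathrm{supp}(\tilde q_{XY})$, so the new kernel satisfies the six conditions under $\tilde q_{XY}$.

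The main obstacle is the converse of part (i): the asymptotic secrecy constraints yield only approximate Markov chains, so extracting an exact single-letter kernel requires a careful continuity/compactness step. Choosing the auxiliary ordering so that the chain-rule and Csisz\'ar-sum decomposition of \eqref{eqn:asymptotic_2}--\eqref{eqn:asymptotic_3} produces exactly the two double-Markov conditions \eqref{Eq_AB_Markov_Secr1}--\eqref{Eq_AB_Markov_Secr2}, rather than a weaker conditional-independence structure, is the most delicate part of the argument.
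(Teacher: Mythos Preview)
Your proposal is correct and matches the paper's approach: the same single-letterization (the paper uses the mirror ordering $U_j=(M_j,W,X^{T+1:n},Y^{1:T-1},T)$, which is immaterial), the same compactness/continuity step (Cantor's intersection theorem on the sets $\mathcal{S}_\epsilon$) to upgrade the approximate privacy Markov chains to exact ones, and the same part-(ii) argument via restricting the kernel to $\mathrm{supp}(\tilde q_{XY})$. The only place you work harder than necessary is achievability in (i): the paper simply notes that \eqref{Eq_AB_Markov1}--\eqref{Eq_AB_Markov_Secr2} already define a blocklength-one, $r$-round \emph{perfectly} secure protocol (take $n=1$, $W=\emptyset$, $M_i=U_i$), which trivially satisfies \eqref{eqn:asymptotic_1}--\eqref{eqn:asymptotic_3} with $\epsilon=0$, so no OSRB construction is needed.
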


\begin{remark}\label{remark:feasibility}
Notice that Alice can generate common randomness by sending some {\color{black}of her private randomness} along with the message in the first round. So, the presence or absence of common randomness should not affect the secure 
computability of a function $(q_{XY},q_{Z_1Z_2|XY})$. As expected, the condition in part~$(i)$ does not depend on common randomness.
\end{remark} 

{\color{black}\begin{remark}\label{remark:eavesdropper}

Part~$(i)$ of Theorem~\ref{Feasibility_AB} shows that, for our problem, asymptotic secure computability is equivalent to one-shot perfect secure computability. It is interesting to note that this is not the case for all secure function computation problems. Consider the problem of function computation with privacy against an eavesdropper \cite{TyagiNG11}. Tyagi et al. \cite{TyagiNG11} considers the asymptotic setting where a group of users with correlated inputs interact noiselessly to compute a common function. The privacy requirement is that the amount of information that an eavesdropper learns about the function from the communication vanishes asymptotically. \cite[Theorem~2]{TyagiNG11} states that a function $g$ is asymptotically securely computable by two users with privacy against an eavesdropper if $H(g(X,Y))<I(X;Y)$ (and only if $H(g(X,Y))\leq I(X;Y)$). In this setup, perfectly secure computability with privacy against an eavesdropper can be defined analogous to the asymptotic secure computability with privacy against an eavesdropper. Below, we give an example of a function which is computable with asymptotic security (with privacy from an eavesdropper) but not with perfect security. Furthermore, unlike part $(ii)$ of Theorem~\ref{Feasibility_AB}, asymptotic secure computability with privacy against an eavesdropper depends on the input distribution $q_{XY}$ not just through its support $\text{supp}(q_{XY})$ \cite[Theorem~2]{TyagiNG11}.

\begin{example}
Consider a doubly symmetric binary source $\text{DSBS}(a)$ with joint distribution $q_{XY}(x,y)=0.5(1-a)\mathbbm{1}_{\{x=y\}}+0.5a\mathbbm{1}_{\{x\neq y\}}$, $a\in [0,0.5]$ and $x,y\in \{0,1\}$. Let the function to be computed by both users is $g(x,y)=x\oplus y$, where `$\oplus$' is addition modulo-2. Choose $a\in (0,0.5]$ s.t. $h(a)<1-h(a)$ (where $h(\cdot)$ denotes the binary entropy function), so that $g$ is asymptotically securely computable with privacy against the eavesdropper (by \cite[Theorem 2]{TyagiNG11}). We show that there does not exist a protocol that perfectly securely computes $g$ with privacy from an eavesdropper. If $g$ is perfectly securely computable with privacy against an eavesdropper, then there exists some $r$ and a conditional p.m.f. $p(u_{[1:r]}|x,y)$ satisfying \eqref{Eq_AB_Markov1}-\eqref{Eq_AB_Markov_Decod2} with $Z_1=Z_2=G:=g(X,Y)$ (for correctness), and $I(G;U_{[1:r]})=0$ (for privacy against the eavesdropper). For simplicity, we write $U$ for $U_{[1:r]}$. Suppose there exists a conditional p.m.f. satisfying the above conditions. In particular, we have
\begin{align}
(X\oplus Y)-(U,X)-Y\label{eqn:eaves1},\\
(X\oplus Y)-(U,Y)-X\label{eqn:eaves2},\\
I(X\oplus Y;U)=0\label{eqn:eaves3}.
\end{align}
\eqref{eqn:eaves1} implies that $I(X\oplus Y;Y|U,X)=0$, which in turn implies that $H(Y|U,X)=0$ (i.e., $Y$ is a function of $(U,X)$) since $H(Y|U,X,X\oplus Y)=0$. Similarly, \eqref{eqn:eaves2} implies that $H(X|U,Y)=0$, i.e., $X$ is a function of $(U,Y)$. Now if $p(u,x,y)>0$, then we claim that $p(u)=p(u,x,y)+p(u,\bar{x},\bar{y})$ ($\bar{x}$ denotes the compliment of $x$, i.e., $\bar{x}=1-x$). To see this, if $p(u,x,y)>0$, note that $p(u,x,\bar{y})=0$ since $Y$ is a function of $(U,X)$. Similarly, $p(u,\bar{x},y)=0$ since $X$ is a function of $(U,Y)$. Hence, when $p(u)>0$, since there exists $x,y$ s.t. $p(u,x,y)>0$, we can write $p(u)=p(u,x,y)+p(u,\bar{x},\bar{y})$. Hence, $X\oplus Y$ is a function of $U$ as $x\oplus y=\bar{x}\oplus \bar{y}$, $\forall x,y\in\{0,1\}$. This is a contradiction to \eqref{eqn:eaves3} as $a\in (0,0.5]$. Therefore, $g$ is not perfectly securely computable with privacy against an eavesdropper.
\end{example}
\end{remark}
}
\begin{remark}\label{remark:equv}
 \RRB{Let us call the functions that are asymptotically securely computable in $r$ rounds, for some $r>0$, as {\em asymptotically securely computable} functions. Note that part $(i)$ of Theorem~\ref{Feasibility_AB} does not give a computable characterization of asymptotically securely computable functions since the number of auxiliary random variables to consider is unbounded.} This problem, which was partially addressed in \cite{MajiPR12,DataP17} for full support input distributions, remains open.
\end{remark}

{\color{black}\begin{proof}[Proof sketch of Theorem~\ref{Feasibility_AB}] 
We give a proof sketch here. A detailed proof can be found in Appendix~\ref{appendix:proofs_omitted}.   For part $(i)$, it is trivial to see the `if' part since \eqref{Eq_AB_Markov1}-\eqref{Eq_AB_Markov_Secr2} define an $r$-round perfectly secure protocol of blocklength one, i.e., the protocol satisfies \eqref{eqn:asymptotic_1}-\eqref{eqn:asymptotic_3} with $n=1$ and $\epsilon=0$. For the `only if' part, we first single-letterize the privacy constraints \eqref{eqn:asymptotic_2} and \eqref{eqn:asymptotic_3}. We then single-letterize \eqref{eqn:asymptotic_1} and the Markov chains that are implied by the joint distribution in \eqref{eqn:asymptoticdist} along the lines of two-way source coding of Kaspi \cite{Kaspi85}, interactive (deterministic) function computation of Ma and Ishwar \cite{MaI11}, and channel simulation of Yassaee et al. \cite{YassaeeGA15}.  Then by using the continuity of mutual information and total variation distance in the probability simplex, we show that, if a function is computable with asymptotic security, it is also computable with perfect security. For part $(ii)$, we show that a protocol which securely computes $(q_{XY},q_{Z_1Z_2|XY})$ will also securely compute the function $(\tilde{q}_{XY},q_{Z_1Z_2|XY})$, where $\text{supp}(\tilde{q}_{XY})\subseteq \text{supp}(q_{XY})$.
\end{proof}
}

{\color{black}Next theorem characterizes the rate region $\mathcal{R}^{AB-\text{pvt}}_A(r)$ }
\begin{theorem}\label{Thm_Rate_Region_AB}
If a function $(q_{XY},q_{Z_1Z_2|XY})$ is asymptotically securely computable with privacy against both users, then $\mathcal{R}^{AB-\emph{pvt}}_A(r)$ is given by the set of all non-negative rate triples $(R_0,R_{12},R_{21})$ such that
  \begin{align}
   R_{12} &\geq I(X;Z_2|Y),\label{eqn_results_Thm_Rate_Region_AB_1}\\
   R_{21} &\geq I(Y;Z_1|X),\label{eqn_results_Thm_Rate_Region_AB_2}\\
  R_{0} + R_{12} &\geq  I(X;Z_2|Y) + I(U_1;Z_1,Z_2|X,Y),\label{eqn_results_Thm_Rate_Region_AB_3}\\
    R_{0} + R_{12} + R_{21} &\geq I(X;Z_2|Y) + I(Y;Z_1|X) + \nonumber\\
    &\hspace{12pt}I(Z_1;Z_2|X,Y)\label{eqn_results_Thm_Rate_Region_AB_4},
   \end{align}
   for some conditional p.m.f. $p(u_{[1:r]}|x,y,z_1,z_2)$ satisfying \eqref{Eq_AB_Markov1}-\eqref{Eq_AB_Markov_Secr2}, $|\mathcal{U}_1|\leq |\mathcal{X}||\mathcal{Y}||\mathcal{Z}_1||\mathcal{Z}_2|+5$ and
$
|\mathcal{U}_i|\leq
 |\mathcal{X}||\mathcal{Y}||\mathcal{Z}_1||\mathcal{Z}_2|
 \prod_{j=1}^{i-1}|\mathcal{U}_j|+4
$, $\forall i>1$.
 \end{theorem}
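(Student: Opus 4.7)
The plan is to prove the theorem in two halves (achievability and converse) for a fixed joint p.m.f.\ satisfying the Markov chains of Theorem~\ref{Feasibility_AB}, and then to handle the cardinality bounds separately. Since Theorem~\ref{Feasibility_AB}(i) already supplies the auxiliary structure whenever the function is asymptotically securely computable, we only need to match the four rate inequalities \eqref{eqn_results_Thm_Rate_Region_AB_1}--\eqref{eqn_results_Thm_Rate_Region_AB_4} to a coding scheme and to a single-letter converse.

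For achievability, I would use the Output Statistics of Random Binning (OSRB) framework of Yassaee--Aref--Gohari, in the interactive flavor developed for channel simulation and interactive computation. Starting from a fixed p.m.f.\ $p(u_{[1:r]}|x,y,z_1,z_2)$ satisfying \eqref{Eq_AB_Markov1}--\eqref{Eq_AB_Markov_Secr2}, I would generate i.i.d. codebooks for $U_1^n,\dots,U_r^n$ round by round from the appropriate conditional marginals. In round $i$ (odd), Alice picks an index into the $U_i^n$ codebook using a likelihood encoder compatible with $(X^n,U_{[1:i-1]}^n)$ and transmits a bin of the codeword; the common randomness $W$ is used as a Slepian--Wolf--Wyner--Ziv side resource to reduce the bin size. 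The rate accounting produces terms of the form $I(U_i;X|U_{[1:i-1]},Y)$ (odd rounds) and $I(U_i;Y|U_{[1:i-1]},X)$ (even rounds), after using the Markov constraints; the final outputs $Z_1^n,Z_2^n$ are stochastically generated from $(X^n,U_{[1:r]}^n)$ and $(Y^n,U_{[1:r]}^n)$, respectively. Collapsing the round-by-round rate constraints by eliminating $U_2,\dots,U_r$ via the Markov chains yields exactly the four bounds \eqref{eqn_results_Thm_Rate_Region_AB_1}--\eqref{eqn_results_Thm_Rate_Region_AB_4}, where the extra ``coordination'' terms $I(U_1;Z_1,Z_2|X,Y)$ and $I(Z_1;Z_2|X,Y)$ appear because $Z_1^n$ and $Z_2^n$ must be jointly, not merely marginally, typical. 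The privacy constraints \eqref{eqn:asymptotic_2}--\eqref{eqn:asymptotic_3} are guaranteed in an asymptotic OSRB / resolvability sense by the Markov chains \eqref{Eq_AB_Markov_Secr1}--\eqref{Eq_AB_Markov_Secr2}: conditioned on a user's own $(X^n,Z_1^n)$ or $(Y^n,Z_2^n)$, the soft-covering lemma makes the messages and common randomness asymptotically independent of the other user's secrets.

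For the converse, I would start from any $(n,R_0,R_{12},R_{21})$ protocol meeting \eqref{eqn:asymptotic_1}--\eqref{eqn:asymptotic_3} and perform a standard single-letterization with a time-sharing index $T$ uniform on $[1{:}n]$, setting the auxiliaries to be $U_{j,T}:=(M_j,W,Z_1^{T-1},Z_{2,T+1}^{n},T)$ for $j=1,\dots,r$ (Kaspi-style), and verifying that the Markov chains \eqref{Eq_AB_Markov1}--\eqref{Eq_AB_Markov_Secr2} hold single-letter up to terms vanishing in $\epsilon$ by combining the structural independences of the protocol with the privacy slack. The cut-set-type bound \eqref{eqn_results_Thm_Rate_Region_AB_1} comes from $nR_{12}\ge H(M_{\mathrm{odd}})\ge H(M_{\mathrm{odd}}\mid Y^n,W)\ge I(X^n;Z_2^n\mid Y^n,W)\ge I(X^n;Z_2^n\mid Y^n)$, the last step using that $W$ is independent of $(X^n,Y^n)$; \eqref{eqn_results_Thm_Rate_Region_AB_2} is symmetric. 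For \eqref{eqn_results_Thm_Rate_Region_AB_3}, I would lower bound $n(R_0+R_{12})\ge H(W,M_1)$ and expand as $I(W,M_1;X^n)+I(W,M_1;Z_1^n,Z_2^n\mid X^n,Y^n)$ plus a piece that dominates $I(X^n;Z_2^n\mid Y^n)$; then single-letterize using the auxiliary $U_{1,T}$. The sum-rate bound \eqref{eqn_results_Thm_Rate_Region_AB_4} follows by adding \eqref{eqn_results_Thm_Rate_Region_AB_1}--\eqref{eqn_results_Thm_Rate_Region_AB_2} to $n R_0\ge I(W,M_{[1:r]};Z_1^n,Z_2^n\mid X^n,Y^n)\ge I(Z_1^n;Z_2^n\mid X^n,Y^n)$, since conditioned on inputs, $Z_1^n$ depends on $Z_2^n$ only through the messages and common randomness. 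Finally, the approximate i.i.d.\ structure of the induced joint distribution combined with the continuity of mutual information in the probability simplex lets the $\epsilon$-slack be absorbed and the single-letter expressions be recovered by normalizing.

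The cardinality bounds $|\U_1|\le |\X||\Y||\Z_1||\Z_2|+5$ and $|\U_i|\le |\X||\Y||\Z_1||\Z_2|\prod_{j<i}|\U_j|+4$ I would establish by the standard Eggleston/Carath\'eodory perturbation argument applied round-by-round: at each step one needs to preserve the joint distribution on $(X,Y,Z_1,Z_2,U_{[1:i-1]})$ together with the four real functionals appearing in \eqref{eqn_results_Thm_Rate_Region_AB_1}--\eqref{eqn_results_Thm_Rate_Region_AB_4}, giving the stated ``+5'' at $i=1$ (the extra $+1$ relative to Theorem~\ref{Feasibility_AB} coming from the functionals that depend on $U_1$) and ``+4'' thereafter. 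I expect the main obstacle to be the converse for \eqref{eqn_results_Thm_Rate_Region_AB_3}--\eqref{eqn_results_Thm_Rate_Region_AB_4}: isolating the ``coordination'' terms $I(U_1;Z_1,Z_2|X,Y)$ and $I(Z_1;Z_2|X,Y)$ from a generic protocol requires using the privacy constraints to upgrade conditional independences between Alice's and Bob's local views into the precise Markov chains \eqref{Eq_AB_Markov_Secr1}--\eqref{Eq_AB_Markov_Secr2} at the single-letter level, and carefully tracking that the common-randomness rate $R_0$ cannot substitute for output-coordination randomness beyond what the cut-set allows.
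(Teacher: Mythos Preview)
Your overall plan has the right shape, but the converse contains two concrete errors.

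First, your auxiliary choice $U_{j,T}=(M_j,W,Z_1^{T-1},Z_{2,T+1}^n,T)$ will not deliver the protocol Markov chains \eqref{Eq_AB_Markov1}--\eqref{Eq_AB_Markov2}. The outputs $Z_1^n,Z_2^n$ are generated at the decoders from \emph{all} messages $M_{[1:r]}$, so conditioning on $Z_1^{T-1}$ or $Z_{2,T+1}^n$ leaks information about later-round messages and hence about $Y_T$ (resp.\ $X_T$) even after fixing $(U_{[1:i-1],T},X_T)$. The paper uses the standard Kaspi/Ma--Ishwar auxiliaries $U_j=(M_j,W,X^{T+1:n},Y^{1:T-1},T)$, for which \eqref{Eq_AB_Markov1}--\eqref{Eq_AB_Markov_Decod2} follow from the encoder/decoder structure and the i.i.d.\ inputs; the privacy chains \eqref{Eq_AB_Markov_Secr1}--\eqref{Eq_AB_Markov_Secr2} are then obtained by single-letterizing \eqref{eqn:asymptotic_2}--\eqref{eqn:asymptotic_3} with the same auxiliaries (up to $\epsilon$-slack removed by compactness).

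Second, and more seriously, the inequality you invoke for the sum-rate, $nR_0\ge I(W,M_{[1:r]};Z_1^n,Z_2^n\mid X^n,Y^n)$, is false. The messages carry private encoder randomness that can be coupled to $(Z_1^n,Z_2^n)$ conditioned on the inputs, while $R_0$ only accounts for $H(W)$. Even the weaker conclusion $R_0\ge I(Z_1;Z_2\mid X,Y)$ fails under privacy: with $X=Y=\emptyset$ and $Z_1=Z_2$ a common uniform bit, the protocol ``Alice flips a bit and sends it'' is perfectly private with $R_0=0$, yet $I(Z_1;Z_2)=1$. The paper does not try to isolate $R_0$ in this way. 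It first establishes, exactly as in the channel-simulation converse of Yassaee et al., the inequalities
\begin{align*}
R_{12}&\ge I(X;U_{[1:r]}|Y),\qquad R_{21}\ge I(Y;U_{[1:r]}|X),\\
R_0+R_{12}+R_{21}&\ge I(X;U_{[1:r]}|Y)+I(Y;U_{[1:r]}|X)+I(U_{[1:r]};Z_1,Z_2|X,Y),
\end{align*}
and only then collapses them to \eqref{eqn_results_Thm_Rate_Region_AB_1}--\eqref{eqn_results_Thm_Rate_Region_AB_4} via the identities $I(X;U_{[1:r]}|Y)=I(X;Z_2|Y)$, $I(Y;U_{[1:r]}|X)=I(Y;Z_1|X)$, and $I(U_{[1:r]};Z_1,Z_2|X,Y)=I(Z_1;Z_2|X,Y)$, each of which is a two-line consequence of \eqref{Eq_AB_Markov_Decod1}--\eqref{Eq_AB_Markov_Secr2}. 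Your achievability sketch (OSRB for interactive simulation, with privacy inherited from the single-letter Markov chains via continuity of mutual information) and your cardinality argument are essentially what the paper does.
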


\begin{remark}
Inequality \eqref{eqn_results_Thm_Rate_Region_AB_3} on $R_0+R_{12}$ makes the rate region $\mathcal{R}^{AB-\text{pvt}}_A(r)$ possibly asymmetric. This is, in fact, due to the assumption that Alice starts the communication. 
This is similar to the possible asymmetry of the rate region observed in channel simulation \cite[Theorem 1]{YassaeeGA15}. 
\end{remark}

\begin{remark}\label{remark:wang}
Substituting $X=Y=\emptyset$ in part~$(i)$ of Theorem~\ref{Feasibility_AB} recovers a result of \cite{WangI11} which states that a distribution $q_{Z_1,Z_2}$ is securely computable (i.e., securely sampleable as there are no inputs here) if and only if $C(Z_1;Z_2)=I(Z_1;Z_2)$, where $C(Z_1;Z_2):=\underset{Z_1-W-Z_2}{\min}I(Z_1,Z_2;W)$ is Wyner common information \cite{Wyner75}. To see this, note that $C(Z_1;Z_2)=I(Z_1;Z_2)+\underset{Z_1-W-Z_2}{\min}\left(I(Z_1;W|Z_2)+I(Z_2;W|Z_1)\right).$ Furthermore, when $R_0=0$, \iftoggle{paper}{it can be shown using part $(ii)$ of Theorem~\ref{Thm_Rate_Region_AB} and \eqref{eqn:results:simplification3} that the optimal sum-rate is $R_{12}+R_{21}=C(Z_1;Z_2)=I(Z_1;Z_2)$.}{Theorem~\ref{Thm_Rate_Region_AB} implies that the optimal sum-rate is $R_{12}+R_{21}=C(Z_1;Z_2)=I(Z_1;Z_2)$. {\color{black}This follows from \eqref{eqn:results:simplification3} (proved later) and the fact that $I(U_1; Z_1,Z_2|X,Y)\leq I(U_{[1:r]}; Z_1,Z_2|X,Y)$.}}
\end{remark}
 {\color{black}\begin{remark}
For a function $(q_{XY},q_{Z_1Z_2|XY})$, which is asymptotically securely computable with privacy against both users using a one round protocol in which Alice starts the communication, \eqref{eqn_results_Thm_Rate_Region_AB_2} purports to give a lower bound on the rate of communication from Bob to Alice. However, note that this lower bound $I(Y;Z_1|X)$ is in fact zero. To see this, notice that if a function $(q_{XY},q_{Z_1Z_2|XY})$ is asymptotically securely computable with privacy against both users using a 1-round protocol in which Alice starts the communication, it follows from $U-X-Y$ and  \eqref{eqn:results:simplification2} (proved later) that $I(Y;Z_1|X)=0$.
 \end{remark}
 }
{\color{black}\begin{proof}[Proof sketch of Theorem~\ref{Thm_Rate_Region_AB}] 
We give a proof sketch here. A detailed proof can be found in Appendix~\ref{appendix:proofs_omitted}.  Our proof of achievability is along similar lines as the achievability proof of channel simulation \cite[Theorem~1]{YassaeeGA15}. We modify this proof to give a protocol which also accounts for privacy. \RRB{Specifically, we show how the Markov chains \eqref{Eq_AB_Markov_Secr1} and \eqref{Eq_AB_Markov_Secr2} can be turned into privacy constraints~\eqref{eqn:asymptotic_2} and \eqref{eqn:asymptotic_3} retaining the correctness~\eqref{eqn:asymptotic_1}}. For the converse, we first single-letterize the privacy constraints \eqref{eqn:asymptotic_2} and \eqref{eqn:asymptotic_3}. The rest of the converse is in the spirit of two-way source coding of Kaspi \cite{Kaspi85}, interactive (deterministic) function computation of Ma and Ishwar \cite{MaI11}, and channel simulation of Yassaee et al. \cite{YassaeeGA15}. \RRB{Note that such a single-letterization of privacy constraints preserving the correctness may not always be possible for any secure function computation problem (see, e.g., Remark~\ref{remark:eavesdropper}).} This gives a rate region defined by the set of non-negative rate triples $(R_0,R_{12},R_{21})$ such that
\begin{align}
   R_{12} &\geq I(X;U_{[1:r]}|Y),\label{eqn:ach1}\\
   R_{21} &\geq I(Y;U_{[1:r]}|X),\label{eqn:ach2}\\
  R_{0} + R_{12} &\geq I(X;U_{[1:r]}|Y) + I(U_1;Z_1,Z_2|X,Y),\label{eqn:ach3}\\
    R_{0} + R_{12} + R_{21} &\geq I(X;U_{[1:r]}|Y) + I(Y;U_{[1:r]}|X)\nonumber\\
    &\hspace{12pt} + I(U_{[1:r]}; Z_1,Z_2|X,Y)\label{eqn:ach4},
   \end{align}
for conditional p.m.f. $p(u_{[1:r]},z_1,z_2|x,y)$ satisfying \eqref{Eq_AB_Markov1}-\eqref{Eq_AB_Markov_Secr2}. 
 Notice that constraints \eqref{eqn:ach1}-\eqref{eqn:ach4} appear
in channel simulation \cite[Theorem 1]{YassaeeGA15} also, where the conditional p.m.f. $p(u_{[1:r]},z_1,z_2|x,y)$ satisfies \eqref{Eq_AB_Markov1}-\eqref{Eq_AB_Markov_Decod2}. The above region reduces to the form mentioned in Theorem~\ref{Thm_Rate_Region_AB} because of a simplification 
possible here due to the additional privacy constraints \eqref{Eq_AB_Markov_Secr1}-\eqref{Eq_AB_Markov_Secr2}, which gives us (as shown in \iftoggle{paper}{the Appendix}{the detailed proof of Theorem \ref{Thm_Rate_Region_AB} in Appendix~\ref{appendix:proofs_omitted}})
\begin{align}
I(X;U_{[1:r]}|Y)&=I(X;Z_2|Y),\label{eqn:results:simplification1}\\
I(Y;U_{[1:r]}|X)&=I(Y;Z_1|X),\label{eqn:results:simplification2}\\
I(U_{[1:r]}; Z_1,Z_2|X,Y)&=I(Z_1;Z_2|X,Y)\label{eqn:results:simplification3}.
\end{align} 
\end{proof}
}
\begin{remark}
\RRB{Note that the lower bounds in \eqref{eqn_results_Thm_Rate_Region_AB_1} and \eqref{eqn_results_Thm_Rate_Region_AB_2} are in fact the \emph{cut-set} lower bounds for {\em non-private} computation {(see Appendix~\ref{cutset_discussion} for details)}. Thus, Theorem~\ref{Thm_Rate_Region_AB} implies that for sufficiently large common randomness rates $R_0$, the cut-set bounds are met for securely computable functions. The intuition is as follows: from \eqref{eqn:ach1}-\eqref{eqn:ach4} in the proof of Theorem~\ref{Thm_Rate_Region_AB}, $R_{12}=I(X;U_{[1:r]}|Y)$ and $R_{21}=I(Y;U_{[1:r]}|X)$ are in the rate-region for sufficiently large $R_0$. Note that $I(X;U_{[1:r]}|Y)$ measures the rate of information Bob learns about Alice's input during the protocol. When privacy against Bob is required, the rate of information that Bob learns about Alice's input during the protocol must be equal to the rate of information about Alice's input that can be inferred just from his output, i.e., $I(X;U_{[1:r]}|Y)=I(X;Z_2|Y)$. To see this, first note that $I(X;U_{[1:r]}|Y)=I(X;U_{[1:r]},Z_2|Y)=I(X;Z_2|Y)+I(X;U_{[1:r]}|Y,Z_2)$, where the first equality follows from the correctness. Now privacy against Bob implies that $I(X,Z_1;U_{[1:r]}|Y,Z_2)=0$, which in turn implies that $I(X;U_{[1:r]}|Y,Z_2)=0$. Similarly, when privacy against Alice is required, we have $I(Y;U_{[1:r]}|X)=I(Y;Z_1|X)$.}
\end{remark}

Let us denote the minimum number of rounds required for secure computation by $r_{\text{min}}$, i.e., the smallest $r$ such that there exists auxiliary random variables $U_{[1:r]}$ which makes the function $(q_{XY},q_{Z_1Z_2|XY})$ one-shot perfectly securely computable with either Alice or Bob starting the communication. \RRB{For deterministic functions, it is known that $r_{\text{min}}<2\min\{|\mathcal{X}|,|\mathcal{Y}|\}$~\cite{Kushelvitz92}. No such bound is available for randomized functions in general~\cite{MajiPR12,DataP17}.} Note that Theorem \ref{Thm_Rate_Region_AB} is for any fixed number of rounds $r$. \RRB{As we remarked above, for securely computable functions, the cut-set lower bounds for non-private computation holds with equality for sufficiently large common randomness rate for every $r\geq r_{\text{min}}$. The following corollary shows that the optimal trade-off between the communication and common randomness rates is achieved in at most $r_{\text{min}}+1$ rounds.} 
Notice that the expression for $\mathcal{R}^{AB-\text{pvt}}$ below does not involve any auxiliary random variables.
\begin{corollary}\label{corollary_1}
If $(q_{XY},q_{Z_1Z_2|XY})$ is asymptotically securely computable with privacy against both users,
then $\mathcal{R}^{AB-\emph{pvt}}$ is given by the set of all non-negative rate triples
$(R_0,R_{12},R_{21})$ such that
\begin{align}
 R_{12} &\geq I(X;Z_2|Y),\label{eqn:optimal_region_1}\\
   R_{21} &\geq I(Y;Z_1|X),\label{eqn:optimal_region_2}\\
    R_{0} + R_{12} + R_{21} &\geq I(X;Z_2|Y) + I(Y;Z_1|X)\nonumber\\
    &\hspace{12pt} + I(Z_1;Z_2|X,Y)\label{eqn:optimal_region_3}.
\end{align} 
Furthermore, $\mathcal{R}^{AB-\emph{pvt}}(r_{\emph{min}}+1)=\mathcal{R}^{AB-\emph{pvt}}$.
\end{corollary}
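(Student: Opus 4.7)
The plan is to establish the corollary through a chain of inclusions among three regions: the target region $\mathcal{R}^\star$ defined by \eqref{eqn:optimal_region_1}-\eqref{eqn:optimal_region_3}, the round-limited region $\mathcal{R}^{AB-\text{pvt}}(r_{\text{min}}+1)$, and the unrestricted region $\mathcal{R}^{AB-\text{pvt}}$. Since $\mathcal{R}^{AB-\text{pvt}}(r_{\text{min}}+1) \subseteq \mathcal{R}^{AB-\text{pvt}}$ holds trivially from the definition of the latter as a union over rounds, it is enough to prove $\mathcal{R}^\star \subseteq \mathcal{R}^{AB-\text{pvt}}(r_{\text{min}}+1)$ and $\mathcal{R}^{AB-\text{pvt}} \subseteq \mathcal{R}^\star$; the two together collapse the chain to equalities and yield both assertions.

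The converse inclusion $\mathcal{R}^{AB-\text{pvt}} \subseteq \mathcal{R}^\star$ follows directly from Theorem~\ref{Thm_Rate_Region_AB}. Any rate triple in $\mathcal{R}^{AB-\text{pvt}}$ lies in $\mathcal{R}^{AB-\text{pvt}}_A(r) \cup \mathcal{R}^{AB-\text{pvt}}_B(r)$ for some $r$ and hence satisfies the four inequalities of Theorem~\ref{Thm_Rate_Region_AB} for some valid choice of $U_{[1:r]}$, either in its Alice-starts form or in the symmetric Bob-starts form obtained by swapping the roles of $(X,Z_1,R_{12})$ with $(Y,Z_2,R_{21})$. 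In either form, three of the four inequalities — those on $R_{12}$, on $R_{21}$, and on $R_0+R_{12}+R_{21}$ — are precisely \eqref{eqn:optimal_region_1}-\eqref{eqn:optimal_region_3} and do not involve the auxiliary random variables. Discarding the remaining asymmetric inequality on $R_0+R_{12}$ or $R_0+R_{21}$ yields the containment.

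For the achievability inclusion $\mathcal{R}^\star \subseteq \mathcal{R}^{AB-\text{pvt}}(r_{\text{min}}+1)$, the idea is to prepend a trivial constant round to an $r_{\text{min}}$-round secure protocol so that the asymmetric bound involving $I(U_1;Z_1,Z_2|X,Y)$ from Theorem~\ref{Thm_Rate_Region_AB} becomes vacuous. By definition of $r_{\text{min}}$, there exist $U_{[1:r_{\text{min}}]}$ witnessing one-shot perfect secure computability; suppose without loss of generality that this witness is in the Alice-starts configuration. Define $U'_{[1:r_{\text{min}}+1]} := (U'_1, U_1, \ldots, U_{r_{\text{min}}})$ with $U'_1$ an arbitrary constant independent of everything else. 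Treating this as a Bob-starts protocol, each of the Markov chains \eqref{Eq_AB_Markov1}-\eqref{Eq_AB_Markov_Secr2} for $U'_{[1:r_{\text{min}}+1]}$ reduces to the corresponding chain for $U_{[1:r_{\text{min}}]}$ because conditioning on a constant is vacuous; in particular, the privacy chains \eqref{Eq_AB_Markov_Secr1}-\eqref{Eq_AB_Markov_Secr2} for $U'$ follow from those for $U$ since $U'_1$ is independent of $(X,Y,Z_1,Z_2,U_{[1:r_{\text{min}}]})$. Applying the Bob-starts version of Theorem~\ref{Thm_Rate_Region_AB} with this $U'$, the asymmetric constraint reduces to $R_0+R_{21} \geq I(Y;Z_1|X) + I(U'_1;Z_1,Z_2|X,Y) = I(Y;Z_1|X)$, which is subsumed by \eqref{eqn:optimal_region_2}; the remaining three inequalities match \eqref{eqn:optimal_region_1}-\eqref{eqn:optimal_region_3} exactly. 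Hence $\mathcal{R}^\star \subseteq \mathcal{R}^{AB-\text{pvt}}_B(r_{\text{min}}+1) \subseteq \mathcal{R}^{AB-\text{pvt}}(r_{\text{min}}+1)$. The symmetric case where the $r_{\text{min}}$-round witness has Bob starting is handled by prepending a constant Alice round and invoking the Alice-starts form instead. The main subtlety, and the only nontrivial step, is the chain-by-chain verification that inserting the constant $U'_1$ preserves all six Markov conditions; this is routine once one tracks the shift of indices carefully.
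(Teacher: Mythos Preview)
Your proof is correct and follows essentially the same approach as the paper. The paper also shows $\mathcal{R}^{AB-\text{pvt}}(r_{\text{min}}+1)\subseteq\mathcal{R}_{\text{opt}}$ directly from Theorem~\ref{Thm_Rate_Region_AB} and then, for the reverse inclusion, prepends a trivial first round (setting $U'_1=\emptyset$, your ``constant'') to an $r_{\text{min}}$-round Alice-starts witness so that the resulting $(r_{\text{min}}+1)$-round Bob-starts scheme makes the asymmetric constraint vacuous.
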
 
\begin{proof}[\textbf{Proof of Corollary \ref{corollary_1}}]
\RRB{It suffices to prove that $\mathcal{R}^{AB-\text{pvt}}(r_{\text{min}}+1)=\mathcal{R}_{\text{opt}}$, where $\mathcal{R}_{\text{opt}}$ is  defined to be the set of all non-negative rate triples $(R_0,R_{12},R_{21})$ such that \eqref{eqn:optimal_region_1}-\eqref{eqn:optimal_region_3} are satisfied. From Theorem~\ref{Thm_Rate_Region_AB} it is easy to see that $\mathcal{R}^{AB-\text{pvt}}(r_{\text{min}}+1)\subseteq\mathcal{R}_{\text{opt}}$. For the other direction, take a point $(R_0,R_{12},R_{21})\in\mathcal{R}_{\text{opt}}$. Without loss of generality, suppose that $r_{\text{min}}$ occurs when Alice starts the communication. Then, by Theorem~\ref{Thm_Rate_Region_AB}, there exists random variables $U_{[1:r_{\text{min}}]}$ with conditional p.m.f. $p(u_{[1:r_{\text{min}}]}|x,y,z_1,z_2)$ satisfying \eqref{Eq_AB_Markov1}-\eqref{Eq_AB_Markov_Secr2}. We find new random variables $U^\prime_{[1:r_{\text{min}}+1]}$ so that $(R_0,R_{12},R_{21})$ becomes a point in $\mathcal{R}_B^{AB-\text{pvt}}(r_{\text{min}}+1)$. Define $U_1^\prime=\emptyset$ and $U_i^\prime=U_{i-1}$ for $i>1$. This gives us that $(R_0,R_{12},R_{21})\in\mathcal{R}_B^{AB-\text{pvt}}(r_{\text{min}}+1)$. Hence $(R_0,R_{12},R_{21})\in\mathcal{R}^{AB-\text{pvt}}(r_{\text{min}}+1)$.}
\end{proof}

As mentioned earlier, if a function is securely computable with privacy against both users, then the cut-set lower bounds (for function computation without any privacy requirement) are achievable. The converse is not true in general\footnote{\color{black}To see this, suppose $X=Y=\emptyset$, then any function $q_{Z_1Z_2}$ can be computed by using common randomness alone (see Wyner common information problem \cite{Wyner75}), i.e., by meeting the cut-set lower bounds which in this case are zero, $I(X;Z_2|Y)=0=I(Y;Z_1|X)$. Assume that $q_{Z_1Z_2}$ is such that $C(Z_1;Z_2)\neq I(Z_1;Z_2)$. Then $q_{Z_1Z_2}$ is not securely computable in view of Remark~\ref{remark:wang}.}. However, part $(ii)$ of the theorem below states that a converse holds for a class of functions including deterministic functions\footnote{\color{black}The characterization of all the functions for which the converse holds remains open.}.
 Let the rate region $\mathcal{R}_A^{\text{No-privacy}}(r)$ be defined analogous to $\mathcal{R}^{AB-\text{pvt}}_A(r)$ (except that only correctness condition~\eqref{eqn:asymptotic_1} is required).

\begin{theorem}\label{cutset}
{\color{black} (i) If a function $(q_{XY},q_{Z_1Z_2|XY})$ is securely computable in $r$ rounds with privacy against both users, then there exists common randomness rate $R_0$ such that $\big(R_0,I(X;Z_2|Y),I(Y;Z_1|X)\big)\in \mathcal{R}_A^{\emph{No-privacy}}(r)$. Furthermore, a rate of $R_0=I(Z_1;Z_2|X,Y)$ suffices for this.

\noindent (ii) Suppose the function $(q_{XY},q_{Z_1Z_2|XY})$ is such that $H(Z_1|X,Y,Z_2)=0$ \RRB{and} $H(Z_2|X,Y,Z_1)=0$ (e.g., a deterministic function). If there exists $R_0$ such that $\big(R_0,I(X;Z_2|Y),I(Y;Z_1|X)\big)\in \mathcal{R}_A^{\emph{No-privacy}}(r)$, then the function is securely computable in $r$ rounds with privacy against both users.}
\end{theorem}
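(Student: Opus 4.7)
For part~(i), the plan is to instantiate Theorem~\ref{Thm_Rate_Region_AB} at the candidate rate triple $\big(R_0, R_{12}, R_{21}\big) = \big(I(Z_1;Z_2|X,Y),\, I(X;Z_2|Y),\, I(Y;Z_1|X)\big)$ and verify each of the four inequalities \eqref{eqn_results_Thm_Rate_Region_AB_1}--\eqref{eqn_results_Thm_Rate_Region_AB_4}. Secure computability of the function in $r$ rounds, via Theorem~\ref{Feasibility_AB}(i), supplies auxiliary random variables $U_{[1:r]}$ obeying the full system \eqref{Eq_AB_Markov1}--\eqref{Eq_AB_Markov_Secr2}, so Theorem~\ref{Thm_Rate_Region_AB} is applicable with these. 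Inequalities \eqref{eqn_results_Thm_Rate_Region_AB_1}, \eqref{eqn_results_Thm_Rate_Region_AB_2} and \eqref{eqn_results_Thm_Rate_Region_AB_4} hold with equality under the assignment, while \eqref{eqn_results_Thm_Rate_Region_AB_3} reduces to $I(Z_1;Z_2|X,Y) \geq I(U_1;Z_1,Z_2|X,Y)$, which follows from $I(Z_1;Z_2|X,Y) = I(U_{[1:r]};Z_1,Z_2|X,Y) \geq I(U_1;Z_1,Z_2|X,Y)$ via the simplification \eqref{eqn:results:simplification3} and the chain rule. This places the triple in $\mathcal{R}^{AB-\text{pvt}}_A(r) \subseteq \mathcal{R}_A^{\emph{No-privacy}}(r)$, the inclusion being immediate since every private protocol remains a valid non-private protocol with the same rates.

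For part~(ii), the plan is to extract auxiliary random variables from the non-private rate-region membership and then exploit the determinism hypotheses to upgrade them into the privacy Markov chains demanded by Theorem~\ref{Feasibility_AB}(i). Appealing to the non-private analogue of the single-letter characterization (the inequalities \eqref{eqn:ach1}--\eqref{eqn:ach4} from the sketch of Theorem~\ref{Thm_Rate_Region_AB}, but imposed together with only the correctness Markov chains \eqref{Eq_AB_Markov1}--\eqref{Eq_AB_Markov_Decod2}; this is the Ma--Ishwar~\cite{MaI11} / Yassaee et al.~\cite{YassaeeGA15} converse with the privacy constraints dropped), membership of $\big(R_0, I(X;Z_2|Y), I(Y;Z_1|X)\big)$ in $\mathcal{R}_A^{\emph{No-privacy}}(r)$ yields $U_{[1:r]}$ satisfying \eqref{Eq_AB_Markov1}--\eqref{Eq_AB_Markov_Decod2} together with
\begin{equation*}
I(X;U_{[1:r]}|Y) \leq I(X;Z_2|Y), \qquad I(Y;U_{[1:r]}|X) \leq I(Y;Z_1|X).
\end{equation*}

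Next I observe that both of these inequalities are actually equalities: the decoder chain $X - (U_{[1:r]}, Y) - Z_2$ from \eqref{Eq_AB_Markov_Decod2} forces $I(X;U_{[1:r]}|Y) \geq I(X;Z_2|Y)$ by data processing, and symmetrically for the other cut. Expanding $I(X;U_{[1:r]}, Z_2|Y)$ via the chain rule in two orders and using this decoder chain then gives $I(X;U_{[1:r]}|Y, Z_2) = 0$, and an analogous calculation yields $I(Y;U_{[1:r]}|X, Z_1) = 0$. At this point the hypothesis $H(Z_1|X,Y,Z_2)=0$ makes $Z_1$ a deterministic function of $(X,Y,Z_2)$, which upgrades the first zero to $I((X,Z_1);U_{[1:r]}|Y,Z_2) = 0$, i.e., \eqref{Eq_AB_Markov_Secr2}; the hypothesis $H(Z_2|X,Y,Z_1)=0$ yields \eqref{Eq_AB_Markov_Secr1} in the same way. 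With all of \eqref{Eq_AB_Markov1}--\eqref{Eq_AB_Markov_Secr2} in hand, Theorem~\ref{Feasibility_AB}(i) delivers an $r$-round protocol that securely computes the function with privacy against both users.

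The main obstacle I anticipate is a mild bookkeeping step in part~(ii): pinning down the precise single-letter converse characterization of $\mathcal{R}_A^{\emph{No-privacy}}(r)$ needed to extract $U_{[1:r]}$ with the stated bounds, which amounts to rerunning the converse used for Theorem~\ref{Thm_Rate_Region_AB} with the privacy constraints dropped. Once this is available, all remaining steps are short, essentially algebraic mutual-information manipulations.
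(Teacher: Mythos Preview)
Your proof is correct. For part~(i) it matches the paper's approach (invoke Theorem~\ref{Thm_Rate_Region_AB}). For part~(ii) your route differs from the paper's, though the core mutual-information identity is the same.

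The paper works at the operational $n$-letter level: it takes a sequence of \emph{protocols} (messages $M_{[1:r]}$ and common randomness $W$) achieving rates $R_{12}=I(X;Z_2|Y)+\delta$, $R_{21}=I(Y;Z_1|X)+\delta$ with no privacy, and shows via the chain
\[
nR_{12} \geq I(M_{[1:r]},W;X^n|Y^n) = I(Z_2^n;X^n|Y^n) + I(M_{[1:r]},W;X^n|Y^n,Z_2^n)
\]
(using the decoder Markov chain to insert $Z_2^n$) together with $H(Z_1|X,Y,Z_2)=0$ that the protocol itself satisfies the asymptotic privacy condition~\eqref{eqn:asymptotic_3}; likewise for~\eqref{eqn:asymptotic_2}. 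You instead work at the single-letter level: you pull auxiliaries $U_{[1:r]}$ from the Yassaee et al.\ characterization of $\mathcal{R}_A^{\text{No-privacy}}(r)$, run the same chain-rule expansion to get $I(X;U_{[1:r]}|Y,Z_2)=0$, upgrade via the determinism hypothesis to \eqref{Eq_AB_Markov_Secr2}, and then invoke Theorem~\ref{Feasibility_AB}(i). Your route is cleaner in that it avoids the $\epsilon$-continuity bookkeeping at block length $n$, at the cost of importing the full single-letter non-private rate region from~\cite{YassaeeGA15} (which you correctly flag). The paper's route needs only the converse inequality $nR_{12}\geq I(M_{[1:r]};X^n|Y^n,W)$ and yields a slightly stronger operational conclusion---namely that \emph{any} non-private scheme meeting the cut-set bounds is automatically private---whereas yours establishes only existence of a private scheme, which is all the theorem asks.
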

We prove Theorem~\ref{cutset} in \iftoggle{paper}{the Appendix}{Appendix~\ref{appendix:proofs_omitted}}. {\color{black}Part $(i)$ will follow from Theorem~\ref{Thm_Rate_Region_AB}. We prove part $(ii)$ by showing that, for the class of functions mentioned in Theorem~\ref{cutset}}, any protocol for computation without privacy that meets the cut-set bounds must satisfy the privacy conditions as well. 
\subsection*{When privacy is required against only one user:}     
 {\color{black} Note that when privacy is required only against Alice, any function $(q_{XY},q_{Z_1Z_2|XY})$ can be securely computed using a 2-round protocol in which Alice starts the communication.}
Alice can transmit her input to Bob who can compute the function
according to $q_{Z_1Z_2|XY}$, and send $Z_1$ back to Alice. Part $(i)$ of the following theorem considers the feasibility of $1$ round protocols. Similar to part $(i)$ of Theorem~\ref{Thm_Rate_Region_AB}, it states that asymptotic secure computability implies one-shot perfectly secure computability. Part $(ii)$ characterizes the rate region for an arbitrary number of rounds $r$.
 \begin{theorem}
  \label{Thm_Rate_Region_A}
  (i) $(q_{XY},q_{Z_1Z_2|XY})$ is asymptotically securely computable  with privacy only against Alice using a 1-round protocol in which Alice starts the communication if and only if there exists 
  a conditional p.m.f. $p(u_1|x,y,z_1,z_2)$ satisfying $(a)$ $U_1-X-Y$, $(b)$ $Z_1-(U_1,X)-(Y,Z_2)$, $(c)$ $Z_2-(U_1,Y)-(X,Z_1)$, $(d)$~$U_1-(X,Z_1)-(Y,Z_2)$. {\color{black}Furthermore, if a function $(q_{XY},q_{Z_1Z_2|XY})$ is asymptotically securely computable  with privacy only against Alice using a 1-round protocol, then  $(\tilde{q}_{XY},q_{Z_1Z_2|XY})$, where $\emph{supp}(\tilde{q}_{XY})\subseteq \emph{supp}(q_{XY})$, is also asymptotically securely computable  with privacy only against Alice using a 1-round protocol. }
   
\noindent(ii) $\mathcal{R}^{A-\emph{pvt}}_A(r)$ is given by the set of all non-negative rate triples $(R_0,R_{12},R_{21})$ such that
  \begin{align*}
   R_{12} &\geq I(X;U_{[1:r]}|Y),\\
   R_{21} &\geq I(Y;Z_1|X),\\
  R_{0} + R_{12} &\geq  I(X;U_{[1:r]}|Y) + I(U_1;Z_1|X,Y),\\
    R_{0} + R_{12} + R_{21} &\geq I(X;U_{[1:r]}|Y) + I(Y;Z_1|X) \nonumber\\
    &\hspace{12pt}+ I(U_{[1:r]}; Z_1|X,Y),
   \end{align*}
   for some conditional p.m.f. $p(u_{[1:r]}|x,y,z_1,z_2)$ satisfying \eqref{Eq_AB_Markov1}-\eqref{Eq_AB_Markov_Decod2}, \eqref{Eq_AB_Markov_Secr1}, \RRB{$|\mathcal{U}_1|\leq |\mathcal{X}||\mathcal{Y}||\mathcal{Z}_1||\mathcal{Z}_2|+4$ and
$
|\mathcal{U}_i|\leq
 |\mathcal{X}||\mathcal{Y}||\mathcal{Z}_1||\mathcal{Z}_2|
 \prod_{j=1}^{i-1}|\mathcal{U}_j|+3
$, $\forall i>1$.}
 \end{theorem}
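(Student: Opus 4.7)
The plan is to mirror the strategy used for Theorems~\ref{Feasibility_AB} and~\ref{Thm_Rate_Region_AB}, adjusting for the fact that privacy is now required only against Alice.

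For part~$(i)$, the ``if'' direction is immediate. Given a conditional p.m.f.\ $p(u_1|x,y,z_1,z_2)$ satisfying the four stated conditions, define a one-shot ($n{=}1$) one-round protocol in which Alice draws $M_1=U_1\sim p(u_1|x)$ (well-defined by $U_1{-}X{-}Y$) and sends it, after which Alice outputs $Z_1\sim p(z_1|u_1,x)$ and Bob outputs $Z_2\sim p(z_2|u_1,y)$. The induced distribution of $(X,Y,Z_1,Z_2)$ equals $q_{XY}q_{Z_1Z_2|XY}$ because of $(b)$ and $(c)$, and $I(M_1;Y,Z_2|X,Z_1)=0$ follows from $(d)$, giving~\eqref{eqn:asymptotic_2} with $\epsilon=0$. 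For the ``only if'' direction I would start from an asymptotically secure one-round protocol, introduce a uniform time-sharing index $Q\in[1{:}n]$, and set $U_1:=(M_1,W,X^{Q-1},Y_{Q+1}^n,Q)$, $X:=X_Q$, $Y:=Y_Q$, $Z_1:=Z_{1,Q}$, $Z_2:=Z_{2,Q}$. The Markov chain $U_1{-}X{-}Y$ comes from the protocol structure (Alice starts, $M_1$ depends on $X^n,W$ alone), the decoding chains from the single-letter versions of the decoders, and the privacy chain $U_1{-}(X,Z_1){-}(Y,Z_2)$ from single-letterizing~\eqref{eqn:asymptotic_2}. Since all constraints become $o(1)$ in $n$, continuity of mutual information and total-variation distance on the probability simplex upgrades this to a one-shot perfect scheme, exactly as in the proof sketch of Theorem~\ref{Feasibility_AB}. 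The support-inclusion statement for $\tilde q_{XY}$ follows because the same one-shot conditional p.m.f.\ $p(u_1|x,y,z_1,z_2)$ still satisfies all four Markov conditions when restricted to $\mathrm{supp}(\tilde q_{XY})\subseteq\mathrm{supp}(q_{XY})$.

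For part~$(ii)$ achievability, I would adapt the channel simulation construction of Yassaee et al.~\cite{YassaeeGA15} in the same way as in the achievability sketch of Theorem~\ref{Thm_Rate_Region_AB}, but only enforce the one Markov chain~\eqref{Eq_AB_Markov_Secr1} at the code design stage. The likelihood-encoder/soft-covering structure yields auxiliary sequences $U_{[1:r]}^n$ whose joint distribution with $(X^n,Y^n,Z_1^n,Z_2^n,W,M_{[1:r]})$ is close in total variation to the target i.i.d.\ one; because of the enforced Markov chain $U_{[1:r]}{-}(X,Z_1){-}(Y,Z_2)$, the leakage $I(M_{[1:r]},W;Y^n,Z_2^n|X^n,Z_1^n)$ can be shown to be $n\epsilon$ by a standard argument that decomposes $M_{[1:r]},W$ into a part that is essentially a function of $(X^n,Z_1^n)$ and an independent public codebook. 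The resulting raw rate region is the one in the statement.

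For the converse I would single-letterize by setting $U_i:=(M_i,W,X^{Q-1},Y_{Q+1}^n,Q)$ and running through the Kaspi/Ma--Ishwar/Yassaee et al.\ chain of inequalities; this produces the generic bounds \eqref{eqn:ach1}--\eqref{eqn:ach4} of Theorem~\ref{Thm_Rate_Region_AB} with $I(U_{[1:r]};Z_1,Z_2|X,Y)$ and $I(U_1;Z_1,Z_2|X,Y)$ in place. The privacy constraint against Alice forces the limiting Markov chain $U_{[1:r]}{-}(X,Z_1){-}(Y,Z_2)$, which, combined with the decoding Markov chain $Z_1{-}(U_{[1:r]},X){-}Y$, yields the two simplifications
\[
I(Y;U_{[1:r]}|X)=I(Y;Z_1|X),\qquad I(U_{[1:r]};Z_1,Z_2|X,Y)=I(U_{[1:r]};Z_1|X,Y),
\]
and the analogous identity with $U_1$ in place of $U_{[1:r]}$. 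No collapse of $I(X;U_{[1:r]}|Y)$ to $I(X;Z_2|Y)$ is available, since privacy against Bob is not assumed; this is what distinguishes the bound here from~\eqref{eqn_results_Thm_Rate_Region_AB_1}. The cardinality bounds $|\mathcal U_1|\le|\mathcal X||\mathcal Y||\mathcal Z_1||\mathcal Z_2|+4$ and $|\mathcal U_i|\le|\mathcal X||\mathcal Y||\mathcal Z_1||\mathcal Z_2|\prod_{j<i}|\mathcal U_j|+3$ are obtained by a standard application of the support lemma, preserving the joint p.m.f.\ of $(X,Y,Z_1,Z_2,U_{[1:i-1]})$ plus the four continuous rate functionals appearing on the right-hand sides.

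The main obstacle, as in Theorem~\ref{Thm_Rate_Region_AB}, is the achievability step of turning the single Markov condition~\eqref{Eq_AB_Markov_Secr1} into the genuine information-theoretic leakage bound~\eqref{eqn:asymptotic_2}: one must argue that the channel-simulation codebook, together with the common randomness $W$ and the binning indices, does not leak more about $(Y^n,Z_2^n)$ to Alice than $(X^n,Z_1^n)$ already does. I expect this to go through by the same likelihood-encoder analysis used in the two-sided privacy case, simply dropping the symmetric argument for Bob.
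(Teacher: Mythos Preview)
Your proposal is correct and follows essentially the same route as the paper. The paper itself only states that the proof is ``along similar lines as that of Theorem~\ref{Feasibility_AB} and Theorem~\ref{Thm_Rate_Region_AB} jointly'' and then records the two key simplifications $I(U_{[1:r]};Z_1,Z_2|X,Y)=I(U_{[1:r]};Z_1|X,Y)$ and $I(U_1;Z_1,Z_2|X,Y)=I(U_1;Z_1|X,Y)$, both obtained from the privacy-against-Alice Markov chain~\eqref{Eq_AB_Markov_Secr1}; you identify exactly these, together with $I(Y;U_{[1:r]}|X)=I(Y;Z_1|X)$, and correctly note that $I(X;U_{[1:r]}|Y)$ does \emph{not} collapse because~\eqref{Eq_AB_Markov_Secr2} is absent. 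The only cosmetic difference is your auxiliary choice $U_i=(M_i,W,X^{Q-1},Y_{Q+1}^n,Q)$, which reverses the past/future roles relative to the paper's $U_j=(M_j,W,X^{T+1:n},Y^{1:T-1},T)$; both are valid instantiations of the Csisz\'ar--K\"orner identification and lead to the same single-letter region.
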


\iftoggle{paper}
{ Note that similar cardinality bounds on auxiliary random variables as in Theorem~\ref{Thm_Rate_Region_AB} and similar statements as in Remark~\ref{remark:feasibility} hold true for Theorem~\ref{Thm_Rate_Region_A} also. A theorem similar to Theorem~\ref{Thm_Rate_Region_A} holds for the case when privacy is required only against Bob and it can be found in the extended version.} 
 {When privacy is required only against Bob, any {\color{black}function} $(q_{XY},q_{Z_1Z_2|XY})$ is securely computable in at most $3$ rounds with Alice starting the communication. To see this, note that Alice may transmit nothing in the first round, Bob can transmit his input to Alice in the second round.  {\color{black}She} can {\color{black}then} compute the function
according to $q_{Z_1Z_2|XY}$, and send $Z_2$ back to Bob in the third round. Part $(i)$ of the following theorem considers the feasibility of $1$ and $2$ round protocols. Similar to part $(i)$ of Theorems~\ref{Thm_Rate_Region_AB} and \ref{Thm_Rate_Region_A}, it states that asymptotic secure computability implies perfectly secure computability. Part $(ii)$ characterizes the rate region for an arbitrary number of rounds $r$.
 \begin{theorem}
  \label{Thm_Rate_Region_B}
   (i) $(q_{XY},q_{Z_1Z_2|XY})$ is asymptotically securely computable with privacy only against Bob using an $r$-round protocol in which 
    Alice starts the communication if and only if there exists a conditional p.m.f. $p(u_{[1:r]}|x,y,z_1,z_2)$ 
   satisfying \eqref{Eq_AB_Markov1}-\eqref{Eq_AB_Markov_Decod2} and \eqref{Eq_AB_Markov_Secr2}, for $r=1,2$. {\color{black}Furthermore, if a function $(q_{XY},q_{Z_1Z_2|XY})$ is asymptotically securely computable  with privacy only against Bob using a 1(2, resp.)-round protocol, then  $(\tilde{q}_{XY},q_{Z_1Z_2|XY})$, where $\emph{supp}(\tilde{q}_{XY})\subseteq \emph{supp}(q_{XY})$, is also asymptotically securely computable  with privacy only against Bob using a 1(2, resp.)-round protocol. }
   
  \noindent(ii) $\mathcal{R}^{B-\emph{pvt}}_A(r)$ is given by the set of all non-negative rate triples $(R_0,R_{12},R_{21})$ such that
  \begin{align*}
   R_{12} &\geq I(X;Z_2|Y),\\
   R_{21} &\geq I(Y;U_{[1:r]}|X),\\
  R_{0} + R_{12} &\geq I(X;Z_2|Y) + I(U_1;Z_2|X,Y),\\
    R_{0} + R_{12} + R_{21} &\geq I(X;Z_2|Y) + I(Y;U_{[1:r]}|X)\nonumber\\
    &\hspace{12pt} + I(U_{[1:r]}; Z_2|X,Y),
   \end{align*}
   for some conditional pmf $p(u_{[1:r]}|x,y,z_1,z_2)$ satisfying \eqref{Eq_AB_Markov1}-\eqref{Eq_AB_Markov_Decod2}, \eqref{Eq_AB_Markov_Secr2}, \RRB{$|\mathcal{U}_1|\leq |\mathcal{X}||\mathcal{Y}||\mathcal{Z}_1||\mathcal{Z}_2|+4$ and
$
|\mathcal{U}_i|\leq
 |\mathcal{X}||\mathcal{Y}||\mathcal{Z}_1||\mathcal{Z}_2|
 \prod_{j=1}^{i-1}|\mathcal{U}_j|+3
$, $\forall i>1$.}
 \end{theorem}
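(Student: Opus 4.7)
The plan is to mirror closely the two-stage structure used for Theorem~\ref{Thm_Rate_Region_AB} and Theorem~\ref{Thm_Rate_Region_A}, with the only active privacy constraint being \eqref{Eq_AB_Markov_Secr2}. For part~$(i)$, the ``if'' direction is immediate: the joint distribution $q(x,y)\,p(u_{[1:r]},z_1,z_2\mid x,y)$ defines a one-shot ($n=1$) perfectly secure $r$-round protocol in which, in round~$i$, the transmitting party samples $U_i$ from its local conditional distribution (the Markov chains \eqref{Eq_AB_Markov1}--\eqref{Eq_AB_Markov2} ensure this is well-defined), Bob outputs $Z_2$ from $(U_{[1:r]},Y)$ and Alice outputs $Z_1$ from $(U_{[1:r]},X)$ using \eqref{Eq_AB_Markov_Decod1}--\eqref{Eq_AB_Markov_Decod2}, and \eqref{Eq_AB_Markov_Secr2} yields \eqref{eqn:asymptotic_3} with $\epsilon=0$. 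For the ``only if'' direction, I would identify auxiliary random variables exactly as in the Kaspi~\cite{Kaspi85}/Ma--Ishwar~\cite{MaI11}/Yassaee et al.~\cite{YassaeeGA15} manner to single-letterize \eqref{eqn:asymptotic_1}, and separately single-letterize \eqref{eqn:asymptotic_3} to obtain \eqref{Eq_AB_Markov_Secr2} with slack $o(1)$. The explicit $r=1,2$ restriction arises because, for these values, a one-shot perfectly secure protocol is always constructible from the single-letter variables, and continuity of mutual information and total variation on the (compact) probability simplex closes the gap between asymptotic and perfect security. The support-monotonicity claim for $\tilde{q}_{XY}$ follows since the conditions \eqref{Eq_AB_Markov1}--\eqref{Eq_AB_Markov_Decod2} and \eqref{Eq_AB_Markov_Secr2} are preserved as long as $\mathrm{supp}(\tilde q_{XY})\subseteq\mathrm{supp}(q_{XY})$.

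For part~$(ii)$, achievability is a direct adaptation of the channel simulation scheme of Yassaee et al.~\cite[Theorem~1]{YassaeeGA15}: we use their likelihood-encoder/soft-covering construction to realize the distribution $p(u_{[1:r]},z_1,z_2\mid x,y)$, and then demonstrate, following the same strategy sketched for Theorem~\ref{Thm_Rate_Region_AB}, that the Markov chain \eqref{Eq_AB_Markov_Secr2} translates into a vanishing privacy leakage against Bob (i.e., \eqref{eqn:asymptotic_3}), while we make no attempt to enforce \eqref{eqn:asymptotic_2}. This yields the same Yassaee-style region
\begin{align*}
R_{12} &\geq I(X;U_{[1:r]}\mid Y),\\
R_{21} &\geq I(Y;U_{[1:r]}\mid X),\\
R_0+R_{12} &\geq I(X;U_{[1:r]}\mid Y)+I(U_1;Z_2\mid X,Y),\\
R_0+R_{12}+R_{21} &\geq I(X;U_{[1:r]}\mid Y)+I(Y;U_{[1:r]}\mid X)\\
&\quad+ I(U_{[1:r]};Z_2\mid X,Y).
\end{align*}
The converse follows the standard identification of $U_i$'s (from messages, common randomness and the pair $(X^{i-1},Y^{i-1})$) and Fano-style single-letterization of \eqref{eqn:asymptotic_1}, combined with a single-letterization of \eqref{eqn:asymptotic_3} yielding \eqref{Eq_AB_Markov_Secr2} in the limit.

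The final simplification uses \eqref{Eq_AB_Markov_Secr2}: since $I(X;U_{[1:r]}\mid Y,Z_2)=0$, we get $I(X;U_{[1:r]}\mid Y)=I(X;Z_2\mid Y)+I(X;U_{[1:r]}\mid Y,Z_2)=I(X;Z_2\mid Y)$, collapsing all three occurrences of $I(X;U_{[1:r]}\mid Y)$ to $I(X;Z_2\mid Y)$; also, since $I(U_{[1:r]};Z_1\mid X,Y,Z_2)=0$, the $I(U_{[1:r]};Z_1,Z_2\mid X,Y)$ term that would appear in a full channel-simulation bound would collapse to $I(U_{[1:r]};Z_2\mid X,Y)$, but because we keep the achievability in its tighter sum-rate form we directly get $I(U_{[1:r]};Z_2\mid X,Y)$. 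Crucially, no such simplification is available for $I(Y;U_{[1:r]}\mid X)$, because we do not impose \eqref{Eq_AB_Markov_Secr1}; this is precisely why $R_{21}$ retains the $U_{[1:r]}$-dependent form in the theorem. The cardinality bounds on $|\mathcal{U}_i|$ are obtained by standard perturbation/convex-cover arguments applied to the four rate functionals plus the pmf-preservation constraints, exactly as in the proof of Theorem~\ref{Thm_Rate_Region_A}.

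I expect the main obstacle to be the achievability step, specifically producing a single coding scheme that (a) realizes the desired i.i.d.\ joint distribution on $(X^n,Y^n,U_{[1:r]}^n,Z_1^n,Z_2^n)$ under the Markov-chain structure, and (b) simultaneously drives $I(M_{[1:r]},W;X^n,Z_1^n\mid Y^n,Z_2^n)=o(n)$ even though we are not simulating the symmetric condition against Alice. The argument will rely on absorbing any ``leftover'' randomness visible to Bob into the common randomness $W$ (so that, conditional on $(Y^n,Z_2^n,W)$, the transcript $M_{[1:r]}$ is essentially determined), while leaving Alice's view unconstrained so that the $R_{21}$ bound need not shrink to $I(Y;Z_1\mid X)$—this is the precise asymmetry that distinguishes this setting from Theorem~\ref{Thm_Rate_Region_AB}.
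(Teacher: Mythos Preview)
Your proposal is correct and follows essentially the same route as the paper: the paper explicitly states that Theorems~\ref{Thm_Rate_Region_A} and~\ref{Thm_Rate_Region_B} are proved ``along similar lines as that of Theorem~\ref{Feasibility_AB} and Theorem~\ref{Thm_Rate_Region_AB} jointly,'' using the OSRB/protocol-A-protocol-B achievability of~\cite{YassaeeGA15}, the same auxiliary identification $U_j=(M_j,W,X^{T+1:n},Y^{1:T-1},T)$ for the converse, and then the simplifications $I(X;U_{[1:r]}|Y)=I(X;Z_2|Y)$ and $I(U_{[1:r]};Z_1,Z_2|X,Y)=I(U_{[1:r]};Z_2|X,Y)$ that follow from \eqref{Eq_AB_Markov_Secr2}. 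One small clarification on your ``main obstacle'' paragraph: the paper's privacy argument is cleaner than the ``absorb leftover randomness into $W$'' heuristic you describe---in protocol~A the Markov chain $U_{[1:r]}-(Y,Z_2)-(X,Z_1)$ holds \emph{exactly} by construction (since the $n$-letter distribution is i.i.d.\ from a single-letter pmf satisfying \eqref{Eq_AB_Markov_Secr2}), so $I_p(M_{[1:r]},W,F_{[1:r]};X^n,Z_1^n\mid Y^n,Z_2^n)=0$ exactly, and this is then transferred to protocol~B by continuity of mutual information under the total-variation closeness already established.
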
 
 }

\section{Role of Interaction and Common Randomness}
\label{sec_role_CR_Intr}

{\color{black}When no privacy is required, any function can be computed in two rounds by exchanging the inputs.
However, as Ma and Ishwar~\cite{MaI11} have shown, there are functions for which more rounds of interaction can strictly improve the communication rate.
When privacy is required against both users,
 if a function is securely computable, depending on the function, by Theorem~\ref{Feasibility_AB}, a certain minimum number of rounds of interaction is required for secure computation. Recall that we defined $r_\text{min}$ to be the smallest $r$ such that there exists auxiliary random variables $U_{[1:r]}$ which makes the function $(q_{XY},q_{Z_1Z_2|XY})$  asymptotically (and perfectly) securely computable with either Alice or Bob starting the communication.} 
  The discussion on the role of interaction below will focus on whether increasing the number of rounds beyond $r_{\text{min}}$ helps to reduce the communication rate. On common randomness, it is clear from Remark~\ref{remark:feasibility} that its absence does not affect the secure computability of a function. The discussion on the role of common randomness will focus on its effect on the communication rate.
\subsection{Privacy required against both users} 
 \label{Sec_CR_Intr_AB}%

 \RRB{It can be inferred from Corollary~\ref{corollary_1} that interaction will not help to enlarge the rate region beyond $r_{\text{min}}+1$ rounds when privacy is required against both users. Suppose a function $(q_{XY},q_{Z_1Z_2|XY})$ is securely computable with privacy against both users. Since  $I(U_1;Z_1,Z_2|X,Y)\leq I(U_{[1:r]};Z_1,Z_2|X,Y) = I(Z_1;Z_2|X,Y)$ (equality follows from  \eqref{eqn:results:simplification3}), we get from
 Theorem~\ref{Thm_Rate_Region_AB} that, for any $r\geq r_{\text{min}}$, the optimal sum-rate $R^{AB-\text{pvt}}_{\text{sum}}(r,R_0)$ is $I(X;Z_2|Y) + I(Y;Z_1|X)+ \big[ I(Z_1;Z_2|X,Y) - R_{0}\big]_{+}$, where $[x]_+=\max \{x,0\}$. Also note that if $I(Z_1;Z_2|X,Y)=0$, the characterization of the rate region does not involve the common randomness rate $R_0$\footnote{We also note that, for $R_0=0$, the proof of achievability can be carried out assuming there is no common randomness (not just that its rate is zero). }. From this discussion, we have the following two propositions.
 \begin{proposition}\label{prop:Int}
 Interaction does not improve the minimum sum-rate, i.e., { $R^{AB-\emph{pvt}}_{\emph{sum}}(r,R_0)=R^{AB-\emph{pvt}}_{\emph{sum}}(r_{\emph{min}},R_0)$, for $r\geq~r_{\emph{min}}, R_0\geq 0$.} 
Interaction does not help to enlarge the rate region when  (i) {$I(Z_1;Z_2|X,Y)=0$}, e.g., when the functions are deterministic,  or
(ii) the common randomness rate is large enough.
 \end{proposition}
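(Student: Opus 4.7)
The plan is to reduce everything to a direct computation on the four constraints in Theorem~\ref{Thm_Rate_Region_AB}, exploiting the simplification \eqref{eqn:results:simplification3}, which says $I(U_{[1:r]}; Z_1, Z_2 | X, Y) = I(Z_1; Z_2 | X, Y)$. The crucial observation is that the right-hand sides of \eqref{eqn_results_Thm_Rate_Region_AB_1}, \eqref{eqn_results_Thm_Rate_Region_AB_2}, and \eqref{eqn_results_Thm_Rate_Region_AB_4} are functionals of $p(x,y,z_1,z_2)$ only, so they do not depend on the number of rounds or on the choice of auxiliaries $U_{[1:r]}$; only \eqref{eqn_results_Thm_Rate_Region_AB_3} depends on $r$, through $I(U_1; Z_1, Z_2 | X, Y)$, and even that quantity is sandwiched between $0$ and $I(Z_1;Z_2|X,Y)$.

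First I would compute the minimum sum-rate. Summing \eqref{eqn_results_Thm_Rate_Region_AB_1} and \eqref{eqn_results_Thm_Rate_Region_AB_2} gives $R_{12}+R_{21} \geq I(X;Z_2|Y) + I(Y;Z_1|X)$, while \eqref{eqn_results_Thm_Rate_Region_AB_4} gives $R_{12}+R_{21} \geq I(X;Z_2|Y) + I(Y;Z_1|X) + I(Z_1;Z_2|X,Y) - R_0$. Taking the maximum yields the $r$-independent lower bound $I(X;Z_2|Y) + I(Y;Z_1|X) + [I(Z_1;Z_2|X,Y)-R_0]_+$. For the matching upper bound I would exhibit the explicit rate pair $R_{12} = I(X;Z_2|Y) + [I(Z_1;Z_2|X,Y)-R_0]_+$ and $R_{21} = I(Y;Z_1|X)$ and verify that all four constraints of Theorem~\ref{Thm_Rate_Region_AB} are satisfied; constraint \eqref{eqn_results_Thm_Rate_Region_AB_3} poses no obstacle because $I(U_1;Z_1,Z_2|X,Y) \leq I(Z_1;Z_2|X,Y)$. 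This proves $R^{AB-\text{pvt}}_{\text{sum}}(r, R_0)=R^{AB-\text{pvt}}_{\text{sum}}(r_{\text{min}}, R_0)$ for every $r \geq r_{\text{min}}$.

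For case (i), if $I(Z_1;Z_2|X,Y)=0$, then the chain $0 \leq I(U_1; Z_1,Z_2|X,Y) \leq I(U_{[1:r]}; Z_1,Z_2|X,Y) = I(Z_1;Z_2|X,Y) = 0$ forces $I(U_1;Z_1,Z_2|X,Y) = 0$, whereupon \eqref{eqn_results_Thm_Rate_Region_AB_3} and \eqref{eqn_results_Thm_Rate_Region_AB_4} reduce to consequences of \eqref{eqn_results_Thm_Rate_Region_AB_1}, \eqref{eqn_results_Thm_Rate_Region_AB_2}, and $R_0 \geq 0$. The rate region therefore collapses to the cut-set description \eqref{eqn:optimal_region_1}--\eqref{eqn:optimal_region_2}, which is manifestly free of $R_0$ and $r$. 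For case (ii), once $R_0 \geq I(Z_1;Z_2|X,Y)$ we have $R_0 \geq I(U_1;Z_1,Z_2|X,Y)$ for any valid $U_1$, and the same reduction makes \eqref{eqn_results_Thm_Rate_Region_AB_3} and \eqref{eqn_results_Thm_Rate_Region_AB_4} redundant; the region again reduces to the cut-set bounds.

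I do not anticipate a substantive obstacle, since this is essentially a bookkeeping argument on top of Theorem~\ref{Thm_Rate_Region_AB} and Corollary~\ref{corollary_1}. The one point requiring mild care is verifying that the achievability construction works uniformly for every $r \geq r_{\text{min}}$ and for whichever speaker attains $r_{\text{min}}$; this holds because the sum-rate bound \eqref{eqn_results_Thm_Rate_Region_AB_4} does not involve the auxiliaries at all, and because, as in the proof of Corollary~\ref{corollary_1}, one may prepend a null first-round message to obtain additional flexibility in choosing $U_1$ without disturbing the other constraints.
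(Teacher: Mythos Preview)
Your proposal is correct and follows essentially the same route as the paper: the paper derives $R^{AB\text{-pvt}}_{\text{sum}}(r,R_0)=I(X;Z_2|Y)+I(Y;Z_1|X)+[I(Z_1;Z_2|X,Y)-R_0]_+$ directly from Theorem~\ref{Thm_Rate_Region_AB} using the same sandwiching $I(U_1;Z_1,Z_2|X,Y)\leq I(U_{[1:r]};Z_1,Z_2|X,Y)=I(Z_1;Z_2|X,Y)$ that you invoke, and then reads off cases~(i) and~(ii) as you do. Your version adds the explicit achieving pair and the verification of all four constraints, which the paper leaves implicit, but the underlying argument is the same.
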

}
 \RRB{The problem of finding necessary and sufficient conditions for the interaction to not enlarge the rate region remains open.}
 \RRB{Now, since $R^{AB-\text{pvt}}_{\text{sum}}(r,R_0)=R^{AB-\text{pvt}}_{\text{sum}}(r_{\text{min}},R_0)$, for $r\geq~r_{\text{min}}, R_0\geq 0$,  we will relax the first argument $r$ of $R^{AB-\text{pvt}}_{\text{sum}}(r,R_0)$ in the sequel and simply write $R^{AB-\text{pvt}}_{\text{sum}}(R_0)$ for $R^{AB-\text{pvt}}_{\text{sum}}(r,R_0)$, for any $r\geq r_{\text{min}}$.}
 \RRB{
\begin{proposition}\label{prop:CR}
For a function {$\left(q_{XY},q_{Z_1Z_2|XY}\right)$}, common randomness can improve the minimum sum-rate {(i.e., $R^{AB-\emph{pvt}}_{\emph{sum}}(R_0)< R^{AB-\emph{pvt}}_{\emph{sum}}(0),$ for all $R_0>0 $)} if and only if {$Z_1$} and {$Z_2$} are conditionally dependent given {$(X,Y)$}. Hence, for deterministic functions, common randomness does not reduce the sum-rate.
\end{proposition}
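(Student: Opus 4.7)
The plan is to read the proposition off directly from the closed-form expression for the minimum sum-rate that was derived in the paragraph preceding Proposition~\ref{prop:Int}. Combining Theorem~\ref{Thm_Rate_Region_AB} with the simplifications \eqref{eqn:results:simplification1}--\eqref{eqn:results:simplification3}, we already have, for every $r \geq r_{\text{min}}$ and every $R_0 \geq 0$,
\begin{equation*}
R^{AB-\text{pvt}}_{\text{sum}}(R_0) \;=\; I(X;Z_2|Y) + I(Y;Z_1|X) + \bigl[I(Z_1;Z_2|X,Y) - R_0\bigr]_{+}.
\end{equation*}
The only term that depends on $R_0$ is the last one, so everything reduces to analyzing the function $R_0 \mapsto [I(Z_1;Z_2|X,Y) - R_0]_{+}$.

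For the ``if'' direction, I would assume $Z_1$ and $Z_2$ are conditionally dependent given $(X,Y)$, i.e., $I(Z_1;Z_2|X,Y) > 0$. Then for any $R_0 > 0$, whether $R_0$ falls inside or outside the interval $(0, I(Z_1;Z_2|X,Y)]$, one has $[I(Z_1;Z_2|X,Y) - R_0]_{+} < I(Z_1;Z_2|X,Y) = [I(Z_1;Z_2|X,Y)-0]_{+}$, so $R^{AB-\text{pvt}}_{\text{sum}}(R_0) < R^{AB-\text{pvt}}_{\text{sum}}(0)$. For the ``only if'' direction, I would contrapose: if $Z_1$ and $Z_2$ are conditionally independent given $(X,Y)$, then $I(Z_1;Z_2|X,Y) = 0$ and hence $[I(Z_1;Z_2|X,Y) - R_0]_{+} = 0$ for all $R_0 \geq 0$. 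Consequently $R^{AB-\text{pvt}}_{\text{sum}}(R_0) = R^{AB-\text{pvt}}_{\text{sum}}(0)$ for every $R_0 \geq 0$, so common randomness cannot strictly improve the minimum sum-rate at any positive $R_0$.

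Finally, for the ``hence'' assertion, I would observe that if $(q_{XY},q_{Z_1Z_2|XY})$ is a deterministic function, then $Z_1$ and $Z_2$ are each functions of $(X,Y)$, so they are trivially conditionally independent given $(X,Y)$, giving $I(Z_1;Z_2|X,Y) = 0$ and hence no benefit from common randomness. There is no substantive obstacle in this argument: the whole proof is a one-line consequence of the sum-rate formula already in hand, and the only thing to be careful about is the piecewise-linearity of $[\cdot]_{+}$ and the fact that the strict inequality in the ``if'' direction is required to hold \emph{for all} $R_0 > 0$, not just some of them, which is immediate from the computation above.
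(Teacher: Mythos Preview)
Your proposal is correct and is essentially the same approach as the paper's: the paper derives the closed-form sum-rate $R^{AB-\text{pvt}}_{\text{sum}}(R_0)=I(X;Z_2|Y)+I(Y;Z_1|X)+[I(Z_1;Z_2|X,Y)-R_0]_+$ in the paragraph preceding the proposition and then states Proposition~\ref{prop:CR} as an immediate consequence, exactly as you do. Your only addition is to spell out the easy case analysis on $[\cdot]_+$ and the observation for deterministic functions, both of which are left implicit in the paper.
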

 }
\subsection{Privacy required against one user}\label{Sec_exm_sum_rate}
\RRB{When privacy is required against only one user, in contrast to the propositions above, both interaction and common randomness may help to improve the minimum sum-rate. We first show that one extra round of communication from the minimum number of rounds required for secure computability may strictly improve the minimum sum-rate.
\begin{proposition}\label{prop3}
When privacy is required against only one user (say, Bob), {\RRB{$r_{\emph{min}}+1$}} rounds may strictly improve the minimum sum-rate, i.e., there exists a function {$(q_{XY},q_{Z_1Z_2|XY})$} such that {$R^{B-\emph{pvt}}_{\emph{sum}}(r_\emph{min},R_0)>R^{B-\emph{pvt}}_{\emph{sum}}(r_\emph{min}+1,R_0),$ for all $R_0\geq 0 $.}
\end{proposition}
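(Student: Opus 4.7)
The plan is to specialize to a deterministic function $Z_1=Z_2=f(X,Y)$ adapted from the classical Ma--Ishwar~\cite{MaI11} rounds-vs-rate separation for non-private interactive function computation, and to use Theorem~\ref{Thm_Rate_Region_B}(ii) to reduce the claimed strict gap to a single information-theoretic inequality that is independent of $R_0$.

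The key simplification is that, for deterministic $f$, $H(Z_2|X,Y)=0$, and hence both $I(U_1;Z_2|X,Y)$ and $I(U_{[1:r]};Z_2|X,Y)$ vanish. The third and fourth bounds of Theorem~\ref{Thm_Rate_Region_B}(ii) then become redundant relative to $R_{12}\geq I(X;Z_2|Y)$ and $R_{21}\geq I(Y;U_{[1:r]}|X)$, so
\begin{align*}
R^{B-\text{pvt}}_{\text{sum}}(r,R_0) = I(X;Z_2|Y) + v(r),
\end{align*}
where $v(r):=\min_{U_{[1:r]}} I(Y;U_{[1:r]}|X)$ is taken over $U_{[1:r]}$ satisfying the correctness Markov chains \eqref{Eq_AB_Markov1}--\eqref{Eq_AB_Markov_Decod2} together with the one-sided privacy constraint \eqref{Eq_AB_Markov_Secr2}. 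Since this expression is independent of $R_0$, the claim $R^{B-\text{pvt}}_{\text{sum}}(r_{\text{min}},R_0)>R^{B-\text{pvt}}_{\text{sum}}(r_{\text{min}}+1,R_0)$ for every $R_0\geq 0$ reduces to the single inequality $v(r_{\text{min}})>v(r_{\text{min}}+1)$. Note that no analogue of the simplification \eqref{eqn:results:simplification2} available under two-sided privacy applies here, since that identity crucially invoked the privacy constraint against Alice that is absent in the $B$-pvt setting; this is what leaves room for $v(r)$ to depend nontrivially on $r$.

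Next, I would exhibit a deterministic $(q_{XY},f)$ witnessing $v(r_{\text{min}})>v(r_{\text{min}}+1)$. The candidate is an instance of the Ma--Ishwar rounds-vs-rate separation, where the non-private analogue of $v(\cdot)$ (dropping \eqref{Eq_AB_Markov_Secr2}) strictly decreases from $r_{\text{min}}$ to $r_{\text{min}}+1$ rounds via a Slepian--Wolf style compression of Bob's reply that is enabled only when Alice transmits first. Two verifications are then required. For achievability at $r_{\text{min}}+1$, I would modify Alice's outgoing messages in the Ma--Ishwar scheme so that each is either masked by common randomness (admissible since the formula above is $R_0$-independent) or is chosen to depend on $X$ only through $(Y,f(X,Y))$, thereby preserving \eqref{Eq_AB_Markov_Secr2} without inflating $I(Y;U_{[1:r_{\text{min}}+1]}|X)$. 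For the converse at $r_{\text{min}}$, the privacy Markov chain \eqref{Eq_AB_Markov_Secr2} only shrinks the admissible set, so the non-private Ma--Ishwar lower bound on $v(r_{\text{min}})$ carries over unchanged.

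The main obstacle is producing Alice's first-round message at $r_{\text{min}}+1$ rounds: it must simultaneously (i) reveal no $X$-information to Bob beyond $(Y,f(X,Y))$, (ii) supply enough coordination for Bob's reply to be Slepian--Wolf compressed, and (iii) leave Alice with sufficient state to decode $f(X,Y)$ after Bob's compressed reply. Once a specific example is verified to satisfy (i)--(iii), the $R_0$-independent reduction above immediately yields the strict gap uniformly in $R_0\geq 0$.
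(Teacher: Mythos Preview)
Your reduction is correct and matches the paper's first step: for deterministic $Z_1=Z_2=f(X,Y)$, Theorem~\ref{Thm_Rate_Region_B}(ii) collapses to $R^{B\text{-pvt}}_{\text{sum}}(r,R_0)=I(X;Z|Y)+v(r)$ with $v(r)=\min I(Y;U_{[1:r]}|X)$, independent of $R_0$. Your converse logic---that \eqref{Eq_AB_Markov_Secr2} only shrinks the feasible set, so a non-private lower bound on $v(r_{\min})$ persists---is also valid, and indeed the paper's two-round converse (with Bob starting) uses only the correctness Markov chains, not privacy.

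However, the proposal has a genuine gap on the achievability side. The ``masking by common randomness'' device does not work: the single-letter auxiliaries $U_i$ must themselves satisfy the Markov chain $U_{[1:r]}-(Y,Z)-X$, and XOR-ing Alice's message with shared randomness does nothing to establish that chain. The $R_0$-independence of the sum-rate formula is not a license to inject shared randomness into the $U_i$'s; it simply says the third and fourth inequalities of Theorem~\ref{Thm_Rate_Region_B}(ii) are already slack. Only your second option---making each of Alice's messages a function of $(Y,f(X,Y))$---actually yields admissible auxiliaries. Moreover, you never fix an example, and pointing to Ma--Ishwar is not enough: their separation concerns the sum $I(X;U|Y)+I(Y;U|X)$ under correctness alone, whereas you need a strict drop in $I(Y;U|X)$ alone, together with a privacy-respecting $(r_{\min}{+}1)$-round scheme that does not inflate that term. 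It is not at all clear that the Ma--Ishwar optimal scheme can be so modified.

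The paper does \emph{not} adapt Ma--Ishwar. It constructs a tailored example: $X=(V,J)$ with $V$ uniform on $\{0,1\}$ and $J$ uniform on $[1{:}m]$, $Y=(Y_1,\dots,Y_m)$ i.i.d.\ uniform bits, and $Z=(J,\,V\wedge Y_J)$. Here $r_{\min}=2$ (Bob starts), and a correctness-only argument forces $H(Y|U_1)=0$, giving $v(2)=m$. At $r=3$, the scheme $U_1=J$, $U_2=Y_J$, $U_3=V\wedge Y_J$ gives $v(3)=1$; crucially, both of Alice's messages $J$ and $V\wedge Y_J$ are components of $Z$, so \eqref{Eq_AB_Markov_Secr2} holds trivially. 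This is precisely your ``second option'' in action, but the example is engineered so that such a decomposition exists---and supplying that construction is exactly the content you are missing.
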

 }   
\RRB{ We prove this proposition through an example where both users compute a deterministic function of $(X,Y)$, and privacy against Bob alone is required. 
 \begin{example}{\label{exam_1}}
Let $Y$ be an $m$-length vector of uniform binary random variables, $Y=(Y_1, \ldots, Y_m)$, and $X$ consists of a uniform 
 binary random variable $V$ and  
 a random variable $J$ which is uniformly distributed on $[1:m]$, i.e.,  $X=(V,J)$.
 We assume that  $Y$,  $V$, and $J$ are independent. 
 Both users want to compute the function $Z=(J,V\wedge Y_J)$, where ``$\wedge$'' represents the binary AND function, with privacy only against Bob.
 In this example, it is easy to see that \iftoggle{paper}{\ignorespaces}{(see below)} $r_{\min}$ is 2 with Bob starting the communication. 
{Note that $R^{B-\text{pvt}}_{\text{sum}}(r,R_0)$ does not depend on $R_0$ for a deterministic function (from Theorem~\ref{Thm_Rate_Region_B}) and hence we can write $R^{B-\text{pvt}}_{\text{sum}}(r)$ for $R^{B-\text{pvt}}_{\text{sum}}(r,R_0)$ by relaxing $R_0$.} We show that the optimum sum-rate for two round protocols, {$R^{B-\text{pvt}}_{\text{sum}}(2)$} is $\log m + 1/2+m$ bits (all rates in the sequel are in bits). 
 Then we give a three round protocol, with Alice starting the communication, which has a sum-rate of $R^{B-\text{pvt}}_{\text{sum}}(3)=\log m +1/2+1$.
 We also show that  $ \log m+1/2+1$ is the minimum achievable sum-rate with any $r$ ($r\geq3$) rounds of protocol. Details can be found in Appendix~\ref{App_opt_rate}.
\end{example}
}
\RRB{It remains open whether there is an example where further interaction leads to a further improvement in the sum-rate, i.e., $R^{B-\text{pvt}}_{\text{sum}}(r_\text{min}+1,R_0)>R^{B-\text{pvt}}_{\text{sum}}(r_\text{min}+k,R_0)$, for some $k>1$.
\begin{proposition}\label{newprop}
When privacy is required against only one user (say, Bob), then common randomness may strictly improve the minimum sum-rate, i.e., there exists a function $(q_{XY},q_{Z_1Z_2|XY})$ such that $R^{B-\emph{pvt}}_{\emph{sum}}(r,R_0)<R^{B-\emph{pvt}}_{\emph{sum}}(r,0)$, for sufficiently large common randomness rate $R_0$ and $r\geq r_{\emph{min}}$. 
\end{proposition}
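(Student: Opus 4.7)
My plan is to prove Proposition~\ref{newprop} by exhibiting a randomized function for which the minimum sum-rate with sufficient common randomness is strictly smaller than the one without. The main obstacle will be a double-Markov deduction that forces the auxiliary random variables to essentially reveal $Z_2$; once this is in hand, the rest is a direct computation in Theorem~\ref{Thm_Rate_Region_B}. Take $X\sim\mathrm{Bern}(1/2)$, let $Y$ be a constant, $Z_1=\emptyset$, and $Z_2=X\oplus N$ with $N\sim\mathrm{Bern}(p)$ independent of $X$, for some $p\in(0,1/2)$. Feasibility in one round with Alice starting is immediate from Theorem~\ref{Thm_Rate_Region_B}(i) via the choice $U_1=Z_2$, which satisfies all the required Markov chains; in particular $r_{\text{min}}=1$.

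I would next evaluate the region of Theorem~\ref{Thm_Rate_Region_B}(ii). Since $Y$ is trivial, $I(X;Z_2|Y)=1-h(p)$ and $I(Y;U_{[1:r]}|X)=0$ for any valid $U_{[1:r]}$. For $R_0\geq h(p)$, the choice $U_1=Z_2$ yields the rate triple $(R_0,R_{12},R_{21})=(h(p),1-h(p),0)$, which satisfies all four inequalities of the theorem; combined with the marginal converse $R_{12}\geq I(X;Z_2|Y)=1-h(p)$, this gives $R^{B-\text{pvt}}_{\text{sum}}(r,R_0)=1-h(p)$ for every $r\geq r_{\text{min}}$.

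The crux is the matching lower bound when $R_0=0$: I would show that $I(U_{[1:r]};Z_2|X)\geq h(p)$ for every auxiliary $U_{[1:r]}$ satisfying the conditions of Theorem~\ref{Thm_Rate_Region_B}(ii). The decoder Markov chain \eqref{Eq_AB_Markov_Decod2} and the privacy-against-Bob condition \eqref{Eq_AB_Markov_Secr2}, specialized to $Z_1=\emptyset$ and trivial $Y$, together give the double Markov relation $Z_2-U_{[1:r]}-X$ and $U_{[1:r]}-Z_2-X$. Writing $p(x,z_2,u_{[1:r]})$ in the two forms allowed by these chains and equating gives $p(x\mid u_{[1:r]})=p(x\mid z_2)$ on the support of $p(u_{[1:r]},z_2)$. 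Since the posteriors $p(x\mid z_2=0)$ and $p(x\mid z_2=1)$ differ for $p\neq 1/2$, each realization of $U_{[1:r]}$ is compatible with only one value of $Z_2$; hence $U_{[1:r]}$ deterministically determines $Z_2$, so $H(Z_2\mid U_{[1:r]})=0$ and $I(U_{[1:r]};Z_2\mid X)=H(Z_2\mid X)=h(p)$. Substituting into the sum-rate inequality of Theorem~\ref{Thm_Rate_Region_B}(ii) then yields $R^{B-\text{pvt}}_{\text{sum}}(r,0)\geq 1$, strictly exceeding $1-h(p)$ by $h(p)>0$ for every $r\geq r_{\text{min}}$.
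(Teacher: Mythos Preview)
Your proof is correct and the core technique coincides with the paper's: in both cases the key step is the double Markov relation $Z_2-U_{[1:r]}-X$ and $U_{[1:r]}-Z_2-X$ (from \eqref{Eq_AB_Markov_Decod2} and \eqref{Eq_AB_Markov_Secr2} with trivial $Y$), which forces $Z_2$ to be a deterministic function of $U_{[1:r]}$ and hence $I(U_{[1:r]};Z_2\mid X)=H(Z_2\mid X)$.

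The difference is only in the witness example. The paper takes $X=(X_1,\dots,X_m)$ i.i.d.\ uniform bits, $Y=\emptyset$, and $Z_1=Z_2=(J,X_J)$ with $J$ uniform on $[1:m]$, obtaining $R^{B-\text{pvt}}_{\text{sum}}(r,R_0)=1$ for large $R_0$ versus $R^{B-\text{pvt}}_{\text{sum}}(r,0)=\log(2m)$; the gap $\log m$ can be made arbitrarily large. Your BSC example ($Z_1=\emptyset$, $Z_2=X\oplus N$) is simpler and cleaner---in particular, setting $Z_1=\emptyset$ sidesteps the slightly awkward handling of \eqref{Eq_AB_Markov_Decod2} when $Z_1=Z_2$---at the price of a gap $h(p)<1$ that is bounded. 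Either example suffices for the proposition as stated.
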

We prove this proposition through the following example.
\begin{example}\label{example:asympt}
Let $X$ be an $m$-length ($m>1$) vector of mutually independent and uniform binary random variables, $X=(X_1,\dots,X_m)$ and $Y=\emptyset$. Also, let $Z_1=Z_2=Z=(J,X_J)$, where $J$ is a random variable uniformly distributed on $[1:m]$. This defines a function $(q_{XY},q_{Z_1Z_2|XY})$. In this example, it is easy to that $r_{\text{min}}=1$, since Alice can sample random variable $J$ uniformly distributed on $[1:m]$ and transmit $(J,X_J)$ to Bob without violating privacy. When the common randomness rate $R_0$ is sufficiently large, from Theorem~\ref{Thm_Rate_Region_B}, the minimum sum-rate is given by $I(X;Z)=1$. So, $R^{B-\text{pvt}}_{\text{sum}}(r,R_0)=1$, for sufficiently large $R_0$ and $r\geq r_{\text{min}}$. We now argue that $R^{B-\text{pvt}}_{\text{sum}}(r,0)>1$. From Theorem~\ref{Thm_Rate_Region_B}, $R^{B-\text{pvt}}_{\text{sum}}(r,0)$ is given by
\begin{align}
&R^{B-\text{pvt}}_{\text{sum}}(r,0)\nonumber\\
&\hspace{12pt}=I(X;Z)+\min_{p(u_{[1:r]}|x,z):\  \eqref{Eq_AB_Markov1}-\eqref{Eq_AB_Markov_Decod2}, \eqref{Eq_AB_Markov_Secr2}\ \text{hold}}I(U_{[1:r]};Z|X)\nonumber\\
&\hspace{12pt}=1+\min_{p(u_{[1:r]}|x,z):\  \eqref{Eq_AB_Markov1}-\eqref{Eq_AB_Markov_Decod2}, \eqref{Eq_AB_Markov_Secr2}\ \text{hold}}I(U_{[1:r]};Z|X)\label{relax}.
\end{align}
To show $R^{B-\text{pvt}}_{\text{sum}}(r,0)>1$, we first show that $R^{B-\text{pvt}}_{\text{sum}}(r,0)=R^{B-\text{pvt}}_{\text{sum}}(1,0)$.
Note that $R^{B-\text{pvt}}_{\text{sum}}(1,0)$ is given by 
\begin{align}
R^{B-\text{pvt}}_{\text{sum}}(1,0)=1+\min_{\substack{p(u|x,z):\\  Z-U-X,\\ Z-(U,X)-Z,\\ U-Z-X}}I(U;Z|X)\label{eqn:29}.
\end{align}
By relaxing the constraints \eqref{Eq_AB_Markov1}-\eqref{Eq_AB_Markov_Decod1} in the optimization problem of \eqref{relax}, it is easy to see that $R^{B-\text{pvt}}_{\text{sum}}(r,0)\geq R^{B-\text{pvt}}_{\text{sum}}(1,0)$, for $r\geq 1$. We also have $R^{B-\text{pvt}}_{\text{sum}}(r,0)\leq R^{B-\text{pvt}}_{\text{sum}}(1,0), r>1$, since the minimum sum-rate can only improve with the number of rounds. Thus, we have $R^{B-\text{pvt}}_{\text{sum}}(r,0)= R^{B-\text{pvt}}_{\text{sum}}(1,0), r>1$. By considering $U=Z$ in \eqref{eqn:29}, it is clear that $R^{B-\text{pvt}}_{\text{sum}}(1,0)\leq 1+\log(m)=\log(2m)$. In Appendix~\ref{appendixexample}, we show that $R^{B-\text{pvt}}_{\text{sum}}(1,0)=\log(2m)>1$.
\end{example}
}

\RRB{We now give a condition under which neither interaction nor common randomness helps to enlarge the rate region.
\begin{proposition}\label{prop5}
If only one user computes and privacy is required against the other user, then neither interaction nor common randomness helps to enlarge the rate region.
\end{proposition}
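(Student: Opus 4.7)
My plan is to apply part (ii) of Theorem \ref{Thm_Rate_Region_A} (and its symmetric counterpart for the other user), specialize to the case where one user produces no output, and observe that the auxiliary random variables for any number of rounds can be collapsed to a single one without changing any binding mutual information. By symmetry I assume throughout that Alice does not compute, i.e., $Z_1=\emptyset$, and privacy is required against Alice; the other configuration ($Z_2=\emptyset$, privacy against Bob) is handled identically after swapping roles.

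First I specialize Theorem \ref{Thm_Rate_Region_A}(ii) to $Z_1=\emptyset$. The mutual informations $I(Y;Z_1|X)$, $I(U_1;Z_1|X,Y)$, and $I(U_{[1:r]};Z_1|X,Y)$ all vanish, so the rate region $\mathcal{R}_A^{A-\text{pvt}}(r)$ reduces to the set of non-negative triples $(R_0,R_{12},R_{21})$ with $R_{12}\geq I(X;U_{[1:r]}|Y)$, for some $p(u_{[1:r]}|x,y,z_2)$ satisfying the protocol Markov chains \eqref{Eq_AB_Markov1}--\eqref{Eq_AB_Markov2}, the correctness chain $Z_2-(U_{[1:r]},Y)-X$ (from \eqref{Eq_AB_Markov_Decod2}), and the privacy chain $U_{[1:r]}-X-(Y,Z_2)$ (from \eqref{Eq_AB_Markov_Secr1}). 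Since $R_0$ does not appear in the only binding constraint, common randomness is already seen to be useless.

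Next I collapse $U_{[1:r]}$ into a single auxiliary $U:=U_{[1:r]}$ and verify that it satisfies the single-round Markov conditions. The privacy chain gives $U-X-Y$ and $U-X-(Y,Z_2)$ directly, while the correctness chain $Z_2-(U,Y)-X$ is unchanged. Since $I(X;U|Y)=I(X;U_{[1:r]}|Y)$, any rate triple achievable in $r$ rounds with Alice starting is also achievable in one round with Alice starting and $R_0=0$, giving $\mathcal{R}_A^{A-\text{pvt}}(r)=\mathcal{R}_A^{A-\text{pvt}}(1)$ for every $r$. For the Bob-starting case, I would note that any $r$-round Bob-starting protocol is a special instance of an $(r+1)$-round Alice-starting protocol in which Alice's first message is null, so $\mathcal{R}_B^{A-\text{pvt}}(r)\subseteq \mathcal{R}_A^{A-\text{pvt}}(r+1)=\mathcal{R}_A^{A-\text{pvt}}(1)$. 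Combining, $\mathcal{R}^{A-\text{pvt}}(r)=\mathcal{R}_A^{A-\text{pvt}}(1)$ for every $r$, so neither interaction nor common randomness enlarges the rate region.

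The only nontrivial step is the collapse, and it hinges entirely on the privacy chain $U_{[1:r]}-X-(Y,Z_2)$, which says that the entire transcript of auxiliary random variables is conditionally determined by $X$ alone; the protocol-style chains \eqref{Eq_AB_Markov1}--\eqref{Eq_AB_Markov2} then become vacuous once $U_{[1:r]}$ is treated as a single one-round message from Alice. It is precisely this collapse that fails in the other privacy settings studied in the paper, which is why interaction and common randomness can strictly help there, as Propositions \ref{prop3} and \ref{newprop} demonstrate.
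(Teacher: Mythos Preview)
Your proof is correct and follows essentially the same approach as the paper's. Both specialize Theorem~\ref{Thm_Rate_Region_A}(ii) to $Z_1=\emptyset$, observe that the only binding constraint is $R_{12}\geq I(X;U_{[1:r]}|Y)$ (so $R_0$ is irrelevant), and then use the privacy chain $U_{[1:r]}-X-(Y,Z_2)$ to collapse the $r$-round auxiliaries into a single one-round auxiliary. The paper phrases the collapse as a sandwich argument---relaxing the protocol chains \eqref{Eq_AB_Markov1}--\eqref{Eq_AB_Markov2} to get a lower bound $R'_{12}$, then noting $R'_{12}$ coincides with the one-round optimum---whereas you construct the one-round auxiliary directly by setting $U:=U_{[1:r]}$; the content is identical. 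Your explicit handling of the Bob-starting case via the embedding $\mathcal{R}_B^{A-\text{pvt}}(r)\subseteq\mathcal{R}_A^{A-\text{pvt}}(r+1)$ is a nice touch that the paper leaves implicit.
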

\begin{remark}
We mention that when privacy is required only against Bob, Example~\ref{exam_1} (Example~\ref{example:asympt}, resp.) can be modified so that only Bob computes and interaction (common randomness, resp.) helps to reduce the communication rate. Thus, it is really the combination of computation by only one user and privacy requirement against the other user that matters in Proposition~\ref{prop5}.
\end{remark}
}
\begin{proof}[Proof of Proposition~\ref{prop5}]
Let us consider the case where only Bob computes (i.e., $Z_1=\emptyset$) and privacy against Alice is required. {\color{black}Note that every function $(q_{XY},q_{Z_2|XY})$ is securely computable in one round with privacy only against Alice\footnote{\label{foot:rand-not-trivial_priv-Alice}\color{black}Note that for a deterministic function, one-sided communication implies that privacy against Alice trivially holds because conditioned on the inputs, there is only one possible output. However, for a randomized function, one-sided communication need not imply privacy against Alice, because conditioned on the inputs, the message that Alice sends to Bob can influence the output. But privacy against Alice requires that, conditioned on the inputs, Alice's message must not influence Bob's output.} since Alice may send $X$ to Bob, and Bob, having $Y$ as his input, may output $Z_2$ according to $q_{Z_2|XY}$.} By substituting $Z_1= \emptyset$ in Theorem~\ref{Thm_Rate_Region_A}, it can be observed that
the only active constraint is    $R_{12} \geq I(X;U_{[1:r]}|Y)$. 
Further, let us consider  $R^*_{12} = \min I(X;U_{[1:r]}|Y)$, where the minimization is over 
conditional p.m.f.'s $p(u_{[1:r]}|x,y,z_2)$ satisfying \eqref{Eq_AB_Markov1}-\eqref{Eq_AB_Markov2}, and
\begin{align}
  U_{[1:r]}-X-(Y,Z_2) \label{Eq_One_Priv1},\\
  Z_2-(U_{[1:r]},Y)-X  \label{Eq_One_Priv2}.
\end{align}
Now let us consider $R'_{12} = \min I(X;U_{[1:r]}|Y)$, where the minimization is only under  
\eqref{Eq_One_Priv1} and \eqref{Eq_One_Priv2}.
Then $R^*_{12} \geq R'_{12}$. Further, it can be observed that $R'_{12}$ is the minimum rate 
achievable when $r=1$. So $R^*_{12} \leq R'_{12}$.
This shows that the rate region is given by 
\begin{align*}
 \{( R_{0}, R_{12}), R_{21}:  R_{0} \geq 0, R_{12} \geq I(X;U|Y), R_{21} \geq 0   \},
\end{align*}
for some conditional p.m.f. $p(u|x,y,z_2)$ satisfying  $U-X-(Y,Z_2)$ and $ Z_2 - (U,Y) - X$.
This shows that interaction and the presence of common randomness do not help   in this case.   
\end{proof}
\section{The Communication Cost of Security - An Example}
\label{section:cost}

 {\color{black}Recall that Theorem~\ref{cutset} states that if a function is securely computable with privacy against both users, then cut-set lower bounds (for function computation without any privacy) are achievable, i.e., when a function is securely computable with privacy against both users, there is no communication cost of security. In this section, we study the communication cost of security when privacy is required against only one user and compare it with the communication cost of non-private computation.   It turns out that, for a function which is securely computable with privacy against only one user, the requirement of privacy, in general, may lead to a larger optimal rate.} We consider, arguably, the simplest special case of two-user secure computation where only Bob computes a function using a single round of transmission from Alice.
We also compare our results with function computation without any privacy requirement. For the sake of comparison, below we state a result from \cite{YassaeeGA15}, 
specialized to our setting (i.e., one-way communication from Alice to Bob and only Bob computing), for the optimal rate required
when there is no privacy requirement. We denote the optimal rate by $R^{\text{No-Privacy}}$ when there is no common randomness. Note that \cite{YassaeeGA15} considered a more general model (where Alice and Bob
communicate back and forth for multiple rounds, and both of them may produce potentially different outputs),
and obtained a single-letter expression for the optimal rate, when Alice and Bob interact for arbitrary but finite number of rounds.

\begin{theorem}{\cite[Theorem 1]{YassaeeGA15}}\label{thm:a_rate-4}
For any $(q_{XY},q_{Z|XY})$, 
\begin{center}{\em $R^{\text{No-privacy}}\quad=\quad\displaystyle \min_{\substack{p_{U|XYZ}: \\ U-X-Y \\ Z-(U,Y)-X \\}} I(X,Z;U|Y),$}\end{center}
where cardinality of $\U$ satisfies $|\U|\leq|\X|\cdot|\Y|\cdot|\Z|+2$.
\end{theorem}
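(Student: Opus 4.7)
The plan is to establish this as a specialization of the general Yassaee--Gohari--Aref channel-simulation theorem \cite{YassaeeGA15} to our one-round, one-output setting: Alice transmits once, only Bob decodes, and there is no common randomness. Correctness is captured by $\|p^{\text{(induced)}}_{X^nY^nZ^n} - \prod q(x,y)q(z|x,y)\|_1 \to 0$. The two directions are achievability via likelihood-encoder / soft-covering, and converse via single-letterization.

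For achievability, fix any $p(u|x)$ satisfying $U-X-Y$ for which $\sum_u p(u|x)p(z|u,y)=q(z|x,y)$ with $Z-(U,Y)-X$, i.e., a compatible decomposition of the target channel. Generate a codebook $\{U^n(m)\}_{m=1}^{2^{nR}}$ i.i.d.\ from $\prod p_U$, known to both parties. Alice uses the likelihood encoder: she samples $M$ with probability proportional to $\prod_{i=1}^n p_{U|X}(u_i(M)|x_i)$ and transmits $M$ to Bob. Bob, given $(M,Y^n)$, samples $Z^n$ conditionally i.i.d.\ from $\prod_{i=1}^n p_{Z|U,Y}(z_i|u_i(M),y_i)$ using private randomness. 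A soft-covering analysis along the lines of \cite{YassaeeGA15} shows that the induced joint of $(X^n,Y^n,Z^n)$ approaches $\prod q(x_i,y_i)q(z_i|x_i,y_i)$ in total variation whenever $R > I(X,Z;U|Y)$. Heuristically, $I(X;U|Y)$ is the Wyner--Ziv-like rate to deliver $U^n$ to Bob with side information $Y^n$, while the additional $I(Z;U|X,Y)$ compensates for the absence of common randomness that would otherwise allow the codebook itself to furnish the sampling randomness consistent with the target marginals.

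For the converse, given any sequence of codes with vanishing correctness error, introduce a time-sharing index $T$ uniform on $[1:n]$, independent of all else, and identify the single-letter auxiliary variable
\begin{equation*}
U := (M, X^{T-1}, Y^n_{T+1}, T),\qquad X := X_T,\quad Y := Y_T,\quad Z := Z_T.
\end{equation*}
Starting from
\begin{align*}
nR &\ \ge\ H(M) \ \ge\ I(M; X^n, Z^n \mid Y^n) \\
   &\ =\ \sum_{i=1}^n I\bigl(M, X^{i-1}, Z^{i-1};\, X_i, Z_i \,\bigm|\, Y^n\bigr),
\end{align*}
and using the i.i.d.\ structure of $(X^n,Y^n)$ together with continuity of mutual information under TV perturbations of the induced joint, one obtains $R \ge I(X,Z;U|Y) - o(1)$. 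The Markov chain $U-X-Y$ follows because $M$ depends only on $X^n$ and on private randomness independent of $Y^n$, and $(X^{i-1}, Y^n_{i+1})$ is independent of $Y_i$ given $X_i$. The Markov chain $Z-(U,Y)-X$ follows because $Z_T$ is produced by Bob's randomized decoder as a function of $(M,Y^n)$ and independent private randomness, which is captured by $(U,Y)$ in the single-letter form.

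The cardinality bound $|\U|\le|\X|\cdot|\Y|\cdot|\Z|+2$ is obtained by the Fenchel--Eggleston--Carath\'eodory theorem: the $|\X|\cdot|\Y|\cdot|\Z|-1$ free parameters of the joint $p_{XYZ}$ must be preserved, plus two further linear functionals to pin down the objective $I(X,Z;U|Y)$ and the relevant conditional-independence structure. The main technical obstacle is the achievability analysis: the soft-covering bound must be applied carefully to show that the likelihood encoder combined with Bob's randomized decoder induces the correct joint statistics \emph{without} any shared seed, which is precisely where the extra term $I(Z;U|X,Y)$ on top of $I(X;U|Y)$ becomes both necessary and sufficient.
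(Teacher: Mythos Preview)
The paper does not prove this theorem: it is stated verbatim as a specialization of \cite[Theorem~1]{YassaeeGA15} and is used only as a point of comparison in Section~\ref{section:cost}. So there is no ``paper's own proof'' to compare against; your outline is effectively a sketch of the original Yassaee--Gohari--Aref argument, and at that level it is correct.

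One technical slip worth fixing in your converse: the displayed equality
\[
I(M;X^n,Z^n\mid Y^n)\;=\;\sum_{i=1}^n I\bigl(M,X^{i-1},Z^{i-1};X_i,Z_i\bigm|Y^n\bigr)
\]
is not an identity. The chain rule gives $I(M;X^n,Z^n\mid Y^n)=\sum_i I(M;X_i,Z_i\mid X^{i-1},Z^{i-1},Y^n)$, whereas your sum exceeds this by $\sum_i I(X^{i-1},Z^{i-1};X_i,Z_i\mid Y^n)$. That extra term is not zero for the induced distribution; it is only $o(n)$ because the induced $(X^n,Y^n,Z^n)$ is $\epsilon$-close to i.i.d.\ in total variation, and you must invoke continuity of entropy (as the paper does repeatedly, e.g.\ via \cite[Theorem~17.3.3]{CoverJ06}) to control it. With that correction, the rest of your single-letterization goes through, and your choice $U=(M,X^{T-1},Y_{T+1}^n,T)$ is a legitimate (mirror-image) variant of the auxiliary $(M,W,X^{T+1:n},Y^{1:T-1},T)$ the paper uses in its own converses.
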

We define $R^{AB-\text{pvt}}$, $R^{A-\text{pvt}}$ and $R^{B-\text{pvt}}$ as the infima of all the achievable rates in the absence of common randomness when privacy is required, respectively, against both users, only against Alice, and only against Bob. Theorems~\ref{Thm_Rate_Region_AB}, \ref{Thm_Rate_Region_A} and \ref{Thm_Rate_Region_B} give us expressions for $R^{AB-\text{pvt}}$, $R^{A-\text{pvt}}$ and $R^{B-\text{pvt}}$, respectively. Before we discuss the example, we present an alternative expression for $R^{A-\text{pvt}}$ in terms of conditional graph entropy \cite{OrlitskyR01}, $H_{\mathcal{G}}(X|Y)$ (see Definition~\ref{defn:rand-conditional-graph-entropy} in Section~\ref{Sect_Prelim}) which simplifies the computation. For a given $(q_{XY},q_{Z|XY})$, we define the following characteristic graph $\mathcal{G}=(\mathcal{X},\mathcal{E})$, where 
\begin{align*}
\mathcal{E}=\{\{x,x'\}: \exists (y,z)\text{ s.t. } q_{XY}(x,y)>0,q_{XY}(x',y)>0, \\ \text{and } q_{Z|XY}(z|x,y)\neq q_{Z|XY}(z|x',y)\}.
\end{align*} 
An {\em independent set} of $\mathcal{G}$ is a collection $\mathcal{U}\subseteq \mathcal{X}$ of vertices such that no two vertices of $\mathcal{U}$ are connected by an edge in $\mathcal{G}$.
Let $W$ denote the random variable corresponding to the independent sets in $\mathcal{G}$, and let $\varGamma(\mathcal{G})$ be the set of all independent sets in $\mathcal{G}$.

\begin{definition}[Conditional Graph Entropy, \cite{OrlitskyR01}]\label{defn:rand-conditional-graph-entropy}
For a given $(q_{XY},q_{Z|XY})$, the conditional graph entropy of $\mathcal{G}$ is defined as
\[H_{\mathcal{G}}(X|Y):= \min_{\substack{p_{W|X}: \\ W-X-Y \\ X\in W\in \varGamma(\mathcal{G})}}I(W;X|Y),\]
where minimum is taken over all the conditional distributions $p_{W|X}$ such that $p_{W|X}(w|x)>0$ only if $x\in w$.
\end{definition}
By the data-processing inequality, this minimization can be restricted to $W$'s ranging over maximal independent sets of $\mathcal{G}$ \cite{OrlitskyR01}. Note that $0\leq H_{\mathcal{G}}(X|Y) \leq H(X|Y)$ hold in general, and if $\mathcal{G}$ is a complete graph then $H_{\mathcal{G}}(X|Y)=H(X|Y)$.

\begin{proposition} \label{prop_asymptotic_alice}
Suppose $(q_{XY},q_{Z|XY})$ is asymptotically securely computable with privacy only against Alice, then $R^{A-\emph{pvt}}=H_{\mathcal{G}}(X|Y)$, where $H_{\mathcal{G}}(X|Y)$ is conditional graph entropy (see Definition~\ref{defn:rand-conditional-graph-entropy} in Section~\ref{Sec_nointeraction}).
\end{proposition}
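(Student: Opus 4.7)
The plan is to first reduce $R^{A\text{-pvt}}$ to a clean single-letter optimization, and then show both inequalities between this and the conditional graph entropy. Setting $Z_1 = \emptyset$ and $Z_2 = Z$ in Theorem~\ref{Thm_Rate_Region_A}, and invoking Proposition~\ref{prop5} (which shows that interaction and common randomness are useless in this setting), I would obtain
\[
R^{A\text{-pvt}} \;=\; \min_{p(u\mid x,y,z)} I(X;U\mid Y),
\]
where the minimum is over conditional p.m.f.'s satisfying (a)~$U-X-(Y,Z)$ and (b)~$Z-(U,Y)-X$. Everything then reduces to showing that this optimum equals $H_{\mathcal{G}}(X\mid Y)$.

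For the achievability direction $R^{A\text{-pvt}} \leq H_{\mathcal{G}}(X\mid Y)$, I would take a minimizer $W$ in the definition of $H_{\mathcal{G}}(X\mid Y)$, which satisfies $W-X-Y$ and $X \in W \in \varGamma(\mathcal{G})$ almost surely, and set $U := W$ with joint distribution $p(u,x,y,z) = p(u\mid x)q(x,y)q(z\mid x,y)$. Condition (a) holds by construction. For condition (b), fix $(u,y)$ with positive probability and any $x,x'$ in the conditional support; both lie in the independent set $u$, and if $q(x,y),q(x',y)>0$ then the edge definition of $\mathcal{G}$ forces $q(z\mid x,y)=q(z\mid x',y)$ for every $z$. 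Hence $Z$ has the same conditional distribution given $(U,Y)$ regardless of $X$, which is exactly (b). This shows $I(X;U\mid Y)=I(X;W\mid Y)=H_{\mathcal{G}}(X\mid Y)$.

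For the converse direction $R^{A\text{-pvt}} \geq H_{\mathcal{G}}(X\mid Y)$, take any feasible $U$ and define the deterministic function $f(u) := \{x : p(x\mid u)>0\}$, and set $W := f(U)$. By construction $X \in W$ almost surely. The key step is to verify $W \in \varGamma(\mathcal{G})$ almost surely: for any $u$ with $p(u)>0$ and any $x,x' \in f(u)$, suppose there is a $y$ with $q(x,y),q(x',y)>0$, so that $p(u,y,x),p(u,y,x')>0$. Using (a), $p(z\mid u,y,x) = q(z\mid x,y)$ and $p(z\mid u,y,x') = q(z\mid x',y)$; using (b), both equal $p(z\mid u,y)$. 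Hence $q(z\mid x,y) = q(z\mid x',y)$ for every $z$, so $\{x,x'\}$ is \emph{not} an edge of $\mathcal{G}$. Finally, $W-X-Y$ follows because $W$ is a function of $U$ and $U-X-Y$ (implied by (a)), and the data processing inequality applied to the Markov chain $W-U-X$ (conditioned on $Y$) gives $I(X;W\mid Y) \leq I(X;U\mid Y)$. Hence $H_{\mathcal{G}}(X\mid Y) \leq I(X;W\mid Y) \leq I(X;U\mid Y)$ for every feasible $U$.

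The step I expect to require the most care is the independent-set argument in the converse: it needs both Markov conditions simultaneously — (a) to read off $p(z\mid u,y,x) = q(z\mid x,y)$, and (b) to collapse this value across different $x$'s in the conditional support — which is precisely the tight coupling between the privacy constraint $Z-(U,Y)-X$ and the edge definition of $\mathcal{G}$. The rest of the argument is routine data processing and cardinality bookkeeping.
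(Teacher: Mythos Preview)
Your proposal is correct and follows essentially the same approach as the paper's proof. Both arguments reduce $R^{A\text{-pvt}}$ to the single-letter optimization $\min_{U-X-(Y,Z),\,Z-(U,Y)-X} I(X;U\mid Y)$ via Theorem~\ref{Thm_Rate_Region_A}, then prove the two inequalities exactly as you describe: for achievability, take the graph-entropy minimizer $W$ and verify $Z-(W,Y)-X$ from the independent-set property; for the converse, define $W$ as the conditional support $\{x:p(u,x)>0\}$, use both Markov chains to show this is an independent set, and apply data processing. Your direct argument for the independent-set step is in fact slightly cleaner than the paper's proof-by-contradiction, but the content is identical.
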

Proposition~\ref{prop_asymptotic_alice} is proved in Appendix~\ref{appendixH}. 

\begin{example}
Consider the randomized function $(q_{XY},q_{Z|XY})$ given in Figure~\ref{fig:example1}.
\begin{figure}
\centering
\quad \begin{tabular}{c|ccc|ccc|}
& \multicolumn{3}{c|}{$y_1$} & \multicolumn{3}{c|}{$y_2$} \\
\hline
& 0 & e & 1 & 0 & e & 1 \\ 
\Xhline{3\arrayrulewidth}
\multirow{1}{*}{$x_1$} & {\color{magenta} 0} & {\color{magenta} 1} & {\color{magenta} 0} & ${\color{magenta} 1-p}$ & {\color{magenta}$p$} & {\color{magenta} 0} \\
\hline
\multirow{1}{*}{$x_2$} & {\color{magenta}0} & {\color{magenta}1} & {\color{magenta}0} &  & {\color{magenta}-} & \\ 
\hline
\multirow{1}{*}{$x_3$} &  & {\color{magenta}-} &  & {\color{magenta}0} & {\color{magenta}$p$} & {\color{magenta}$1-p$} \\
\hline
\end{tabular}
\caption[An example for comparison]
{The randomized function $q_{Z|XY}$ described by the above matrix has  
ternary $\X$ and $\Z$ alphabets and binary $\Y$ alphabet.
Specifically, $\X=\{x_1,x_2,x_3\}$, $\Y=\{y_1,y_2\}$, and $\Z=\{0,e,1\}$, where $e$ denotes the erasure symbol.
The input distribution is as follows: $q_{XY}(x_2,y_2)=q_{XY}(x_3,y_1)=0$ and $q_{XY}(x_1,y_1)=q_{XY}(x_1,y_2)=q_{XY}(x_2,y_1)=q_{XY}(x_3,y_2)=1/4$.
When Bob's input is $y_1$, the output is always an erasure.
When Bob's input is $y_2$, the randomized function $q_{Z|X,Y=y_2}$ behaves as an erasure channel with binary input in $\{x_1,x_3\}$ and parameter $p\in[0,1]$:
$q_{Z|XY}(0|x_1,y_2)=q_{Z|XY}(1|x_3,y_2)=1-p$ and $q_{Z|XY}(e|x_1,y_2)=q_{Z|XY}(e|x_3,y_2)=p$.
}
\label{fig:example1}
\end{figure}
If $p=1$, Alice does not need to send any message, and the rate required is zero in all the four cases.
So, we consider $p<1$, and analyze the randomized function under all the four cases of privacy (including no privacy) below. \\

\paragraph{Privacy against both users}
It can be shown from Lemma~\ref{lem:characterization} (Appendix~\ref{appendixD}) that the randomized function $(q_{XY},q_{Z|XY})$ is securely computable with privacy against both users 
if and only if $p=0$ or $p=1$.
As noted earlier, if $p=1$, the rate required is zero, i.e., $R^{AB-\text{pvt}}=0$.
If $p=0$, then $R^{AB-\text{pvt}}=I(X;Z|Y)=1/2$.\\

\paragraph{Privacy only against Alice}
By Proposition~\ref{prop_asymptotic_alice}, the optimum rate is $R^{A-\text{pvt}}=H_{\mathcal{G}}(X|Y)$. The characteristic graph (see Section~\ref{Sect_Prelim}) corresponding to the randomized function $(q_{XY},q_{Z|XY})$ is $\G=(\V,\E)$, where $\V=\{x_1,x_2,x_3\}$ and $\E=\{\{x_1,x_3\}\}$, i.e., $\G$ has a single edge between $x_1$ and $x_3$.
The optimal rate $R^{A-\text{pvt}}$ is equal to the conditional graph entropy $H_{\G}(X|Y)$ of this graph $\G$.
Note that there are two maximal independent sets $\{x_1,x_2\}$ and $\{x_2,x_3\}$ in $\G$.
It can be shown easily that $I(W;X|Y)$ is minimized when $p_{W|X}(\{x_1,x_2\}|x_1)=p_{W|X}(\{x_1,x_2\}|x_2)=1$, $p_{W|X}(\{x_2,x_3\}|x_3)=1$.
This yields $I(W;X|Y)=1/2$, which implies that $H_{\G}(X|Y)=1/2$. We obtain $R^{A-\text{pvt}}=1/2$ for $0\leq p<1$.
(When $p=1$, the graph $\G$ does not have any edge, and the rate required is zero, i.e., $R^{A-\text{pvt}}=0$.) \\

\begin{figure}
\begin{center}
\input{plot.tex}
\end{center}
\caption[Comparison of optimal rates in different privacy settings]
{\color{black} The cut-set bound (dotted violet curve on the bottom), i.e., $I(X;Z|Y)=\frac{1-p}{2}$ intersects the blue curve of no privacy case ($R^{\text{No-privacy}}$ denotes the optimal rate) at $p=0,1$, where the randomized function of Figure~\ref{fig:example1} can be securely computable with privacy against both users ($R^{AB-\text{pvt}}$ denotes the optimal rate). For $p=0$, $R^{AB-\text{pvt}}=1/2$, for $p=1$, $R^{AB-\text{pvt}}=0$. This is, in fact, consistent with Theorem~\ref{cutset}, which states that if a function is securely computable with privacy against both users, then cut-set lower bound (for function computation without any privacy) is achievable. In contrast, when the function is securely computable with privacy against only one user, the requirement of privacy, in general, may lead to a larger optimal rate. The red curve on the top is for privacy only against Bob ($R^{B-\text{pvt}}$ denotes the optimal rate). The olive green dashed curve is for privacy only against Alice ($R^{A-\text{pvt}}$ denotes the optimal rate); 
this curve does not depend on the value of $p$ for $0\leq p<1$, because $R^{A-\text{pvt}}$ is equal to the conditional graph entropy, which depends only on the input distribution for $0\leq p<1$. For $p=1$, $R^{A-\text{pvt}}=0$.
}
\label{fig:comparison}
\end{figure}

\paragraph{Privacy only against Bob}
From Theorem~\ref{Thm_Rate_Region_B}, the optimal rate for asymptotically securely computing $(q_{XY},q_{Z|XY})$ with privacy against Bob is equal to
\begin{align}
R^{B-\text{pvt}}=&\displaystyle \min_{\substack{p_{U|XYZ}: \\ U-X-Y \\ Z-(U,Y)-X \\ U-(Y,Z)-X}} I(X,Z;U|Y)\nonumber\\
=&\displaystyle \min_{\substack{p_{U|XYZ}: \\ U-X-Y \\ Z-(U,Y)-X \\ U-(Y,Z)-X}} I(Z;U|Y)\label{eq:example1-privacy-bob}
\end{align} 
First we show that this randomized function is securely computable with privacy against Bob, i.e., we give a random variable $U$ that satisfies the Markov chains in \eqref{eq:example1-privacy-bob}. For that we give an encoder-decoder pair $(p_{U|X},p_{{Z}|UY})$ 
that perfectly securely computes $(q_{XY},q_{Z|XY})$ with privacy against Bob. 
The random variable $U$ has ternary alphabet $\U=\{u_1,u_2,u_3\}$. The encoder-decoder pair $(p_{U|X},p_{{Z}|UY})$ is defined as follows: 

\begin{equation}
\begin{aligned}\label{eq:example1-ed-privacy-bob}
&p_{U|X}(u_1|x) \  && = \ 1-p  \hspace{3pt}\quad\text{for } x=x_1,x_2,\\
&p_{U|X}(u_2|x) \  && = \ p    \qquad\quad\text{for } x=x_1,x_2, \\
&p_{U|X}(u_2|x_3)\ && = \ p,  \\
&p_{U|X}(u_3|x_3)\ && = \ 1-p, \\
&p_{{Z}|UY}(e|u,y_1) &&= \ 1  \qquad\quad\text{for }u=u_1,u_2,\\
&p_{{Z}|UY}(0|u_1,y_2) &&= \ 1,\\
&p_{{Z}|UY}(e|u_2,y_2) &&= \ 1, \\
&p_{{Z}|UY}(1|u_3,y_2) &&= \ 1.
\end{aligned}
\end{equation}

\noindent It can be verified that the above encoder-decoder pair $(p_{U|X},p_{{Z}|UY})$ satisfies perfect privacy against Bob 
(i.e., it satisfies the Markov chain $U-(Y,Z)-X$) and perfect correctness (i.e., $p_{{Z}|X=x,Y=y}=q_{Z|X=x,Y=y}$ for every $x,y$ such that $q_{XY}(x,y)>0$).
We will now show that the choice of auxiliary random variable in \eqref{eq:example1-ed-privacy-bob} is optimal for the optimization problem in
\eqref{eq:example1-privacy-bob}. Consider any $p_{U|XYZ}$ which satisfies the constraints in \eqref{eq:example1-privacy-bob}.
Note that when $Y=y_1$, $Z=e$ with probability 1. This implies that $I(Z;U|Y=y_1)=0$, which simplifies the rate-expression for $R^{B-\text{pvt}}$ to the following:
\begin{align}
R^{B-\text{pvt}}=\displaystyle \min_{\substack{p_{U|XYZ}: \\ U-X-Y \\ Z-(U,Y)-X \\ U-(Y,Z)-X}} \frac{1}{2}I(Z;U|Y=y_2).\label{eq:example1-privacy-bob_new}
\end{align}
To evaluate the expression in \eqref{eq:example1-privacy-bob_new}, it is sufficient to take minimization over encoder-decoder pairs 
$(p_{U|X},p_{Z|U,Y=y_2})$ that perfectly securely compute $(q_{X|Y=y_2},q_{Z|X,Y=y_2})$ with privacy against Bob.
Consider such an encoder-decoder pair $(p_{U|X},p_{Z|U,Y=y_2})$.
Let $(z_1,z_2,z_3)=(0,e,1)$. For $i\in[1:3]$, define $\U_i:=\{u\in\U : p_{Z|U,Y=y_2}(z_i|u)>0\}$. 
It follows from Claim~\ref{claim:disjoint-messages} that $\U_1\cap\U_2\cap\U_3=\phi$
\footnote{When we restrict
Bob's input to $y_2$, then Alice's input $X$ takes values in $\{x_1,x_3\}$. For securely computing $(q_{X|Y=y_2},q_{Z|X,Y=y_2})$ 
with $p\in(0,1)$, we have $k=3$ and $\vec{\alpha_1^{(y_2)}}=(1,0)$, $\vec{\alpha_2^{(y_2)}}=(1/2,1/2)$, $\vec{\alpha_3^{(y_2)}}=(0,1)$. 
For $p=0$ we have $k=2$ and $\vec{\alpha_1^{(y_2)}}=(1,0)$, $\vec{\alpha_2^{(y_2)}}=(0,1)$.
For definitions of $k$ and $\vec{\alpha}^{(y)}$'s, see discussion on page~\pageref{eq:one-round_set-msgs}.}. 
This means that when Bob has $y_2$ as his input, then $U$ determines $Z$, i.e., $H(Z|U,Y=y_2)=0$, 
which implies $R^{B-\text{pvt}}=\frac{1}{2}I(Z;U|Y=y_2)=\frac{1}{2}H(Z|Y=y_2)=\frac{1}{2}(h(p)+(1-p))$.
Thus, any encoder-decoder pair $(p_{U|X},p_{Z|UY})$ that perfectly securely computes the randomized function of Figure~\ref{fig:example1} achieves the same value for $I(Z;U|Y)$. In particular, the pair defined in \eqref{eq:example1-ed-privacy-bob} also achieves the optimal rate 
in \eqref{eq:example1-privacy-bob_new}, which is equal to $\frac{1}{2}(h(p)+(1-p))$, for $0\leq p\leq 1$. \\

\paragraph{No privacy}
From Theorem~\ref{thm:a_rate-4}, the optimal rate is equal to 
\begin{align}
R^{\text{No-privacy}}=\displaystyle \min_{\substack{p_{U|XYZ}: \\ U-X-Y \\ Z-(U,Y)-X}} I(X,Z;U|Y). \label{eq:example1-no-privacy}
\end{align}
Note that minimization in the above expression is over all encoder-decoder pairs $(p_{U|X},p_{Z|UY})$ such that they induce 
the correct $q_{Z|XY}(.|x,y)$ for every $x\in\X,y\in\Y$ for which $q_{XY}(x,y)>0$.
Note that when $Y=y_1$, $Z=e$ with probability 1.
Now, for any $(p_{U|X},p_{Z|UY})$, define another encoder-decoder pair $(q_{U|X},q_{Z|UY})$ as follows:
\begin{align}\label{eq:example1-ed-no-privacy}
\begin{split}
q_{U|X}(u|x) &:= \begin{cases}
p_{U|X}(u|x_1) & \text{ if }x=x_1,x_2, \\ 
p_{U|X}(u|x_3) & \text{ if }x=x_3.
\end{cases} \\ 
q_{Z|UY}(z|u,y) &:= \begin{cases}
\mathbbm{1}_{\{z=e\}} & \text{ if }y=y_1, \\ 
p_{Z|UY}(z|u,y_2) & \text{ if }y=y_2.
\end{cases} 
\end{split}
\end{align}
It can be verified that $(q_{U|X},q_{Z|UY})$ is a valid encoder-decoder pair that correctly computes the randomized function of Figure~\ref{fig:example1}.
Observe that $$I(X,Z;U|Y=y_1)|_{q_{UXZ|Y=y_1}}=0$$ and 
\begin{align*}
&I(X,Z;U|Y=y_2)|_{q_{UXZ|Y=y_2}}\nonumber\\
&\hspace{12pt}=I(X,Z;U|Y=y_2)|_{p_{UXZ|{Y=y_2}}}.
\end{align*}
This implies that $I(X,Z;U|Y)_{q_{UXYZ}}\leq I(X,Z;U|Y)_{p_{UXYZ}}$, which further implies that the expression for $R^{\text{No-Privacy}}$ from 
\eqref{eq:example1-no-privacy} reduces to the following:
\begin{align}
R^{\text{No-privacy}}=\displaystyle \min_{\substack{p_{U|XYZ}: \\ U-X-Y \\ Z-(U,Y)-X}} \frac{1}{2}I(X,Z;U|Y=y_2).\label{eq:example1-no-privacy-1}
\end{align}
\begin{figure}[hbt]
\begin{center}
\begin{tikzpicture}[>=stealth']
\node at (-2.5,2.0) {$X$}; \node at (-2.5,1.5) {$x_1$}; \node at (-2.5,-1.5) {$x_3$};  
\draw [->] (-2.25,1.5) -- (-0.25,1.5); \node[scale=0.9] at (-1.25,1.7) {$1-p_1$};
\draw [->] (-2.25,1.4) -- (-0.25,0.1); \node[scale=0.9] at (-1.25,1) {$p_1$};
\draw [->] (-2.25,-1.4) -- (-0.25,-0.1); \node[scale=0.9] at (-1.25,-0.5) {$p_1$};
\draw [->] (-2.25,-1.5) -- (-0.25,-1.5); \node[scale=0.9] at (-1.25,-1.3) {$1-p_1$};
\node at (0,2.0) {$U$}; \node at (0,1.5) {0}; \node at (0,0) {$e$}; \node at (0,-1.5) {1};  
\draw [->] (0.25,1.5) -- (2.25,1.5); \node[scale=0.9] at (1.25,1.7) {$1-p_2$};
\draw [->] (0.25,1.4) -- (2.25,0.1); \node[scale=0.9] at (1.25,1) {$p_2$};
\draw [->] (0.25,0) -- (2.25,0); \node[scale=0.9] at (1.25,0.2) {1};
\draw [->] (0.25,-1.4) -- (2.25,-0.1); \node[scale=0.9] at (1.25,-0.5) {$p_2$};
\draw [->] (0.25,-1.5) -- (2.25,-1.5); \node[scale=0.9] at (1.25,-1.3) {$1-p_2$};
\node at (2.5,2.0) {$Z$}; \node at (2.5,1.5) {0}; \node at (2.5,0) {$e$}; \node at (2.5,-1.5) {1};  
\end{tikzpicture}
\caption[An optimal strategy for no privacy case]
{This figure is for $Y=y_2$. The $p_{UXZ|Y=y_2}$ which minimizes the rate-expression in \eqref{eq:example1-no-privacy-1}
is a concatenation of two symmetric erasure channels -- the first channel has parameter $p_1$, 
the second channel has parameter $p_2$, and $p_1,p_2$ satisfies the constraint that $(1-p_1)(1-p_2)=1-p$.}
\label{fig:erasure-cuff}
\end{center}
\end{figure}
\noindent To evaluate the expression in \eqref{eq:example1-no-privacy-1}, it is sufficient to take minimization over encoder-decoder pairs 
$(p_{U|X},p_{Z|U,Y=y_2})$ that compute $(q_{X|Y=y_2},q_{Z|X,Y=y_2})$ with perfect correctness. We will argue that 
the $(p_{U|X},p_{Z|U,Y=y_2})$ shown in Figure~\ref{fig:erasure-cuff} includes a minimizer for \eqref{eq:example1-no-privacy-1}. 
It is a concatenation of two symmetric erasure channels -- first erasure channel is with parameter $p_1\in[0,1]$ 
and the second erasure channel is with parameter $p_2\in[0,1]$ -- with the constraint that $(1-p_1)(1-p_2)=1-p$. The reason that this choice of
encoder-decoder pair (which is given in Figure~\ref{fig:erasure-cuff}) is optimal for our problem is given next:

The minimization in \eqref{eq:example1-no-privacy-1} is related to Wyner's common information problem \cite{Wyner75}, 
where Alice gets an input sequence $X^n$ and Bob wants to 
generate an output sequence ${Z}^n$ such that $(X^n,{Z}^n)$ is close to the desired $(X^n,Z^n)$ in $\ell_1$-distance, where $(X_i,Z_i)$'s
are i.i.d. according to a given $q_{XZ}$. The minimum rate required for this problem is equal to the Wyner's common information, which is
defined as $\min_{p_{U|XZ}}I(X,Z;U)$, where minimization is taken over all $p_{U|XZ}$ such that $X-U-Z$ is a Markov chain.
The $p_{U|XZ}$ that satisfies $X-U-Z$ and achieves the least value for $I(X,Z;U)$ was obtained by Cuff~\cite[Section II-F]{Cuff13}
and is presented in Figure~\ref{fig:erasure-cuff} (without $Y$). 
The expression that we want to compute is $\min_{p_{U|X,Y=y_2,Z}} I(X,Z;U|Y=y_2)$, 
where minimization is taken over all $p_{U|X,Y=y_2,Z}$ such that $X-U-Z$ is a Markov chain. 
Note that we are also computing the Wyner common information for the joint distribution over $\X\times\Z$ given by the conditional 
distribution $q_{XZ|Y=y_2}$. 
So, the same encoder-decoder pair $(p_{U|X},p_{Z|U,Y=y_2})$ from Figure~\ref{fig:erasure-cuff} also minimizes the rate 
expression in \eqref{eq:example1-no-privacy-1}.

With this $(p_{U|X},p_{Z|U,Y=y_2})$, the objective function in \eqref{eq:example1-no-privacy-1} becomes 
$I(X,Z;U|Y=y_2)=\frac{1}{2}[h(p)+(1-p_1)(1-h(p_2))]$. 
Note that the $(p_{U|X},p_{Z|U,Y=y_2})$ from Figure~\ref{fig:erasure-cuff} has two parameters $p_1,p_2\in[0,1]$, 
which must satisfy the constraint $(1-p_1)(1-p_2)=1-p$.
Hence, the minimization in \eqref{eq:example1-no-privacy-1} can be taken over all $p_1,p_2\in[0,1]$ which satisfy $(1-p_1)(1-p_2)=1-p$.
The simplified rate expression for $R^{\text{No-privacy}}$ is given below (for all values of $p\in[0,1]$):
\begin{align}
R^{\text{No-privacy}}=\displaystyle \min_{\substack{p_1,p_2\in[0,1]: \\ (1-p_1)(1-p_2)=1-p}} \frac{1}{2}[h(p)+(1-p_1)(1-h(p_2))].\label{eq:example1-no-privacy-2}
\end{align}

\noindent \emph{Comparison:}
For convenience, we write the optimal rate expressions for asymptotic security in all the four privacy settings,
and plot these in Figure~\ref{fig:comparison}.\\

\hrule
\begin{align*}
R^{AB-\text{pvt}}\quad &=\quad \begin{cases}1/2 &\text{ at } p=0, \\ 0 &\text{ at } p=1,\end{cases} \\
R^{A-\text{pvt}}\quad &=\quad \begin{cases}1/2 &\text{ for } 0\leq p<1, \\ 0 &\text{ at }p=1,\end{cases} \\
R^{B-\text{pvt}}\quad &=\quad \frac{1}{2}(h(p)+(1-p)) \quad\text{ for } 0\leq p\leq 1, \\
R^{\text{No-privacy}}&\\ 
&\hspace{-1cm}\quad =\min_{\substack{p_1,p_2\in[0,1]: \\ (1-p_1)(1-p_2)=1-p}} \frac{1}{2}[h(p)+(1-p_1)(1-h(p_2))] \\ &\hspace{0pt}\text{ for } 0\leq p\leq 1.
\end{align*}
\hrule
\vspace{0.5cm}
\end{example}

\section{Perfect Security: One Round of Communication; Only Bob Computes}
\label{Sec_nointeraction}
\RRB{In this section, we study perfectly secure protocols where only one user, say, Bob, produces an output. We consider three different privacy settings: one where privacy is required against both the users, and the other two where privacy is required against only one of them.
We further restrict ourselves to protocols that use a single round of communication, where Alice sends a message to Bob who then produces an output. We obtain tight upper and lower bounds on the optimal expected length of transmission. Although we consider a one-sided communication model here, as we remark later (Remark~\ref{remark:wlog}), all the results which require privacy against Alice hold in the multiple-round communication model also.}

\subsection{Preliminaries}
\label{Sect_Prelim}
Most of our results in this section are stated in terms of various information theoretic quantities that are defined on graphs 
which arise from randomized function computation.
For a given $(q_{XY},q_{Z|XY})$, we define the following characteristic graph $\mathcal{G}=(\mathcal{X},\mathcal{E})$, where 
\begin{align*}
\mathcal{E}=\{\{x,x'\}: \exists (y,z)\text{ s.t. } q_{XY}(x,y)>0,q_{XY}(x',y)>0 \\ \text{and } q_{Z|XY}(z|x,y)\neq q_{Z|XY}(z|x',y)\}.
\end{align*} 
There is a probability distribution $q_X$ on the vertices of $\mathcal{G}$.
The intuition behind this definition of a graph comes from the randomized function computation,
where Alice and Bob have inputs $X$ and $Y$, respectively, and Bob wants to produce a randomized function $q_{Z|XY}$ of these inputs.
Suppose $x,x'\in\mathcal{X},y\in\mathcal{Y},z\in\mathcal{Z}$ are such that $q_{XY}(x,y)>0,q_{XY}(x',y)>0$ and $q_{Z|XY}(z|x,y)\neq q_{Z|XY}(z|x',y)$. 
Observe that Alice cannot send the same message if her input is $x$ or $x'$: 
because if she does so, and Bob happens to have $y$ as his input (which happens with positive probability),
then there is no way Bob can produce an output with correct distribution.
This means that Alice has to distinguish through the message whether her input is $x$ or $x'$. 
This distinction is achieved by defining a graph on $\mathcal{X}$ and putting an edge between $x$ and $x'$. 
An edge $\{x',x'\}$ means that Alice cannot send the same message if her input is $x$ or $x'$.
Per contra, if $\{x,x'\}\notin \mathcal{E}$, then Alice does not need to distinguish between $x$ and $x'$,
because, if $q_{XY}(x,y)>0,q_{XY}(x',y)>0$ for some $y\in\mathcal{Y}$, then $q_{Z|XY}(z|x,y)= q_{Z|XY}(z|x',y)$ holds for every $z\in\mathcal{Z}$.
So, if $\{x,x'\}\notin\mathcal{E}$, then $x$ and $x'$ are equivalent from the function computation point of view.

We say that a coloring $c:\mathcal{X}\to\{0,1\}^*$ of the vertices of $\mathcal{G}$ is {\em proper}, if 
both the vertices of any edge $\{x,x'\}$ have distinct colors, i.e., $c(x)\neq c(x')$.
Note that $c(X)$ is a random variable with entropy
\[H(c(X))=\sum_{\gamma\ \in\ \text{image}(c)}q_X(c^{-1}(\gamma))\log_2 \frac{1}{q_{X}(c^{-1}(\gamma))},\]
where $\text{image}(c)$ is the image of the function $c$, $c^{-1}$ is the inverse of $c$, and $q_{X}(c^{-1}(\gamma))$ for a color $\gamma$ 
is defined as $q_{X}(c^{-1}(\gamma)) := \sum_{x\in\mathcal{X}:c(x)=\gamma}q_{X}(x)$.
Note that the function $c$ partitions $\mathcal{X}$ into color classes, i.e., the set of vertices assigned the same color,
and $H(c(X))$ is the entropy of this partition.
The {\em chromatic entropy} $H_{\chi}(\mathcal{G},X)$ of $\mathcal{G}$ is defined as the lowest entropy of 
such a partition produced by any proper coloring of $\mathcal{G}$:
\begin{definition}[Chromatic Entropy, \cite{AlonO96}]\label{defn:chromatic-entropy}
For a given $(q_{XY},q_{Z|XY})$, the chromatic entropy of $\mathcal{G}$ is defined as
\[H_{\chi}(\mathcal{G},X):=\min\{H(c(X)): c \text{ is a proper coloring of }\mathcal{G}\},\]
where $c(X)$ denotes the induced distribution on colors by the coloring c.
\end{definition}

\begin{figure}[htbp]
\begin{center}
\begin{tikzpicture}[>=stealth']
\draw [fill=lightgray, thick] (0,0) rectangle node {\Large A} +(1,1); \draw [->,thick] (-0.65,0.5) -- (0,0.5); \node at (-0.9,0.6) {$X$};
\draw [fill=lightgray, thick] (4,0) rectangle node {\Large B} +(1,1); \draw [<-,thick] (5,0.5) -- (5.65,0.5); \node at (6,0.55) {$Y$};
\draw [->,thick] (4.5,0) -- (4.5,-0.5); \node[right] at (4.2,-0.8) {$Z\sim q_{Z|XY}$}; 
\draw [->,thick] (1,0.5) -- (4,0.5); \node at (2.5,0.75) {$U$};
\end{tikzpicture}
\caption{Perfectly secure computation}\label{fig:perfect-security}
\end{center}
\end{figure}
{\color{black}
As shown in Figure~\ref{fig:perfect-security}, Alice and Bob get $X$ and $Y$, respectively, as their inputs, 
where $(X,Y)$ is distributed according to $q_{XY}$;
Alice sends a message $U$ to Bob, and based on $U,Y)$, Bob produces $Z$ as the output, which is required to be distributed according to $q_{Z|XY}$ while preserving privacy. {\color{black}We allow variable length codes. Any scheme for computing $(q_{XY},q_{Z|XY})$ is defined as a pair of stochastic maps $(p_{U|X},p_{Z|UY})$, where $U$ takes values in $\{0,1\}^*$ such that all the binary strings in the support set of $U$ are prefix-free. We call $p_{U|X}$ the encoder and  $p_{Z|UY}$ the decoder. Let $L(U)$ denote the random variable corresponding to the length of $U$ in bits and the expected length $\mathbb{E}[L(U)]$ denote the rate of code.} In this section, we give bounds on the optimal rate of communication for the three cases of privacy, i.e., privacy against both users, privacy only against Alice and privacy only against Bob. In this section, we will only consider the case when there is no common randomness available to the users. However, we show below in Proposition~\ref{lemma:prvagnstAlice} that this is without loss of generality whenever privacy against Alice is required (i.e., when privacy is required against both Alice and Bob or only against Alice). When privacy is required only against Bob, we show through an example (Example \ref{example:prvagnstbob}) later in this section that common randomness helps to improve the rate of communication. 

\begin{proposition}\label{lemma:prvagnstAlice}
When privacy against Alice is required (i.e., when privacy is required against both users or when privacy is required only against Alice), common randomness does not help to improve the optimal rate.
\end{proposition}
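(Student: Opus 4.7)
The plan is to convert an arbitrary protocol $(p_{U|XW}, p_{Z|UYW})$ with common randomness $W$ that perfectly securely computes $(q_{XY},q_{Z|XY})$ with privacy against Alice into one using no common randomness and having expected transmission length at most that of the original. The key observation I will exploit is that, because Alice transmits and therefore sees $W$, the privacy-against-Alice requirement $I(U,W;Y,Z\mid X)=0$ is surprisingly strong: by the chain rule it decomposes as $I(W;Y,Z\mid X)+I(U;Y,Z\mid X,W)=0$, and since $W$ is independent of $(X,Y)$ the first term equals $I(W;Z\mid X,Y)$. Thus I obtain $W \perp Z\mid (X,Y)$ and $I(U;Y,Z\mid X,W)=0$ simultaneously.

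Next I will combine $W \perp Z\mid (X,Y)$ with the correctness condition $p_{Z|XY}=q_{Z|XY}$ to deduce that, for every $w$ with $p_W(w)>0$,
\begin{align*}
p_{Z|XYW}(z\mid x,y,w)=q_{Z|XY}(z\mid x,y)
\end{align*}
holds for all $(x,y)$ in the support of $q_{XY}$. This says that the conditioned protocol $\Pi_w := (p_{U|X,W=w},\, p_{Z|UY,W=w})$ is itself correct. The second consequence, $I(U;Y,Z\mid X,W=w)=0$ for each such $w$, is exactly the perfect privacy-against-Alice condition for $\Pi_w$. Crucially, once $W=w$ is fixed, the encoder depends only on $X$ (and Alice's private randomness) and the decoder only on $(U,Y)$, so $\Pi_w$ genuinely uses no common randomness.

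To finish, I will average the transmission length over $W$: $\mathbb{E}[L(U)] = \sum_w p_W(w)\, \mathbb{E}[L(U)\mid W=w]$, so there must exist some $w^{\star}$ with $\mathbb{E}[L(U)\mid W=w^{\star}] \le \mathbb{E}[L(U)]$. Choosing $\Pi_{w^{\star}}$ then yields a common-randomness-free protocol whose expected length is no larger, proving the claim. The same argument covers the other privacy setting mentioned in the proposition, since privacy against both users is strictly stronger than privacy against Alice alone. The main obstacle is conceptual rather than computational: one must recognize that the privacy constraint, together with the fact that Alice is the transmitter and therefore has access to $W$, already forces $W$ to be irrelevant to the output distribution of $Z$ given $(X,Y)$; once this is isolated, conditioning on a good realization $w^{\star}$ completes the argument with essentially no further work.
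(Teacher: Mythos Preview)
Your proof is correct and follows essentially the same approach as the paper: both arguments use privacy against Alice, i.e., the Markov chain $(U,W)-X-(Y,Z)$, together with $W\perp(X,Y)$ to conclude that $W$ is independent of $(X,Y,Z)$, then average the expected length over $W$ to pick a good realization $w^\star$. The paper writes this via the factorization $p(w,x,y,z)=p(x,y,z)p(w|x)=p(x,y,z)p(w)$, while you phrase it through the chain-rule decomposition $I(U,W;Y,Z\mid X)=I(W;Y,Z\mid X)+I(U;Y,Z\mid X,W)$; these are the same computation. One small remark: your final sentence about the two-sided privacy case is slightly imprecise---what you actually need is that $I(U,W;X\mid Y,Z)=0$ implies $I(U;X\mid Y,Z,W=w^\star)=0$, so that $\Pi_{w^\star}$ retains privacy against Bob; this follows from the same chain-rule trick and is also left implicit in the paper.
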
 
\begin{proof}
Let $(p_{U|XW},p_{Z|WUY})$ be an encoder-decoder pair {\color{black}(note that this pair takes in to account the common randomness also in addition to the definition of encoder-decoder pair mentioned in the paragraph above this lemma)} that securely computes $(q_{XY},q_{Z|XY})$. Consider the probability distribution,
\begin{align}
p(w,x,y,z)&=p(x,y,z)p(w|x,y,z)\\
&=p(x,y,z)p(w|x) \label{prvAlice}\\
&=p(x,y,z)p(w) \label{x_indp_w},
\end{align}
where \eqref{prvAlice} follows from privacy against Alice, i.e., $(U,W)-~X-~(Y,Z)$, \eqref{x_indp_w} follows from the independence of random variables $W$ and $X$. Thus we have,
\begin{align}
p(w,x,y,z)=p(w)p(x,y,z). \label{correctnessforprv}
\end{align}
 Now, consider the expected length,
\begin{align}
\mathbb{E}\left[ L(U) \right] &=\mathbb{E}\left[\mathbb{E}\left[L(U)|W \right] \right]\\
&=\sum_w p(w) \mathbb{E}\left[L(U)|W=w \right].
\end{align}
This implies that there exists $w^*$ such that $\mathbb{E}\left[L(U)|W=w^* \right]\leq \mathbb{E}[L(U)]$, i.e., there exists a p.m.f. $p(u,x,y,z|w^*)$ achieving the same or smaller expected length and satisfying the required Markov chains. Note that \eqref{correctnessforprv} makes sure that correctness condition holds.
\end{proof}
}

\subsection{\RRB{Privacy required against both users}}
\RRB{We consider protocols which provide privacy against both the users. Before proceeding further, we need some definitions.}
\begin{definition}\label{defn:relation-tilde}
For any $x,x'\in\mathcal{X}$, we say that $x\sim x'$, if there exists $y\in \mathcal{Y}$
and $z\in \mathcal{Z}$ such that $q_{XY}(x,y)>0,q_{XY}(x',y)>0$ and $q_{Z|XY}(z|x,y)>0$, $q_{Z|XY}(z|x',y)>0$. 
\end{definition}
\begin{definition}\label{defn:equiv-relation-1}
For any two distinct elements $x,x'$ of $\mathcal{X}$, we say that $x\equiv x'$, if there exists a sequence $x=x_1,x_2,\hdots,x_{l-1},x_l=x'$ for some integer $l$, where $x_i\sim x_{i+1}$ for every $i \in \{1,2,\hdots,l-1\}$.
\end{definition}
It can be verified easily that the above defined relation $\equiv$ is an equivalence relation. \RRB{A similar notion of equivalent inputs has been used in characterizing securely computable \emph{deterministic} functions~\cite{Kushelvitz92}.}
We know that an equivalence relation partitions the whole space into equivalence classes.
Suppose the relation $\equiv$ partitions the space $\mathcal{X}$ as  
$\mathcal{X}=\mathcal{X}_1\biguplus \mathcal{X}_2\biguplus\hdots\biguplus\mathcal{X}_k$, 
where each $\mathcal{X}_i$ is an equivalence class, $k$ is the number of equivalence classes, and $\biguplus$ stands for disjoint union.

{\color{black} Let $\mathcal{X}_{\text{EQ}}:=\{x_1,x_2,\hdots,x_k\}$, 
where $x_i$ is the representative of the equivalence class $\mathcal{X}_i$. Now, we define ($q_{X_{\text{EQ}}Y},q_{Z|X_{\text{EQ}}Y}$) as follows:}
\begin{itemize}
\item Define $q_{X_{\text{EQ}}Y}(x_i,y):=\sum_{x\in\mathcal{X}_i}q_{XY}(x,y)$ for every $(x_i,y)\in(\mathcal{X}_{\text{EQ}}\times\mathcal{Y})$. 
\item For $(x_i,y)\in(\mathcal{X}_{\text{EQ}}\times\mathcal{Y})$, if there exists $x\in\mathcal{X}_i$ s.t. $q_{XY}(x,y)>0$, then define $q_{Z|X_{\text{EQ}}Y}(z|x_i,y) :=q_{Z|XY}(z|x,y)$, for every $z\in\mathcal{Z}$. If there exists no $x\in\mathcal{X}_i$ s.t. $q_{XY}(x,y)>0$, then it does not matter what the conditional distribution $q_{Z|X_{\text{EQ}}Y}(z|x_i,y)$ is;
in particular, we can define $q_{Z|X_{\text{EQ}}Y}(.|x_i,y)$ to be the uniform distribution in $\{1,2,\hdots,|\mathcal{Z}|\}$.
\end{itemize}

\begin{lemma}\label{lem:equiv-problem-alice-bob}
Suppose $(q_{XY},q_{Z|XY})$ is perfectly securely computable with privacy against both users,
and let $(q_{X_{\text{EQ}}Y},q_{Z|X_{\text{EQ}}Y})$ be the reduced problem defined as above.
For every encoder-decoder pair $(p_{U|X},p_{Z|UY})$ that securely computes $(q_{XY},q_{Z|XY})$, there is another encoder-decoder pair
$(p_{\tilde{U}|X_{\text{EQ}}},p_{Z|\tilde{U}Y})$ that securely computes $(p_{X_{\text{EQ}}Y},p_{Z|X_{\text{EQ}}Y})$, and vice-versa.
Furthermore, $L(U)$ and $L(\tilde{U})$ have the same p.m.f., where $L(U)$ and $L(\tilde{U})$ denote the random variables corresponding to the 
lengths of $U$ and $\tilde{U}$ (in bits), respectively.
\end{lemma}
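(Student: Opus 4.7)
My plan is to exploit the fact that double-sided privacy forces Alice's encoder to be constant on each $\equiv$-class. The key observation is that if $(p_{U|X},p_{Z|UY})$ securely computes $(q_{XY},q_{Z|XY})$, then for every $(x,y,z)$ with $q_{XY}(x,y)q_{Z|XY}(z|x,y)>0$, privacy against Alice gives $p(u|x,y,z)=p_{U|X}(u|x)$ and privacy against Bob gives $p(u|x,y,z)=p(u|y,z)$, hence $p_{U|X}(u|x)=p(u|y,z)$ on this set. Applied at two inputs $x,x'$ witnessing $x\sim x'$ (through a common $(y,z)$), this yields $p_{U|X}(u|x)=p_{U|X}(u|x')$; a straightforward stepwise induction along the chain in Definition~\ref{defn:equiv-relation-1} extends the equality to arbitrary $x\equiv x'$. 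A direct corollary is that whenever $x,x'\in\mathcal{X}_i$ and both $q_{XY}(x,y),q_{XY}(x',y)>0$, correctness implies $q_{Z|XY}(\cdot|x,y)=\sum_u p_{U|X}(u|x)p_{Z|UY}(\cdot|u,y)=\sum_u p_{U|X}(u|x')p_{Z|UY}(\cdot|u,y)=q_{Z|XY}(\cdot|x',y)$, so $q_{Z|X_{\text{EQ}}Y}(\cdot|x_i,y)$ is unambiguously defined.

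Given this, the two constructions are natural. From original to reduced, set $p_{\tilde U|X_{\text{EQ}}}(u|x_i):=p_{U|X}(u|x)$ for any $x\in\mathcal{X}_i$ (well-defined by the above) and $p_{Z|\tilde U Y}:=p_{Z|UY}$. From reduced to original, set $p_{U|X}(u|x):=p_{\tilde U|X_{\text{EQ}}}(u|x_i)$ for $x\in\mathcal{X}_i$ and $p_{Z|UY}:=p_{Z|\tilde U Y}$. In either direction the two joint distributions are linked by the partition identity
\begin{equation*}
\sum_{x\in\mathcal{X}_i} p(u,x,y,z) \;=\; \tilde p(u,x_i,y,z),
\end{equation*}
which follows immediately from the encoder being constant on $\mathcal{X}_i$ together with $q_{X_{\text{EQ}}Y}(x_i,y)=\sum_{x\in\mathcal{X}_i}q_{XY}(x,y)$. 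This identity gives both the coincidence of marginals $p(u,y,z)=\tilde p(u,y,z)$, $p(y,z)=\tilde p(y,z)$ and the class-conditional equality $p(u|x,y,z)=\tilde p(u|x_i,y,z)$ for $x\in\mathcal{X}_i$ with $q_{XY}(x,y)q_{Z|XY}(z|x,y)>0$. Correctness and both privacy Markov chains then transfer between $p$ and $\tilde p$ by direct substitution, with no additional computation.

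The length claim is immediate from $p_{U|X}(\cdot|x)=p_{\tilde U|X_{\text{EQ}}}(\cdot|x_i)$ for $x\in\mathcal{X}_i$:
\begin{equation*}
\Pr[L(U)=\ell] \;=\; \sum_i\sum_{x\in\mathcal{X}_i} q_X(x)\Pr[L(U)=\ell\mid X=x] \;=\; \sum_i q_{X_{\text{EQ}}}(x_i)\Pr[L(\tilde U)=\ell\mid X_{\text{EQ}}=x_i] \;=\; \Pr[L(\tilde U)=\ell].
\end{equation*}
I expect the main obstacle to lie in the bookkeeping of zero-probability inputs rather than in any serious mathematics: the definition of $q_{Z|X_{\text{EQ}}Y}(\cdot|x_i,y)$ at $(x_i,y)$ with $q_{X_{\text{EQ}}Y}(x_i,y)=0$ is arbitrary, so one must check that correctness and both privacy conditions are trivially satisfied there. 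A related subtlety is that the chain $x=x_1\sim x_2\sim\cdots\sim x_l=x'$ realising $x\equiv x'$ may use different witness pairs $(y,z)$ at each step, so the equality of encoders must be propagated one link at a time rather than through any single $(y,z)$.
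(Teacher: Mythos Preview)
Your proposal is correct and follows essentially the same approach as the paper: you establish that privacy against both users forces $p_{U|X}$ to be constant on each $\equiv$-class (the paper proves this as part of Lemma~\ref{lem:characterization}), then use the identical constructions in both directions (same encoder restricted/extended class-wise, same decoder), and finish with the same length computation. Your partition identity and explicit transfer of correctness and privacy are somewhat more detailed than the paper's verification, which simply asserts that the reduced pair ``will also securely compute'' the reduced problem, but the content is the same.
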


It follows from Lemma \ref{lem:equiv-problem-alice-bob} proved in Appendix~\ref{appendixD}  that, to study the communication complexity of secure computation of $(q_{XY},q_{Z|XY})$,
it is enough to study the communication complexity of secure computation of the reduced problem $(q_{X_{\text{EQ}}Y},q_{Z|X_{\text{EQ}}Y})$.
To give a rate-optimal code for $(q_{X_{\text{EQ}}Y},q_{Z|X_{\text{EQ}}Y})$, we define the following graph $\mathcal{G}_{\text{EQ}}=(\mathcal{V}_{\text{EQ}},\mathcal{E}_{\text{EQ}})$:

\begin{itemize}
\item $\mathcal{V}_{\text{EQ}}=\mathcal{X}_{\text{EQ}}=\{x_1,x_2,\hdots,x_k\}$,
\item $\mathcal{E}_{\text{EQ}}=\{\{x_i,x_j\}:$\text{ there exists} $(y,z)$\text{ such that } $q_{X_{\text{EQ}}Y}(x_i,y)>0$, $q_{X_{\text{EQ}}Y}(x_j,y) >0$ \text{and }$q_{Z|X_{\text{EQ}}Y}(z|x_i,y)\ \neq \ q_{Z|X_{\text{EQ}}Y}(z|x_j,y)\}$.
\end{itemize}
We say that a coloring $c:\mathcal{V}\to\{0,1\}^*$ of the vertices of a graph $\mathcal{G}=(\mathcal{V},\mathcal{E})$ is {\em proper}, if 
both the vertices of any edge $\{u,v\}$ have distinct colors, i.e., $c(u)\neq c(v)$.
\begin{claim}\label{claim:color-equiv-protocol}
Every proper coloring of the vertices of $\mathcal{G}_{\text{EQ}}$ corresponds to a secure code for 
$(q_{X_{\text{EQ}}Y},q_{Z|X_{\text{EQ}}Y})$, and every secure code for $(q_{X_{\text{EQ}}Y},q_{Z|X_{\text{EQ}}Y})$ corresponds to a collection of
proper colorings of the vertices of $\mathcal{G}_{\text{EQ}}$.
\end{claim}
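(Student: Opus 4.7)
The proof splits into two directions, both hinging on the structural relationship between the coloring of $\mathcal{G}_{\text{EQ}}$ and the correctness/privacy conditions of the reduced problem $(q_{X_{\text{EQ}}Y},q_{Z|X_{\text{EQ}}Y})$.

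For the forward direction, given a proper coloring $c$ of $\mathcal{G}_{\text{EQ}}$, the plan is to take as Alice's encoder the deterministic map $U := c(X_{\text{EQ}})$ and, for Bob's decoder, to output $Z \sim q_{Z|X_{\text{EQ}}Y}(\cdot\,|\,x,y)$ for an arbitrary $x \in c^{-1}(U)$ with $q_{X_{\text{EQ}}Y}(x,y)>0$. This is well defined because properness of $c$ forbids edges inside any color class, so any two $x,x' \in c^{-1}(u)$ with positive mass at a common $y$ must satisfy $q_{Z|X_{\text{EQ}}Y}(\cdot\,|\,x,y)=q_{Z|X_{\text{EQ}}Y}(\cdot\,|\,x',y)$ by the definition of $\mathcal{E}_{\text{EQ}}$. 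Correctness is then immediate, and privacy against Alice is trivial because $U$ is a deterministic function of $X_{\text{EQ}}$. The delicate step is privacy against Bob. Here I would invoke the equivalence-class construction: if two distinct representatives $x_i,x_j \in \mathcal{X}_{\text{EQ}}$ both had positive joint mass with the same $(y,z)$ in the reduced problem, we could extract underlying $x \in \mathcal{X}_i$, $x' \in \mathcal{X}_j$ witnessing $x\sim x'$, contradicting that they lie in distinct equivalence classes. Thus, conditional on $(Y,Z)$, the variable $X_{\text{EQ}}$ is almost surely determined, which makes $U=c(X_{\text{EQ}})$ a function of $(Y,Z)$ and immediately gives the Markov chain $U-(Y,Z)-X_{\text{EQ}}$.

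For the reverse direction, given a secure code $(p_{U|X_{\text{EQ}}},p_{Z|UY})$, define for each symbol $u$ in the image of $U$ the set $S_u := \{x \in \mathcal{X}_{\text{EQ}} : p_{U|X_{\text{EQ}}}(u\,|\,x) > 0\}$. The first step is to argue that each $S_u$ is an independent set of $\mathcal{G}_{\text{EQ}}$: if some edge $\{x,x'\}$ had both endpoints in $S_u$, then by definition of $\mathcal{E}_{\text{EQ}}$ there exists a $y$ with $q_{X_{\text{EQ}}Y}(x,y),q_{X_{\text{EQ}}Y}(x',y)>0$ at which the conditional laws of $Z$ differ, and Bob's induced output on $(U,Y)=(u,y)$ would be a nontrivial mixture of these unequal distributions, contradicting correctness. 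The second step is to decompose the randomized encoder $p_{U|X_{\text{EQ}}}$ as a convex combination of deterministic maps $c : \mathcal{X}_{\text{EQ}} \to \{0,1\}^*$, obtained by independently sampling a color for each vertex from $p_{U|X_{\text{EQ}}}(\cdot\,|\,x)$; each deterministic realization sends $x$ into $S_{c(x)}$, and since every $S_u$ is independent, no such realization can color both endpoints of an edge alike. Hence the code is exhibited as a mixture over proper colorings.

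The step I expect to be the main obstacle is verifying privacy against Bob in the forward direction. The coloring property alone only guarantees correctness; privacy against Bob is not automatic and depends on the stronger structural fact that in the reduced problem distinct representatives in $\mathcal{X}_{\text{EQ}}$ have disjoint supports at the level of $(y,z)$-pairs. This fact must be drawn from the construction of $\equiv$ and the consistency properties already established in Lemma~\ref{lem:equiv-problem-alice-bob}; the rest of the argument, including the decomposition of a randomized encoder into deterministic proper colorings, is routine.
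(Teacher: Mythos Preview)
Your forward direction is correct and matches the paper's argument; you even supply the proof that $X_{\text{EQ}}$ is determined by $(Y,Z)$ in the reduced problem, which the paper states separately and simply cites.

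The reverse direction has a gap in the step where you claim each $S_u$ is independent. Your justification is that otherwise ``Bob's induced output on $(U,Y)=(u,y)$ would be a nontrivial mixture of these unequal distributions, contradicting correctness.'' But correctness only constrains the \emph{average} $\sum_{u'} p(u'\mid x)\,p_{Z|UY}(\cdot\mid u',y)=q_{Z|X_{\text{EQ}}Y}(\cdot\mid x,y)$; it does not force $p_{Z|UY}(\cdot\mid u,y)$ to equal $q_{Z|X_{\text{EQ}}Y}(\cdot\mid x,y)$ term by term. For a general randomized function the conclusion is simply false: take $Y=\emptyset$, $X$ uniform on $\{0,1\}$, $Z\mid X{=}0\sim\mathrm{Ber}(1/4)$, $Z\mid X{=}1\sim\mathrm{Ber}(3/4)$, encoder $p(a\mid 0)=p(a\mid 1)=1/2$, $p(b\mid 0)=p(c\mid 1)=1/2$, decoder $p(Z{=}1\mid a)=1/2$, $p(Z{=}1\mid b)=0$, $p(Z{=}1\mid c)=1$. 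This is perfectly correct yet $S_a=\{0,1\}$ spans the edge.

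The paper closes the gap by invoking privacy against \emph{Alice}: $U-X_{\text{EQ}}-(Y,Z)$ implies $U-(X_{\text{EQ}},Y)-Z$, whence $p_{Z|UY}(\cdot\mid u,y)=q_{Z|X_{\text{EQ}}Y}(\cdot\mid x,y)$ for every $u$ with $p(u\mid x)>0$; each deterministic selection from the encoder's support (one per setting of its random coins) is then itself a zero-error encoder, hence a proper coloring. Your route can also be repaired, but only by using the structural fact you proved in the forward direction: since $X_{\text{EQ}}$ is a function of $(Y,Z)$, the supports of $q_{Z|X_{\text{EQ}}Y}(\cdot\mid x,y)$ and $q_{Z|X_{\text{EQ}}Y}(\cdot\mid x',y)$ are \emph{disjoint} whenever $x\ne x'$ and $q(x,y),q(x',y)>0$; correctness at $x'$ then forces $p_{Z|UY}(\cdot\mid u,y)$ to vanish on the first support, and correctness at $x$ forces it to vanish on the second, contradicting that it is a probability distribution. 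Either ingredient---privacy against Alice, or the disjoint-support property of the reduced problem---must be invoked explicitly; the ``nontrivial mixture'' phrasing does not carry the argument.
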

\RRB{The above claim is proved in Appendix~\ref{appendixE}.
The equivalence between {\em non-secure protocols for source coding} and graph coloring (for an appropriately defined graph) has been obtained in \cite{AlonO96} using a straightforward argument.
To show a similar equivalence in the case of {\em secure protocols for function computation}, first we partition the input alphabet according to an equivalence relation (see Definition~\ref{defn:equiv-relation-1}), which reduces our problem to a problem on smaller alphabets such that we can show a similar equivalence between secure protocols and graph colorings. Proof of this claim is more involved than the corresponding proof in the source coding case; see Appendix~\ref{appendixE} for details.}

Now we are ready to prove upper and lower bounds on optimal expected length of the message, which is defined as $L^*_{AB-\text{pvt}}:=\min_{(p_{U|X},p_{{Z}|UY})}\mathbb{E}[L(U)]$,
where $L(U)$ denotes the random variable corresponding to the length of $U$ in bits, and 
minimization is taken over all $(p_{U|X},p_{{Z}|UY})$ that perfectly securely computes $(q_{XY},q_{Z|XY})$ with privacy against both users.
\RRB{Having established the equivalence between secure protocols and graph colorings, in the following theorem, we prove 
bounds on $L^*_{AB-\text{pvt}}$ which can be proved along the lines of the proof of \cite[Theorem 1]{AlonO96} (which is for source coding) with appropriate modifications that arise due to the {\em randomized} nature of our secure protocols.}
\begin{theorem}\label{thm:ps_rate-1}
Suppose $(q_{XY},q_{Z|XY})$ is perfectly securely computable with privacy against both users. Then
{\em\[H_{\chi}(\mathcal{G}_{\text{EQ}},X_{\text{EQ}})\leq L^*_{AB-\text{pvt}}< H_{\chi}(\mathcal{G}_{\text{EQ}},X_{\text{EQ}})+1,\]}
where chromatic entropy $H_{\chi}(.,.)$ is defined in Definition \ref{defn:chromatic-entropy}.
\end{theorem}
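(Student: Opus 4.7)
The plan is to combine Lemma~\ref{lem:equiv-problem-alice-bob} with Claim~\ref{claim:color-equiv-protocol} and then apply standard source-coding arguments (Kraft--McMillan on the lower side and Huffman/Shannon coding on the upper side) to the graph $\mathcal{G}_{\text{EQ}}$. First I would invoke Lemma~\ref{lem:equiv-problem-alice-bob} to reduce the problem on $(q_{XY},q_{Z|XY})$ to the problem on the reduced instance $(q_{X_{\text{EQ}}Y},q_{Z|X_{\text{EQ}}Y})$: since the Lemma yields a length-preserving bijection between secure codes on the two problems, $L^*_{AB-\text{pvt}}$ is the same for both, so it suffices to bound it in terms of $H_{\chi}(\mathcal{G}_{\text{EQ}},X_{\text{EQ}})$. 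Claim~\ref{claim:color-equiv-protocol} then converts the task into one about proper colorings of $\mathcal{G}_{\text{EQ}}$.

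For the upper bound, I would let $c^{*}:\mathcal{V}_{\text{EQ}}\to\{0,1\}^{*}$ be a proper coloring of $\mathcal{G}_{\text{EQ}}$ attaining $H_{\chi}(\mathcal{G}_{\text{EQ}},X_{\text{EQ}})$. Applying Huffman (or Shannon) coding to the induced distribution of the color random variable $c^{*}(X_{\text{EQ}})$ yields a prefix-free binary encoding with expected length strictly less than $H(c^{*}(X_{\text{EQ}}))+1 = H_{\chi}(\mathcal{G}_{\text{EQ}},X_{\text{EQ}})+1$. By the forward direction of Claim~\ref{claim:color-equiv-protocol}, the composition of $c^{*}$ with this prefix-free encoding extends to a secure code for the reduced problem (Bob recovers the color and then samples $Z$ from the color-indexed conditional distribution prescribed by Claim~\ref{claim:color-equiv-protocol}), giving the strict inequality $L^{*}_{AB\text{-pvt}} < H_{\chi}(\mathcal{G}_{\text{EQ}},X_{\text{EQ}})+1$.

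For the lower bound, I would start from an arbitrary secure code $(p_{U|X_{\text{EQ}}},p_{Z|UY})$. Kraft--McMillan gives $\mathbb{E}[L(U)] \geq H(U)$, so it remains to show $H(U) \geq H_{\chi}(\mathcal{G}_{\text{EQ}},X_{\text{EQ}})$. Using the converse direction of Claim~\ref{claim:color-equiv-protocol}, I would extract from the code a proper coloring $c$ of $\mathcal{G}_{\text{EQ}}$ such that $c(X_{\text{EQ}})$ is a (possibly deterministic) function of $U$; concretely, for each codeword $u$ in the support of $U$ the set $\{x_i : p_{U|X_{\text{EQ}}}(u|x_i)>0\}$ is an independent set of $\mathcal{G}_{\text{EQ}}$ (this is what the security and correctness constraints enforce, and is what Claim~\ref{claim:color-equiv-protocol} packages). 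Choosing $c(x_i)$ to be the color/label associated with the support set of $x_i$ under this decomposition produces a proper coloring with $H(c(X_{\text{EQ}})) \leq H(U)$, so that $H_{\chi}(\mathcal{G}_{\text{EQ}},X_{\text{EQ}}) \leq H(c(X_{\text{EQ}})) \leq H(U) \leq \mathbb{E}[L(U)]$.

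The main obstacle is the lower bound: the encoder is allowed to be \emph{randomized}, and converting a randomized encoder into a single deterministic proper coloring without inflating the entropy is not immediate. The plan to sidestep this is to use the support-set structure guaranteed by Claim~\ref{claim:color-equiv-protocol}---each codeword's preimage is an independent set of $\mathcal{G}_{\text{EQ}}$, so a fixed proper coloring can be read off from the support pattern of $p_{U|X_{\text{EQ}}}(\cdot|x_i)$, and this coloring is a deterministic function of $U$. A careful check that $H(c(X_{\text{EQ}}))\le H(U)$, via the data-processing inequality applied to this deterministic post-processing of $U$, then closes the argument and completes the proof along the lines of \cite[Theorem~1]{AlonO96}.
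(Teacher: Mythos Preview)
Your upper bound matches the paper's. The gap is in the lower bound: your proposed fix for handling a randomized encoder---extracting a single proper coloring $c$ of $\mathcal{G}_{\text{EQ}}$ that is simultaneously a function of $X_{\text{EQ}}$ (so that it is a coloring) and a deterministic function of $U$ (so that $H(c(X_{\text{EQ}}))\le H(U)$ by data processing)---does not work in general. Here is a concrete counterexample on the reduced instance. Take $\mathcal{X}_{\text{EQ}}=\{1,2,3\}$, $\mathcal{Y}=\{a,b\}$, $\mathcal{Z}=\{0,1\}$ with $q(1,a)=q(2,a)=q(3,b)=1/3$, $q(Z=0|1,a)=q(Z=1|2,a)=q(Z=0|3,b)=1$. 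The only edge of $\mathcal{G}_{\text{EQ}}$ is $\{1,2\}$. The encoder $1\mapsto u_1$, $2\mapsto u_2$, $3\mapsto u_1$ or $u_2$ each with probability $1/2$, together with the obvious decoder, is a perfectly secure code (correctness and both privacy Markov chains are immediate). The preimages of $u_1$ and $u_2$ are $\{1,3\}$ and $\{2,3\}$, both independent sets, as you note. But any coloring that is a function of $U$ must assign $c(1)=c(3)$ (from $u_1$) and $c(2)=c(3)$ (from $u_2$), forcing $c(1)=c(2)$---not proper. So no such coloring exists, and the data-processing step you invoke is unavailable. (Coloring by the support pattern $\mathrm{supp}\,p(\cdot|x_i)$ is proper here, but it is \emph{not} a function of $U$ and in fact has entropy $\log 3>H(U)=1$.)

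The paper handles the randomized encoder differently and more simply: it conditions on the encoder's private coins $\vec{r}$. For each fixed $\vec{r}$ the encoder becomes a deterministic map $E_{\vec r}:\mathcal{X}_{\text{EQ}}\to\mathcal{U}$, which (via Claim~\ref{claim:color-equiv-protocol}) is a proper coloring $c_{\vec r}$ of $\mathcal{G}_{\text{EQ}}$; since the codewords remain prefix-free, $\mathbb{E}[L(U)\mid\vec r]\ge H(c_{\vec r}(X_{\text{EQ}}))\ge H_\chi(\mathcal{G}_{\text{EQ}},X_{\text{EQ}})$, and averaging over $\vec r$ gives the lower bound. Note this also shows, a posteriori, that your intermediate inequality $H(U)\ge H_\chi$ is true (since $H(U)\ge H(U\mid\vec r)=\mathbb{E}_{\vec r}[H(c_{\vec r}(X_{\text{EQ}}))]\ge H_\chi$), but the correct proof goes through the \emph{family} of colorings indexed by $\vec r$, not through a single coloring recoverable from $U$.
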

\begin{proof}
Upper bound: Claim \ref{claim:color-equiv-protocol} implies that an optimal coloring of $\mathcal{G}_{\text{EQ}}$ corresponds to a secure code 
for $(q_{X_{\text{EQ}}Y},q_{Z|X_{\text{EQ}}Y})$. 
Now, since expected length of any optimal prefix-free binary encoding of a random variable $V$ is upper-bounded by $H(V)+1$ 
\cite[Theorem 5.4.1]{CoverJ06}, we have that the optimal expected length of the message from Alice to Bob is upper-bounded by $H_{\chi}(\mathcal{G}_{\text{EQ}},X_{\text{EQ}})+1$.

Lower bound: By Claim \ref{claim:color-equiv-protocol}, every secure (randomized) code also corresponds to 
proper (random) colorings $(c_{\vec{r}})_{\vec{r}}$ of the graph $\mathcal{G}$. 
Note that the expected length $L(c_{\vec{r}})$ of any coloring $c_{\vec{r}}$ is lower-bounded by $H(c_{\vec{r}}(X))$, 
which follows from the fact that the expected length of any prefix-free binary code 
for a random variable $V$ is lower-bounded by $H(V)$ \cite[Theorem 5.4.1]{CoverJ06}. 
Now, since entropy of any coloring is lower-bounded by $H_{\chi}(\mathcal{G}_{\text{EQ}},X_{\text{EQ}})$, entropy of the optimal coloring, 
it follows that $\mathbb{E}_{\vec{r}}[L(c_{\vec{r}})]\geq H_{\chi}(\mathcal{G}_{\text{EQ}},X_{\text{EQ}})$. 
So the expected length of the message is also lower-bounded by $H_{\chi}(\mathcal{G}_{\text{EQ}},X_{\text{EQ}})$.
\end{proof}

\RRB{\begin{remark}\label{remark:wlog}
In perfect security setting, when privacy against Alice is required, any multiple-round protocol can be turned into a single-round protocol without affecting the number of bits communicated. To see this, suppose that a function $(q_{XY},q_{Z|XY})$ is perfectly securely computable in $r$ rounds with privacy against Alice. Then there exists a conditional p.m.f. $p(u_{[1:r]}|x,y,z)$ satisfying \eqref{Eq_AB_Markov1}-\eqref{Eq_AB_Markov_Secr1} with $Z_1=\emptyset$ and $Z_2=Z$. Condition \eqref{Eq_AB_Markov_Secr1}, i.e., the Markov chain $U_{[1:r]}-X-(Y,Z)$ corresponds to privacy against Alice. This implies the Markov chain $U_{[1:r]}-X-Y$ which in turn implies that the whole transcript generated by a multi-round communication protocol can be simulated by Alice conditioned on her input in a single round. 
\end{remark}
}

\noindent \textbf{Multiple Instances.} Alice and Bob have $X^n$ and $Y^n$ as their inputs, respectively, where $(X_i,Y_i)\sim q_{XY}$, i.i.d., and Bob wants to compute $Z^n$, 
{which should be distributed according to $q_{Z^n|X^nY^n}(z^n|x^n,y^n)=\prod_{i=1}^nq_{Z|XY}(z_i|x_i,y_i)$.} 
Computation should be such that no user learns any additional information about other user's data other than what can be inferred from their own data.
It is easy to see that $(q_{X^nY^n},q_{Z^n|X^nY^n})$ is perfectly securely computable with privacy against both users 
if and only if $(q_{XY},q_{Z|XY})$ is perfectly securely computable with privacy against both users.
We can define the graph $\mathcal{G}_{\text{EQ}}^n=(\mathcal{X}_{\text{EQ}}^n,\mathcal{E}_{\text{EQ}}^n)$ by extending the definition of $\mathcal{G}_{\text{EQ}}$ in a straightforward manner: 
$\mathcal{X}_{\text{EQ}}^n$ is the $n$-fold cartesian product of $\mathcal{X}_{\text{EQ}}$, 
and $\mathcal{E}_{\text{EQ}}^n=\{\{x^n,x'^n\}\in\X_{\text{EQ}}^n\times \X_{\text{EQ}}^n:$ $\exists (y^n,z^n)\in\mathcal{Y}^n\times\mathcal{Z}^n$\text{ s.t. } $p_{X_{\text{EQ}}^nY^n}(x^n,y^n)>0$, $p_{X_{\text{EQ}}^nY^n}(x'^n,y^n)>0$ \text{and} $p_{Z^n|X_{\text{EQ}}^nY^n}(z^n|x^n,y^n)\neq p_{Z^n|X_{\text{EQ}}^nY^n}(z^n|x'^n,y^n)\}$. 
Note that we slightly abused the notation: $\mathcal{X}_{\text{EQ}}^n$ is defined to be the $n$-fold cartesian product of $\mathcal{X}_{\text{EQ}}$, but $\mathcal{E}_{\text{EQ}}^n$ is {\em not} the $n$-fold cartesian product of $\mathcal{E}_{\text{EQ}}$.

We define, analogous to the single instance case, $L_{AB-\text{pvt}-n}^*$ to be the 
optimal amortized expected length of the message that Alice sends to Bob in this setting.
The following is a simple corollary of Theorem~\ref{thm:ps_rate-1}. 
\begin{corollary}\label{cor:ps_rate-1-n}
Suppose $(q_{XY},q_{Z|XY})$ is perfectly securely computable with privacy against both users. Then
{\em\[\frac{1}{n} H_{\chi}(\mathcal{G}_{\text{EQ}}^n,X_{\text{EQ}}^n) \leq L_{AB-\text{pvt}-n}^* < \frac{1}{n} (H_{\chi}(\mathcal{G}_{\text{EQ}}^n,X_{\text{EQ}}^n) + 1).\]}
\end{corollary}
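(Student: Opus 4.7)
The plan is to obtain Corollary~\ref{cor:ps_rate-1-n} by directly applying Theorem~\ref{thm:ps_rate-1} to the $n$-fold product problem $(q_{X^nY^n},q_{Z^n|X^nY^n})$, treating it as a single ``supersource'' instance over the alphabet $\mathcal{X}^n$. The excerpt already notes that secure computability of the $n$-fold problem with privacy against both users is equivalent to that of the single-shot problem, so the hypothesis of Theorem~\ref{thm:ps_rate-1} transfers for free. What remains is to identify the reduced alphabet and the characteristic graph that Theorem~\ref{thm:ps_rate-1} spits out when fed the $n$-fold problem, and to verify that they coincide with $\mathcal{X}_{\text{EQ}}^n$ and $\mathcal{G}_{\text{EQ}}^n$ as defined just above the corollary.

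The central verification is that the equivalence relation $\equiv_n$ associated with the $n$-fold problem (built from a block version of Definition~\ref{defn:relation-tilde}) factorizes across coordinates. First I would observe that, by the product structure of $q_{X^nY^n}$ and $q_{Z^n|X^nY^n}$, the relation $x^n\sim_n x'^n$ on $\mathcal{X}^n$ holds if and only if $x_i\sim x'_i$ holds in the single-shot problem for every coordinate $i$: the witnesses $(y^n,z^n)$ on the left side decompose coordinate-wise into single-shot witnesses $(y_i,z_i)$, and conversely single-shot witnesses can be stitched together. Lifting to the equivalence closure, one direction is immediate from the chain decomposition, while the other requires a short padding argument: given single-coordinate chains of possibly different lengths, I would extend each chain to a common length by using that $x\sim x$ whenever $x$ lies in the marginal support of $q_X$ (which every element of a valid chain does), and then interleave them. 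This yields $x^n\equiv_n x'^n$ iff $x_i\equiv x'_i$ for all $i$, so the equivalence classes of $\mathcal{X}^n$ under $\equiv_n$ are exactly the products $\mathcal{X}_{i_1}\times\cdots\times\mathcal{X}_{i_n}$, and a consistent choice of representatives gives $\mathcal{X}_{\text{EQ}}^n$.

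With this factorization in hand, the reduced input distribution for the $n$-fold problem is
\[
\sum_{x^n\in\mathcal{X}_{i_1}\times\cdots\times\mathcal{X}_{i_n}} q_{X^nY^n}(x^n,y^n)\;=\;\prod_{j=1}^{n}q_{X_{\text{EQ}}Y}(x_{i_j},y_j)\;=\;q_{X_{\text{EQ}}^nY^n}(\tilde x^n,y^n),
\]
matching the definition used in the corollary, and the same product structure shows the associated characteristic graph on $\mathcal{X}_{\text{EQ}}^n$ is precisely $\mathcal{G}_{\text{EQ}}^n$ as defined in the excerpt. At that point Theorem~\ref{thm:ps_rate-1}, applied to the $n$-fold problem, yields
\[
H_{\chi}(\mathcal{G}_{\text{EQ}}^n,X_{\text{EQ}}^n)\;\leq\;\min_{(p_{U|X^n},p_{Z^n|UY^n})}\mathbb{E}[L(U)]\;<\;H_{\chi}(\mathcal{G}_{\text{EQ}}^n,X_{\text{EQ}}^n)+1,
\]
where the minimization is over all pairs that perfectly securely compute $(q_{X^nY^n},q_{Z^n|X^nY^n})$ with privacy against both users. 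Dividing through by $n$ converts total expected length into the amortized quantity $L^*_{AB\text{-pvt-}n}$ and gives exactly the bounds in the corollary.

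The only real obstacle is the equivalence-class factorization in the second paragraph; once that is in place, the remainder of the argument is bookkeeping. In particular, the padding step (using $x\sim x$ to make coordinate-wise chains have the same length) is the one place where one should be careful, because without it the ``block'' chain cannot be constructed from single-coordinate chains of mismatched lengths.
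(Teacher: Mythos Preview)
Your proposal is correct and matches the paper's intended argument: the paper treats the corollary as an immediate consequence of Theorem~\ref{thm:ps_rate-1} applied to the $n$-fold problem, without spelling out the equivalence-class factorization you supply. The factorization step you work through (including the padding detail) is exactly what is needed to identify the reduced $n$-fold problem with the $n$-fold of the reduced problem, and hence to recognize the resulting graph as $\mathcal{G}_{\text{EQ}}^n$; the paper leaves this implicit.
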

Observe that $\mathcal{G}_{\text{EQ}}^n$ is neither the {\sc and}-product nor the {\sc or}-product of the graph $\mathcal{G}_{\text{EQ}}$;
and as far as we know, no single letter expression of $\frac{1}{n}H_{\chi}(\mathcal{G}_{\text{EQ}}^n,X_{\text{EQ}}^n)$ (not even of $\lim_{n\to\infty}\frac{1}{n}H_{\chi}(\mathcal{G}_{\text{EQ}}^n,X_{\text{EQ}}^n)$) is known.  See \cite{AlonO96} for definitions of {\sc and}-product and {\sc or}-product of graphs.

\vspace{12pt}
\subsection{\RRB{Privacy required only against Alice}}
\RRB{We now consider the case where privacy only against Alice is required.
Note that every $(q_{XY},q_{Z|XY})$ is perfectly securely computable in this case, 
since Alice may send $X$ to Bob, and Bob, having $Y$ as his input, may output $Z$ according to $q_{Z|XY}$;
see Footnote~\ref{foot:rand-not-trivial_priv-Alice} (on page~\pageref{foot:rand-not-trivial_priv-Alice}) for a discussion on why one-sided communication does not automatically give privacy against Alice.}
For a given $(q_{XY},q_{Z|XY})$, we define the following characteristic graph $\mathcal{G}=(\mathcal{X},\mathcal{E})$, where the edge set is defined as follows:
\begin{align}
\mathcal{E}=\{\{x,x'\}: \exists (y,z)\text{ s.t. } q_{XY}(x,y)>0,q_{XY}(x',y)>0 \nonumber\\ \text{ and } q_{Z|XY}(z|x,y)\neq q_{Z|XY}(z|x',y)\}. \label{eq:normal-graph-alice}
\end{align} 
We define optimal expected length of the message, which is defined as $L^*_{A-\text{pvt}}:=\min_{(p_{U|X},p_{{Z}|UY})}\mathbb{E}[L(U)]$,
where $L(U)$ denotes the random variable corresponding to the length of $U$ in bits, and 
minimization is taken over all $(p_{U|X},p_{{Z}|UY})$ that perfectly securely computes $(q_{XY},q_{Z|XY})$ with privacy against Alice. Then, the following upper and lower bounds hold.

\begin{theorem}\label{thm:ps_rate-2}
For any $(q_{XY},q_{Z|XY})$, 
\begin{center}$H_{\chi}(\mathcal{G},X)\leq L^{*}_{A-\emph{pvt}}< H_{\chi}(\mathcal{G},X)+1$,\end{center}
where chromatic entropy $H_{\chi}(.,.)$ is defined in Definition \ref{defn:chromatic-entropy}.
\end{theorem}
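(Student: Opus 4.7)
The plan is to mirror the proof of Theorem~\ref{thm:ps_rate-1}, exploiting the fact that when privacy is required only against Alice, no reduction to an equivalence-class graph is needed and one works directly with $\mathcal{G}$ from \eqref{eq:normal-graph-alice}. By Proposition~\ref{lemma:prvagnstAlice}, I may assume there is no common randomness. As in Claim~\ref{claim:color-equiv-protocol}, the bridge between protocols and graph colorings is an equivalence between perfectly secure single-round codes with privacy against Alice and (possibly randomized) proper colorings of $\mathcal{G}$; once this equivalence is set up, both bounds follow from standard prefix-free coding facts.

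For the upper bound, I would take an optimal proper coloring $c^*:\mathcal{X}\to\{0,1\}^*$ of $\mathcal{G}$ achieving $H(c^*(X))=H_\chi(\mathcal{G},X)$, let Alice send $U=c^*(X)$ via an optimal prefix-free code (e.g.\ Huffman), and let Bob, upon receiving $U$ and holding $Y$, pick any $x'$ with $c^*(x')=U$ and $q_{XY}(x',Y)>0$ and then sample $Z\sim q_{Z|XY}(\cdot\mid x',Y)$. Correctness holds because any two such $x'$ lie in a common color class, hence are non-adjacent in $\mathcal{G}$, so the definition of $\mathcal{E}$ forces $q_{Z|XY}(\cdot\mid x',Y)=q_{Z|XY}(\cdot\mid X,Y)$. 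Privacy against Alice holds automatically since $U$ is a deterministic function of $X$, giving $I(U;Y,Z\mid X)=0$. The expected length is $<H(c^*(X))+1=H_\chi(\mathcal{G},X)+1$.

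For the lower bound, I would take an arbitrary perfectly secure $(p_{U|X},p_{Z|UY})$ and realize $p_{U|X}$ via Alice's private randomness $R\perp (X,Y,Z)$ as $U=f(X,R)$. The key structural claim is that, for every $r$ with $\Pr[R=r]>0$, the map $f_r:=f(\cdot,r)$ is a proper coloring of $\mathcal{G}$. To show this, suppose $\{x,x'\}\in\mathcal{E}$ with witness $(y,z)$ satisfying $q_{XY}(x,y),q_{XY}(x',y)>0$ and $q_{Z|XY}(z\mid x,y)\neq q_{Z|XY}(z\mid x',y)$. Perfect privacy against Alice gives $U-X-(Y,Z)$, which combined with the fact that $Z$ is produced from $(U,Y)$ alone yields $p_{Z|UY}(z\mid u,y)=q_{Z|XY}(z\mid x,y)$ for every $u\in\mathcal{U}_x:=\mathrm{supp}\,p_{U|X}(\cdot\mid x)$ and, symmetrically, $p_{Z|UY}(z\mid u,y)=q_{Z|XY}(z\mid x',y)$ for $u\in\mathcal{U}_{x'}$; hence $\mathcal{U}_x\cap\mathcal{U}_{x'}=\emptyset$, so $f_r(x)\neq f_r(x')$. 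Since any prefix-free binary encoding of $f_r(X)$ has expected length at least $H(f_r(X))$ \cite[Theorem 5.4.1]{CoverJ06}, and $H(f_r(X))\geq H_\chi(\mathcal{G},X)$ because $f_r$ is a proper coloring, averaging over $R$ yields $\mathbb{E}[L(U)]\geq H_\chi(\mathcal{G},X)$.

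The main obstacle is the structural claim underlying the lower bound, namely that perfect privacy against Alice forces the supports $\mathcal{U}_x$ and $\mathcal{U}_{x'}$ to be disjoint across every edge of $\mathcal{G}$. Without this property, a randomized encoder need not decompose into proper colorings and the reduction to chromatic entropy would fail. Once the disjoint-support property is in hand, the rest of the argument is a clean averaging over Alice's private randomness, and the upper bound is a routine Huffman-coding construction based on an optimal proper coloring.
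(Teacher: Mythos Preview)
Your proposal is correct and takes essentially the same route as the paper, which proves Theorem~\ref{thm:ps_rate-2} by noting that the proof of Claim~\ref{claim:color-equiv-protocol} never uses privacy against Bob (so the coloring--protocol correspondence holds directly for $\mathcal{G}$) and then invoking the argument of Theorem~\ref{thm:ps_rate-1}. The one small slip is writing $R\perp(X,Y,Z)$ when only $R\perp(X,Y)$ is available a priori, but you never actually use independence from $Z$; your disjoint-support argument for $\mathcal{U}_x\cap\mathcal{U}_{x'}=\emptyset$ is exactly the content of \eqref{eq:color-equiv-interim1}--\eqref{eq:color-equiv-interim2}.
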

\begin{proof}
We first show that every proper coloring of the vertices of $\mathcal{G}$ corresponds to a perfectly secure code for 
$(q_{XY},q_{Z|XY})$ with privacy against Alice, and every perfectly secure code for $(q_{XY},q_{Z|XY})$ with privacy against Alice
corresponds to a collection of proper colorings of the vertices of $\mathcal{G}$.
This can be proved along the lines of the proof of Claim \ref{claim:color-equiv-protocol}.
Note that Claim~\ref{claim:color-equiv-protocol} is for privacy against both users, but we did not use privacy against Bob in proving that -- 
because Claim~\ref{claim:color-equiv-protocol} is for $(q_{X_{\text{EQ}}Y},q_{Z|X_{\text{EQ}}Y})$, and when we transform our problem from $(q_{XY},q_{Z|XY})$ 
to $(q_{X_{\text{EQ}}Y},q_{Z|X_{\text{EQ}}Y})$, privacy against Bob becomes redundant. 
Now, Theorem \ref{thm:ps_rate-2} can be proved along the lines of the proof of Theorem \ref{thm:ps_rate-1}. 
\end{proof}
\RRB{Note that Remark~\ref{remark:wlog} holds verbatim here also, i.e., the above results hold for the multiple-round communication model as well.} 

\noindent \textbf{Multiple Instances.} The setting is analogous to the multiple instances setting described before, except that we do not require privacy against Bob.
Note that we want to securely compute $(q_{X^nY^n},q_{Z^n|X^nY^n})$ with privacy against Alice, where $(X_i,Y_i)$'s are i.i.d. according 
to $q_{XY}$ and $q_{Z^n|X^nY^n}$ is in product form, i.e., $q_{Z^n|X^nY^n}(z^n|x^n,y^n)=\prod_{i=1}^n q_{Z|XY}(z_i|x_i,y_i)$.
Similar to the way we defined $\mathcal{G}_{EQ}^n$, we can define the graph $\mathcal{G}^n=(\mathcal{X}^n,\mathcal{E}^n)$ by extending the definition of $\mathcal{G}$ analogously: $\mathcal{E}^n=\{\{x^n,x'^n\}\in\mathcal{X}^n\times \mathcal{X}^n:$ $\exists (y^n,z^n)\in\mathcal{Y}^n\times\mathcal{Z}^n$\text{ s.t.} $q_{X^nY^n}(x^n,y^n)>0$, $q_{X^nY^n}(x'^n,y^n)>0$ \text{and} $q_{Z^n|X^nY^n}(z^n|x^n,y^n)\neq q_{Z^n|X^nY^n}(z^n|x'^n,y^n)\}$. We define, analogous to the single instance case, $L_{A-\text{pvt}-n}^*$ to be the 
optimal amortized expected length of the message that Alice sends to Bob in this setting.
The following is a simple corollary of Theorem~\ref{thm:ps_rate-2}. 
\begin{corollary}\label{cor:ps_rate-2-n}
Suppose $(q_{XY},q_{Z|XY})$ is perfectly securely computable with privacy against both users. Then
{\em\[\frac{1}{n} H_{\chi}(\mathcal{G}^n,X^n) \leq L_{A-\text{pvt}-n}^* < \frac{1}{n} (H_{\chi}(\mathcal{G}^n,X^n) + 1).\]}
\end{corollary}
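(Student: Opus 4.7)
The plan is to observe that the multi-instance problem $(q_{X^nY^n}, q_{Z^n|X^nY^n})$ can be viewed as a single-instance secure computation problem over the larger alphabets $\mathcal{X}^n$, $\mathcal{Y}^n$, $\mathcal{Z}^n$, and then simply invoke Theorem~\ref{thm:ps_rate-2} on this larger problem. The key point to verify is that the characteristic graph one would associate to the pair $(q_{X^nY^n}, q_{Z^n|X^nY^n})$ via the definition in \eqref{eq:normal-graph-alice} coincides with the graph $\mathcal{G}^n$ defined just before the corollary. Inspecting the two definitions shows they are literally the same: an edge $\{x^n, x'^n\}$ is present iff there exist $y^n$ and $z^n$ with $q_{X^nY^n}(x^n,y^n)>0$, $q_{X^nY^n}(x'^n,y^n)>0$, and $q_{Z^n|X^nY^n}(z^n|x^n,y^n)\neq q_{Z^n|X^nY^n}(z^n|x'^n,y^n)$. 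So $\mathcal{G}^n$ is exactly the characteristic graph of the block problem.

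Next, I would note that privacy against Alice is preserved in the block setting in the natural way: the Markov chain $U-X^n-(Y^n,Z^n)$ is the direct analogue of $U-X-(Y,Z)$, and the collection of protocols achieving perfect privacy against Alice for $(q_{X^nY^n}, q_{Z^n|X^nY^n})$ is exactly the single-shot privacy-against-Alice family for the block problem. Let $\widetilde{L}_n^{*}$ denote the optimal expected message length (in bits) for one-shot perfectly secure computation of $(q_{X^nY^n}, q_{Z^n|X^nY^n})$ with privacy against Alice. By definition, the amortized quantity is $L^{*}_{A-\text{pvt}-n} = \widetilde{L}_n^{*}/n$.

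Applying Theorem~\ref{thm:ps_rate-2} to the block problem yields
\begin{equation*}
H_{\chi}(\mathcal{G}^n, X^n) \;\leq\; \widetilde{L}_n^{*} \;<\; H_{\chi}(\mathcal{G}^n, X^n) + 1.
\end{equation*}
Dividing through by $n$ gives the claimed bounds
\begin{equation*}
\tfrac{1}{n} H_{\chi}(\mathcal{G}^n, X^n) \;\leq\; L^{*}_{A-\text{pvt}-n} \;<\; \tfrac{1}{n}\big(H_{\chi}(\mathcal{G}^n, X^n) + 1\big).
\end{equation*}
There is no real obstacle here; the only care needed is to check that the block-level characteristic graph as defined in the corollary statement is the same graph produced by applying the single-shot construction~\eqref{eq:normal-graph-alice} to the $n$-fold product distribution, so that Theorem~\ref{thm:ps_rate-2} is directly applicable. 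Once that identification is made, the corollary is immediate.
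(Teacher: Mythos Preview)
Your proposal is correct and follows exactly the approach the paper intends: the corollary is stated as ``a simple corollary of Theorem~\ref{thm:ps_rate-2},'' obtained by applying the single-shot result to the block problem $(q_{X^nY^n},q_{Z^n|X^nY^n})$ and dividing by $n$. Your observation that the graph $\mathcal{G}^n$ defined before the corollary is precisely the characteristic graph \eqref{eq:normal-graph-alice} of the block problem is the only point that needs to be checked, and you have done so.
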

Observe that $\mathcal{G}^n$ is neither the {\sc and}-product nor the {\sc or}-product of the graph $\mathcal{G}_{\text{EQ}}$;
and as far as we know, no single letter expression of $\frac{1}{n}H_{\chi}(\mathcal{G}_{\text{EQ}}^n,X_{\text{EQ}}^n)$ (not even of $\lim_{n\to\infty}\frac{1}{n}H_{\chi}(\mathcal{G}_{\text{EQ}}^n,X_{\text{EQ}}^n)$) is known.

\vspace{12pt}
\subsection{\RRB{Privacy required only against Bob}}
\RRB{Next, we consider the case where privacy only against Bob is required.
It turns out that not all $(q_{XY},q_{Z|XY})$ are securely computable in this setting.\footnote{\RRB{For example, consider secure computation of the {\sc and} function: Alice and Bob have binary inputs $X$ and $Y$, respectively,
and Bob wants to securely compute {\sc and}($X,Y$), the logical {\sc and} of the two input bits. 
Suppose that all the four possible input pairs can occur with non-zero probability. 
If Bob's input is 0 (which means that output is always 0), Alice's message cannot distinguish whether her input is 0 or 1, 
otherwise privacy against Bob will be compromised. On the other hand, if Bob's input is 1 (which means that output is always equal to $X$), 
Alice's message must distinguish whether her input is 0 or 1. Now, since Alice does not know Bob's input 
(because there is only one-way communication), Alice cannot send a message without violating either correctness or privacy against Bob.}}
First we explicitly characterize $(q_{XY},q_{Z|XY})$ 
that are perfectly securely computable with privacy against Bob, where $q_{XY}$ has full support.
None of the proofs in this section depends on the specific distribution of $q_{XY}$ as long as it has full support. So the characterization remains the same for all $q_{XY}$ that have full support.}

For every $y\in\Y$, define a set $\Z^{(y)}=\{z\in\Z:\exists x\in\X \text{ s.t. } q_{Z|XY}(z|x,y)>0\}$. Essentially, the set $\Z^{(y)}$ discards all those elements of $\Z$ that never appear as an output when Bob's input is $y$.
\begin{definition}\label{defn:equiv-relation}
For $y\in\Y$, define a relation $\equiv_y$ on the set $\Z^{(y)}$ as follows: for $z,z'\in\Z^{(y)}$, we say that $z\equiv_y z'$, if there is 
a constant $c>0$ such that $q_{Z|XY}(z|x,y)=c\cdot q_{Z|XY}(z'|x,y)$ for every $x\in\X$.
\end{definition}
It is easy to see that $\equiv_y$ is an equivalence relation for every $y\in\Y$, which partitions $\Z^{(y)}$ into equivalent classes. Consider a $y\in\Y$, and let $\Z^{(y)}=\Z_{1}^{(y)}\biguplus\Z_{2}^{(y)}\biguplus\hdots\biguplus\Z_{k(y)}^{(y)}$, where $k(y)$ is the number of equivalence classes in the partition generated by $\equiv_y$. 
For every equivalence class $\Z_{i}^{(y)}$ in this partition, define a $|\X|\times|\Z_{i}^{(y)}|$ matrix $A_i^{(y)}$ such that $A_i^{(y)}(x,z)=q_{Z|XY}(z|x,y)$ for every $(x,z)\in\X\times\Z_{i}^{(y)}$. 
Note that, for every $i\in[k(y)]$, all the columns of $A_i^{(y)}$ are multiples of each other. 
Since $A_i^{(y)}$ is a rank-one matrix, we can write it as $A_i^{(y)}=\vec{\alpha}_i^{(y)}\vec{\gamma}_i^{(y)}$, where $\vec{\alpha}_i^{(y)}$ is a column vector (whose entries are non-negative and sum up to one, which makes it a unique probability vector) and $\vec{\gamma}_i^{(y)}$ is a row vector. Entries of $\vec{\alpha}_i^{(y)}$ and $\vec{\gamma}_i^{(y)}$ are indexed by $x\in\X$ and $z\in\Z_{i}^{(y)}$, respectively. 
So $A_i^{(y)}(x,z)=\vec{\alpha}_i^{(y)}(x)\vec{\gamma}_i^{(y)}(z)$.
Note that if $\vec{\alpha}_i^{(y)}=\vec{\alpha}_j^{(y)}$, then $i=j$. 

Suppose $(q_{XY},q_{Z|XY})$ is perfectly securely computable with privacy against Bob. This implies that there exists 
a pair of encoder and decoder $(p_{U|X},p_{{Z}|UY})$, which induces the joint distribution $p_{XYU{Z}}=q_{XY}p_{U|X}p_{{Z}|UY}$, 
that satisfies correctness (i.e., $p_{XY{Z}}=q_{XY}q_{Z|XY}$) 
and privacy against Bob (i.e., the Markov chain $U-(Y,{Z})-X$ holds).
Note that the random variable $U$ corresponds to the message that Alice sends to Bob.
We define a set $\U_{i}^{(y)}$ to be the set of all those messages that Alice can send to Bob, and when Bob, having $y$ as his input outputs an element of $\Z_{i}^{(y)}$, as follows:
\begin{equation}
\U_{i}^{(y)}=\{u\in\U:p(u,z|y)>0 \text{ for some }z\in\Z_{i}^{(y)}\}. \label{eq:one-round_set-msgs}
\end{equation}
Note that for every $y\in\Y$ and $i\in[1:k(y)]$, the probability vector $\vec{\alpha}_i^{(y)}$ corresponds to the equivalence class $\Z_{i}^{(y)}$. The following claim is proved in Appendix~\ref{appendixF}.

\begin{claim}\label{claim:disjoint-messages}
Consider any $y,y'\in\Y$ and $i\in[1:k(y)]$, $j\in[1:k(y')]$. If $\vec{\alpha}_i^{(y)}\neq \vec{\alpha}_j^{(y')}$, then $\U_{i}^{(y)}\cap\U_{j}^{(y')}=\phi$.
\end{claim}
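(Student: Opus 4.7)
The plan is to derive a constraint on the conditional distribution $p_{U|X}$ that must be satisfied by every message $u$ lying in a set $\U_i^{(y)}$, and then show that this constraint forces the likelihood vector $\bigl(p(u|x)\bigr)_{x\in\X}$ to be proportional to $\vec{\alpha}_i^{(y)}$. Having such a proportionality for two different $(y,i)$ and $(y',j)$ will then force $\vec{\alpha}_i^{(y)} = \vec{\alpha}_j^{(y')}$, yielding the contrapositive of the claim.

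Concretely, suppose $u \in \U_i^{(y)}$ and pick $z\in\Z_i^{(y)}$ with $p(u,z|y)>0$. Expanding the induced joint $p(x,u,y,z) = q_{XY}(x,y)\,p_{U|X}(u|x)\,p_{{Z}|UY}(z|u,y)$ and using that $U-X-Y$ (Alice's encoder depends only on her input), a direct computation gives
\begin{equation*}
p(x\mid u,y,z) \;=\; \frac{q_{X|Y}(x|y)\,p_{U|X}(u|x)}{\sum_{x'} q_{X|Y}(x'|y)\,p_{U|X}(u|x')}.
\end{equation*}
On the other hand, correctness together with the factorization $q_{Z|XY}(z|x,y) = \vec{\alpha}_i^{(y)}(x)\vec{\gamma}_i^{(y)}(z)$ for $z\in \Z_i^{(y)}$ gives
\begin{equation*}
p(x\mid y,z) \;=\; \frac{q_{X|Y}(x|y)\,\vec{\alpha}_i^{(y)}(x)}{\sum_{x'} q_{X|Y}(x'|y)\,\vec{\alpha}_i^{(y)}(x')}.
\end{equation*}
The privacy-against-Bob Markov chain $U-(Y,Z)-X$ equates these two expressions whenever $p(u,y,z)>0$. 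Since $q_{XY}$ has full support, $q_{X|Y}(x|y)>0$ for every $x$, and we can cancel it, yielding $p_{U|X}(u|x) = c_{u,y,i}\,\vec{\alpha}_i^{(y)}(x)$ for all $x\in\X$, for some constant $c_{u,y,i}\ge 0$. The constant is strictly positive because $p(u|y)=\sum_x q_{X|Y}(x|y)p_{U|X}(u|x)>0$ (as $u\in \U_i^{(y)}$).

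Applying the same argument with $y'$ and $j$ in place of $y$ and $i$ shows that if $u\in \U_j^{(y')}$ as well, then $p_{U|X}(u|x) = c_{u,y',j}\,\vec{\alpha}_j^{(y')}(x)$ for all $x$, with $c_{u,y',j}>0$. Equating the two expressions gives $c_{u,y,i}\,\vec{\alpha}_i^{(y)} = c_{u,y',j}\,\vec{\alpha}_j^{(y')}$, and summing the entries (both $\vec{\alpha}$'s are probability vectors) forces $c_{u,y,i}=c_{u,y',j}$, hence $\vec{\alpha}_i^{(y)} = \vec{\alpha}_j^{(y')}$. Taking the contrapositive yields $\U_i^{(y)}\cap \U_j^{(y')} = \emptyset$ whenever $\vec{\alpha}_i^{(y)}\neq\vec{\alpha}_j^{(y')}$. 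The only delicate point is to track positivity carefully so that the various conditional probabilities are well-defined and the proportionality constants are nonzero; this is where the full-support assumption on $q_{XY}$ and the definition of $\U_i^{(y)}$ (ensuring $p(u|y)>0$) are used.
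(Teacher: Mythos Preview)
Your proof is correct and takes a genuinely different route from the paper's. The paper argues by contradiction and splits into two cases according to whether $\text{Supp}(\vec{\alpha}_i^{(y)})$ and $\text{Supp}(\vec{\alpha}_j^{(y')})$ coincide: in the equal-support case it picks $x,x'$ with $\frac{\vec{\alpha}_i^{(y)}(x')}{\vec{\alpha}_i^{(y)}(x)}\neq\frac{\vec{\alpha}_j^{(y')}(x')}{\vec{\alpha}_j^{(y')}(x)}$, expands $p(u,z|x,y)$ two ways to obtain $\frac{q_{Z|XY}(z|x,y)}{q_{Z|XY}(z|x',y)}=\frac{p_{U|X}(u|x)}{p_{U|X}(u|x')}$ (and similarly for $(y',z')$), and reaches a contradiction; in the unequal-support case it exhibits a positivity contradiction. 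Your approach instead computes $p(x\mid u,y,z)$ and $p(x\mid y,z)$ directly and equates them via the Markov chain $U-(Y,Z)-X$, which after cancelling the full-support factor $q_{X|Y}(x|y)$ yields the structural identity $p_{U|X}(u|x)=c_{u,y,i}\,\vec{\alpha}_i^{(y)}(x)$ for \emph{all} $x$. This is a strictly stronger intermediate conclusion than the paper's ratio identities, and it dispatches the claim in one stroke without a case split, since two positive scalar multiples of probability vectors that agree must have equal scalars and hence equal vectors. The paper's argument is a bit more hands-on and perhaps generalizes more readily if one relaxes the full-support hypothesis (its Case~2 handles support mismatches explicitly), whereas your argument leans squarely on full support to cancel $q_{X|Y}(x|y)$ pointwise; in the present setting, however, your route is cleaner.
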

With the help of Claim \ref{claim:disjoint-messages} we can show that for every $y,y'\in\Y$, the corresponding collections of 
probability vectors $\{\vec{\alpha}_i^{(y)}:i\in[1:k(y)]\}$ and $\{\vec{\alpha}_j^{(y')}:j\in[1:k(y')]\}$ are equal. The following claim is proved in Appendix~\ref{appendixF}.

\begin{claim}\label{claim:equal-alpha-vectors}
For all $y,y'\in\Y$, we have the following:
\begin{enumerate}
\item $k(y)=k(y')$. 
\item $\{\vec{\alpha}_1^{(y)},\vec{\alpha}_2^{(y)},\hdots,\vec{\alpha}_{k}^{(y)}\}=\{\vec{\alpha}_1^{(y')},\vec{\alpha}_2^{(y')},\hdots,\vec{\alpha}_{k}^{(y')}\}$, where $k=k(y)$ for any $y\in\Y$.
\end{enumerate}
\end{claim}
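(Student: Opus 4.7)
The plan is to leverage Claim \ref{claim:disjoint-messages} together with the observation that, for each fixed $y$, the sets $\U_1^{(y)}, \ldots, \U_{k(y)}^{(y)}$ form a partition of the support of the message random variable $U$. Once this partition structure is established, matching up the collections of $\vec{\alpha}$-vectors across different $y$'s is essentially a pigeonhole argument built directly on the contrapositive of Claim \ref{claim:disjoint-messages}.

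First I would establish that for each $y \in \Y$, the sets $\{\U_i^{(y)}\}_{i=1}^{k(y)}$ are pairwise disjoint, each non-empty, and together cover the support of $U$. Disjointness is immediate from Claim \ref{claim:disjoint-messages} applied with $y' = y$: for $i \neq j$ the equivalence classes $\Z_i^{(y)}$ and $\Z_j^{(y)}$ give distinct probability vectors $\vec{\alpha}_i^{(y)} \neq \vec{\alpha}_j^{(y)}$ (by the uniqueness noted right after their definition), so $\U_i^{(y)} \cap \U_j^{(y)} = \emptyset$. Non-emptiness follows by picking any $z \in \Z_i^{(y)}$ and any $x$ with $q_{Z|XY}(z|x,y) > 0$; the correctness identity $q_{Z|XY}(z|x,y) = \sum_u p_{U|X}(u|x) p_{Z|UY}(z|u,y)$ exhibits some $u$ participating in the sum, and full support of $q_{XY}$ then gives $p(u,z|y) > 0$. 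For coverage, I would fix any $u$ in the support of $U$ and any $y$: full support of $q_{XY}$ makes $p(u,y) > 0$, so Bob must produce some output $z$ with $p(u,z|y) > 0$, and correctness forces that $z$ to lie in $\Z^{(y)}$; hence $u$ sits inside some $\U_i^{(y)}$.

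Next, for any two fixed $y, y' \in \Y$ and any $i \in [1{:}k(y)]$, pick some $u \in \U_i^{(y)}$ (non-empty by the previous step). By the partition property at $y'$, this $u$ lies in some $\U_j^{(y')}$. Invoking Claim \ref{claim:disjoint-messages} in contrapositive form, $u \in \U_i^{(y)} \cap \U_j^{(y')}$ forces $\vec{\alpha}_i^{(y)} = \vec{\alpha}_j^{(y')}$. This establishes the inclusion $\{\vec{\alpha}_i^{(y)}\}_{i=1}^{k(y)} \subseteq \{\vec{\alpha}_j^{(y')}\}_{j=1}^{k(y')}$; swapping the roles of $y$ and $y'$ yields the reverse inclusion and hence equality of the two collections. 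Since the $\vec{\alpha}$-vectors within a single $y$ are distinct by construction, counting cardinalities of these now-equal sets gives $k(y) = k(y')$, finishing both parts of the claim.

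The one place that requires a bit of care is the covering property of the $\U_i^{(y)}$'s: this is precisely where the full-support assumption on $q_{XY}$ is used, because it guarantees $p(u,y) > 0$ for every $u$ in the support of $U$ and every $y \in \Y$, which in turn forces Bob's decoder to emit some legitimate output $z \in \Z^{(y)}$ on that pair. Once this is in place, the rest is straightforward bookkeeping against the definitions of $\Z^{(y)}$, $\Z_i^{(y)}$ and $\vec{\alpha}_i^{(y)}$.
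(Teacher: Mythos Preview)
Your proposal is correct and follows essentially the same approach as the paper: both arguments rest on showing that the sets $\U_i^{(y)}$ partition the message support $\U$ for each $y$, and then use the contrapositive of Claim~\ref{claim:disjoint-messages} to match the $\vec{\alpha}$-vectors across different $y$'s. Your version is somewhat more carefully written---you explicitly verify non-emptiness and coverage, and you pinpoint exactly where the full-support assumption on $q_{XY}$ enters---and you derive $k(y)=k(y')$ as a consequence of the set equality rather than proving it first via pigeonhole, but these are organizational differences, not substantive ones.
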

For ease of notation, without loss of generality, we rearrange the indices, to have $\vec{\alpha}_j^{(y')}=\vec{\alpha}_i^{(y)}$ if and only if $i=j$.
Now it follows from Claim \ref{claim:disjoint-messages} and Claim \ref{claim:equal-alpha-vectors} that the message set $\U$ and the alphabet $\Z^{(y)}$, for every $y\in\Y$, can be partitioned into $k$ parts as follows:
\begin{align}
\U &= \U_1\uplus\U_2\uplus\hdots\uplus\U_k, \notag \\
\Z^{(y)} &= \Z_{1}^{(y)}\uplus\Z_{2}^{(y)}\uplus\hdots\uplus\Z_{k}^{(y)}, \notag
\end{align}
where $\Z_{i}^{(y)} = \{z\in\Z^{(y)}: p(u,z|x,y)>0\text{ for some }x\in\X, u\in\U_i\}$. Note that the same $\U_i$ is used to define $\Z_{i}^{(y)}$ (and corresponds to $\vec{\alpha}_i^{(y)}$ also) for every $y\in\Y$.
Now we can state the characterization theorem, which is proved in Appendix~\ref{appendixG}.
\begin{theorem}\label{thm:one-round-characterization}
Suppose $q_{XY}$ has full support. Then $(q_{XY},q_{Z|XY})$ is perfectly securely computable with privacy against Bob if and only if the following holds for every $y,y'\in\Y${\em :}
\begin{enumerate}
\item $\{\vec{\alpha}_1^{(y)},\vec{\alpha}_2^{(y)},\hdots,\vec{\alpha}_{k}^{(y)}\}=\{\vec{\alpha}_1^{(y')},\vec{\alpha}_2^{(y')},\hdots,\vec{\alpha}_{k}^{(y')}\}$,
\item For any $i\in[k]$, $\sum_{z\in\Z_{i}^{(y)}}q_{Z|XY}(z|x,y)=\sum_{z\in\Z_{i}^{(y')}}q_{Z|XY}(z|x,y')$ for every $x\in\X$.
\end{enumerate}
\end{theorem}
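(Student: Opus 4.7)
For the necessity direction (``only if''), condition~1 is exactly the content of Claim~\ref{claim:equal-alpha-vectors}, so it remains to verify condition~2. Using Claims~\ref{claim:disjoint-messages} and~\ref{claim:equal-alpha-vectors} with the reindexing that identifies the equivalence class labelled $i$ at input $y$ with the one at $y'$ via their common probability vector (so that $\vec{\alpha}_i^{(y)}=\vec{\alpha}_i^{(y')}$ for all $y,y'$), the message alphabet partitions globally as $\U=\U_1\uplus\cdots\uplus\U_k$, and whenever Alice sends $u\in\U_i$ and Bob has input $y$, the output lies in $\Z_i^{(y)}$ with probability one. Hence
\begin{align*}
\sum_{z\in\Z_i^{(y)}} q_{Z|XY}(z|x,y)
&= \sum_{z\in\Z_i^{(y)}}\sum_u p(u|x)\,p(z|u,y)\\
&= \sum_{u\in\U_i} p(u|x),
\end{align*}
which is independent of $y$, giving condition~2.

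For the sufficiency direction (``if''), assume conditions~1 and~2 with the indexing chosen so that $\vec{\alpha}_i^{(y)}=\vec{\alpha}_i^{(y')}$ for every $y,y'$. Define $\beta_i^{(y)} := \sum_{z\in\Z_i^{(y)}}\vec{\gamma}_i^{(y)}(z)$, so that $\sum_{z\in\Z_i^{(y)}} q_{Z|XY}(z|x,y) = \vec{\alpha}_i^{(y)}(x)\,\beta_i^{(y)}$. By condition~2 this product is independent of $y$; denote it by $\pi_i(x)$. Then $\pi_i(x)\ge 0$ and $\sum_i\pi_i(x)=1$, so taking message alphabet $\U=[k]$, encoder $p_{U|X}(i|x):=\pi_i(x)$, and decoder $p_{Z|UY}(z|i,y):=\vec{\gamma}_i^{(y)}(z)/\beta_i^{(y)}$ for $z\in\Z_i^{(y)}$ defines a valid scheme. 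For $z\in\Z_i^{(y)}$, the induced distribution is
\begin{align*}
p(z|x,y) = \pi_i(x)\cdot\frac{\vec{\gamma}_i^{(y)}(z)}{\beta_i^{(y)}} = \vec{\alpha}_i^{(y)}(x)\,\vec{\gamma}_i^{(y)}(z) = q_{Z|XY}(z|x,y),
\end{align*}
establishing correctness. Privacy against Bob is automatic: since the sets $\Z_1^{(y)},\ldots,\Z_k^{(y)}$ are disjoint, the message $U$ is a deterministic function of $(Y,Z)$, so the Markov chain $U-(Y,Z)-X$ holds trivially.

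The main subtlety is setting up a \emph{single} global indexing of equivalence classes that is consistent across all values of $y$: condition~1 asserts only equality of the sets $\{\vec{\alpha}_i^{(y)}\}_i$, and condition~2 becomes meaningful only once a common labelling via $\vec{\alpha}_i^{(y)}=\vec{\alpha}_i^{(y')}$ is fixed. This reindexing is precisely what Claim~\ref{claim:equal-alpha-vectors} makes available; once it is in place, both directions reduce to algebraic bookkeeping based on the rank-one factorization $q_{Z|XY}(z|x,y)=\vec{\alpha}_i^{(y)}(x)\vec{\gamma}_i^{(y)}(z)$ for $z\in\Z_i^{(y)}$.
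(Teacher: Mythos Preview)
Your proof is correct and follows essentially the same approach as the paper. For the ``only if'' direction your computation $\sum_{z\in\Z_i^{(y)}} q_{Z|XY}(z|x,y)=\sum_{u\in\U_i}p(u|x)$ is exactly the paper's argument (the paper writes it via $p(u|x,y)$ and then invokes $U-X-Y$, but it is the same chain). For the ``if'' direction your encoder $\pi_i(x)=\sum_{z\in\Z_i^{(y)}}q_{Z|XY}(z|x,y)$ and decoder $\vec{\gamma}_i^{(y)}(z)/\beta_i^{(y)}$ coincide with the paper's explicit protocol, which sends index $i$ with probability $\sum_{z\in\Z_i^{(y_1)}}q_{Z|XY}(z|x,y_1)$ and decodes with $q_{Z|XY}(z|x',y)/\sum_{z'\in\Z_i^{(y)}}q_{Z|XY}(z'|x',y)$; your more explicit use of the rank-one factorization makes the correctness verification slightly cleaner, but the protocol and the logic are identical.
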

We introduce a new random variable $W$, which takes values in $[k]$, and is jointly distributed with $(X,Y,Z)$ as follows: define $p_{XYWZ}(x,y,i,z):= 0$, if $p_{XYZ}(x,y,z)=0$; otherwise, define
\begin{align}
p_{XYWZ}&(x,y,i,z)\\
 &:= q_{XY}(x,y)\times p_{W|X}(i|x)\times p_{Z|WY}(z|i,y) \notag \\
&:=  q_{XY}(x,y)\times \Big(\sum_{z\in\Z_i^{(y)}}q_{Z|XY}(z|x,y)\Big)\nonumber\\ &\hspace{10pt}\times \Big(\mathbbm{1}_{\{z\in\Z_i^{(y)}\}}\times\frac{q_{Z|XY}(z|x,y)}{\sum_{z\in\Z_i^{(y)}}q_{Z|XY}(z|x,y)}\Big) \label{eq:W-interim2}.
\end{align}
Comments are in order: (i) We defined $p_{W|X}(i|x)$ to be $\sum_{z\in\Z_i^{(y)}}q_{Z|XY}(z|x,y)$ in \eqref{eq:W-interim2} -- this is a valid definition because $\sum_{z\in\Z_i^{(y)}}q_{Z|XY}(z|x,y)$ is same for all $y$ (see Theorem \ref{thm:one-round-characterization}). (ii) We defined $p_{Z|WY}(z|i,y)$ to be $\mathbbm{1}_{\{z\in\Z_i^{(y)}\}}\times\frac{q_{Z|XY}(z|x,y)}{\sum_{z\in\Z_i^{(y)}}q_{Z|XY}(z|x,y)}$ in \eqref{eq:W-interim2} -- this is also a valid definition because the matrix $A_i^{(y)}$ defined earlier is a rank-one matrix, and therefore, $\frac{q_{Z|XY}(z|\tilde{x},y)}{\sum_{z\in\Z_i^{(y)}}q_{Z|XY}(z|\tilde{x},y)}$ is same for all $\tilde{x}$'s for which $q_{Z|XY}(z|\tilde{x},y)>0$. 
It follows from \eqref{eq:W-interim2} that $p_{XYWZ}(x,y,i,z) = \mathbbm{1}_{\{z\in\Z_i^{(y)}\}}\times q_{XYZ}(x,y,z)$. 
Note that $W$ is a deterministic function of both $U$ as well as of $(Y,Z)$. 
Now we are ready to prove upper and lower bounds on optimal expected length of the message, which is defined as $L^*_{B-\text{pvt}}:=\min_{(p_{U|X},p_{{Z}|UY})}\mathbb{E}[L(U)]$,
where $L(U)$ denotes the random variable corresponding to the length of $U$ in bits, and 
minimization is taken over all $(p_{U|X},p_{{Z}|UY})$ that perfectly securely computes $(q_{XY},q_{Z|XY})$ with privacy only against Bob. 
\begin{theorem}\label{thm:ps_rate-3}
Suppose $(q_{XY},q_{Z|XY})$ is perfectly securely computable with privacy against Bob, where $q_{XY}$ has full support. Then
{\em\[H(W)\leq L^*_{B-\text{pvt}}< H(W)+1.\]}
\end{theorem}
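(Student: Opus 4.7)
The plan is to establish both bounds by exhibiting an explicit protocol achieving the upper bound and using the partition structure (Claims~\ref{claim:disjoint-messages} and~\ref{claim:equal-alpha-vectors}) to derive the lower bound.

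For the \emph{upper bound}, I would construct a one-round protocol with $U=W$. Alice, on input $X$, samples $W$ according to the conditional $p_{W|X}(i|x)=\sum_{z\in\Z_i^{(y)}}q_{Z|XY}(z|x,y)$ (well-defined by Theorem~\ref{thm:one-round-characterization}, part (2), which says the sum is independent of $y$), and transmits it using any prefix-free binary code for the distribution $p_W$ (e.g., a Huffman code). Bob, having received $W=i$ and holding input $Y=y$, outputs $Z$ sampled according to $p_{Z|WY}(z|i,y)$. Correctness, i.e., $p_{XYZ}=q_{XYZ}$, follows by marginalizing \eqref{eq:W-interim2} over $W$: since $\{\Z_i^{(y)}\}_{i=1}^k$ partitions $\Z^{(y)}$, exactly one indicator $\mathbbm{1}_{\{z\in\Z_i^{(y)}\}}$ is nonzero, so the sum equals $q_{XYZ}(x,y,z)$. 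Privacy against Bob, i.e., $U-(Y,Z)-X$, is immediate because $U=W$ is a \emph{deterministic} function of $(Y,Z)$ (given $(y,z)$, the index $i$ with $z\in\Z_i^{(y)}$ is uniquely determined), so conditioning on $(Y,Z,U)$ is the same as conditioning on $(Y,Z)$. By the standard Huffman bound $\mathbb{E}[L(U)]<H(W)+1$.

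For the \emph{lower bound}, I would argue that for \emph{any} valid secure protocol $(p_{U|X},p_{Z|UY})$ the message $U$ determines $W$. By Claims~\ref{claim:disjoint-messages} and~\ref{claim:equal-alpha-vectors}, the message alphabet decomposes as $\U=\U_1\uplus\cdots\uplus\U_k$ with the property that $u\in\U_i$ implies (under the induced joint distribution with $(X,Y,Z,W)$) that $Z\in\Z_i^{(Y)}$ with probability one, i.e., $W=i$. Hence $H(W\mid U)=0$ and therefore $H(U)\geq H(W)$. Combining this with the Kraft--McMillan lower bound $\mathbb{E}[L(U)]\geq H(U)$ for prefix-free codes \cite[Theorem~5.4.1]{CoverJ06} yields $\mathbb{E}[L(U)]\geq H(W)$, which gives the claimed lower bound after taking the infimum over protocols.

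The step I expect to require the most care is the verification that $W$ is indeed a function of $U$ under every secure protocol; this is not about a single protocol's design but about the intrinsic structure of the problem forcing any valid message to lie in a single block $\U_i$ of the partition. This uses the full strength of Claim~\ref{claim:disjoint-messages} (disjointness across equivalence classes of outputs) together with Claim~\ref{claim:equal-alpha-vectors} (the partition of $\U$ is the same for every $y$), both of which rely crucially on privacy against Bob. Once this structural fact is in hand, the remainder is a direct application of standard one-shot source coding bounds.
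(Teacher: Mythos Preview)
Your proposal is correct and follows essentially the same approach as the paper. The upper bound is identical (take $U=W$, invoke the Huffman/source-coding upper bound), and for the lower bound both you and the paper use the same structural fact---already stated just before the theorem---that $W$ is a deterministic function of $U$ in any secure protocol, via the partition $\U=\U_1\uplus\cdots\uplus\U_k$.

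The only minor difference is how that fact is converted into the inequality $\mathbb{E}[L(U)]\geq H(W)$. You go through entropy: $H(W\mid U)=0\Rightarrow H(U)\geq H(W)$, then apply $\mathbb{E}[L(U)]\geq H(U)$. The paper instead builds an explicit prefix-free code for $W$ by selecting, for each $i$, the shortest codeword in $\U_i$ (i.e., $l_i=\min_{u\in\U_i}l(u)$), then bounds $\mathbb{E}[L]\geq\sum_i l_i\,p_W(i)\geq H(W)$, using along the way that $\sum_{u\in\U_i}p(u\mid x)=p_{W\mid X}(i\mid x)$. Your route is a bit cleaner; the paper's is slightly more constructive. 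Both are valid and rest on the same key lemma.
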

\begin{proof}
{$L^*_{B-\text{pvt}}\geq H(W)$:} 
Fix an arbitrary pair of encoder and decoder $(p_{U|X},p_{{Z}|UY})$ that perfectly securely computes this $(q_{XY},q_{Z|XY})$ 
with privacy against Bob, i.e., the induced joint distribution $p_{XYU{Z}}=q_{XY}p_{U|X}p_{{Z}|UY}$ satisfies correctness 
(i.e., $q_{XY{Z}}=q_{XY}q_{Z|XY}$) and privacy against Bob (i.e., $U-(Y,{Z})-X$ is a Markov chain).
Fix a prefix-free encoding of $U$. 
Let $L$ denote the random variable corresponding to the length of this prefix-free encoding of $U$. 
\RRB{We show that $\mathbb{E}[L]\geq H(W)$ in Appendix~\ref{appendixG} in two steps. For this, first we show that $\mathbb{E}[L]\geq \mathbb{E}[L^\prime]$, where $L^\prime$ is the random variable corresponding to a prefix-free binary encoding of $W$ that we extract out from $L$. Then we use $\mathbb{E}[L^\prime]\geq H(W)$, which follows from the fact that the expected length of any prefix-free binary code for a random variable $W$ is lower bounded by $H(W)$~ \cite[Theorem 5.4.1]{CoverJ06}. Since this argument holds for any prefix-free encoding of $U$, we have $L^*_{B-\text{pvt}}\geq H(W)$.}

{$L^*_{B-\text{pvt}}<H(W)+1$:}
We need to show a pair of encoder and decoder $(p_{U|X},p_{{Z}|UY})$ that perfectly securely computes this $(q_{XY},q_{Z|XY})$ 
with privacy against Bob, i.e., the induced joint distribution $p_{XYU{Z}}=q_{XY}p_{U|X}p_{{Z}|UY}$ satisfies correctness 
(i.e., $p_{XY\hat{Z}}=q_{XY}q_{Z|XY}$) and privacy against Bob (i.e., $U-(Y,{Z})-X$ is a Markov chain).
Note that $U=W$ is a valid choice, where $p_{XYWZ}$ is described in \eqref{eq:W-interim2}. 
The expected length of an optimal prefix-free binary code of $W$ is upper-bounded by $H(W)+1$ \cite[Theorem 5.4.1]{CoverJ06},
which implies that $L^*_{B-\text{pvt}}<H(W)+1$.
\end{proof}

{\color{black}We conclude the single instance part of this section by giving an example where common randomness helps to improve the optimal rate, when privacy is required only against Bob (as mentioned in Section~\ref{Sect_Prelim}).
\begin{example}\label{example:prvagnstbob}
Let $X$ be an $m$-length ($m>1$) vector of mutually independent and uniform binary random variables, $X=(X_1,\dots,X_m)$ and $Y=\emptyset$. Also, let $Z=(J,X_J)$, where $J$ is a random variable uniformly distributed on $[1:m]$. This defines a function $(q_{XY},q_{Z|XY})$.

First we calculate the lower bound of $L^*_{B-\text{pvt}}$ for this example from the discussion about random variable $W$ above Theorem \ref{thm:ps_rate-3}. It follows from \eqref{eq:W-interim2} that $p_{XYWZ}(x,y,i,z) = \mathbbm{1}_{\{z\in\Z_i^{(y)}\}}\times q_{XYZ}(x,y,z)$. Marginalizing this over $x$ and using the fact that $Y=\emptyset$, we get, $p_{W,Z}(i,z)=\mathbbm{1}_{\{z\in\Z_i\}}q_Z(z)$. From the Definition \ref{defn:equiv-relation}, it is clear that $\Z_i$ is a singleton set for every $i\in [1:|\Z|]$. Now, fix an $i$ and marginalize $p_{W,Z}(i,z)$ over $z$, we get $p_W(i)=q_Z(z)$, such that $\Z_i=\{z\}$. Then we have $p_W(i)=\frac{1}{2m}$ since $q_Z(z)=\frac{1}{2m}$ for every $z$. This implies that $H(W)=\log 2m$. So, from Theorem \ref{thm:ps_rate-3} we have $L^*_{B-\text{pvt}}\geq \log 2m >1$. Whereas it turns out that a rate of $1$ is achievable when we use common randomness. To see this, let $W$ be a random variable uniformly distributed on $[1:m]$. Alice on seeing $W$ sends $X_W$ to Bob using only $1$ bit and Bob outputs $(W,X_W)$ since it already has access to $W$. It is easy to see that this protocol satisfies privacy against Bob. Thus, common randomness helps to improve the optimal rate, when privacy is required only against Bob.
\end{example}
}
\noindent \textbf{Multiple Instances.} The setting is analogous to the multiple instances setting described before, except that we require privacy against Bob here. We define, analogous to the single instance case, $L_{B-\text{pvt}-n}^*$ to be the 
optimal amortized expected length of the message that Alice sends to Bob in this setting. We give bounds on $L^*_{B-\text{pvt}-n}$ for the case when $q_{XY}$ has full support;
and we leave it open to give a tight bound on $L^*_{B-\text{pvt}-n}$ for general input distribution.
The following is a simple corollary of Theorem~\ref{thm:ps_rate-3}.
\begin{corollary}\label{cor:ps_rate-3-n}
Suppose $(q_{XY},q_{Z|XY})$ is perfectly securely computable with privacy against Bob, where $q_{XY}$ has full support. Then
{\em\[H(W)\leq L^*_{B-\text{pvt}-n}< H(W)+\frac{1}{n}.\]}
\end{corollary}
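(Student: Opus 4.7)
The plan is to view the $n$-instance problem as itself a single-instance secure computation problem $(q_{X^nY^n},q_{Z^n|X^nY^n})$, and then apply Theorem~\ref{thm:ps_rate-3} to it. First, I would verify the hypotheses of Theorem~\ref{thm:ps_rate-3} for the $n$-fold problem: (a) since $q_{XY}$ has full support, the product distribution $q_{X^nY^n}$ also has full support; (b) $(q_{X^nY^n},q_{Z^n|X^nY^n})$ is perfectly securely computable with privacy against Bob, because one can simply run the single-instance secure protocol guaranteed by the hypothesis independently on each of the $n$ coordinates (correctness follows from the i.i.d.\ structure and privacy against Bob is preserved since the induced joint distribution factorizes across coordinates, so the Markov chain $U^n-(Y^n,Z^n)-X^n$ holds).

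The main technical step is to identify the analogue of $W$ for the $n$-fold problem, call it $W^{(n)}$, and show $H(W^{(n)})=nH(W)$. Recall that $W$ is constructed from the equivalence relation $\equiv_y$ on $\mathcal{Z}^{(y)}$. For the $n$-fold problem, two outputs $z^n,z'^n\in\mathcal{Z}^{n(y^n)}$ are equivalent iff $q_{Z^n|X^nY^n}(z^n|x^n,y^n)=c\cdot q_{Z^n|X^nY^n}(z'^n|x^n,y^n)$ for all $x^n$ (for some constant $c>0$). By the product structure $q_{Z^n|X^nY^n}(z^n|x^n,y^n)=\prod_i q_{Z|XY}(z_i|x_i,y_i)$ and by varying $x^n$ one coordinate at a time (full support of $q_{XY}$ ensures every $x^n$ is admissible), this condition decouples into $z_i\equiv_{y_i} z'_i$ for each $i$. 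Hence the equivalence classes factor as $\mathcal{Z}_{i_1,\ldots,i_n}^{(y^n)}=\mathcal{Z}_{i_1}^{(y_1)}\times\cdots\times\mathcal{Z}_{i_n}^{(y_n)}$, and $k(y^n)=k^n$ independently of $y^n$. Consequently $W^{(n)}$ can be identified with $(W_1,\ldots,W_n)$.

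Next I would compute the marginal of $W^{(n)}$ using the definition in \eqref{eq:W-interim2}. Marginalising gives
\[
p_{W^{(n)}}(i_1,\ldots,i_n)=\sum_{x^n,y^n}\sum_{z^n\in\mathcal{Z}_{i_1,\ldots,i_n}^{(y^n)}} q_{X^nY^nZ^n}(x^n,y^n,z^n)=\prod_{j=1}^n p_W(i_j),
\]
using the product structure of $q_{X^nY^nZ^n}$ and the product factorization of the equivalence classes above. Therefore the $W_j$'s are i.i.d.\ copies of $W$ and $H(W^{(n)})=nH(W)$.

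Finally, applying Theorem~\ref{thm:ps_rate-3} to the $n$-fold problem yields $H(W^{(n)})\leq L^{*,\text{total}}_{B\text{-pvt}}<H(W^{(n)})+1$, where $L^{*,\text{total}}_{B\text{-pvt}}$ is the optimal expected length (in bits) of the single message that Alice transmits when encoding the entire block. Substituting $H(W^{(n)})=nH(W)$ and dividing through by $n$ (since $L^{*}_{B\text{-pvt}\text{-}n}$ is the amortized expected length per instance) gives the claimed inequalities $H(W)\leq L^*_{B\text{-pvt}\text{-}n}<H(W)+\tfrac{1}{n}$. The only subtle point I expect is the decoupling argument for the equivalence relation on $\mathcal{Z}^{n(y^n)}$; full support of $q_{XY}$ is essential there, which is why the corollary inherits that hypothesis from Theorem~\ref{thm:ps_rate-3}.
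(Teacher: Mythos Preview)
Your proposal is correct and is exactly the argument the paper has in mind; the paper itself gives no details beyond stating that Corollary~\ref{cor:ps_rate-3-n} ``is a simple corollary of Theorem~\ref{thm:ps_rate-3},'' and your reduction---treating the $n$-fold problem as a single instance, showing $W^{(n)}=(W_1,\ldots,W_n)$ is i.i.d.\ with the law of $W$ via the product factorization of the equivalence classes $\mathcal{Z}_{i_1,\ldots,i_n}^{(y^n)}=\prod_j \mathcal{Z}_{i_j}^{(y_j)}$, and dividing by $n$---is precisely that corollary spelled out. The one place to be slightly more careful than your sketch indicates is the reverse implication in the decoupling step: to vary $x_j$ alone you must first fix $\bar{x}_i$, $i\neq j$, at values where \emph{both} $q_{Z|XY}(z_i|\bar{x}_i,y_i)$ and $q_{Z|XY}(z'_i|\bar{x}_i,y_i)$ are positive, and the existence of such $\bar{x}_i$ follows from the product equivalence $z^n\equiv_{y^n} z'^n$ itself (evaluate it first at an $x^n$ making the $z^n$ side positive to force the $z'^n$ side positive at the same point).
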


\RRB{\section{Discussion}\label{section:discussion}
We studied asymptotically secure {\em two-user} interactive function computation and gave single-letter characterizations of feasibility and rate region. Our characterization shows that asymptotically secure computability is equivalent to perfectly secure computability. In the multi-terminal setting, the communication graph plays an important role. Recently, Narayanan and Prabhakaran~\cite{NarayananP18} characterized communication graphs which allow perfectly secure computability of {\em all} functions among a subset of users. This generalizes the results of Ben-Or, Goldwasser, and Wigderson~\cite{BGW88} and Chaum, Cr\'{e}peau, and Damg{\aa}rd~\cite{CCD88} for complete communication graphs. Since the impossibility results in these works rely on a reduction of two-party secure computation, the same characterizations should hold for asymptotically secure computation.}

\RRB{However, characterizing functions which are securely computable {\em given} a communication graph remains open.  
 As we saw in Remark~\ref{remark:eavesdropper}, for the multi-terminal case, impossibility of perfectly secure computation need not imply the impossibility of asymptotically secure computation. Indeed, even for perfectly securely computation, the problem remains largely open~\cite{BlaserJLM06,Biemel07}. Finding a single-letter characterization for feasibility of asymptotically secure function computation among three or more terminals would indeed be interesting.}


\RRB{We confined our attention to the case where the users have access to only private and common randomness. As mentioned earlier, it is well-known that all functions are securely computable if the users have access to non-trivially correlated random variables or non-trivial noisy channels (see~\cite{CREK88,MajiPR12}). Characterizing the rates of communication and stochastic resources needed would be an interesting direction of study. For obtaining the string oblivious transfer using a noisy channel, results on optimal rates of channel use have been found~\cite{NascimentoW08,AhlswedeC13,PintoDMN11}. 
We only considered honest-but-curious users. In the malicious setting, where a corrupt user may arbitrarily deviate from the protocol, Maji et al.~\cite{MajiRPTCC09} characterized prefectly securely computable functions. Studying feasibility and rate region for asymptotically secure computation in the malicious setting is another direction of interest.}

\RRB{We required that each user may obtain no (or only negligible) information about the other user's input and output except for what can be inferred from its own input and output. However, this may be too stringent in certain settings where only certain attributes of the data need to be kept private (see, e.g.,~\cite{BasciftciWI16,KalantariSS18}). Studying such problems under the framework of this paper is another possible direction of future work.}

\section*{Acknowledgement}
We thank the reviewers for their useful comments which helped improve the manuscript. Deepesh Data was supported by a Microsoft Research India Ph.D. Fellowship. G.~R.~Kurri was supported by a travel fellowship from the Sarojini Damodaran Foundation. J.~Ravi was supported by European Research Council (ERC) under the European Union's Horizon 2020 research and innovation programme (grant agreement number 714161).  
V. M. Prabhakaran was supported in part by a Ramanujan Fellowship from the Department
of Science and Technology, Government of India and by Information Technology Research Academy
(ITRA), Government of India under ITRA Mobile grant ITRA/15(64)/Mobile/USEAADWN/01. G. R. Kurri and V. M. Prabhakaran are supported in part by the Department of Science and Technology, Government of India, under an Indo-Israel grant DST/INT/ISR/P-16/2017. We acknowledge support of the Department of Atomic Energy, Government of India, under project no. 12-R\&{D}-TFR-5.01-0500.
\iftoggle{paper}{
	\input{appendix_5page}
}{
	\appendices
\section{Proofs of Theorems~\ref{Feasibility_AB}, \ref{Thm_Rate_Region_AB} \& \ref{cutset} }\label{appendix:proofs_omitted}
\emph{Notation}: We use capital letter to denote a random pmf (see e.g., works \cite{YassaeeAG14}, \cite{Cuff13}), e.g.\ $P_X$. For any two sequences of random p.m.f.'s $P_X^{(n)}$ and $Q_X^{(n)}$ on $\mathcal{X}^{(n)}$ (where $\mathcal{X}^{(n)}$ is arbitrary and can differ from the cartesian product $\mathcal{X}^n$), we write $P_{X^{(n)}}\overset{\epsilon}{\approx}Q_{X^{(n)}}$ if $\mathbbm{E}\lVert P_{X^{(n)}}-Q_{X^{(n)}}\rVert_1<\epsilon$. For any two sequences of random p.m.f.'s $P_{X^{(n)}}$ and $Q_{X^{(n)}}$ on $\mathcal{X}^{(n)}$, we write $P_{X^{(n)}}\approx Q_{X^{(n)}}$ if $\lim_{n\rightarrow \infty}\mathbbm{E}\lVert P_{X^{(n)}}-Q_{X^{(n)}}\rVert_1=0$.

{\color{black}We prove Theorem~\ref{Thm_Rate_Region_AB} first and then prove Theorems~\ref{Feasibility_AB} \& \ref{cutset}.}
\begin{proof}[\textbf{Proof of Theorem \ref{Thm_Rate_Region_AB}}]
\emph{Achievability:}
The proof is broadly along the lines of the proof of interactive channel simulation of Yassaee et al.~\cite{YassaeeGA15}. We consider two protocols, protocol A and protocol B, where Protocol A corresponds to the source coding side of the problem and Protocol B corresponds to our original problem (with extra shared random variables, one for each round). Each of these protocols will induce a p.m.f. on random variables defined during the protocol. We impose a series of 
 constraints (independence constraints and constraints for the reliability of SW decoders) as done in \cite[Theorem 1]{YassaeeGA15} so that these two distributions become almost identical. Once we find such constraints making protocol A almost identical to protocol B, we investigate the correctness and privacy properties in addition to eliminating the extra shared randomness. {\color{black}The main {\em difference} with the achievability of channel simulation \cite{YassaeeGA15} is from \eqref{eqn1:5}-\eqref{eqn1:6} onwards where the requirement of privacy becomes relevant.}
 
 We start from source coding side of the problem by fixing a distribution $p(u_{[1:r]},x,y,z_{[1:2]})$ satisfying \eqref{Eq_AB_Markov1}-\eqref{Eq_AB_Markov_Secr2}. Let $(U_{[1:r]}^n,X^n,Y^n,Z_{[1:2]}^n)$ be i.i.d. with distribution $p(u_{[1:r]},x,y,z_{[1:2]})$. Messages $M_i$, extra shared randomness $F_i$ and the actual 
 shared randomness $w$ are created as bin indices of $U_{[1:r]}^n$ in the following way: random variables $F_1,M_1,w$ are three 
 independent bin indices of $U_1^n$. $F_i$ and $M_i$ are two independent bin indices of $(U_1^n,\dots,U_i^n)$ for $i\in[2:r]$. 
 The alphabets of $F_i$, $M_i$ and $w$ are $[1:2^{n\hat{R}_i}]$, $[1:2^{nR_i}]$ and $[1:2^{nR_0}]$ respectively. 
 Furthermore, we consider Slepian-Wolf decoders, for odd $i\in [1:r]$, for reconstructing $\hat{U}_i^n$ (an estimate of $U_i^n$) 
 from $(U_{[1:i-1]}^n,F_i,M_i,w,Y^n)$ and for even $i\in[1:r]$, for reconstructing $\hat{U}_i^n$ 
 from $(U_{[1:i-1]}^n,F_i,M_i,w,X^n)$. Note that the reliability and independence constraints would remain exactly the same as that of \cite[Theorem~1]{YassaeeGA15} 
 which are given by
 \begin{itemize}
  \item {\em Independence constraints:}
 \begin{align}
 R_0+\hat{R}_1&<H(U_1|X)\label{eqn1:IND1}\\
 \hat{R}_i&<H(U_i|Y,U_{[1:i-1]}), \text{for even} \ i\in[2:r],\label{eqn1:IND2}\\
  \hat{R}_i&<H(U_i|X,U_{[1:i-1]}), \text{for odd} \ i\in[2:r],\label{eqn1:IND3}\\
  \sum_{t=1}^i \hat{R}_t&<H(U_{[1:i]}|X,Y,Z_{[1:2]})\label{eqn1:IND4}.
 \end{align}
 \item {\em Constraints for the reliability of SW decoders:}
 \begin{align}
 R_1+R_0+\hat{R}_1&\geq H(U_1|Y),\label{eqn1:SW1}\\
 R_i+\hat{R}_i&\geq H(U_i|X,U_{[1:i-1]}), \text{for even} \ i\in[2:r],\label{eqn1:SW2}\\
  R_i+\hat{R}_i&\geq H(U_i|Y,U_{[1:i-1]}), \text{for odd} \ i\in[2:r].\label{eqn1:SW3}
 \end{align}
 \end{itemize}
Intuitively, one can understand the above constraints in the following way: Using the OSRB theorem \cite[Theorem~1]{YassaeeAG14} and noting that $U_1-X-Y$, the first 
independence constraint $\eqref{eqn1:IND1}$ ensures that $(F_1,w)$ are nearly uniformly distributed and mutually 
independent of $(X^n, Y^n)$. Similarly, noting that $U_i-(U_{[1:i-1]},X)-Y, \text{if}\  i\  \text{is odd}$ and 
$U_i-(U_{[1:i-1]},Y)-X, \text{if} \ i \ \text{is even}$, the independence constraints $\eqref{eqn1:IND2}-\eqref{eqn1:IND3}$ 
ensure that $F_i$ is nearly uniformly distributed and independent of $(X^n, Y^n, U_{[1:i-1]}^n)$ for $i\in[2:r]$. 
The last independence constraint ensures that $F_{[1:r]}$ is nearly independent of $(X^n, Y^n, Z^n_{[1:2]})$, which helps in eliminating the extra shared randomness. Note that $\hat{U}_1^n$ is 
reconstructed from $F_1, M_1,w$ and $Y^n$, where $F_1, M_1$ and $w$ are the random bin indices of $U_1^n$ 
created by Alice, and Bob requires a rate of $H(U_1|Y)$ from Alice to recover $U_1^n$. 
This corresponds to the first SW constraint $\eqref{eqn1:SW1}$. The other SW constraints $\eqref{eqn1:SW2}-\eqref{eqn1:SW3}$ 
are similar with $U_i^n$ being reconstructed from $F_i, M_i, U_{[1:i-1]}^n$ and $X^n$ or $Y^n$, where $F_i, M_i$ are 
the random bin indices of $U_{[1:i]}^n$ for $i\in[2:r]$.
Thus \eqref{eqn1:IND1}-\eqref{eqn1:SW3} guarantee us the following:
 \begin{align}
 P(x^n,y^n,z_{[1:2]}^n,f_{[1:r]})&\approx p^{\text{Unif}}(f_{[1:r]})p(x^n,y^n,z_{[1:2]}^n),\label{eqn1:1}
 \end{align}
 \begin{align}
  P(x^n&,y^n,z_{[1:2]}^n,m_{[1:r]},w,f_{[1:r]})\nonumber\\
  &\hspace{12pt}\approx \hat{P}(x^n,y^n,z_{[1:2]}^n,m_{[1:r]},w,f_{[1:r]}).\label{eqn1:2}
 \end{align}
 It follows from $\eqref{eqn1:1}$ and $\eqref{eqn1:2}$ that
 \begin{align}
 &\lim_{n\rightarrow\infty}\mathbb{E}_\mathcal{B}\Big[\lVert P(x^n,y^n,z_{[1:2]}^n,f_{[1:r]})\nonumber\\
 &\hspace{24pt}-p^{\text{Unif}}(f_{[1:r]})p(x^n,y^n,z_{[1:2]}^n)\rVert_1\nonumber\\
 &\hspace{3cm}+\nonumber\\
 &\lVert P(x^n,y^n,z_{[1:2]}^n,m_{[1:r]},w,f_{[1:r]})\nonumber\\
 &\hspace{24pt}-\hat{P}(x^n,y^n,z_{[1:2]}^n,m_{[1:r]},w,f_{[1:r]})\rVert_1\Big]=0,
 \end{align}
 where the expectation is taken over random binning. This implies that there exists a particular realization of the 
 random binning with the corresponding p.m.f. $p$ so that we can replace $P$ with $p$ and denote the resulting p.m.f.'s for protocols 
 A and B respectively with $p$ and $\hat{p}$, then
 \begin{align}
 p(x^n,y^n,z_{[1:2]}^n,f_{[1:r]})&\approx p^{\text{Unif}}(f_{[1:r]})p(x^n,y^n,z_{[1:2]}^n),\label{eqn1:3}
 \end{align}
 \begin{align}
 p(x^n&,y^n,z_{[1:2]}^n,m_{[1:r]},w,f_{[1:r]})\nonumber\\
 &\hspace{12pt}\approx \hat{p}(x^n,y^n,z_{[1:2]}^n,m_{[1:r]},w,f_{[1:r]})\label{eqn1:4}.
 \end{align}
{\color{black}\eqref{eqn1:3} alone was sufficient for channel simulation \cite{YassaeeGA15} as correctness was the only consideration there, but we need to utilize \eqref{eqn1:4} also in order to account for privacy. We do this with the help of the following claim.} From $\eqref{eqn1:3}$ and $\eqref{eqn1:4}$, we claim that there exists $f^*_{[1:r]}$ such that $p(f^*_{[1:r]})>0$ and
\begin{align}
 p(x^n,y^n,z_{[1:2]}^n|f^*_{[1:r]})&\approx p(x^n,y^n,z_{[1:2]}^n),\label{eqn1:5}
 \end{align}
 \begin{align}
 p(x^n&,y^n,z_{[1:2]}^n,m_{[1:r]},w|f^*_{[1:2]})\nonumber\\
 &\hspace{12pt}\approx \hat{p}(x^n,y^n,z_{[1:2]}^n,m_{[1:2]},w|f^*_{[1:2]})\label{eqn1:6}.
\end{align}
To see this, first we rewrite $\eqref{eqn1:3}$ and $\eqref{eqn1:4}$ as
\begin{align}
 p(x^n,y^n,z_{[1:2]}^n,f_{[1:r]})&\overset{\epsilon_n}{\approx} p^{\text{Unif}}(f_{[1:r]})p(x^n,y^n,z_{[1:2]}^n),\label{eqn1:3`}
 \end{align}
 \begin{align}
 p(x^n&,y^n,z_{[1:2]}^n,m_{[1:r]},w,f_{[1:r]})\nonumber\\
 &\hspace{12pt}\overset{\delta_n}{\approx} \hat{p}(x^n,y^n,z_{[1:2]}^n,m_{[1:r]},w,f_{[1:r]})\label{eqn1:4`},
\end{align}
where $\epsilon_n\rightarrow 0$ and $\delta_n\rightarrow 0$ as $n\rightarrow\infty$.

  Note that $\hat{p}(f_{[1:r]})=p^{\text{Unif}}(f_{[1:r]})$ and $\eqref{eqn1:3`}$ implies that $p(f_{[1:r]})\overset{\epsilon_n}{\approx} p^{\text{Unif}}(f_{[1:r]})$. 
  Now, using these it is easy to see that $\eqref{eqn1:3`}$ and $\eqref{eqn1:4`}$ imply 
\begin{align}
 p(f_{[1:r]})p(x^n,y^n,z_{[1:2]}^n|f_{[1:r]})&\overset{2\epsilon_n}{\approx} p(f_{[1:r]})p(x^n,y^n,z_{[1:2]}^n),\label{eqn1:7}
 \end{align}
 \begin{align}
 p(f_{[1:r]})&p(x^n,y^n,z_{[1:2]}^n,m_{[1:r]},w|f_{[1:r]})\nonumber\\
 &\hspace{12pt}\overset{\epsilon_n+\delta_n}{\approx} p(f_{[1:r]})\hat{p}(x^n,y^n,z_{[1:2]}^n,m_{[1:r]},w|f_{[1:r]})\label{eqn1:8}.
\end{align}
$\eqref{eqn1:7}$ and $\eqref{eqn1:8}$, respectively, imply
\begin{align}
\sum_{f_{[1:r]}}p(f_{[1:r]})&\lVert p(x^n,y^n,z_{[1:2]}^n|f_{[1:r]})-p(x^n,y^n,z_{[1:2]}^n)\rVert_1\nonumber\\
&\hspace{12pt}\leq 2\epsilon_n,\label{eqn1:9}
\end{align}
\begin{align}
\sum_{f_{[1:r]}}&p(f_{[1:r]})\lVert p(x^n,y^n,z_{[1:2]}^n,m_{[1:r]},w|f_{[1:r]})\nonumber\\
&\hspace{12pt}-\hat{p}(x^n,y^n,z_{[1:2]}^n,m_{[1:r]},w|f_{[1:r]})\rVert_1\leq \epsilon_n+\delta_n\label{eqn1:10}.
\end{align}
Adding $\eqref{eqn1:9}$ and $\eqref{eqn1:10}$ imply that there exists an instance $f^*_{[1:r]}$ such that $p(f^*_{[1:r]})>0$ and 
\begin{align*}
 p(x^n,y^n,z_{[1:2]}^n|f^*_{[1:r]})&\overset{3\epsilon_n+\delta_n}{\approx} p(x^n,y^n,z_{[1:2]}^n),
 \end{align*}
 \begin{align*}
 p(x^n&,y^n,z_{[1:2]}^n,m_{[1:r]},w|f^*_{[1:r]})\nonumber\\
 &\hspace{12pt}\overset{3\epsilon_n+\delta_n}{\approx} \hat{p}(x^n,y^n,z_{[1:2]}^n,m_{[1:r]},w|f^*_{[1:r]}).
\end{align*}
This proves the claim made in $\eqref{eqn1:5}-\eqref{eqn1:6}$. Marginalizing away $m_{[1:r]}$ and $w$ from $\eqref{eqn1:6}$ and using $\eqref{eqn1:5}$ gives us correctness, $\hat{p}(x^n,y^n,z_{[1:2]}^n|f^*_{[1:r]})\approx p(x^n,y^n,z_{[1:2]}^n)$.

For privacy, we first show that $p(x^n,y^n,z_{[1:2]}^n,m_{[1:r]},w,f_{[1:r]})=p(x^n,y^n,z_{[1:2]}^n)p(m_{[1:r]},w,f_{[1:r]}|x^n,z_1^n)$.
 \begin{align}
 &p(x^n,y^n,z_{[1:2]}^n,m_{[1:r]},w,f_{[1:r]})\nonumber\\
 &=\sum_{u_{[1:r]}^n}p(x^n,y^n,z_{[1:2]}^n,u_{[1:r]}^n,m_{[1:r]},w,f_{[1:r]})\nonumber\\
 &=\sum_{u_{[1:r]}^n}p(x^n,y^n,z_{[1:2]}^n)p(u_{[1:r]}^n|x^n,y^n,z_{[1:2]}^n)\nonumber\\
 &\hspace{12pt}\times p(m_{[1:r]},w,f_{[1:r]}|u_{[1:r]}^n)\nonumber\\
 &=\sum_{u_{[1:r]}^n}p(x^n,y^n,z_{[1:2]}^n)p(u_{[1:r]}^n|x^n,z_1^n)\nonumber\\
 &\hspace{12pt}\times p(m_{[1:r]},w,f_{[1:r]}|u_{[1:r]}^n)\label{ach_prv_1}\\
 &=\sum_{u_{[1:r]}^n}p(x^n,y^n,z_{[1:2]}^n)p(u_{[1:r]}^n,m_{[1:r]},w,f_{[1:r]}|x^n,z_1^n)\nonumber\\
 &=p(x^n,y^n,z_{[1:2]}^n)p(m_{[1:r]},w,f_{[1:r]}|x^n,z_1^n)\nonumber,
 \end{align}
 where \eqref{ach_prv_1} follows from the fact that $p(u_{[1:r]},x,y,z_{[1:2]})$ satisfies the Markov chain $U_{[1:r]}-(X,Z_1)-(Y,Z_2)$. Thus, we have
 \begin{align*}
 &I_p(M_{[1:r]},W,F_{[1:r]};Y^n,Z_2^n|X^n,Z_1^n)=0\\
 &\Rightarrow I_p(M_{[1:r]},W;Y^n,Z_2^n|X^n,Z_1^n,F_{[1:r]})=0\\
 &\Rightarrow I_p(M_{[1:r]},W;Y^n,Z_2^n|X^n,Z_1^n,F_{[1:r]}=f_{[1:r]})=0, \nonumber\\
 &\hspace{12pt}\forall  f_{[1:r]}, \text{s.t.}\  p(f_{[1:r]})>0\\
 &\Rightarrow I(M_{[1:r]},W;Y^n,Z_2^n|X^n,Z_1^n)|_{p(x^n,y^n,z_{[1:2]}^n,m_{[1:r]},w|f_{[1:r]})} \nonumber\\
 &\hspace{12pt}=0,\forall f_{[1:r]}, \text{s.t.}\  p(f_{[1:r]})>0\\
 &\Rightarrow I(M_{[1:r]},W;Y^n,Z_2^n|X^n,Z_1^n)|_{p(x^n,y^n,z_{[1:2]}^n,m_{[1:r]},w|f^*_{[1:r]})}\nonumber\\
 &\hspace{12pt}=0,
 \end{align*}
 where $f^*_{[1:r]}$ is fixed in $\eqref{eqn1:5}$.

 Similarly, it can be proved that $I(M_{[1:r]},W;X^n,Z_1^n|Y^n,Z_2^n)|_{p(x^n,y^n,z_{[1:2]}^n,m_{[1:r]},w|f^*_{[1:r]})}=0$ since $p(u_{[1:r]},x,y,z_{[1:2]})$ 
 satisfies the Markov chain $U_{[1:r]}-(Y,Z_2)-(X,Z_1)$ also. Now, we have
 \begin{align}
 &I(M_{[1:r]},W;Y^n,Z_2^n|X^n,Z_1^n)|_{p(x^n,y^n,z_{[1:2]}^n,m_{[1:r]},w|f^*_{[1:r]})}\nonumber\\
 &\hspace{12pt}=0,\label{eqn1:cont1}\\
 &I(M_{[1:r]},W;X^n,Z_1^n|Y^n,Z_2^n)|_{p(x^n,y^n,z_{[1:2]}^n,m_{[1:r]},w|f^*_{[1:2]})}\nonumber\\
 &\hspace{12pt}=0,\label{eqn1:cont2}\\
 &p(x^n,y^n,z_{[1:2]}^n,m_{[1:r]},w|f^*_{[1:r]})\nonumber\\
 &\hspace{12pt}\approx \hat{p}(x^n,y^n,z_{[1:2]}^n,m_{[1:r]},w|f^*_{[1:r]})\label{eqn1:cont3}.
 \end{align}
 Since mutual information is a continuous function of the probability distribution, $\eqref{eqn1:cont1}-\eqref{eqn1:cont3}$ imply 
 \begin{align*}
 &\frac{1}{n}I(M_{[1:r]},W;Y^n,Z_2^n|X^n,Z_1^n)|_{\hat{p}(x^n,y^n,z_{[1:2]}^n,m_{[1:r]},w|f^*_{[1:r]})}\nonumber\\
 &\hspace{12pt}\rightarrow 0 \ \text{as} \ n\rightarrow \infty,\\
 &\frac{1}{n}I(M_{[1:r]},W;X^n,Z_1^n|Y^n,Z_2^n)|_{\hat{p}(x^n,y^n,z_{[1:2]}^n,m_{[1:r]},w|f^*_{[1:r]})}\nonumber\\
 &\hspace{12pt}\rightarrow 0 \ \text{as} \ n\rightarrow \infty,
 \end{align*}
 using the following: if two random variables $A$ and $A'$ with 
  same support set $\mathcal{A}$  satisfy $||p_{A} -  p_{A'}||_{1} \leq \epsilon \leq 1/4 $, then it follows from \cite[Theorem 17.3.3]{CoverJ06} that $|H(A) - H(A')|\leq \eta \log |\mathcal{A}|$, where $\eta \rightarrow 0$ as $\epsilon \rightarrow 0$. Finally, eliminating $(\hat{R}_1,\dots \hat{R}_r)$ and $({R}_1,\dots {R}_r)$ by applying Fourier-Motzkin elimination to \eqref{eqn1:IND1}-\eqref{eqn1:SW3} along with $R_{12}=\sum_{i:\text{odd}}R_i$, $R_{21}=\sum_{i:\text{even}}R_i$, gives $\mathcal{R}^{AB-\text{pvt}}_A(r)$ by noticing that
\begin{align}
I(X;U_{[1:r]}|Y)&=I(X;Z_2|Y),\label{eqn:simplification_1}\\
I(Y;U_{[1:r]}|X)&=I(Y;Z_1|X),\label{eqn:simplification_2}\\
I(U_{[1:r]}; Z_1,Z_2|X,Y)&=I(Z_1;Z_2|X,Y)\label{eqn:simplification}.
\end{align}
This is because,
 \begin{align}
   I(X;U_{[1:r]}|Y)&= I(X;U_{[1:r]}|Y) +  I(X; Z_2|U_{[1:r]},Y)\label{eqn:main_proofs_achievability_simplification_2}\\
   & =  I(X; U_{[1:r]}, Z_2|Y) \nonumber\\
   & = I(X;Z_2|Y) + I(X; U_{[1:r]}|Z_2,Y)\nonumber\\
   & = I(X;Z_2|Y)\label{eqn:main_proofs_achievability_simplification_1},
  \end{align}
  where \eqref{eqn:main_proofs_achievability_simplification_2} follows from \eqref{Eq_AB_Markov_Decod2}, and \eqref{eqn:main_proofs_achievability_simplification_1} follows from \eqref{Eq_AB_Markov_Secr2}.
  
Similarly, we can show that $I(Y;U_{[1:r]}|X)=I(Y;Z_1|X)$. Also, 
\begin{align}
  &I(U_{[1:r]}; Z_1,Z_2|X,Y) \nonumber\\
  &=  I(U_{[1:r]}; Z_1|X,Y) +  I(U_{[1:r]}; Z_2|X,Y, Z_1)\nonumber\\
  & = I(U_{[1:r]}; Z_1|X,Y)\label{eqn:main_proofs_achievability_simplification_3}\\
  & = I(U_{[1:r]}; Z_1|X,Y)+ I(Z_2; Z_1|U_{[1:r]},X,Y)\label{eqn:main_proofs_achievability_simplification_4}\\
  & = I(U_{[1:r]}, Z_2; Z_1|X,Y)\nonumber\\
  & = I(Z_1; Z_2|X,Y)+ I(U_{[1:r]}; Z_1|Z_2,X,Y)\nonumber\\
  & = I(Z_1; Z_2|X,Y)\label{eqn:main_proofs_achievability_simplification_5},
\end{align}
 where \eqref{eqn:main_proofs_achievability_simplification_3} follows from \eqref{Eq_AB_Markov_Secr1}, \eqref{eqn:main_proofs_achievability_simplification_4} follows from \eqref{Eq_AB_Markov_Decod2}, and \eqref{eqn:main_proofs_achievability_simplification_5} follows from \eqref{Eq_AB_Markov_Secr2}. 
 This completes the achievability proof of the theorem.

{\em Converse:} Suppose a rate triple $(R_0,R_{12},R_{21})$ is achievable with privacy against both Alice and Bob. Then, there exists a sequence of $(n,R_0,R_{12},R_{21})$ protocols such that for every $\epsilon>0$, there exists a large enough $n$ such that
\begin{align}
\left\lVert p_{X^n,Y^n,Z_1^n,Z_2^n}-q^{(n)}_{X,Y,Z_1,Z_2}\right\rVert&\leq \epsilon, \label{Eq_Sim_Prob} \\
I(M_{[1:r]},W;Y^n,Z_2^n|X^n,Z_1^n)&\leq n\epsilon, \label{Eq_Conv_Markov1} \\
I(M_{[1:r]},W;X^n,Z_1^n|Y^n,Z_2^n)&\leq n\epsilon. \label{Eq_Conv_Markov2}
\end{align}
{\color{black}Fix an $\epsilon\in(0,\frac{1}{4}]$. Let $T$ be a random variable uniformly distributed on $[1:n]$ and independent of all other random variables. Define $U_{j}:= (M_j,W, X_{}^{T+1:n}, Y^{1:T-1},T )$, for $1 \leq j \leq r$, and $X:=X_T,Y:=Y_T,Z_1:=Z_{1T},Z_2:={Z_{2T}}$. First, we show that this choice of auxiliary random variables satisfies the privacy conditions by single-letterizing \eqref{Eq_Conv_Markov1} and \eqref{Eq_Conv_Markov2}. }
Privacy condition against Alice $\eqref{Eq_Conv_Markov1}$ implies that
  \begin{align}
   &n\epsilon \nonumber\\
    &\geq I(M_{[1:r]},W; Y^{n},Z_2^{n}  \;| \; X^{n},Z_1^{n})\nonumber\\
   & = H_{p}(Y^{n}, Z_2^{n}|X^{n}, Z_1^{n}) - H_{p}(Y^{n}, Z_2^{n} | X^{n}, Z_1^{n}, M_{[1:r]},W) \nonumber\\
   & \geq\sum_{i=1}^{n} [H_{q}(Y_{i}, Z_{2i}|X_{i}, Z_{1i}) - \epsilon_1 \nonumber\\
   &\hspace{12pt} -H_{p}(Y_{i}, Z_{2i}| Y_{}^{1:i-1}, Z_{2}^{1:i-1},X^{n}, Z_1^{n},M_{[1:r]},W )]\label{eqn:PAA_1}\\
   & \geq \sum_{i=1}^{n} [H_{q}(Y_{i}, Z_{2i}|X_{i}, Z_{1i}) - \epsilon_1 \nonumber\\
   &\hspace{12pt}-  H_{p}(Y_{i}, Z_{2i}| X_{}^{i+1:n}, Y^{1:i-1}, X_{i}, Z_{1i}, M_{[1:r]},W)]\nonumber\\
    & \geq \sum_{i=1}^{n} [H_{p}(Y_{i}, Z_{2i}|X_{i}, Z_{1i}) - \epsilon_1 - \epsilon_2 \nonumber\\
    &\hspace{12pt}  -H_{p}(Y_{i}, Z_{2i}|X_{}^{i+1:n}, Y^{1:i-1}, X_{i}, Z_{1i},M_{[1:r]},W)]\label{eqn:PAA_2}\\
   &   {=} \sum_{i=1}^{n} \big[I(Y_{i}, Z_{2i} ;  X_{}^{i+1:n}, Y^{1:i-1}, M_{[1:r]},W | X_{i}, Z_{1i}) \nonumber\\
   &\hspace{12pt}- \epsilon_1 - \epsilon_2\big]\nonumber\\
   & = n\big[ I(Y_{T}, Z_{2T}; X_{}^{T+1:n}, Y^{1:T-1}, M_{[1:r]},W |X_{T},Z_{1T},T)\nonumber\\
   &\hspace{12pt}- \epsilon_1 - \epsilon_2 \big] \nonumber\\
    & =n\big[ I(Y_{T}, Z_{2T};  X_{}^{T+1:n}, Y^{1:T-1}, M_{[1:r]},W, T|X_{T},Z_{1T}) \nonumber\\
    &\hspace{12pt}- I(Y_{T}, Z_{2T}; T|X_{T},Z_{1T})- \epsilon_1 - \epsilon_2\big]\nonumber\\
     & \geq n\big[ I(Y_{T}, Z_{2T};  X_{}^{T+1:n}, Y^{1:T-1}, M_{[1:r]},W, T|X_{T},Z_{1T}) \nonumber\\
     &\hspace{12pt}- \epsilon_1 - \epsilon_2 - \epsilon_3 \big]\label{eqn:PAA_3}\\
     & = n\left[ I(Y, Z_2; U_{[1:r]}|X,Z_1) -\delta\right]\label{eqn:PAA_4},
  \end{align}
    where \eqref{eqn:PAA_1}-\eqref{eqn:PAA_3} follow from the following fact: if two random variables $A$ and $A'$ with 
  same support set $\mathcal{A}$  satisfy $||p_{A} -  p_{A'}||_{1} \leq \epsilon \leq 1/4 $, then it follows from \cite[Theorem 17.3.3]{CoverJ06} that $|H(A) - H(A')|\leq \eta \log |\mathcal{A}|$, where $\eta \rightarrow 0$ as $\epsilon \rightarrow 0$. In \eqref{eqn:PAA_4}, $\delta:=\epsilon_1+\epsilon_2+\epsilon_3$ and $\delta\rightarrow 0$ as $\epsilon\rightarrow 0$. 
  
Privacy condition against Bob $\eqref{Eq_Conv_Markov2}$ implies that
  \begin{align}
   &n\epsilon \nonumber\\
    &\geq I(M_{[1:r]},W; X^{n},Z_1^{n}  \;| \; Y^{n},Z_2^{n})\nonumber\\
   & = H_p(X^{n}, Z_1^{n}|Y^{n}, Z_2^{n}) - H_p(X^{n}, Z_1^{n} | Y^{n}, Z_2^{n},M_{[1:r]},W) \nonumber\\
   &=   H_p(X^{n}, Z_1^{n}|Y^{n}, Z_2^{n})  \nonumber\\
   &\hspace{12pt}-\sum_{i=1}^{n} H_p(X_{i}, Z_{1i}|X_{}^{i+1:n}, Z_{1}^{i+1:n},Y^{n}, Z_2^{n},M_{[1:r]},W)]\label{eqn:proof_prv_bob_1}\\
   & \geq \sum_{i=1}^{n} [H_q(X_{i}, Z_{1i}|Y_{i}, Z_{2i}) - \epsilon_1^\prime \nonumber\\
   &\hspace{12pt}-  H_p(X_{i}, Z_{1i}|X_{}^{i+1:n}, Z_{1}^{i+1:n},Y^{n}, Z_2^{n},M_{[1:r]},W)]\label{eqn:PAB_1}\\
   & \geq \sum_{i=1}^{n} [H_q(X_{i}, Z_{1i}|Y_{i}, Z_{2i}) -\epsilon_1^\prime\nonumber\\
   &\hspace{12pt} - H_p(X_{i}, Z_{1i}|Y_{}^{i+1:n},X_{}^{i+1:n}, Y^i, Z_{2i},M_{[1:r]},W)]\nonumber\\
    &\geq\sum_{i=1}^{n} [H_p(X_{i}, Z_{1i}|Y_{i}, Z_{2i}) - \epsilon_1^\prime - \epsilon_2^\prime \nonumber\\
    &\hspace{12pt}- H_p(X_{i}, Z_{1i}|X_{}^{i+1:n}, Y^{1:i-1}, Y_{i}, Z_{2i}, M_{[1:r]},W)]\label{eqn:PAB_2}\\
   & = \sum_{i=1}^{n} [I(X_{i}, Z_{1i} ; X_{}^{i+1:n}, Y^{1:i-1}, M_{[1:r]},W | Y_{i}, Z_{2i}) \nonumber\\
   &\hspace{12pt} - \epsilon_1^\prime - \epsilon_2^\prime]\nonumber\\
   & =  n\big[ I(X_{T}, Z_{1T}; X_{}^{T+1:n}, Y^{1:T-1}, M_{[1:r]},W|Y_{T},Z_{2T},T) \nonumber\\
   &\hspace{12pt} - \epsilon_1^\prime - \epsilon_2^\prime \big]\nonumber\\
   & =  n\big[ I(X_{T}, Z_{1T};X_{}^{T+1:n}, Y^{1:T-1},  M_{[1:r]},W, T|Y_{T},Z_{2T}) \nonumber\\
   &\hspace{12pt}- I(X_{T}, Z_{1T}; T|Y_{T},Z_{2T}) - \epsilon_1^\prime - \epsilon_2^\prime \big]\nonumber\\
   & \geq n\left[ I(X_{}, Z_{1};  U_{[1:r]}|Y_{},Z_{2})  - \epsilon_1^\prime - \epsilon_2^\prime - \epsilon_3^\prime \right]\label{eqn:PAB_3}\\
   & = n\left[ I(X_{}, Z_{1};  U_{[1:r]}|Y_{},Z_{2}) -\delta^\prime\right]\label{eqn:PAB_4},
  \end{align}
where \eqref{eqn:proof_prv_bob_1} follows from the chain rule, $H(A^n)=\sum_{i=1}^nH(A_i|A^{i+1:n})$ and \eqref{eqn:PAB_1}-\eqref{eqn:PAB_4} follow due to the same reasons as that of \eqref{eqn:PAA_1}-\eqref{eqn:PAA_4} with $\delta^\prime:=\epsilon_1^\prime+\epsilon_2^\prime+\epsilon_3^\prime$ so that, $\delta^\prime\rightarrow 0$ as $\epsilon\rightarrow 0$.

It can be shown along similar lines as \cite[Theorem 1]{YassaeeGA15} that
\begin{align*}
   R_{12} &\geq I(X;U_{[1:r]}|Y),\\
   R_{21} &\geq I(Y;U_{[1:r]}|X),\\
  R_{0} + R_{12} &\geq I(X;U_{[1:r]}|Y) + I(U_1;Z_1,Z_2|X,Y)\nonumber\\
    &\hspace{12pt}-3g(\epsilon),\\
    R_{0} + R_{12} + R_{21} &\geq I(X;U_{[1:r]}|Y) + I(Y;U_{[1:r]}|X) \nonumber\\
    &\hspace{12pt}+ I(U_{[1:r]}; Z_1,Z_2|X,Y)-3g(\epsilon),
   \end{align*}
for conditional p.m.f. $p(u_{[1:r]},z_1,z_2|x,y)$ satisfying 
     \begin{align}
     \lVert p_{X,Y,Z_1,Z_2}-q_{X,Y,Z_1,Z_2}\rVert\leq \epsilon,\label{eqn_proofs_AB_single_letter_1}\\
      U_i-(U_{[1:i-1]},X)-Y, \text{if}\  i\  \text{is odd},\label{eqn_proofs_AB_single_letter_2}\\
      U_i-(U_{[1:i-1]},Y)-X, \text{if} \ i \ \text{is even},\label{eqn_proofs_AB_single_letter_3}\\
      Z_1-(U_{[1:r]},X)-(Y,Z_2),\label{eqn_proofs_AB_single_letter_4}\\
    Z_2-(U_{[1:r]},Y)-(X,Z_1)\label{eqn_proofs_AB_single_letter_5},
     \end{align}  
where $g(\epsilon)\rightarrow 0$ as $\epsilon\rightarrow 0$.

So, we have shown that $(R_0,R_{12},R_{21})\in\mathcal{S}_\epsilon(r)$ for every $\epsilon>0$, where $\mathcal{S}_\epsilon(r)$ is defined to be the set of all non-negative rate triples $(R_0,R_{12},R_{21})$ for which there exists a p.m.f. $p(u_{[1:r]},x,y,z_{[1:2]})$ such that
\begin{align}
  R_{12} &\geq I(X;U_{[1:r]}|Y)\nonumber\\
   R_{21} &\geq I(Y;U_{[1:r]}|X)\nonumber\\
  R_{0} + R_{12} &\geq I(X;U_{[1:r]}|Y)\nonumber\\
    &\hspace{12pt} + I(U_1;Z_1,Z_2|X,Y)-3g(\epsilon)\nonumber\\
    R_{0} + R_{12} + R_{21} &\geq I(X;U_{[1:r]}|Y)+ I(Y;U_{[1:r]}|X)  \nonumber\\
    &\hspace{12pt}+ I(U_{[1:r]}; Z_1,Z_2|X,Y)-3g(\epsilon),\nonumber\\
    \lVert p_{X,Y,Z_1,Z_2}-q_{X,Y,Z_1,Z_2}\rVert&\leq \epsilon,\nonumber\\
  U_i-(U_{[1:i-1]},X&)-Y, \text{if}\  i\  \text{is odd},\nonumber\\
 U_i-(U_{[1:i-1]},Y&)-X, \text{if} \ i \ \text{is even},\nonumber\\
   Z_1-(U_{[1:r]}&,X)-(Y,Z_2),\nonumber\\
 Z_2-(U_{[1:r]}&,Y)-(X,Z_1),\nonumber\\
    I(U_{[1:r]};Y,Z_2|X,Z_1)&\leq h_1(\epsilon),\nonumber\\
I(U_{[1:r]};X,Z_1|Y,Z_2)&\leq h_2(\epsilon),
\end{align}
where $g(\epsilon),h_1(\epsilon)$ and $h_2(\epsilon)\rightarrow 0$ as $\epsilon\rightarrow 0$.

Now we argue that imposing the cardinality bounds $|\mathcal{U}_1|\leq |\mathcal{X}||\mathcal{Y}||\mathcal{Z}_1||\mathcal{Z}_2|+5$,
$|\mathcal{U}_i|\leq |\mathcal{X}||\mathcal{Y}||\mathcal{Z}_1||\mathcal{Z}_2|\prod_{j=1}^{i-1}|\mathcal{U}_j|+4$, 
$\forall i>1$ on the p.m.f. does not alter the set $\mathcal{S}_\epsilon(r)$. The argument is along similar lines as \cite[Appendix D]{YassaeeGA15} which uses the Convex Cover Method \cite[Appendix C]{GamalK12}. The alphabet $\mathcal{U}_1$ should have $|\mathcal{X}||\mathcal{Y}||\mathcal{Z}_1||\mathcal{Z}_2|-1$ elements to preserve the joint distribution $p_{X,Y,Z_1,Z_2}$, which in turn preserves $H(X|Y)$, $H(Y|X)$, $H(Z_1,Z_2|X,Y)$, $H(Y,Z_2|X,Z_1)$, $H(X,Z_1|Y,Z_2),$ and 6 more elements to preserve $H(X|U_{[1:r]},Y)$, $H(Y|U_{[1:r]},X)$, $H(Z_1,Z_2|U_1,X,Y)$, $H(Z_1,Z_2|U_{[1:r]},X,Y)$, $H(Y,Z_2|U_{[1:r]},X,Z_1)$, $H(X,Z_1|U_{[1:r]},Y,Z_2)$. The alphabet $\mathcal{U}_i$, $i>1$, should have $|\mathcal{X}||\mathcal{Y}||\mathcal{Z}_1||\mathcal{Z}_2|\prod_{j=1}^{i-1}|\mathcal{U}_j|-1$ elements to preserve the joint distribution $p_{X,Y,Z_1,Z_2,U_{[1:i-1]}}$, which in turn preserves $H(X|Y)$, $H(Y|X)$, $H(Z_1,Z_2|X,Y)$, $H(Y,Z_2|X,Z_1)$, $H(X,Z_1|Y,Z_2)$, $H(Z_1,Z_2|U_1,X,Y),$ and 5 more elements to preserve $H(X|U_{[1:r]},Y)$, $H(Y|U_{[1:r]},X)$, $H(Z_1,Z_2|U_{[1:r]},X,Y)$, $H(Y,Z_2|U_{[1:r]},X,Z_1),$ and $H(X,Z_1|U_{[1:r]},Y,Z_2).$  

Using the continuity of mutual information and total variation distance in the probability simplex, 
it can be shown (similar to \cite[Lemma 6]{YassaeeGA15}) that $\bigcap_{\epsilon > 0} \cS_{\epsilon}(r) $ is equal to $\mathcal{R}^{AB-\text{pvt}}_A(r)$ by invoking \eqref{eqn:simplification_1}-\eqref{eqn:simplification}.
This concludes the proof of Theorem~\ref{Thm_Rate_Region_AB}.
\end{proof}

\begin{proof}[\textbf{Proof of Theorem~\ref{Feasibility_AB}}]

\emph{Proof of part $(i)$}: It is trivial to see the `if' part since \eqref{Eq_AB_Markov1}-\eqref{Eq_AB_Markov_Secr2} define an $r$-round perfectly secure protocol of blocklength one, i.e., the protocol satisfies \eqref{eqn:asymptotic_1}-\eqref{eqn:asymptotic_3} with $n=1$ and $\epsilon=0$. For the `only if' part, it suffices to show that every $(q_{X,Y},q_{Z_1,Z_2|X,Y})$ that is computable with asymptotic security, 
is also computable with perfect security. Assume that $(q_{X,Y},q_{Z_1,Z_2|X,Y})$ is asymptotically securely 
computable with privacy against both users. This implies that for every $\epsilon>0$, 
there exists a large enough $n$ and a protocol $\Pi_n$ that satisfies the following conditions:
\begin{align}
&\left\lVert p^{\text{(induced)}}_{X^n,Y^n,Z_1^n,Z_2^n}-q_{X^n,Y^n,Z_1^n,Z_2^n}\right\rVert_1\leq \epsilon,\nonumber\\
 &M_i-(M_{[1:i-1]},X^n)-Y^n, \text{if}\  i\  \text{is} \ odd,\nonumber\\
     & M_i-(M_{[1:i-1]},Y^n)-X^n, \text{if} \ i \ \text{is}\  \text{even},\nonumber\\
      &Z_1^n-(M_{[1:r]},X^n)-(Y^n,Z_2^n),\nonumber\\
    &Z_2^n-(M_{[1:r]},Y^n)-(X^n,Z_1^n),\nonumber\\
&I(M_{[1:r]},W;Y^n,Z_2^n|X^n,Z_1^n)\leq n\epsilon,\nonumber\\
&I(M_{[1:r]},W;X^n,Z_1^n|Y^n,Z_2^n)\leq n\epsilon.\label{eqn_proofs_feasibility_asymptotic_protocol_7}
\end{align}
 Single-letterization as done in the converse of the Theorem \ref{Thm_Rate_Region_AB} implies that for every 
 $\epsilon^\prime>0$ (by suitably selecting $\epsilon$ 
 in \eqref{eqn_proofs_feasibility_asymptotic_protocol_7}), there
 exists a p.m.f. $p(u_{[1:r]},x,y,z_1,z_2)$  which satisfies the following conditions.
\begin{align}
 &\lVert p_{X,Y,Z_1,Z_2}-q_{X,Y,Z_1,Z_2}\rVert_1\leq \epsilon^\prime,\nonumber\\
  &U_i-(U_{[1:i-1]},X)-Y, \text{if}\  i\  \text{is}\  \text{odd},\nonumber\\
     & U_i-(U_{[1:i-1]},Y)-X, \text{if} \ i \ \text{is}\  \text{even},\nonumber\\
     & Z_1-(U_{[1:r]},X)-(Y,Z_2),\nonumber\\
  &  Z_2-(U_{[1:r]},Y)-(X,Z_1),\nonumber\\
   &I(U_{[1:r]};X,Z_1|Y,Z_2)\leq \epsilon^\prime,\nonumber\\
     & I(U_{[1:r]};Y,Z_2|X,Z_1)\leq \epsilon^\prime\label{eqn_proofs_feasibility_perfect_protocol_7}.
\end{align}  
For a fixed  $(q_{X,Y},q_{Z_1,Z_2|X,Y})$, consider the following set.
\begin{align}\label{eqn_proofs_feasibility_compact}
\mathcal{S}_\epsilon:=\big\{p_{U_{[1:r]},Z_1,Z_2|X,Y}:\  &\lVert p_{X,Y,Z_1,Z_2}-q_{X,Y,Z_1,Z_2}\rVert_1\leq \epsilon,\nonumber\\
 &U_i-(U_{[1:i-1]},X)-Y, \text{if}\  i\  \text{is}\  \text{odd},\nonumber\\
    &  U_i-(U_{[1:i-1]},Y)-X, \text{if} \ i \ \text{is}\  \text{even},\nonumber\\
    &  Z_1-(U_{[1:r]},X)-(Y,Z_2),\nonumber\\
    &Z_2-(U_{[1:r]},Y)-(X,Z_1),\nonumber\\
 &  I(U_{[1:r]};X,Z_1|Y,Z_2)\leq \epsilon,\nonumber\\
    &  I(U_{[1:r]};Y,Z_2|X,Z_1)\leq \epsilon\big\}.
\end{align}
From \eqref{eqn_proofs_feasibility_perfect_protocol_7}, it is clear that $\mathcal{S}_\epsilon$ is non-empty for every $\epsilon>0$.
For a monotonically decreasing sequence $\epsilon_1>\epsilon_2>\dots$ with $\lim_{k\rightarrow\infty}\epsilon_k=0$, 
it is easy to see that $\mathcal{S}_{\epsilon_1}\supseteq\mathcal{S}_{\epsilon_2}\supseteq\dots$.
Now, in order to prove that $(q_{X,Y},q_{Z_1,Z_2|X,Y})$ is computable with perfect security it suffices to 
show that $\mathcal{S}_0$ is non-empty. We show this by first arguing that 
$\lim_{k\rightarrow\infty}\bigcap_{i=1}^k \mathcal{S}_{\epsilon_k}$ is non-empty and then proving 
that $\mathcal{S}_0=\lim_{k\rightarrow\infty}\bigcap_{i=1}^k \mathcal{S}_{\epsilon_k}$. Using the continuity 
of total variation distance and continuity of mutual information we show  below that $\mathcal{S}_\epsilon$ is compact (bounded and closed) for every $\epsilon>0$.

\emph{Boundedness of $\mathcal{S}_\epsilon$:} For a given $(q_{X,Y},q_{Z_1,Z_2|X,Y})$, for every $p(u,z_1,z_2|x,y)$ that
satisfies \eqref{eqn_proofs_feasibility_compact}, it follows from the Convex Cover Method \cite[Appendix C]{GamalK12} along similar lines as cardinality bounds of part $(i)$ that there exists another 
p.m.f. $p(u,z_1,z_2|x,y)$ with $|\mathcal{U}_1|\leq |\mathcal{X}||\mathcal{Y}||\mathcal{Z}_1||\mathcal{Z}_2|+1$, 
$|\mathcal{U}_i|\leq |\mathcal{X}||\mathcal{Y}||\mathcal{Z}_1||\mathcal{Z}_2|\prod_{j=1}^{i-1}|\mathcal{U}_j|+1$, $\forall i>1$ 
satisfying \eqref{eqn_proofs_feasibility_compact}. This implies that all the probability vectors $p(u,z_1,z_2|x,y)$ 
in $\mathcal{S}_\epsilon$ are of finite dimension and hence the set $\mathcal{S}_\epsilon$ is bounded.

\emph{Closedness of $\mathcal{S}_\epsilon$:} Consider a sequence of p.m.f.'s $p^{(k)}_{U_{[1:r]},Z_1,Z_2|X,Y}$ converging 
to a p.m.f. $r_{U_{[1:r]},Z_1,Z_2|X,Y}$, where $p^{(k)}_{U_{[1:r]},Z_1,Z_2|X,Y}\in\mathcal{S}_\epsilon,\forall k\in\mathbbm{N}$. 
For simplicity, we abbreviate $p^{(k)}_{U_{[1:r]},Z_1,Z_2|X,Y}$ and  $r_{U_{[1:r]},Z_1,Z_2|X,Y}$ by $p^{(k)}$ and $r$ respectively. 
Since mutual information is a continuous function of the distribution, it follows from the definition of the continuous 
function that $I_{p^{(k)}}(U_{[1:r]};X,Z_1|Y,Z_2)$ converges to $I_r(U_{[1:r]};X,Z_1|Y,Z_2)$ as $p^{(k)}$ converges to $r$. 
Now since $I_{p^{(k)}}(U_{[1:r]};X,Z_1|Y,Z_2)\leq\epsilon$, $\forall k\in\mathbbm{N}$, it follows by taking limit $k\rightarrow \infty$ on both sides, 
that $I_r(U_{[1:r]};X,Z_1|Y,Z_2)\leq\epsilon$. Similarly we can prove that $I_r(U_{[1:r]};Y,Z_2|X,Z_1)\leq \epsilon$. 
Since $l_1$-norm is also a continuous function, $\lVert r_{X,Y,Z_1,Z_2}-q_{X,Y,Z_1,Z_2}\rVert_1\leq \epsilon$ also follows similarly. 
So, we have $r\in\mathcal{S}_\epsilon$ and hence $\mathcal{S}_\epsilon$ is closed.

By Cantor's intersection theorem, which states that a decreasing nested sequence of non-empty compact sets 
has non-empty intersection, we have that $\lim_{k\rightarrow\infty}\bigcap_{i=1}^k \mathcal{S}_{\epsilon_k}$ is non-empty. 
To show $\mathcal{S}_0=\lim_{k\rightarrow\infty}\bigcap_{i=1}^k \mathcal{S}_{\epsilon_k}$, 
note that $\mathcal{S}_0\subseteq\lim_{k\rightarrow\infty}\bigcap_{i=1}^k \mathcal{S}_{\epsilon_k}$ holds 
trivially because $\mathcal{S}_\epsilon$ shrinks as $\epsilon$ shrinks. For the other direction, assume 
that $\gamma_{U_{[1:r]},Z_1,Z_2|X,Y}\in\lim_{k\rightarrow\infty}\bigcap_{i=1}^k \mathcal{S}_{\epsilon_k}$. 
For simplicity, we abbreviate $\gamma_{U_{[1:r]},Z_1,Z_2|X,Y}$ by $\gamma$ in the following. 
Since $\gamma\in\mathcal{S}_{\epsilon_k}$, $\forall k$, we have $I(U_{[1:r]};X,Z_1|Y,Z_2)\leq \epsilon_k$, $\forall k \in \mathbb{N}$. 
Now since $\lim_{k\rightarrow \infty}\epsilon_k=0$ and mutual information is always non-negative, we have $I_\gamma(U_{[1:r]};X,Z_1|Y,Z_2)=0$.
Similarly, we can prove that $I_\gamma(U_{[1:r]};Y,Z_2|X,Z_1)=0$ and $\lVert p_{X,Y,Z_1,Z_2}-q_{X,Y,Z_1,Z_2}\rVert_1|_\gamma=0$. 
The latter conclusion implies that $\gamma_{X,Y,Z_1,Z_2}=q_{X,Y,Z_1,Z_2}$. This shows that $\gamma\in\mathcal{S}_0$, which concludes the proof of part $(i)$.  

{\color{black}\emph{Proof of part $(ii)$}: Suppose $(q_{XY},q_{Z_1Z_2|XY})$ is asymptotically securely computable with privacy against both users using an $r$-round protocol in which Alice starts the communication. Then by part~$(i)$, there exists a conditional p.m.f. $p(u_{[1:r]}|x,y,z_1,z_2)$ satisfying \eqref{Eq_AB_Markov1}-\eqref{Eq_AB_Markov_Secr2}. So, the joint distribution can be written as
\begin{align}
p(x,y,u_{[1:r]},z_1,z_2)&=q(x,y)p(u_{[1:r]}|x,y)p(z_1|u_{[1:r]},x)\nonumber\\
&\hspace{12pt}\times p(z_2|u_{[1:r]},y)\nonumber.
\end{align}
Consider another joint distribution,
\begin{align}
\tilde{p}(x,y,u_{[1:r]},z_1,z_2)&=\tilde{q}(x,y)p(u_{[1:r]}|x,y)p(z_1|u_{[1:r]},x)\nonumber\\
&\hspace{12pt}\times p(z_2|u_{[1:r]},y)\nonumber.
\end{align}
Notice that $p(u_{[1:r]},z_1,z_2|x,y)=\tilde{p}(u_{[1:r]},z_1,z_2|x,y)=p(u_{[1:r]}|x,y)p(z_1|u_{[1:r]},x)p(z_2|u_{[1:r]},y)$. We show that $\tilde{p}(x,y,u_{[1:r]},z_1,z_2)$ also satisfies \eqref{Eq_AB_Markov1}-\eqref{Eq_AB_Markov_Secr2}, thereby making the function $(\tilde{q}_{XY},q_{Z_1Z_2|XY})$ asymptotically securely computable in $r$-rounds.  It is trivial to see that \eqref{Eq_AB_Markov1}-\eqref{Eq_AB_Markov_Decod2} depends only on $p(u_{[1:r]}|x,y), p(z_1|u_{[1:r]},x)$ and $p(z_2|u_{[1:r]},y)$ which are common in both the distributions $p(x,y,u_{[1:r]},z_1,z_2)$ and $\tilde{p}(x,y,u_{[1:r]},z_1,z_2)$. Hence, since the p.m.f. $p(x,y,u_{[1:r]},z_1,z_2)$ satisfies \eqref{Eq_AB_Markov1}-\eqref{Eq_AB_Markov_Decod2}, $\tilde{p}(x,y,u_{[1:r]},z_1,z_2)$ also satisfies \eqref{Eq_AB_Markov1}-\eqref{Eq_AB_Markov_Decod2}. It remains to show that the p.m.f. $\tilde{p}(x,y,u_{[1:r]},z_1,z_2)$ satisfies \eqref{Eq_AB_Markov_Secr1}-\eqref{Eq_AB_Markov_Secr2}. For $x,y,z_1,z_2$ s.t. $\tilde{p}(x,y,z_1,z_2)>0$, consider
\begin{align}
\tilde{p}(u_{[1:r]}|x,y,z_1,z_2)&=\frac{\tilde{p}(u_{[1:r]},z_1,z_2|x,y)}{\tilde{p}(z_1,z_2|x,y)}\nonumber\\
&=\frac{{p}(u_{[1:r]},z_1,z_2|x,y)}{{p}(z_1,z_2|x,y)}\label{eqn:support1}\\
&={p}(u_{[1:r]}|x,y,z_1,z_2)\nonumber.
\end{align}
Notice that ${p}(u_{[1:r]}|x,y,z_1,z_2)$ is well-defined as $\text{supp}(\tilde{q}_{XY})\subseteq \text{supp}(q_{XY})$. \eqref{eqn:support1} follows as $p(u_{[1:r]},z_1,z_2|x,y)=\tilde{p}(u_{[1:r]},z_1,z_2|x,y)$. Now, for $x,y,z_1,z_2$ s.t. $\tilde{p}(x,y,z_1,z_2)>0$, since $\tilde{p}(u_{[1:r]}|x,y,z_1,z_2)={p}(u_{[1:r]}|x,y,z_1,z_2)$, we have that $\tilde{p}(x,y,u_{[1:r]},z_1,z_2)$ satisfies \eqref{Eq_AB_Markov_Secr1}-\eqref{Eq_AB_Markov_Secr2} also. This completes the proof of Theorem~\ref{Feasibility_AB}.
}
\end{proof}
\begin{proof}[\textbf{Proof of Theorem~\ref{cutset}}]
{\color{black}Part $(i)$ follows directly from Theorem~\ref{Thm_Rate_Region_AB}. For part $(ii)$,
we show that, for the class of functions mentioned in the statement of theorem, }if a scheme computes $(q_{XY},q_{Z_1Z_2|XY})$ 
 with  $R_{12}$ and $R_{21}$ equal to $I(Z_2;X|Y)+\delta$ and $I(Z_1;Y|X)+\delta$
respectively, and with some $R_0$, under no privacy,
then this scheme will also satisfy 
the privacy conditions \eqref{eqn:asymptotic_2}-\eqref{eqn:asymptotic_3}.
 From the converse of \cite[Theorem~1]{YassaeeGA15}, we have $nR_{12}  \geq I(M_{[1:r]};X^n|Y^n,W)$.
Then we get
\begin{align}
 n&R_{12}\nonumber\\
  & \geq I(M_{[1:r]};X^n|Y^n,W)\nonumber\\
 & =  I(M_{[1:r]},W;X^n|Y^n)\label{eqn:cutset_1}\\
 & =  I(M_{[1:r]},W;X^n|Y^n) \nonumber\\
 &\hspace{12pt}+I(Z_2^n;X^n|M_{[1:r]},W,Y^n)\label{eqn:cutset_2}\\
 & = I(Z_2^n, M_{[1:r]}, W;X^n|Y^n)\nonumber\\
 & = I(Z_2^n;X^n|Y^n) + I(M_{[1:r]}, W;X^n|Y^n, Z_2^n) \label{eqn:forcutset}\\
 & = I(Z_2^n;X^n|Y^n)+I(M_{[1:r]}, W;X^n, Z_1^n|Y^n, Z_2^n)\nonumber\\
 &\hspace{12pt}-I(M_{[1:r]},W;Z_1^n|X^n,Y^n,Z_2^n)\nonumber\\
 & \geq I(Z_2^n;X^n|Y^n)+I(M_{[1:r]}, W;X^n, Z_1^n|Y^n, Z_2^n)\nonumber\\
 &\hspace{12pt}-H(Z_1^n|X^n,Y^n,Z_2^n)\nonumber\\
 & \geq n\left[I(Z_2;X|Y)-\epsilon_1\right]+ I(M_{[1:r]}, W;X^n, Z_1^n|Y^n, Z_2^n)\nonumber\\
 &\hspace{12pt} -n\left[H(Z_1|X,Y,Z_2)+\epsilon_2\right]\label{eqn_cutset_3},
\end{align}
where \eqref{eqn:cutset_1} is due to the independence of common randomness $W$ and $(X^n,Y^n)$, \eqref{eqn:cutset_2} follows from the Markov chain $Z_2^n-(W,Y^n,M_{[1:r]})-(X^n,Z_1^n)$. We used the following fact in \eqref{eqn_cutset_3}: if two random variables $A$ and $A'$ with 
  same support set $\mathcal{A}$  satisfy $||p_{A} -  p_{A'}||_{1} \leq \epsilon \leq 1/4 $, then it follows from \cite[Theorem 17.3.3]{CoverJ06} that $|H(A) - H(A')|\leq \eta \log |\mathcal{A}|$, where $\eta \rightarrow 0$ as $\epsilon \rightarrow 0$. Now \eqref{eqn:asymptotic_1} implies \eqref{eqn_cutset_3}, where $\epsilon_1,\epsilon_2\rightarrow 0$ as $\epsilon\rightarrow 0$.

When $H(Z_1|X,Y,Z_2)=0$, from \eqref{eqn_cutset_3} we have $I(M_{[1:r]}, W;X^n, Z_1^n|Y^n, Z_2^n)\leq \delta+\epsilon_1+\epsilon_2$ for $\delta\rightarrow 0$, and $\epsilon_1,\epsilon_2\rightarrow 0$ as $\epsilon\rightarrow 0$, which is the required privacy condition against Bob. Similar argument holds for $R_{21}$ when $H(Z_2|X,Y,Z_1)=0$. 
\end{proof}
Proofs of Theorem~\ref{Thm_Rate_Region_A} and Theorem~\ref{Thm_Rate_Region_B} are along similar lines as that of Theorem~\ref{Feasibility_AB} and Theorem~\ref{Thm_Rate_Region_AB} jointly, by noticing the following and hence omitted.
When privacy against Alice is required,
\begin{align}
I(U_{[1:r]};Z_1,Z_2|X,Y)&=I(U_{[1:r]};Z_1|X,Y)\nonumber\\
&\hspace{12pt}+I(U_{[1:r]};Z_2|X,Y,Z_1)\\
&=I(U_{[1:r]};Z_1|X,Y)\label{Thm2_simplification_1},
\end{align}
where \eqref{Thm2_simplification_1} follows from \eqref{eqn:asymptotic_2}.
\begin{align}
I(U_1;Z_1,Z_2|X,Y)&=I(U_1;Z_1|X,Y)+I(U_1;Z_2|X,Y,Z_1)\\
&=I(U_1;Z_1|X,Y)\label{Thm2_simplification_2},
\end{align}
where \eqref{Thm2_simplification_2} follows from \eqref{eqn:asymptotic_2}. Similarly, when privacy is required against Bob, we have $I(U_{[1:r]};Z_1,Z_2|X,Y)=I(U_{[1:r]};Z_2|X,Y)$ and $I(U_1;Z_1,Z_2|X,Y)=I(U_1;Z_2|X,Y)$.

\section{Cut-set Bounds for Randomized Interactive Function Computation}\label{cutset_discussion}
For a randomized interactive function computation problem, we prove lower bounds for $R_{12}$ and $R_{21}$. To lower bound $R_{12}$, note that all the inequalities upto \eqref{eqn:forcutset} in the proof of Theorem~\ref{cutset} will follow even for function computation without any privacy requirement. Then we get
\begin{align}
 nR_{12}&\geq I(Z_2^n;X^n|Y^n) + I(M_{[1:r]}, W;X^n|Y^n, Z_2^n)\\
 &\geq  I(Z_2^n;X^n|Y^n)\nonumber\\
 &\geq n[I(Z_2;X|Y)-\epsilon']\label{eqn:forcutset1},
\end{align} 
where in \eqref{eqn:forcutset1} we have used the fact: if two random variables $A$ and $A'$ with 
  same support set $\mathcal{A}$  satisfy $||p_{A} -  p_{A'}||_{1} \leq \epsilon \leq 1/4 $, then it follows from \cite[Theorem 17.3.3]{CoverJ06} that $|H(A) - H(A')|\leq \eta \log |\mathcal{A}|$, where $\eta \rightarrow 0$ as $\epsilon \rightarrow 0$. Now \eqref{eqn:asymptotic_1} implies \eqref{eqn:forcutset1}, where $\epsilon' \rightarrow 0$ as $\epsilon \rightarrow 0$. Thus, we have $R_{12}\geq I(Z_2;X|Y)-\epsilon'$, where $\epsilon'\rightarrow 0$ as $\epsilon\rightarrow 0$. So, as $\epsilon \rightarrow 0$ we have $R_{12}\geq I(X;Z_2|Y)$. Similarly, $R_{21}\geq I(Y;Z_1|X)$. These are the  cut-set lower bounds.

\section{Details Omitted from Example~\ref{exam_1}}
\label{App_opt_rate}
 \iftoggle{paper}{Details can be found in the extended version.}{Suppose there exists a two round protocol with Alice starting the communication.
 If Bob is able to compute the function with a single message from Alice, then \cite[Theorem 1]{OrlitskyR01} gives that $R_{12}$ should
 be greater than or equal to the conditional graph entropy of the confusability graph \cite{OrlitskyR01}. For this example, it can be
 easily verified that $R_{12} \geq \log m + 1$ which is strictly  greater than the cut-set bound $\log m + 1/2$.
 This shows that secure computation is not possible with a two round protocol with Alice starting the communication (because a statement similar to Theorem~\ref{cutset} holds when privacy is required only against Bob with $r=2$ rounds).
 
 {\color{black}Now let us consider a two round protocol with Bob starting the communication. From Theorem~\ref{Thm_Rate_Region_B}, there exists random variables $U_1,U_2$ satisfying 
\begin{align}
 &U_1-Y-X, \label{Eq_Markov1} \\
 &U_2-(U_1,X)-Y, \label{Eq_Markov2} \\
 &(U_1,U_2)-(Y,Z)-X,\label{Eq_Sec_Markov} \\
 &H(Z|U_1,U_2,Y)=0, \label{Eq_Dec_Markov} \\
& H(Z|U_1,U_2,X)=0, \label{Eq_Dec_Markov2}
\end{align}
such that $R_{12}\geq I(Y;U_{[1:2]}|X)=I(Y;U_1|X)$ and $R_{21}\geq I(X;Z|Y)=H(Z|Y)$.}
We show that $U_1$ must be such that $H(Y|U_1)=0$. {\color{black}Suppose not, i.e., $H(Y|U_1) > 0$. Then there exists $u_1,y$ and $ y'$, where $y\neq y'$ such that $p(y,u_1)>0$ and $p(y',u_1)>0$. Let $i\in [1:m]$ be an index such that $y_i \neq y'_i$. 
Since $U_1 -Y - X$, we have 
\begin{align}
P(J=i, V=1, Y=y, U_1=u_1) > 0 ,\label{newproof_3}\\
P(J=i, V=1, Y=y', U_1=u_1) > 0\label{newproof_4}. 
\end{align}
Since $U_2 - (U_1,X) - Y$, \eqref{newproof_3} and \eqref{newproof_4} imply that there exists $u_2$ such that 
\begin{align}
P(J=i, V=1, Y=y, U_1=u_1, U_2=u_2) > 0 ,\label{newproof_1}\\
P(J=i, V=1, Y=y', U_1=u_1, U_2=u_2) > 0. \label{newproof_2}
\end{align}
For $(J=i, V=1, Y=y)$, $Z=(i,y_i):=z$, whereas for $(J=i, V=1, Y=y')$, $Z=(i,y'_i):=z'$. Note that $z \neq z'$. Since $Z$ is a deterministic function of $(U_1,U_2,X)$ (this follows from \eqref{Eq_Dec_Markov2}), \eqref{newproof_1} and \eqref{newproof_2} above imply that 
\begin{align*}
P(J=i, V=1, Y=y, U_1=u_1, U_2=u_2, Z=\tilde{z}) > 0 ,\\
P(J=i, V=1, Y=y', U_1=u_1, U_2=u_2, Z=\tilde{z}) > 0, 
\end{align*}
 for some $\tilde{z}$. This is a contradiction.}
 {\color{black}Hence,} $H(Y|U_1)=0$ for any choice of $U_1$ and
\begin{align*}
 I(Y;U_1|X ) &= H(Y|X) - H(Y|X, U_1)\\
 &\geq H(Y|X) - H(Y|U_1)\\
 & = H(Y|X)\\
 & = H(Y)\\
 &=m.
\end{align*}
Therefore, to preserve decodability and privacy, the minimum possible $R_{12}$ is $m$ which also 
shows that the function is not computable in $1$ round with Alice starting the communication.
Rate $R_{21}$ is lower bounded by $H(Z|Y)$, the cut-set bound, which is equal to $\log m + 1/2$.
So the sum-rate is lower bounded by $\log m + 1/2+m$. This can be achieved by first Bob communicating
$U_1=Y$ to Alice, and Alice computing $Z$ and communicating $U_2=Z$ to Bob.
It is easy to see that this does not violate any privacy requirements.
Thus, the optimum sum-rate for two round protocols is {\color{black}$R^{B-\text{pvt}}_{\text{sum}}(2)=I(X;Z|Y)+I(Y;U_{[1:2]}|X)=\log m + 1/2+m$}.

Now let us consider a three round protocol. Alice starts the communication by sending {\color{black}$U_1=J$}. Then Bob communicates {\color{black}$U_2=Y_J$} to Alice and in the third round Alice communicates 
{\color{black}$U_3=V \wedge Y_J$} to Bob. The sum-rate of this protocol is {\color{black}$I(X;Z|Y)+I(Y;U_{[1:3]}|X)=\log m + 1/2 +1$} which
is strictly less than the optimum sum-rate for two round protocols. 
Now, we show that $\log m + 1/2 +1$ is the optimum sum-rate for any $r$ round protocol with $r \geq 3$.}
We use the following lemma.
\begin{lemma}[{\cite{AhlswedeC93, Maurer93}}]\label{lemma}
 If $X$ and $Y$ are independent, then the following holds for any
 $U_{[1:r]}$ satisfying \eqref{Eq_AB_Markov1} and \eqref{Eq_AB_Markov2}.
 \begin{align}
  I(X;Y|U_{[1:r]}) & = 0. \label{Eq_Intr_Markov}
 \end{align}
\end{lemma}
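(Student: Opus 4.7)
The plan is to prove the lemma by induction on $r$, the number of rounds. The base case $r=0$ is immediate: by hypothesis $X$ and $Y$ are independent, so $I(X;Y) = 0$.

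For the inductive step, suppose $I(X;Y \mid U_{[1:r-1]}) = 0$, and consider $U_r$. I would split into two cases based on the parity of $r$. Suppose first that $r$ is odd, so by \eqref{Eq_AB_Markov1} we have the Markov chain $U_r - (U_{[1:r-1]}, X) - Y$, equivalently $I(U_r; Y \mid X, U_{[1:r-1]}) = 0$. The key trick is to expand the joint mutual information $I(X, U_r; Y \mid U_{[1:r-1]})$ in two different ways using the chain rule:
\begin{align*}
I(X, U_r; Y \mid U_{[1:r-1]}) &= I(X;Y \mid U_{[1:r-1]}) + I(U_r; Y \mid X, U_{[1:r-1]}) \\
&= I(U_r; Y \mid U_{[1:r-1]}) + I(X; Y \mid U_{[1:r]}).
\end{align*}
By the induction hypothesis the first summand on the first line is $0$, and by the Markov chain $U_r - (U_{[1:r-1]}, X) - Y$ the second summand is also $0$. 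Hence $I(X, U_r; Y \mid U_{[1:r-1]}) = 0$. Since mutual information is non-negative, both summands on the second line must vanish, and in particular $I(X; Y \mid U_{[1:r]}) = 0$, as required.

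The case when $r$ is even is symmetric: the relevant Markov chain is $U_r - (U_{[1:r-1]}, Y) - X$ from \eqref{Eq_AB_Markov2}, and the same two-way chain-rule expansion of $I(Y, U_r; X \mid U_{[1:r-1]})$ (swapping the roles of $X$ and $Y$) gives $I(X; Y \mid U_{[1:r]}) = 0$.

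There is really no hard step here; the only thing to be careful about is making sure the chain-rule identity is written so that one term is killed by the induction hypothesis and the other by the appropriate Markov chain from \eqref{Eq_AB_Markov1}--\eqref{Eq_AB_Markov2}. The proof is essentially a one-line double-chain-rule argument iterated round by round, formalizing the intuition that an interactive protocol between two terminals with independent inputs cannot manufacture correlation between those inputs.
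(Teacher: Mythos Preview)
Your proof is correct and uses essentially the same induction-plus-chain-rule argument as the paper: both expand a joint mutual information involving $U_r$ so that one term is killed by the induction hypothesis and another by the Markov chain \eqref{Eq_AB_Markov1} or \eqref{Eq_AB_Markov2}. The only cosmetic difference is that the paper states and proves the slightly more general inequality $I(X;Y\mid U_{[1:i]}) \le I(X;Y)$ (without assuming independence) and then specializes, whereas you work directly with the equality under the independence hypothesis.
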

\begin{proof}
We prove a more general statement, i.e., for any $U_{[1:r]}$ satisfying \eqref{Eq_AB_Markov1} and \eqref{Eq_AB_Markov2}, we have $I(X;Y|U_{[1:i]})\leq I(X;Y)$ for any $i\in [1:r]$. We show this by induction on $i$. The base case, for $i=1$ is true since,
\begin{align}
I(X;Y|U_1)&\leq I(X;Y|U_1)+I(U_1;Y)\nonumber\\
&=I(X,U_1;Y)\nonumber\\
&=I(X;Y)+I(U_1;Y|X)\nonumber\\
&=I(X;Y)\label{eqn_indep_lemma_1},
\end{align}
where \eqref{eqn_indep_lemma_1} follows from \eqref{Eq_AB_Markov1}. Now suppose that $I(X;Y|U_{[1:i]})\leq I(X;Y)$. Assume that $i$ is odd. Then we have
\begin{align}
I(X;Y|U_{[1:i+1]})&\leq I(X;Y|U_{[1:i+1]})+I(X;U_{i+1}|U_{[1:i]})\nonumber\\
&= I(X;U_{i+1},Y|U_{[1:i]})\nonumber\\
&=I(X;Y|U_{[1:i]})+I(X;U_{i+1}|Y,U_{[1:i]})\nonumber\\
&=I(X;Y|U_{[1:i]})\label{eqn_indep_lemma_2}\\
&\leq I(X;Y)\nonumber,
\end{align} 
where \eqref{eqn_indep_lemma_2} follows from \eqref{Eq_AB_Markov2}. One can prove that $I(X;Y|U_{[1:i+1]})\leq I(X;Y)$ for the case when $i$ is even using \eqref{Eq_AB_Markov1}. This completes the proof of lemma.
\end{proof}

The proof of $\log m + 1 + 1/2 $ being the optimal sum-rate for any arbitrary rounds follows along similar lines as that
of showing that $\log m + m + 1/2 $ to be the optimal sum-rate for two round protocols. It is outlined below. 
We show that $H(Y_J|U_{[1:r]},J)=0$. Let us assume $H(Y_J|U_{[1:r]},J) \neq 0$. 
This implies that $\exists \; u_{[1:r]}$ and $j$ 
such that 
\begin{align*}
 P(Y_j=0, U_{[1:r]} = u_{[1:r]}, J=j) & > 0\\
 P(Y_j=1,U_{[1:r]} = u_{[1:r]}, J=j) & > 0.
\end{align*}
These two further imply that $\exists \; y,y'$ such that $y(j)=0$, $y'(j)=1$, and
\begin{align}
 P(Y=y, U_{[1:r]} = u_{[1:r]}, J=j) & > 0 \label{Eq_Intr_Prob1}\\
 P(Y=y',U_{[1:r]} = u_{[1:r]}, J=j) & > 0. \label{Eq_Intr_Prob2}
\end{align}
Now we analyse two cases as before.
\begin{align}
\mbox{Case 1: } H(V|U_{[1:r]} = u_{[1:r]}, J=j) & \neq 0 \label{Eq_Intr_Case1} \\
\mbox{Case 2: } H(V|U_{[1:r]} = u_{[1:r]}, J=j) & = 0 \label{Eq__Intr_Case2}.
\end{align}

\noindent \underline{\emph{Case 1}} $(H(V|U_{[1:r]} = u_{[1:r]}, J=j) \neq 0):$\\
This implies the following:
\begin{align}
 P(V=0, U_{[1:r]} = u_{[1:r]}, J=j)&>0 \label{Eq_intr_case1_Prob1} \\
 P(V=1, U_{[1:r]} = u_{[1:r]}, J=j)&>0. \label{Eq_int_case1_Prob2} 
\end{align}
\eqref{Eq_intr_case1_Prob1}, \eqref{Eq_Intr_Prob2} and Lemma \ref{lemma} imply that 
\begin{align}
 P(V=0, J=j, U_{[1:r]} = u_{[1:r]}, Y =y' )& >0 \label{Eq_case1_Prob1} .
 \end{align}
 Similarly, \eqref{Eq_int_case1_Prob2}, \eqref{Eq_Intr_Prob2} and Lemma \ref{lemma} imply that 
\begin{align}
 P(V=1, J=j, U_{[1:r]} = u_{[1:r]}, Y =y' )& >0 \label{Eq_case1_Prob2} .
 \end{align}
From \eqref{Eq_case1_Prob1} and \eqref{Eq_case1_Prob2}, we get $H(V \wedge Y_J| U_{[1:r]}, Y, J)\neq 0$.
Then we have $H(V \wedge Y_J| U_{[1:r]}, Y)\neq 0$ which is a contradiction.

\noindent \underline{\emph{Case 2}} $(H(V|U_{[1:r]} = u_{[1:r]}, J=j) = 0):$\\
This condition implies either
\begin{align}
 P(V=0, U_{[1:r]} = u_{[1:r]}, J=j)& >0 \label{Eq_case2_Prob1} \\
 P(V=1,  U_{[1:r]} = u_{[1:r]}, J=j)& =0, \label{Eq_case2_Prob2} 
\end{align}
or 
\begin{align*}
 P(V=1,  U_{[1:r]} = u_{[1:r]}, J=j)& >0  \\
 P(V=0, U_{[1:r]} = u_{[1:r]}, J=j)& =0. 
\end{align*}

Let us consider the first case. Then
\eqref{Eq_case2_Prob1}, \eqref{Eq_case2_Prob2}, \eqref{Eq_Intr_Prob1}
and Lemma \ref{lemma} imply that
 $P( V=0,J=j, U_{[1:r]} = u_{[1:r]},  Y = y, V\wedge Y_j=0)>0$
and $P(V=1, J=j,  U_{[1:r]} = u_{[1:r]}, Y=y, V\wedge Y_j=0)= 0$. This violates the Markov chain 
$U_{[1:r]} - (Y,Z)- X$. 
Similarly, the second case also violates the Markov chain $U_{[1:r]} - (Y,Z)- X$. 
This shows that $H(V|U_{[1:r]} = u_{[1:r]}, J=j)$ can not be zero.

From Theorem~\ref{Thm_Rate_Region_B}, we have $R_{21} \geq  I(Y; U_{[1:r]}|X)$.
Then we get the following
\begin{align*}
 I(Y; U_{[1:r]}|X ) & \geq I(Y_J; U_{[1:r]}|V, J )\\
 &= H(Y_J|V, J) - H(Y_J|V,J,  U_{[1:r]})\\
 &\geq H(Y_J|V, J) - H(Y_J| J, U_{[1:r]})\\
 & = H(Y_J|V, J)=1.
\end{align*}

So we have $R_{21} \geq 1$. From the cut-set bound, we have $R_{12} \geq H(Z|Y)= \log m + 1/2$. This implies that sum-rate is lower bounded by 
$\log m + 1/2+1 $ by any $r$ round protocol.
\RRB{\section{Details Omitted from Example~\ref{example:asympt}}\label{appendixexample}
We have
\begin{align}
R^{B-\text{pvt}}_{\text{sum}}(1,0)=1+\min_{\substack{p(u|x,z):\\  Z-U-X,\\ Z-(U,X)-Z,\\ U-Z-X}}I(U;Z|X)\label{newexample1}\\
\geq 1+\min_{\substack{p(u|x,z):\\  Z-U-X,\\ U-Z-X}}I(U;Z|X)\label{newexample2}.
\end{align}
Let $U$ satisfy the Markov chains $Z-U-X$ and $U-Z-X$ and define $\mathcal{U}_z=\{u\in\mathcal{U}: p_{UZ}(u,z)>0\}$. Then we have the following claim. 
\begin{claim}
$\mathcal{U}_z\cap \mathcal{U}_{z^\prime}=\emptyset.$
\end{claim}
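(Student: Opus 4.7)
I will argue by contradiction. Suppose there exist distinct $z, z' \in \mathcal{Z}$ and some $u \in \mathcal{U}$ with $u \in \mathcal{U}_z \cap \mathcal{U}_{z'}$, so that $p_{UZ}(u, z) > 0$ and $p_{UZ}(u, z') > 0$. The plan is to combine the two Markov chains to conclude that the conditional distributions $p_{X|Z}(\cdot \mid z)$ and $p_{X|Z}(\cdot \mid z')$ must be equal, and then exhibit a specific $x$ witnessing that this equality fails, given the explicit structure $Z = (J, X_J)$.

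First, I would invoke the Markov chain $U - Z - X$ at the two positive-probability points to write $p_{X|UZ}(x \mid u, z) = p_{X|Z}(x \mid z)$ and $p_{X|UZ}(x \mid u, z') = p_{X|Z}(x \mid z')$ for every $x$. Next, using the Markov chain $Z - U - X$ at the same two points, I get $p_{X|UZ}(x \mid u, z) = p_{X|U}(x \mid u) = p_{X|UZ}(x \mid u, z')$. Chaining these equalities yields $p_{X|Z}(x \mid z) = p_{X|Z}(x \mid z')$ for all $x \in \{0,1\}^m$.

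Now I compute these conditionals explicitly. Writing $z = (j, b)$ and $z' = (j', b')$ with $(j,b) \neq (j', b')$, the fact that $J$ is independent of $X$ and uniform gives
\[
p_{X|Z}(x \mid j, b) \;=\; \frac{1}{2^{m-1}} \, \mathbbm{1}\{x_j = b\}, \qquad p_{X|Z}(x \mid j', b') \;=\; \frac{1}{2^{m-1}} \, \mathbbm{1}\{x_{j'} = b'\}.
\]
If $j = j'$ (forcing $b \neq b'$), taking $x$ with $x_j = b$ gives LHS $>0$ and RHS $=0$. If $j \neq j'$, taking any $x$ with $x_j = b$ and $x_{j'} = 1 - b'$ again makes the two sides unequal. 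In either case we contradict $p_{X|Z}(\cdot \mid z) = p_{X|Z}(\cdot \mid z')$, and the claim follows.

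The step that actually drives the argument is combining the two Markov chains to identify $p_{X|U}(\cdot \mid u)$ with both $p_{X|Z}(\cdot \mid z)$ and $p_{X|Z}(\cdot \mid z')$; there is no real obstacle here beyond keeping track of supports so that all conditional probabilities are well defined. Once the claim is in hand, it implies $H(Z \mid U) = 0$, hence $I(U; Z \mid X) \geq H(Z \mid X) - H(Z \mid U) = H(Z \mid X) = \log m$ (using independence of $J$ and $X$), which pushes the optimum in \eqref{newexample2} up to $\log m$ and yields $R^{B-\text{pvt}}_{\text{sum}}(1, 0) \geq 1 + \log m = \log(2m) > 1$, matching the upper bound noted just before Example~\ref{example:asympt}.
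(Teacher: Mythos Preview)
Your proof is correct and follows essentially the same idea as the paper's: both combine the two Markov chains $Z-U-X$ and $U-Z-X$ at a common $u\in\mathcal{U}_z\cap\mathcal{U}_{z'}$ to derive a contradiction with the explicit structure of $Z=(J,X_J)$. The only cosmetic difference is that the paper phrases things in terms of the likelihood vectors $\vec{\alpha}_z(x)=q_{Z|X}(z|x)$ and their (unequal) supports, whereas you work directly with the posteriors $p_{X|Z}(\cdot\mid z)$ and chain the two Markov chains into the single identity $p_{X|Z}(\cdot\mid z)=p_{X|U}(\cdot\mid u)=p_{X|Z}(\cdot\mid z')$; since $p_X$ is uniform these are equivalent formulations, and your version is arguably a bit cleaner.
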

\begin{proof}Suppose $\mathcal{U}_z\cap \mathcal{U}_{z^\prime}\neq\emptyset$. Let $u\in \mathcal{U}_z\cap \mathcal{U}_{z^\prime}$. Then there exist $z,z^\prime$ such that $p_{UZ}(u,z), p_{UZ}(u,z^\prime)>0$.  Let $\alpha_z(x)=q_{Z|X}(z|x)$, for every $x\in\mathcal{X}$ and $\vec{\alpha}_z$ denote a vector with entries indexed by $x\in\mathcal{X}$. For this example, it is easy to see that $\text{supp}(\vec{\alpha}_z)\neq \text{supp}(\vec{\alpha}_{z^\prime})$, for $z\neq z^\prime$. Assume, without loss of generality that,  $\text{supp}(\vec{\alpha}_z)\setminus \text{supp}(\vec{\alpha}_{z^\prime})\neq \emptyset$. Let $x\in \text{supp}(\vec{\alpha}_z)\setminus \text{supp}(\vec{\alpha}_{z^\prime})$. This implies that $\vec{\alpha}_z(x)>0$ and $\vec{\alpha}_{z^\prime}(x)=0$. Since $\vec{\alpha}_{z^\prime}$ is a non-zero vector, there exists $x^\prime$ such that $\vec{\alpha}_{z^\prime}(x^\prime)>0$. Note that $\vec{\alpha}_z(x)>0$ implies $q_{Z|X}(z|x)>0$, $\vec{\alpha}_{z^\prime}(x)=0$ implies $q_{Z|X}(z^\prime|x)=0$, and $\vec{\alpha}_{z^\prime}(x^\prime)>0$ implies $q_{Z|X}(z^\prime|x^\prime)>0$. These imply, by privacy against Bob, that $p_{UZ|X}(u,z|x), p_{UZ|X}(u,z^\prime|x^\prime)>0$. Now consider $p_{UZ|X}(u,z^\prime|x^\prime)$ and expand it as follows: 
\begin{align}
p_{UZ|X}(u,z^\prime|x^\prime)&=p_{U|X}(u|x^\prime)p_{Z|UX}(z^\prime|u,x^\prime)\\
&=p_{U|X}(u|x^\prime)p_{Z|U}(z^\prime|u)
\end{align}
$p_{U|X}(u|x^\prime)>0$ and $p_{Z|U}(z^\prime|u)>0$, respectively because $p_{UZ|X}(u,z|x)>0$ and $p_{UZ|X}(u,z^\prime|x^\prime)>0$. This implies $p_{UZ|X}(u,z^\prime|x^\prime)>0$, which in turn implies $q_{Z|X}(z^\prime|x^\prime)>0$, which is a contradiction. 
\end{proof}
This shows that $Z$ is a deterministic function of $U$. Now we have $I(U;Z|X)=H(Z|X)$. This gives
\begin{align}
1+\min_{\substack{p(u|x,z):\\  Z-U-X,\\ U-Z-X}}I(U;Z|X)=1+\log(m).
\end{align}
Notice that an optimum $U$, i.e., $U=Z$ satisfies the Markov chain $Z-(U,X)-Z$ also and hence the inequality in \eqref{newexample1} is actually an equality, thereby giving us $R^{B-\text{pvt}}_{\text{sum}}(1,0)=1+\log(m)=\log(2m)$.
}

\section{Proof of Proposition~\ref{prop_asymptotic_alice}}\label{appendixH}
In view of Theorem~\ref{Thm_Rate_Region_A}, it suffices to show that
\[\displaystyle \min_{\substack{p_{U|X}: \\ U-X-(Y,Z) \\ Z-(U,Y)-X}} I(U;X|Y)\quad = 
\displaystyle \min_{\substack{p_{W|X}: \\ W-X-Y \\ X\in W\in \varGamma(\mathcal{G})}} I(W;X|Y).\]
A similar result for deterministic functions was proved by Orlitsky and Roche \cite[Theorem 2]{OrlitskyR01}.
Since we are dealing with randomized functions, the proof of this proposition is slightly more involved. 

\RRB{L.H.S. $\leq$ R.H.S.:} Suppose $p_{W|X}$ achieves the minimum on the right hand side.
Then $W$ is a random variable such that it takes values in the set $\varGamma(\mathcal{G})$ of independent sets of $\mathcal{G}$ such that $X\in W$.
Now we define a joint distribution $p_{XYWZ}(x,y,w,z)$ as follows:
\[
p_{XYWZ}(x,y,w,z) := 
\begin{cases}
0  \text{ if } q_{XY}(x,y)=0, \\
0  \text{ if } q_{XY}(x,y)>0 \text{ and } x\notin w, \\
q_{XY}(x,y)p_{W|X}(w|x)q_{Z|XY}(z|x,y) \\
\hspace{12pt} \text{ if } q_{XY}(x,y)>0 \text{ and } x\in w.
\end{cases}
\]
Note that the above-defined $p_{XYWZ}$ satisfies the Markov chains $W-X-Y$ and $Z-(X,Y)-W$. These two Markov chain together implies that
$W-X-(Y,Z)$ is also a Markov chain.
Now we show that $p_{XYWZ}$ satisfies $Z-(W,Y)-X$ too.
Consider any independent set, say $w\in \G$. Consider any pair $(y,z)\in\mathcal{Y}\times\mathcal{Z}$. 
By definition of $\mathcal{G}$, $p_{Z|XY}(z|x,y)$ is the same for all $x\in w$ with $q_{XY}(x,y)>0$. 
Hence, $q_{Z|XY}(z|x,y)$ can be uniquely determined from $(w,y,z)$ where $x\in w$; therefore, the Markov chain $Z-(W,Y)-X$ holds. 
The inequality now follows by taking $U=W$.

\RRB{L.H.S. $\geq$ R.H.S.:} 
Suppose $p_{U|X}$ achieves the minimum on the left hand side such that $U-X-(Y,Z)$ and $Z-(U,Y)-X$ hold. 
Now define a random variable $W$ as a function of $U$ in the following way: $w=w(u):=\{x:p_{UX}(u,x)>0\}$. 
We need to show that the induced conditional distribution $p_{W|X}$ satisfies the following three conditions: 
1) the Markov chain $W-X-Y$ holds; 
2) $X\in W$, i.e., $p_{XW}(x,w)>0 \implies x\in w$; 
and 3) $W$ is an independent set in $\mathcal{G}=(\mathcal{X},\mathcal{E})$. We show these conditions below.
\begin{enumerate}
\item The Markov chain $W-X-Y$ holds because $U-X-(Y,Z)$ holds and $W$ is a function of $U$.
\item If $p_{WX}(w,x)>0$ then $\exists u$ s.t. $w=w(u)$ and $p_{UX}(u,x)>0$, which implies that $x\in w$.
\item To prove that $w$ is an independent set we suppose, to the contrary, that $w$ is not an independent set, 
which means that $\exists x,x',u$, where $w=w(u), x,x'\in w$, $\{x,x'\}\in \mathcal{E}$, and such that $p_{U|X}(u|x)\cdot p_{U|X}(u|x')>0$. 
By definition of $\G$, $\{x,x'\}$ being an edge in $\mathcal{E}$ implies that $\exists (y,z)\in\mathcal{Y}\times\mathcal{Z}$ such that $q_{Z|XY}(z|x,y)\neq q_{Z|X,Y}(z|x',y)$ and $q_{XY}(x,y)\cdot q_{XY}(x',y)>0$. 
Assume, without loss of generality, that $q_{Z|XY}(z|x,y)>0$.
Consider $(u,x,y,z)$. We can expand $p_{UZ|XY}(u,z|x,y)$ in two different ways. The first expansion is as follows:
\begin{align}
p_{UZ|XY}(u,z|x,y) &= p_{U|XY}(u|x,y)p_{Z|UXY}(z|u,x,y) \nonumber \\
&= p_{U|X}(u|x)p_{Z|UY}(z|u,y), \label{eq:cutset_equality1}
\end{align}
where in the second equality we used $U-X-Y$ to write $p_{U|XY}(u|x,y)=p_{U|X}(u|x)$ and $Z-(U,Y)-X$ to write 
$p_{Z|UXY}(z|u,x,y)=p_{Z|UY}(z|u,y)$ (note that $p_{Z|UXY}(z|u,x,y)$ is well-defined, because $p_{UXY}(u,x,y)$ = $q_{XY}(x,y)p_{U|X}(u|x)>0$). 
We can expand $p_{UZ|XY}(u,z|x,y)$ in the following way also:
\begin{align}
p_{UZ|XY}(u,z|x,y) &= q_{Z|XY}(z|x,y)p_{U|XYZ}(u|x,y,z) \nonumber \\
&= q_{Z|XY}(z|x,y)p_{U|X}(u|x), \label{eq:cutset_equality2}
\end{align}
where in the second equality we used the Markov chain $U-X-(Y,Z)$ to write $p_{U|XYZ}(u|x,y,z)=p_{U|X}(u|x)$ 
(note that $p_{U|XYZ}(u|x,y,z)$ is well-defined, because $p_{XYZ}(x,y,z)>0$). 
Since $p_{U|X}(u|x)>0$, comparing \eqref{eq:cutset_equality1} and \eqref{eq:cutset_equality2} gives 
\begin{align}
q_{Z|XY}(z|x,y) = p_{Z|UY}(z|u,y).\label{eq:cutset_equality25}
\end{align}
The above equality, together with $q_{Z|XY}(z|x,y)>0$, implies $p_{Z|UY}(z|u,y)>0$.
Now consider $(u,x',y,z)$, and expand $p_{UZ|XY}(u,z|x',y)$ along the first expansion above in \eqref{eq:cutset_equality1}. This gives:
\begin{align}
p_{UZ|XY}(u,z|x',y) = p_{U|X}(u|x')p_{Z|UY}(z|u,y). \label{eq:cutset_equality3}
\end{align}
Since all the terms on the RHS are greater than zero, we have $p_{UZ|XY}(u,z|x',y)>0$. Now expanding $p_{UZ|XY}(u,z|x',y)$ along the second expansion in \eqref{eq:cutset_equality2} above gives
\begin{align}
p_{UZ|XY}(u,z|x',y) = q_{Z|XY}(z|x',y)p_{U|X}(u|x'). \label{eq:cutset_equality4}
\end{align}
On comparing \eqref{eq:cutset_equality3} and \eqref{eq:cutset_equality4}, and using $p_{U|X}(u|x')>0$, we get 
$q_{Z|XY}(z|x',y) = p_{Z|UY}(z|u,y)$, which together with
\eqref{eq:cutset_equality25}, leads to a contradiction to our assumption that $q_{Z|XY}(z|x,y) \neq q_{Z|XY}(z|x',y)$.
\end{enumerate}

{\color{black} 
\section{Proof of Lemma~\ref{lem:equiv-problem-alice-bob}}\label{appendixD}
First we explicitly characterize $(q_{XY},q_{Z|XY})$ that are perfectly securely computable with privacy against both users. 
Kilian~\cite{Kilian00} gave a characterization of such $q_{Z|XY}$ with no input distribution. 
Essentially the same characterization holds for general $(q_{XY},q_{Z|XY})$ as well, and we prove it in our language in 
Lemma \ref{lem:characterization}; it will be useful in understanding the later results.}

{\color{black}We say that $C\times D$, where $C\subseteq \mathcal{X}, D\subseteq \mathcal{Y}$, is {\em column monochromatic}, if 
for every $y\in D$, if there exists $x,x'\in C$ such that $q_{XY}(x,y),q_{XY}(x',y)>0$, then $q_{Z|XY}(z|x,y)=q_{Z|XY}(z|x',y)$, $\forall z\in\mathcal{Z}$.

\begin{lemma}{\cite[Lemma 5.1]{Kilian00}}\label{lem:characterization}
Let $\mathcal{X}=\mathcal{X}_1\biguplus\mathcal{X}_2\biguplus\hdots\biguplus\mathcal{X}_k$ be the partition induced by the equivalence relation $\equiv$.
Then, $(q_{XY},q_{Z|XY})$ is perfectly securely computable with privacy against both users if and only if 
each $\mathcal{X}_i\times \mathcal{Y}$, $i\in\{1,2,\hdots,k\}$, is column monochromatic.
\end{lemma}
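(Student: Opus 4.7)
\textbf{Proof plan for Lemma~\ref{lem:characterization}.} The plan is to establish the two directions of the equivalence separately, leveraging the single-shot perfect security characterization (part~$(i)$ of Theorem~\ref{Feasibility_AB}) so that we only need to reason about one-shot protocols with auxiliary random variables $U_{[1:r]}$ satisfying \eqref{Eq_AB_Markov1}--\eqref{Eq_AB_Markov_Secr2} with $Z_1=\emptyset$, $Z_2=Z$. The key observation driving both directions is that the two privacy Markov chains $U_{[1:r]}-X-(Y,Z)$ and $U_{[1:r]}-(Y,Z)-X$, taken together, force the message distribution to be the same on any two inputs that the relation $\sim$ connects.

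For the \emph{sufficiency} direction (if each $\mathcal{X}_i\times\mathcal{Y}$ is column monochromatic, then secure computation is possible), I would exhibit an explicit one-round one-shot protocol: Alice sends $U := i$, the index of the equivalence class containing her input $X$. Column monochromaticity implies that the conditional distribution $q_{Z|X=x,Y=y}(\cdot)$ depends on $x$ only through its class $i$ whenever $q_{XY}(x,y)>0$; so Bob, on seeing $(U=i,Y=y)$, can sample $Z$ from this well-defined distribution. Verifying the three requirements is then straightforward: correctness is immediate; privacy against Alice ($U-X-(Y,Z)$) holds since $U$ is a deterministic function of $X$; and privacy against Bob ($U-(Y,Z)-X$) holds because whenever $q_{YZ}(y,z)>0$, all $x$ consistent with $(y,z)$ (i.e., all $x$ with $q_{XY}(x,y)q_{Z|XY}(z|x,y)>0$) are pairwise related by $\sim$ and thus lie in a single class, making $U$ a function of $(Y,Z)$.

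For the \emph{necessity} direction, I would argue by showing that \eqref{Eq_AB_Markov_Secr1}--\eqref{Eq_AB_Markov_Secr2} force $p_{U_{[1:r]}|X}(\cdot|x)=p_{U_{[1:r]}|X}(\cdot|x')$ whenever $x\sim x'$. Concretely, suppose $x\sim x'$ via some $(y',z')$ with all four of $q_{XY}(x,y'),q_{XY}(x',y'),q_{Z|XY}(z'|x,y'),q_{Z|XY}(z'|x',y')$ positive, so that $p(x,y',z')$ and $p(x',y',z')$ are both positive under the induced joint. Then $p(u_{[1:r]}|x)=p(u_{[1:r]}|x,y',z')$ by privacy against Alice, $p(u_{[1:r]}|x,y',z')=p(u_{[1:r]}|x',y',z')$ by privacy against Bob, and $p(u_{[1:r]}|x',y',z')=p(u_{[1:r]}|x')$ again by privacy against Alice. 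By transitivity of equality, $p_{U_{[1:r]}|X}(\cdot|x)=p_{U_{[1:r]}|X}(\cdot|x')$ for all $x\equiv x'$. Finally, for any $y$ with $q_{XY}(x,y),q_{XY}(x',y)>0$ we can write $q_{Z|XY}(z|x,y)=\sum_{u_{[1:r]}} p(u_{[1:r]}|x)p(z|u_{[1:r]},y)$ using \eqref{Eq_AB_Markov1}--\eqref{Eq_AB_Markov2} and \eqref{Eq_AB_Markov_Decod2}, and similarly for $x'$; equality of the two marginal factors $p(u_{[1:r]}|x)$ and $p(u_{[1:r]}|x')$ then yields column monochromaticity of $\mathcal{X}_i\times\mathcal{Y}$.

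The main obstacle I expect is the bookkeeping in the necessity direction: one has to be careful that each conditional like $p(u_{[1:r]}|x,y',z')$ is well-defined (i.e., the conditioning event has positive probability under the protocol-induced joint), and that the expansion of $q_{Z|XY}(z|x,y)$ really collapses to $\sum_{u_{[1:r]}} p(u_{[1:r]}|x)p(z|u_{[1:r]},y)$ rather than a more complex sum depending on $x$ through $p(z|u_{[1:r]},x,y)$; this requires iteratively applying the round-wise Markov chains \eqref{Eq_AB_Markov1}--\eqref{Eq_AB_Markov2} together with the decoder Markov chain \eqref{Eq_AB_Markov_Decod2}. The sufficiency direction is essentially constructive and does not pose technical difficulty beyond verifying the Markov chains on the induced joint.
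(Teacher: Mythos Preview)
Your proposal is correct and mirrors the paper's proof: the same explicit one-round protocol for sufficiency, and the same chain $p(u|x)=p(u|x,y',z')=p(u|x',y',z')=p(u|x')$ via the two privacy Markov chains for necessity. The only difference is that the paper works directly in the single-round model of Section~\ref{Sec_nointeraction} (so $U$ is a single message rather than $U_{[1:r]}$), which means your anticipated obstacle about iterating \eqref{Eq_AB_Markov1}--\eqref{Eq_AB_Markov2} never arises---privacy against Alice already gives $U-X-(Y,Z)$, and together with \eqref{Eq_AB_Markov_Decod2} this yields the factorization $q_{Z|XY}(z|x,y)=\sum_u p(u|x)p(z|u,y)$ directly.
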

\begin{proof}
\RRB{\underline{`Only if' part}:}
Let $U$ denote the message sent by Alice to Bob. 
Let $(p(u|x),p(z|u,y))$ be a pair of encoder and decoder that perfectly securely computes $(q_{XY},q_{Z|XY})$ 
with privacy against both users,
i.e., the joint distribution $p(x,y,u,z)=p(x,y)p(u|x)p(z|u,y)$ satisfies the following correctness and privacy conditions:
\begin{align}
&\text{Correctness: } p(x,y,z) = q_{XY}(x,y)q_{Z|XY}(z|x,y),\quad \forall x,y,z, \label{eq:char-correctness} \\
&\text{Privacy against Alice: } U-X-(Y,Z), \label{eq:char-privacy-alice} \\
&\text{Privacy against Bob: } U-(Y,Z)-X. \label{eq:char-privacy-bob}
\end{align}
Consider an equivalence class $\mathcal{X}_i$. 
First we prove that $p(u|x)=p(u|x')$ for every message $u$ and every $x,x'\in\mathcal{X}_i$.
Since $\mathcal{X}_i$ is an equivalence class of $\equiv$, we have $x\equiv x'$, 
and by the definition of $x\equiv x'$, there exists $x=x_1,x_2,\hdots,x_{l-1},x_l=x'$ 
such that $x_i\sim x_{i+1}$ for every $0\leq i \leq l-1$.
Consider $x_i,x_{i+1}$ for some $i$ in this sequence. Since $x_i\sim x_{i+1}$, there exists $y,z$ such that 
$q(x_i,y),q(x_{i+1},y),q(z|x_i,y),q(z|x_{i+1},y)>0$. 
(These will ensure that all the conditional probabilities in the following equations are well-defined.)
Fix a message $u$ and consider the following:
\begin{align*}
p(u|x_i) &= p(u|x_i,y,z) \\
&= p(u|x_{i+1},y,z) \\
&= p(u|x_{i+1}),
\end{align*}
where the first and third equalities follow from \eqref{eq:char-privacy-alice}, and the 
second equality follows from \eqref{eq:char-privacy-bob}.
Since the above argument holds for every $i\in\{1,2,\hdots,k-1\}$, we have $p(u|x)=p(u|x')$ for every $x,x'\in\mathcal{X}_i$.
Hence, $U$ is conditionally independent of $X$, conditioned on the equivalence class to which $X$ belongs.

Now take any $y\in\mathcal{Y}$. We prove that if there exists $x,x'\in\mathcal{X}_i$ such that 
$q_{XY}(x,y)>0,q_{XY}(x',y)>0$, then $q_{Z|XY}(z|x,y)=q_{Z|XY}(z|x',y)$ for every $z\in\mathcal{Z}$.
Take any $u$ such that $p(u|x)>0$, then $p(u|x')=p(u|x)>0$. Take an arbitrary $z\in\mathcal{Z}$. Consider $(x,y,u,z)$ and expand $p(u,z|x,y)$ as follows:
\begin{align}
p(u,z|x,y) &= p(u|x)p(z|u,y). \label{eq:ps_2_one-way}
\end{align}
We can expand $p(u,z|x,y)$ in another way:
\begin{align}
p(u,z|x,y) &= q(z|x,y)p(u|x,y,z) \nonumber \\
&= q(z|x,y)p(u|x). \label{eq:ps_2_another-way}
\end{align}
In \eqref{eq:ps_2_another-way} we used \eqref{eq:char-privacy-alice} to write $p(u|x,y,z)=p(u|x)$. 
Comparing \eqref{eq:ps_2_one-way} and \eqref{eq:ps_2_another-way} we get
\begin{align}
q(z|x,y)=p(z|u,y). \label{eq:ps_2_first-compare}
\end{align}
Running the same arguments with $(x',y,u,z)$ we get
\begin{align}
q(z|x',y)=p(z|u,y). \label{eq:ps_2_second-compare}
\end{align}
Comparing \eqref{eq:ps_2_first-compare} and \eqref{eq:ps_2_second-compare} gives 
$q(z|x,y)=q(z|x',y)$. Since the protocol is correct, i.e., $p(z|x,y)=q_{Z|XY}(z|x,y)$, we have our desired result that $p_{Z|XY}(z|x,y)=p_{Z|XY}(z|x',y)$. \\

\RRB{\underline{`If' part}:} If $\mathcal{X}_i\times\mathcal{Y}$, $i\in\{1,2,\hdots,k\}$, is column monochromatic, 
then we give a simple encoder-decoder pair to perfectly securely compute $(q_{XY},q_{Z|XY})$. 
Take $\mathcal{U}=\{1,2,\hdots,k\}$. Now define the encoder-decoder as follows:
\begin{itemize}
\item Encoder: $p_{U|X}(i|x) := \mathbbm{1}_{\{x\in\mathcal{X}_i\}}$, i.e., on input $x$, Alice sends the unique index $i\in[k]$ such that $x\in\mathcal{X}_i$.
\item Decoder: $p_{Z|UY}(z|i,y) := q_{Z|XY}(z|x,y)$, for any $x\in\mathcal{X}_i$ such that $q_{XY}(x,y)>0$. 
(If no such $x$ is there, leave $p_{Z|UY}(z|i,y)$ undefined.)
\end{itemize}
Upon observing $x$ as input, Alice sends the index $i$ such that $x\in\mathcal{X}_i$.
On observing $y\in\mathcal{Y}$ and receiving $i\in[1:k]$ from Alice, 
Bob outputs $z$ with probability $p_{Z|UY}(z|i,y)$. Note that $p_{Z|UY}(z|i,y)$ is well-defined, because $q_{XY}(x,y)>0$.
It can be easily verified that the above pair of encoder-decoder computes $(q_{XY},q_{Z|XY})$ with perfect security against both users.
\end{proof}

The above encoder-decoder may not be optimal in terms of communication complexity. 
Consider any two equivalence classes $\mathcal{X}_i$ and $\mathcal{X}_j$. Let $\mathcal{Y}_i := \{y\in\mathcal{Y} : \exists x\in\mathcal{X}_i \text{ s.t. } q_{XY}(x,y)>0\}$; $\mathcal{Y}_j$ is defined similarly. Suppose $\mathcal{Y}_i\cap \mathcal{Y}_j=\phi$.
Then, Alice can send the same message whether $x\in\mathcal{X}_i$ or $x\in\mathcal{X}_j$, 
and Bob will still be able to correctly compute $q_{Z|XY}$.

Consider an encoder-decoder pair $(p_{U|X},p_{Z|UY})$ that securely computes $(q_{XY},q_{Z|XY})$ with privacy against both users.
For $i=1,2,\hdots,k$, define $\mathcal{U}_i$ as follows:
\[\mathcal{U}_i:=\{u\in\mathcal{U}:p(u|x)>0, \text{ for some }x\in\mathcal{X}_i\}.\]
\begin{claim}\label{claim:msgs_different-classes}
For some $i,j\in[1:k],i\neq j$, if there exists $x\in\mathcal{X}_i$ and $x'\in\mathcal{X}_j$ such that $q_{XY}(x,y)>0,q_{XY}(x',y)>0$ for some $y\in\mathcal{Y}$, then $\mathcal{U}_i\cap\mathcal{U}_j=\phi$.
\end{claim}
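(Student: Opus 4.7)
The plan is to argue by contradiction, leveraging two facts that were essentially established in the proof of Lemma~\ref{lem:characterization}: (a) within any equivalence class $\X_i$, the conditional message distribution is the same, i.e., $p(u|x)=p(u|x')$ for all $x,x'\in\X_i$; and (b) if $p(u|x)>0$ and $q_{XY}(x,y)>0$, then the privacy conditions force $q_{Z|XY}(z|x,y)=p(z|u,y)$ for every $z$.

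First, I would suppose toward contradiction that there is a message $u\in\U_i\cap\U_j$. By definition of $\U_i$ and $\U_j$, together with fact (a), we have $p(u|x)>0$ for the specific $x\in\X_i$ and $p(u|x')>0$ for the specific $x'\in\X_j$ given in the hypothesis (the ones satisfying $q_{XY}(x,y),q_{XY}(x',y)>0$). Then, invoking fact (b) twice gives
\[
q_{Z|XY}(z|x,y)\;=\;p(z|u,y)\;=\;q_{Z|XY}(z|x',y)\qquad\text{for every }z\in\Z.
\]
So the conditional output distributions at $(x,y)$ and $(x',y)$ coincide identically.

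On the other hand, since $x\in\X_i$ and $x'\in\X_j$ with $i\neq j$, we have $x\not\equiv x'$, and in particular $x\not\sim x'$. Combined with $q_{XY}(x,y),q_{XY}(x',y)>0$, Definition~\ref{defn:relation-tilde} then implies that there is no $z\in\Z$ with both $q_{Z|XY}(z|x,y)>0$ and $q_{Z|XY}(z|x',y)>0$; that is, the supports of $q_{Z|XY}(\cdot|x,y)$ and $q_{Z|XY}(\cdot|x',y)$ are disjoint. Since both are genuine probability distributions (each has total mass one), they cannot be equal, yielding the desired contradiction. Hence $\U_i\cap\U_j=\phi$.

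The main obstacle, if any, is just packaging fact (b) cleanly from the two expansions $p(u,z|x,y)=p(u|x)p(z|u,y)$ (using privacy against Bob to drop $x$ in $p(z|u,x,y)$) and $p(u,z|x,y)=q_{Z|XY}(z|x,y)p(u|x)$ (using privacy against Alice and correctness), and then dividing by $p(u|x)>0$. Since this is exactly the calculation carried out in \eqref{eq:ps_2_one-way}--\eqref{eq:ps_2_second-compare} of the Lemma's proof, no new idea is required, and the claim follows.
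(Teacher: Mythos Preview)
Your proof is correct and follows essentially the same approach as the paper's: both assume $u\in\U_i\cap\U_j$, use the constancy of $p(u|\cdot)$ on equivalence classes to get $p(u|x),p(u|x')>0$, and then invoke the two expansions of $p(u,z|x,y)$ to conclude $q_{Z|XY}(\cdot|x,y)=p(\cdot|u,y)=q_{Z|XY}(\cdot|x',y)$. The only cosmetic difference is in the final contradiction: the paper fixes a single $z$ with $q_{Z|XY}(z|x,y)>0$ and notes $q_{Z|XY}(z|x',y)=0$, whereas you argue that the two full conditional distributions are equal yet have disjoint supports---these are the same observation.
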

\begin{proof}
We prove the claim by contradiction.
Suppose there exists $i,j\in[1:k],i\neq j$, and $x\in\mathcal{X}_i$, $x'\in\mathcal{X}_j$ such 
that $q_{XY}(x,y)>0,q_{XY}(x',y)>0$ for some $y\in\mathcal{Y}$, and $\mathcal{U}_i\cap\mathcal{U}_j\neq\phi$.
 Since $q_{XY}(x,y)>0$, there is a $z\in\mathcal{Z}$ with $q_{Z|XY}(z|x,y)>0$. Since $x$ and $x'$ are not in the same equivalence class, 
we have $q_{Z|XY}(z|x',y)=0$.

Suppose $u\in\mathcal{U}_i\cap\mathcal{U}_j$. It follows from the definition of $\mathcal{U}_i$ and the proof of Lemma \ref{lem:characterization} that $p(u|x)>0$ and $p(u|x')>0$. Now consider $(x,y,u,z)$ and by 
expanding $p(u,z|x,y)$ in two different ways (similar to what we did in the proof of 
Lemma \ref{lem:characterization}) we get $q_{Z|XY}(z|x,y)=p(z|u,y)$. 
Applying the same arguments with $(x',y,u,z)$ gives $q_{Z|XY}(z|x',y)=p(z|u,y)$. 
These imply $q_{Z|XY}(z|x,y)=q_{Z|XY}(z|x',y)$, which is a contradiction, because by assumption $q_{Z|XY}(z|x,y)>0$ and $q_{Z|XY}(z|x',y)=0$.
Hence, $\mathcal{U}_i\cap\mathcal{U}_j$ must be empty.
\end{proof}

From the proof of Lemma \ref{lem:characterization} we have that the $U$ is conditionally independent 
of the input $X$ conditioned on the equivalence class to which $X$ belongs. This, together with 
Claim \ref{claim:msgs_different-classes} suggests replacing all the elements in $\mathcal{X}_i, i\in[1:k]$, by a 
single element $x_i$, which induces a new random variable, which we denote by $X_{\text{EQ}}$. 
Note that $X_{\text{EQ}}$ takes values in the set $\mathcal{X}_{\text{EQ}}:=\{x_1,x_2,\hdots,x_k\}$, 
where $x_i$ is the representative of the equivalence class $\mathcal{X}_i$.
Thus we can define an equivalent problem ($q_{X_{\text{EQ}}Y},q_{Z|X_{\text{EQ}}Y}$) as follows:
\begin{itemize}\label{equiv-graph}
\item Define $q_{X_{\text{EQ}}Y}(x_i,y):=\sum_{x\in\mathcal{X}_i}q_{XY}(x,y)$ for every $(x_i,y)\in(\mathcal{X}_{\text{EQ}}\times\mathcal{Y})$. 
\item For $(x_i,y)\in(\mathcal{X}_{\text{EQ}}\times\mathcal{Y})$, if there exists $x\in\mathcal{X}_i$ s.t. $q_{XY}(x,y)>0$, then define $q_{Z|X_{\text{EQ}}Y}(z|x_i,y) :=q_{Z|XY}(z|x,y)$, for every $z\in\mathcal{Z}$. If there exists no $x\in\mathcal{X}_i$ s.t. $q_{XY}(x,y)>0$, then it does not matter what the conditional distribution $q_{Z|X_{\text{EQ}}Y}(z|x_i,y)$ is;
in particular, we can define $q_{Z|X_{\text{EQ}}Y}(.|x_i,y)$ to be the uniform distribution in $\{1,2,\hdots,|\mathcal{Z}|\}$.
\end{itemize}
}
{\color{black}Note that $U-X_{\text{EQ}}-X$ is a Markov chain.
The above definition is for $(q_{XY},q_{Z|XY})$ which are perfectly securely computable with privacy against both users as per Lemma \ref{lem:characterization}.
\begin{remark}\label{remark:bob-always-learn}
{\em Suppose $(q_{XY},q_{Z|XY})$ is securely computable. 
Observe that Bob {\em always} learns which equivalence class $\mathcal{X}_i$ Alice's input lies in. 
This is because $\forall (y,z)\in\mathcal{Y}\times\mathcal{Z}$ such that $q_{YZ}(y,z)>0$, there is a 
unique $x\in\mathcal{X}_{\text{EQ}}$ s.t. $q_{X_{\text{EQ}}Y}(x,y)>0$ and $q_{Z|X_{\text{EQ}}Y}(z|x,y)>0$.}
\end{remark}
Now we are ready to prove Lemma~\ref{lem:equiv-problem-alice-bob}.
\begin{proof}[Proof of Lemma~\ref{lem:equiv-problem-alice-bob}]
Let $(p_{U|X},p_{Z|UY})$ be an encoder-decoder pair that securely computes $(q_{XY},q_{Z|XY})$. Consider the following 
encoder-decoder pair $(p_{\tilde{U}|X_{\text{EQ}}},p_{Z|\tilde{U}Y})$ for securely computing the reduced function, that is $(q_{X_{\text{EQ}}Y},q_{Z|X_{\text{EQ}}Y})$.
The alphabet of $X_{\text{EQ}}$ is $\mathcal{X}_{\text{EQ}}=\{x_1,x_2,\hdots,x_k\}$, where $x_i$ is the representative element of the equivalence class $\mathcal{X}_i$
in the partition $\mathcal{X}=\mathcal{X}_1\biguplus\mathcal{X}_2\biguplus\hdots\biguplus\mathcal{X}_k$ generated by the equivalence relation $\equiv$;
see Definition \ref{defn:equiv-relation-1}.
Encoder is defined as follows: for every $i\in[1:k]$, define $p_{\tilde{U}|X_{\text{EQ}}}(u|x_i):=p_{U|X}(u|x)$ for any $x\in\mathcal{X}_i$. This is well-defined,
because $p_{U|X}(u|x)$ is identical for every $x$ in an equivalence class (see the proof of Lemma \ref{lem:characterization}).
Decoder remains the same, i.e., $p_{Z|\tilde{U}Y}=p_{Z|UY}$.
Note that the encoder $p_{\tilde{U}|X_{\text{EQ}}}$ is defined on a subset of $\mathcal{X}$, and on that subset it is identical to $p_{U|X}$.
This implies that, since $(p_{U|X},p_{Z|UY})$ securely computes $(q_{XY},q_{Z|XY})$, the above-defined $(p_{\tilde{U}|X_{\text{EQ}}},p_{Z|\tilde{U}Y})$
will also securely compute $(q_{X_{\text{EQ}}Y},q_{Z|X_{\text{EQ}}Y})$.

For the other direction, let $(p_{\tilde{U}|X_{\text{EQ}}},p_{Z|\tilde{U}Y})$ securely computes $(q_{X_{\text{EQ}}Y},q_{Z|X_{\text{EQ}}Y})$. Now consider the 
following encoder-decoder pair $(p_{U|X},p_{Z|UY})$ for securely computing $(q_{XY},q_{Z|XY})$.
Encoder is defined as follows: for any $x\in\mathcal{X}$, define $p_{U|X}(u|x):=p_{\tilde{U}|X_{\text{EQ}}}(u|x_i)$, where $x\in\mathcal{X}_i$. 
Decoder remains the same, i.e., $p_{Z|UY}=p_{Z|\tilde{U}Y}$. Note that $p_{U|X}(u|x)$ must be identical for every $x$ in an 
equivalence class (see the proof of Lemma \ref{lem:characterization}). Therefore, since 
$p_{\tilde{U}|X_{\text{EQ}}}(u|x_i)$ is a valid choice for $p_{U|X}(u|x_i)$, where $x_i$ is the representative element of the 
equivalence class $\mathcal{X}_i$, it follows that the above-defined encoder-decoder pair $(p_{U|X},p_{Z|UY})$ will also securely compute $(q_{XY},q_{Z|XY})$.

In the following we show that $L(U)$ and $L(\tilde{U})$ have the same p.m.f. For simplicity, let $L=L(U)$ and $\tilde{L}=L(\tilde{U})$. 
\begin{align*}
p_{L}(l) &= \sum_{x\in\mathcal{X}}q_X(x)p_{L|X}(l|x) \\
&\stackrel{\text{(a)}}{=} \sum_{x\in\mathcal{X}}p_X(x)\sum_{u\in\mathcal{U}:l(u)=l}p_{U|X}(u|x) \\
&= \sum_{u\in\mathcal{U}:l(u)=l}\sum_{x\in\mathcal{X}}q_X(x)p_{U|X}(u|x) \\
&= \sum_{u\in\mathcal{U}:l(u)=l}\sum_{i=1}^k\sum_{x\in\mathcal{X}_i}q_X(x)p_{U|X}(u|x) \\
&\stackrel{\text{(b)}}{=} \sum_{u\in\mathcal{U}:l(u)=l}\sum_{i=1}^k\sum_{x\in\mathcal{X}_i}q_X(x)p_{\tilde{U}|X_{\text{EQ}}}(u|x_i) \\
&= \sum_{u\in\mathcal{U}:l(u)=l}\sum_{i=1}^kp_{\tilde{U}|X_{\text{EQ}}}(u|x_i)\sum_{x\in\mathcal{X}_i}q_X(x) \\
&\stackrel{\text{(c)}}{=} \sum_{u\in\mathcal{U}:l(u)=l}\sum_{i=1}^kp_{\tilde{U}|X_{\text{EQ}}}(u|x_i)p_{X_{\text{EQ}}}(x_i) \\
&= \sum_{i=1}^kp_{X_{\text{EQ}}}(x_i)\sum_{u\in\mathcal{U}:l(u)=l}p_{\tilde{U}|X_{\text{EQ}}}(u|x_i) \\
&= \sum_{i=1}^kp_{X_{\text{EQ}}}(x_i)p_{\tilde{L}|X_{\text{EQ}}}(l|x_i) \\
&= p_{\tilde{L}}(l).
%
\end{align*}
Equality (a) follows from the fact that $l(u)$, the length of $u$, is a deterministic function of $u$.
In (b) we used our definition of the encoder $p_{\tilde{U}|X_{\text{EQ}}}(u|x_i)=p_{U|X}(u|x)$, 
where $x_i$ is the representative element of the equivalence class $\mathcal{X}_i$.
We also used the definition of $p_{X_{\text{EQ}}}(x_i)=\sum_{x\in\mathcal{X}_i}q_{X}(x)$ in (c).
\end{proof}
It follows from Lemma \ref{lem:equiv-problem-alice-bob} that, to study the communication complexity of secure computation of $(q_{XY},q_{Z|XY})$,
it is enough to study the communication complexity of secure computation of the reduced problem $(q_{X_{\text{EQ}}Y},q_{Z|X_{\text{EQ}}Y})$.
}
{\color{black}
\section{Proof of Claim~\ref{claim:color-equiv-protocol}}\label{appendixE}
\RRB{\underline{`Only if' part}:} Let $c:\mathcal{X}_{\text{EQ}}\to\{0,1\}^*$ be a proper coloring of the vertices of $\mathcal{G}_{\text{EQ}}$ that Alice and Bob agree upon. 
We will give a pair of encoder-decoder $(p_{U|X_{\text{EQ}}},p_{Z|UY})$ that securely computes $(q_{X_{\text{EQ}}Y},q_{Z|X_{\text{EQ}}Y})$.
Our encoder will be a deterministic map, which is defined as $p_{U|X_{\text{EQ}}}(u|x):=\mathbbm{1}_{\{u=c(x)\}}$. 
The alphabet of $U$ is $\mathcal{U}=\{a\in\{0,1\}^*:\exists x\in\mathcal{X}_{\text{EQ}} \text{ s.t. } c(x)=a\}$. 
We define our decoder as $p_{Z|UY}(z|u,y):= p_{Z|X_{\text{EQ}}Y}(z|x,y)$, for any $x\in\mathcal{X}_{\text{EQ}}$ such that $c(x)=u$ and $q_{X_{\text{EQ}}Y}(x,y)>0$.
The decoder is well-defined, since $c$ is a proper coloring 
(if $x,x'$ are such that $c(x)=c(x')$,
then $\{x,x'\}\notin\mathcal{E}_{\text{EQ}}$; therefore, for every $(y,z)\in\mathcal{Y}\times\mathcal{Z}$, $q_{Z|X_{\text{EQ}}Y}(z|x,y)=q_{Z|X_{\text{EQ}}Y}(z|x',y)$,
whenever $q_{X_{\text{EQ}}Y}(x,y)>0,q_{X_{\text{EQ}}Y}(x',y)>0$).
Note that this pair of encoder-decoder $(p_{U|X_{\text{EQ}}},p_{Z|UY})$ correctly computes $(q_{X_{\text{EQ}}Y},q_{Z|X_{\text{EQ}}Y})$.
It also satisfies privacy against both users.
Privacy against Alice follows from the fact that our encoder is a deterministic function, in which $U$ is deterministic function
of $X_{\text{EQ}}$, which implies that the Markov chain $U-X_{\text{EQ}}-(Y,Z)$ trivially holds.
Privacy against Bob follows from the fact that $X_{\text{EQ}}$ is a deterministic function of $(Y,Z)$; see  Remark \ref{remark:bob-always-learn}.\\

\RRB{\underline{`If' part}:} 
Fix a code $\mathcal{C}=(p_{U|X_{\text{EQ}}},p_{Z|UY})$ that securely computes $(q_{X_{\text{EQ}}Y},q_{Z|X_{\text{EQ}}Y})$,
i.e., the induced joint distribution 
\begin{equation}\label{eq:color-equiv-interim}
p_{X_{\text{EQ}}YUZ}=q_{X_{\text{EQ}}Y}p_{U|X_{\text{EQ}}}p_{Z|UY}
\end{equation}
satisfies the two Markov chains 
$U-X_{\text{EQ}}-(Y,Z)$ and $U-(Y,Z)-X_{\text{EQ}}$, which correspond to privacy against Alice and privacy against Bob, respectively.
Now we show that, without loss of generality, we can always take the encoder $p_{U|X_{\text{EQ}}}$ to be a deterministic map.

First note that privacy against Alice $U-X_{\text{EQ}}-(Y,Z)$ implies that $U-(X_{\text{EQ}},Y)-Z$ is a Markov chain.
This follows from $0 = I(U;Y,Z|X_{\text{EQ}}) \geq I(U;Z|X_{\text{EQ}},Y)$ and the fact that 
conditional mutual information is always non-negative.
The Markov chain $U-(X_{\text{EQ}},Y)-Z$ implies that, for every $x\in\mathcal{X}_{\text{EQ}},y\in\mathcal{Y}$ such that $q_{X_{\text{EQ}}Y}(x,y)>0$, we have
\begin{align}
q_{Z|X_{\text{EQ}}Y}(z|x,y) &= p_{Z|UX_{\text{EQ}}Y}(z|u,x,y), \nonumber\\
&\hspace{12pt}\text{ for all } u\in\mathcal{U} \text{ s.t. } p_{U|X}(u|x)>0 \label{eq:color-equiv-interim1} \\
&= p_{Z|UY}(z|u,y). \label{eq:color-equiv-interim2}
\end{align}
In \eqref{eq:color-equiv-interim2} we used the \eqref{eq:color-equiv-interim} to write $p_{Z|UX_{\text{EQ}}Y}(z|u,x,y)=p_{Z|UY}(z|u,y)$.
It follows from \eqref{eq:color-equiv-interim1}-\eqref{eq:color-equiv-interim2} that,
for each $x\in\mathcal{X}_{\text{EQ}}$, if we pick an arbitrary $\hat{u}\in\mathcal{U}$ such that $p_{U|X}(\hat{u}|x)>0$, and define our encoder as $E(x):=\hat{u}$ 
(and leave the decoder as before), then the resulting encoder-decoder pair $(E(X_{\text{EQ}}),p_{Z|UY})$ computes 
$(q_{X_{\text{EQ}}Y},q_{Z|X_{\text{EQ}}Y})$ with zero-error. Note that $(E(X_{\text{EQ}}),p_{Z|UY})$ satisfies privacy against both users:
privacy against Alice comes from the fact that the above-defined encoder $U=E(X_{\text{EQ}})$ is a deterministic function of Alice's input $X_{\text{EQ}}$,
which implies that the Markov chain $U-X_{\text{EQ}}-(Y,Z)$ trivially holds; privacy against Bob follows from the fact that $X_{\text{EQ}}$ is a 
deterministic function of $(Y,Z)$ (see Remark \ref{remark:bob-always-learn}), which implies that the Markov chain $U-(Y,Z)-X_{\text{EQ}}$ trivially holds.

Note that the encoder $p_{U|X_{\text{EQ}}}$ in the given code $\mathcal{C}$ may be a randomized function of Alice's input,
but once we fix the random coins of the encoder, the encoder becomes deterministic. 
For fixed random coins $\vec{r}$ of the encoder, let $E_{\vec{r}}:\mathcal{X}_{\text{EQ}}\to\mathcal{U}$ denote the resulting deterministic encoder, and
let $\mathcal{C}_{\vec{r}}=(E_{\vec{r}}(X_{\text{EQ}}),p_{Z|UY})$ denote the resulting secure code with this deterministic encoder.
As argued above, for every choice of random coins $\vec{r}$ of the encoder, the code $\mathcal{C}_{\vec{r}}$ securely computes 
$(q_{X_{\text{EQ}}Y},q_{Z|X_{\text{EQ}}Y})$.
Since $\mathcal{C}_{\vec{r}}$ is a secure code with zero-error, the coloring defined by $c_{\vec{r}}(x):=E_{\vec{r}}(x), \forall x\in\mathcal{X}_{\text{EQ}}$ 
will be a proper coloring of the vertices of $\mathcal{G}_{\text{EQ}}$. Run through the possible random coins $\vec{r}$ of the encoder; 
this will produce a random coloring $(c_{\vec{r}})_{\vec{r}}$ of the vertices, where for any randomness $\vec{r}$, 
the corresponding coloring $c_{\vec{r}}$ is a proper coloring.
}
{\color{black}
\section{Proofs of Claims~\ref{claim:disjoint-messages} \& \ref{claim:equal-alpha-vectors}}\label{appendixF}
\begin{proof}[Proof of Claim~\ref{claim:disjoint-messages}]
We prove this by contradiction. Suppose $\vec{\alpha}_i^{(y)}\neq\vec{\alpha}_j^{(y')}$ and $\U_{i}^{(y)}\cap\U_{j}^{(y')}\neq\phi$. Let $u\in\U_{i}^{(y)}\cap\U_{j}^{(y')}$.
Define two sets $\text{Supp}(\vec{\alpha}_i^{(y)})=\{x\in\X: \vec{\alpha}_i^{(y)}(x)>0\}$ and $\text{Supp}(\vec{\alpha}_j^{(y')})=\{x\in\X: \vec{\alpha}_j^{(y')}(x)>0\}$. We analyze two cases, one when these two sets are equal, and the other, when they are not.

{\bf Case 1.} $\text{Supp}(\vec{\alpha}_i^{(y)})=\text{Supp}(\vec{\alpha}_j^{(y')})$: 
since $\vec{\alpha}_i^{(y)}$ and $\vec{\alpha}_j^{(y')}$ are distinct probability vectors with same support, there must exist $x$ and $x'$ such that $\vec{\alpha}_i^{(y)}(x)>0$, $\vec{\alpha}_j^{(y')}(x)>0$, $\vec{\alpha}_i^{(y)}(x')>0$, $\vec{\alpha}_j^{(y')}(x')>0$, and $\frac{\vec{\alpha}_i^{(y)}(x')}{\vec{\alpha}_i^{(y)}(x)}\neq\frac{\vec{\alpha}_j^{(y')}(x')}{\vec{\alpha}_j^{(y')}(x)}$.
Since $\vec{\alpha}_i^{(y)}(x)>0, \vec{\alpha}_i^{(y)}(x')>0$, we have $q_{Z|XY}(z|x,y)>0$ and $q_{Z|XY}(z|x',y)>0$ for every $z\in\Z_{i}^{(y)}$ . Similarly, since $\vec{\alpha}_j^{(y')}(x)>0,\vec{\alpha}_j^{(y')}(x')>0$, we have $q_{Z|XY}(z'|x,y')>0$ and $q_{Z|XY}(z'|x',y')>0$ for every $z'\in\Z_{j}^{(y')}$.
Note that $\frac{\vec{\alpha}_i^{(y)}(x')}{\vec{\alpha}_i^{(y)}(x)}=\frac{q_{Z|XY}(z|x',y)}{q_{Z|XY}(z|x,y)}$ and $\frac{\vec{\alpha}_j^{(y')}(x')}{\vec{\alpha}_j^{(y')}(x)}=\frac{q_{Z|XY}(z'|x',y')}{q_{Z|XY}(z'|x,y')}$, where, by hypothesis, $\frac{q_{Z|XY}(z|x',y)}{q_{Z|XY}(z|x,y)}\neq\frac{q_{Z|XY}(z'|x',y')}{q_{Z|XY}(z'|x,y')}$.

Since $u\in\U_{i}^{(y)}$, there exists $z\in\Z_{i}^{(y)}$ such that $p(u,z|y)>0$. This implies -- by privacy against Bob -- that $p(u,z|x,y)>0$ and $p(u,z|x',y)>0$.
Consider $p_{UZ|XY}(u,z|x,y)$ and expand it as follows:
\begin{align}
p_{UZ|XY}(u,z|x,y) &= q_{Z|XY}(z|x,y)p_{U|XYZ}(u|x,y,z) \notag \\
&= q_{Z|XY}(z|x,y)p_{U|YZ}(u|y,z) \label{eq:disjoint-messages-interim1}
\end{align}
Since $p_{UZ|XY}(u,z|x,y)>0$, all the terms above are non-zero and well-defined. In the last equality we used privacy against Bob to write $p_{U|XYZ}(u|x,y,z)=p_{U|YZ}(u|y,z)$. Now expand $p_{UZ|XY}(u,z|x,y)$ in another way as follows:
\begin{align}
p_{UZ|XY}(u,z|x,y) &= p_{U|XY}(u|x,y)p_{Z|UXY}(z|u,x,y) \notag \\
&= p_{U|XY}(u|x)p_{Z|UY}(z|u,y) \label{eq:disjoint-messages-interim2}
\end{align}
Again, all the terms above are non-zero and well-defined because $p_{UZ|XY}(u,z|x,y)>0$. We used $U-X-Y$ to write $p_{U|XY}(u|x,y)=p_{U|X}(u|x)$ and $Z-(U,Y)-X$ to write $p_{Z|UXY}(z|u,x,y)=p_{Z|UY}(z|u,y)$ in \eqref{eq:disjoint-messages-interim2}. Now comparing \eqref{eq:disjoint-messages-interim1} and \eqref{eq:disjoint-messages-interim2} gives the following:
\begin{align}
q_{Z|XY}(z|x,y)p_{U|YZ}(u|y,z)=p_{U|X}(u|x)p_{Z|UY}(z|u,y) \label{eq:disjoint-messages-interim3}
\end{align}
Since $p_{UZ|XY}(u,z|x',y)>0$, we can apply the same arguments as above with $p_{UZ|XY}(u,z|x',y)$ and get the following:
\begin{align}
q_{Z|XY}(z|x',y)p_{U|YZ}(u|y,z)=p_{U|X}(u|x')p_{Z|UY}(z|u,y) \label{eq:disjoint-messages-interim4}
\end{align}
Note that all the terms on both sides of \eqref{eq:disjoint-messages-interim3} and \eqref{eq:disjoint-messages-interim4} are non-zero. Dividing \eqref{eq:disjoint-messages-interim3} by \eqref{eq:disjoint-messages-interim4} gives the following:
\begin{align}
\frac{q_{Z|XY}(z|x,y)}{q_{Z|XY}(z|x',y)} = \frac{p_{U|X}(u|x)}{p_{U|X}(u|x')}. \label{eq:first-ratio}
\end{align}
Similarly, since $u\in\U_{j}^{(y')}$, there exists $z'\in\Z_{j}^{(y')}$ such that $p(u,z'|y')>0$. This implies -- by privacy against Bob -- that $p(u,z'|x,y')>0$ and $p(u,z'|x',y')>0$. Applying the above arguments with $p(u,z'|x,y')>0$ and $p(u,z'|x',y')>0$ gives
\begin{align}
\frac{q_{Z|XY}(z'|x,y')}{q_{Z|XY}(z'|x',y')}=\frac{p_{U|X}(u|x)}{p_{U|X}(u|x')}. \label{eq:second-ratio}
\end{align}
Comparing \eqref{eq:first-ratio} and \eqref{eq:second-ratio} gives $\frac{q_{Z|XY}(z|x,y)}{q_{Z|XY}(z|x',y)}=\frac{q_{Z|XY}(z'|x,y')}{q_{Z|XY}(z'|x',y')}$, which is a contradiction.\\
{\bf Case 2.} $\text{Supp}(\vec{\alpha}_i^{(y)})\neq \text{Supp}(\vec{\alpha}_j^{(y')})$: 
assume, without loss of generality, that $\text{Supp}(\vec{\alpha}_i^{(y)})\setminus \text{Supp}(\vec{\alpha}_j^{(y')})\neq\phi$. Let $x\in \text{Supp}(\vec{\alpha}_i^{(y)})\setminus \text{Supp}(\vec{\alpha}_j^{(y')})$. This implies that $\vec{\alpha}_i^{(y)}(x)>0$ and $\vec{\alpha}_j^{(y')}(x)=0$. Since $\text{Supp}(\vec{\alpha}_j^{(y')})\neq \phi$ (because $\vec{\alpha}_j^{(y')}$ is not a zero vector), there exists $x'\in \text{Supp}(\vec{\alpha}_j^{(y')})$ such that $\vec{\alpha}_j^{(y')}(x')>0$.
Note that $\vec{\alpha}_i^{(y)}(x)>0$ implies $q_{Z|XY}(z|x,y)>0$ for every $z\in\Z_{i}^{(y)}$; $\vec{\alpha}_j^{(y')}(x)=0$  implies $q_{Z|XY}(z'|x,y')=0$ for every $z'\in\Z_{j}^{(y')}$; and $\vec{\alpha}_j^{(y')}(x')>0$ implies $q_{Z|XY}(z'|x',y')>0$ for every $z'\in\Z_{j}^{(y')}$.
Since $u\in\U_{i}^{(y)}\cap\U_{j}^{(y')}$, there exists $z\in\Z_{i}^{(y)}$ and $z'\in\Z_{j}^{(y')}$ such that $p(u,z|y)>0$ and $p(u,z'|y')>0$. 
These imply -- by privacy against Bob -- that $p(u,z|x,y)>0$ and $p(u,z'|x',y')>0$.
Now consider $p(u,z'|x,y')$ and expand it as follows:
\begin{align}
p(u,z'|x,y') &= p(u|x,y')p(z'|x,y',u) \label{eq:second-comparison1} \\
&= p(u|x)p(z'|y',u) \label{eq:second-comparison2}
\end{align}
We used the Markov chain $U-X-Y$ to write $p(u|x,y')=p(u|x)$, where $p(u|x)>0$ because $p(u,z|x,y)>0$. We used the Markov chain $Z-(Y,U)-X$ to write $p(z'|x,y',u)=p(z'|y',u)$ in \eqref{eq:second-comparison1}, which is greater than zero because $p(u,z'|x',y')>0$. Putting all these together in \eqref{eq:second-comparison2} gives $p(u,z'|x,y')>0$, which implies $q_{Z|XY}(z'|x,y')>0$, a contradiction.
\end{proof}
\begin{proof}[Proof of Claim~\ref{claim:equal-alpha-vectors}]
Since the Markov chain $U-X-Y$ holds, we have that the set of messages that Alice sends to Bob are same for all inputs of Bob. Now it follows from Claim \ref{claim:disjoint-messages} that for every $y\in\Y$ we can partition the set of all possible messages $\U$ as follows: $\U=\U_1^{(y)}\biguplus\U_2^{(y)}\biguplus\hdots\biguplus\U_{k(y)}^{(y)}$, where $\U_i^{(y)}$ is as defined earlier in \eqref{eq:one-round_set-msgs}. Consider any two $y,y'\in\Y$. We have
\begin{align}
\U_1^{(y)}\biguplus\U_2^{(y)}\biguplus\hdots\biguplus\U_{k(y)}^{(y)}=\U_1^{(y')}\biguplus\U_2^{(y')}\biguplus\hdots\biguplus\U_{k(y')}^{(y')}.\label{eq:equal-partition_messages}
\end{align}
First observe that $k(y)=k(y')$. Otherwise, there exists $i\in[k(y)]$ and $j\in [k(y')]$ such that $\U_i^{(y)}\cap \U_j^{(y')}\neq\phi$ and $\vec{\alpha}_i^{(y)}\neq \vec{\alpha}_j^{(y')}$, which contradicts Claim \ref{claim:disjoint-messages}. From now on we denote $k(y)$ by $k$ for every $y\in\Y$.

Suppose $\{\vec{\alpha}_1^{(y)},\vec{\alpha}_2^{(y)},\hdots,\vec{\alpha}_{k}^{(y)}\}\neq\{\vec{\alpha}_1^{(y')},\vec{\alpha}_2^{(y')},\hdots,\vec{\alpha}_{k}^{(y')}\}$. It means that there exists an $i\in [k]$ such that $\vec{\alpha}_i^{(y)}\notin\{\vec{\alpha}_1^{(y')},\vec{\alpha}_2^{(y')},\hdots,\vec{\alpha}_{k}^{(y')}\}$. Since $\U_i^{(y)}$ is associated with $\vec{\alpha}_i^{(y)}$, we have by \eqref{eq:equal-partition_messages} that $\U_i^{(y)}\notin\U_1^{(y')}\biguplus\U_2^{(y')}\biguplus\hdots\biguplus\U_{k}^{(y')}$. This contradicts \eqref{eq:equal-partition_messages}.
\end{proof}
}
\section{Proofs of Theorems~\ref{thm:one-round-characterization} and \ref{thm:ps_rate-3}}\label{appendixG}
\begin{proof}[Proof of Theorem~\ref{thm:one-round-characterization}]
\RRB{\underline{`Only if' part}:} (1) has been shown in Claim \ref{claim:equal-alpha-vectors}. (2) follows from the following argument:
take any $x\in\X$ and consider the following set of equalities:
\begin{align*}
\sum_{z\in\Z_{i}^{(y)}}q_{Z|XY}(z|x,y) &\stackrel{\text{(a)}}{=} \sum_{u\in\U_i} p_{U|XY}(u|x,y) \\
&\stackrel{\text{(b)}}{=} \sum_{u\in\U_i} p_{U|XY}(u|x,y') \\
&\stackrel{\text{(c)}}{=} \sum_{z\in\Z_{i}^{(y')}}q_{Z|XY}(z|x,y'),
\end{align*}
where (a) and (c) follow from the fact that Bob's output $Z$ belongs to $\Z_{i}^{(y)}$, when his input is $y$ (or belongs to $\Z_{i}^{(y')}$, when his input is $y'$) if and only if the message $U$ that Alice sends to Bob belongs to $\U_{i}$. (b) follows from the Markov chain $U-X-Y$. \\

\RRB{\underline{`If' part}:} We show this direction by giving a secure protocol in Figure~\ref{fig:one-round-protocol}. Now we prove that this protocol is perfectly secure, i.e., it satisfies perfect correctness and perfect privacy. \\

\begin{figure}
\hrule
\vspace{0.20cm}
{\bf One-round secure protocol}
\vspace{0.20cm}
\hrule
\vspace{0.25cm}
\noindent {\bf Input:} Alice has $x\in\X$ and Bob has $y\in\Y$. \\
\noindent {\bf Output:} Bob outputs $z$ with probability $q_{Z|XY}(z|x,y)$.
\begin{center}{\bf Protocol}\end{center}
\begin{enumerate}
\item Both Alice and Bob agree on an element in the alphabet $\Y$, say $y_1$, beforehand.
\item Alice sends $u_i$ to Bob with probability $\sum_{z\in\Z_i^{(y)}}q_{Z|XY}(z|x,y_1)$.
\item Upon receiving $u_i$, Bob fixes any element $x'$ for which $\alpha_i^{(y)}(x')>0$, and outputs $z$ with probability
$\frac{q_{Z|XY}(z|x',y)}{\sum_{z\in\Z_{i}^{(y)}}q_{Z|XY}(z|x',y)}$.
\end{enumerate}
\hrule
\vspace{0.20cm}
\caption{A one-round secure protocol.}
\label{fig:one-round-protocol}
\end{figure}
{Correctness:} Suppose the protocol of Figure~\ref{fig:one-round-protocol} produces an output according to the p.m.f. $p_{Z|XY}(z|x,y)$. We show below that $p_{Z|XY}(z|xy)$ is equal to $q_{Z|XY}(z|x,y)$.
{\allowdisplaybreaks
\begin{align}
&p_{Z|XY}(z|x,y)\\
 &= p_{ZU|XY}(z,u_i|x,y) \label{eq:char_corr-interim1} \\
&= p_{U|XY}(u_i|x,y)\times p_{Z|UXY}(z|u_i,x,y) \notag \\
&= p_{U|X}(u_i|x)\times p_{Z|UY}(z|u_i,y) \label{eq:char_corr-interim2} \\
&= \Big(\sum_{z\in\Z_i^{(y)}}q_{Z|XY}(z|x,y_1)\Big)\times\Big(\frac{q_{Z|XY}(z|x',y)}{\sum_{z\in\Z_{i}^{(y)}}q_{Z|XY}(z|x',y)}\Big) \label{eq:char_corr-interim3} \\
&= \Big(\sum_{z\in\Z_i^{(y)}}q_{Z|XY}(z|x,y)\Big)\times\Big(\frac{q_{Z|XY}(z|x,y)}{\sum_{z\in\Z_{i}^{(y)}}q_{Z|XY}(z|x,y)}\Big) \label{eq:char_corr-interim4} \\
&= q_{Z|XY}(z|x,y) \notag
\end{align}
In \eqref{eq:char_corr-interim1} we assume that $z\in\Z_i^{(y)}$, and \eqref{eq:char_corr-interim1} is an equality because the message that Alice sends to Bob is a deterministic function of Bob's input and output. In \eqref{eq:char_corr-interim2} we used the Markov chain $U-X-Y$ to write $p(u_i|x,y)=p(u_i|x)$ and $Z-(U,Y)-X$ to write $p(z|u_i,x,y)=p(z|u_i,y)$. In \eqref{eq:char_corr-interim3} we substituted the values of $p(u_i|x,y)$ and $p(z|u_i,y)$ from the protocol of Figure~\ref{fig:one-round-protocol}. In \eqref{eq:char_corr-interim4} we used the facts that $\sum_{z\in\Z_i^{(y)}}q_{Z|XY}(z|x,y_1)=\sum_{z\in\Z_i^{(y)}}q_{Z|XY}(z|x,y)$ (which follows from the assumption -- see the second item in the theorem statement) and $\frac{q_{Z|XY}(z|x',y)}{\sum_{z\in\Z_{i}^{(y)}}q_{Z|XY}(z|x',y)}=\frac{q_{Z|XY}(z|x,y)}{\sum_{z\in\Z_{i}^{(y)}}q_{Z|XY}(z|x,y)}$ (which follows from the fact that the matrix $A_i^{(y)}=\vec{\alpha}_i^{(y)}\times\vec{\gamma}_i^{(y)}$, defined earlier, is a rank-one matrix). \\
}
{Privacy:} we need to show that if $p_{Z|XY}(z|x_1,y)>0$ and $p_{Z|XY}(z|x_2,y)>0$ for some $x_1,x_2,y,z$, then for every $u_i$, $p(u_i|x_1,y,z)=p(u_i|x_2,y,z)$ must hold. This follows because $p(u_i|x,y,z)=\mathbbm{1}_{\{z\in\Z_i^{(y)}\}}$, i.e., $u_i$ is a deterministic function of $(y,z)$ and is independent of Alice's input.
\end{proof}
\balance
\RRB{\begin{proof}[Remaining proof of Theorem~\ref{thm:ps_rate-3}]
Here we only show that $\mathbb{E}[L]\geq H(W)$, where $L,W$ are defined in the proof of Theorem~\ref{thm:ps_rate-3}.
{\allowdisplaybreaks
\begin{align}
\mathbb{E}[L] &= \sum_{x\in\X}p_X(x)\mathbb{E}[L|X=x] \notag \\
&= \sum_{x\in\X}p_X(x)\sum_{l\in\mathbb{N}} l \times p_{L|X}(l|x) \notag \\
&= \sum_{x\in\X}p_X(x)\sum_{l\in\mathbb{N}} l \sum_{u\in\U:l(u)=l}p_{U|X}(u|x) \label{eq:ps_privacy-bob-interim1}\\
&= \sum_{x\in\X}p_X(x)\sum_{u\in\U}l(u)p_{U|X}(u|x) \notag \\
&= \sum_{x\in\X}p_X(x)\sum_{i=1}^k \sum_{u\in\U_i} l(u) p(u|x) \notag \\
&\geq \sum_{x\in\X}p_X(x) \sum_{i=1}^k \sum_{u\in\U_i} l_i\times p(u|x),\text{ where }l_i=\min_{u\in\U_i}l(u) \notag \\
&= \sum_{x\in\X}p_X(x) \sum_{i=1}^k l_i \sum_{u\in\U_i} p(u|x) \notag \\
&= \sum_{x\in\X}p_X(x) \sum_{i=1}^k l_i \sum_{u\in\U_i} p(u|x,y_1) \label{eq:ps_privacy-bob-interim15} \\
&= \sum_{x\in\X}p_X(x) \sum_{i=1}^k l_i \sum_{z\in\Z_i^{(y_1)}} p_{Z|XY}(z|x,y_1) \label{eq:ps_privacy-bob-interim2} \\
&= \sum_{x\in\X}p_X(x) \sum_{i=1}^k l_i\times p_{W|X}(i|x) \label{eq:ps_privacy-bob-interim3} \\
&= \sum_{i=1}^k l_i \sum_{x\in\X}p_X(x)p_{W|X}(i|x) \notag \\
&= \sum_{i=1}^k l_i\times p_{W}(i) \notag \\
&= \mathbb{E}[L'] \label{eq:ps_privacy-bob-interim4} \\
&\geq H(W). \label{eq:ps_privacy-bob-interim5}
\end{align}
\eqref{eq:ps_privacy-bob-interim1} holds because $l(u)$ is deterministic function of $u$.
\eqref{eq:ps_privacy-bob-interim15} follows from the Markov chain $U-X-Y$. \eqref{eq:ps_privacy-bob-interim2} follows from the 
fact that when Bob has $y_1$ as his input, his output $Z$ belongs to $\Z_i^{(y)}$ if and only if $U$ belongs to $\U_i$.
In \eqref{eq:ps_privacy-bob-interim3} we used the equality $p_{W|X}(i|x)=\sum_{z\in\Z_i^{(y)}} q_{Z|XY}(z|x,y)$ (see \eqref{eq:W-interim2}).
Since $l(u)_{i\in\U}$'s are lengths of a valid prefix-free encoding of $U$, and $l_i=\min_{u\in\U_i}l(u)$ for $i\in[k]$, 
it can be verified that $l_1,l_2,\hdots,l_k$ are lengths of a valid prefix-free encoding of $W$. 
The random variable $L'$ in \eqref{eq:ps_privacy-bob-interim4} corresponds to this prefix-free encoding of $W$, where $\Pr\{L'=l_i\}=p_W(i)$.
Inequality \eqref{eq:ps_privacy-bob-interim5} follows from the fact that the expected length of any prefix-free binary code 
for a random variable $W$ is lower-bounded by $H(W)$ \cite[Theorem 5.4.1]{CoverJ06}.  
Since this argument holds for any prefix-free encoding of $U$, we have $L^*_{B-\text{pvt}}\geq H(W)$.
}
\end{proof}
}
}
\iftoggle{paper}
{
\bibliographystyle{IEEEtran}
\bibliography{Bibliography}}
{\bibliographystyle{IEEEtran}
\bibliography{Bibliography}
}

\end{document}